\newcommand{\noun}[1]{\textsc{#1}}
\numberwithin{equation}{section}
\numberwithin{figure}{section}
\theoremstyle{plain}
\newtheorem{thm}{\protect\theoremname}[section]
\theoremstyle{plain}
\newtheorem{lem}[thm]{\protect\lemmaname}
\theoremstyle{definition}
\newtheorem{example}[thm]{\protect\examplename}
\theoremstyle{definition}
\newtheorem{defn}[thm]{\protect\definitionname}
\theoremstyle{remark}
\newtheorem{rem}[thm]{\protect\remarkname}
\theoremstyle{plain}
\newtheorem{prop}[thm]{\protect\propositionname}
\theoremstyle{plain}
\newtheorem{cor}[thm]{\protect\corollaryname}
\def\makebbb#1{
    \expandafter\gdef\csname#1\endcsname{
        \ensuremath{\Bbb{#1}}}
}\makebbb{R}\makebbb{N}\makebbb{Z}\makebbb{C}\makebbb{H}\makebbb{E}\makebbb{H}\makebbb{P}\makebbb{B}\makebbb{Q}\makebbb{E}
\providecommand{\corollaryname}{Corollary}
\providecommand{\definitionname}{Definition}
\providecommand{\examplename}{Example}
\providecommand{\lemmaname}{Lemma}
\providecommand{\propositionname}{Proposition}
\providecommand{\remarkname}{Remark}
\providecommand{\theoremname}{Theorem}
\begin{document}
\title{Ensemble equivalence for mean field models and plurisubharmonicity }
\author{Robert J. Berman}
\begin{abstract}
We show that entropy is globally concave with respect to energy for
a rich class of mean field interactions, including regularizations
of  the point-vortex model in the plane, plasmas and self-gravitating
matter in 2D, as well as the higher dimensional logarithmic interactions
appearing in conformal geometry and power laws. The proofs are based
on a corresponding ``microscopic'' concavity result at finite $N,$
shown by leveraging an unexpected link to Kähler geometry and plurisubharmonic
functions. Under more restrictive homogeneity assumptions strict concavity
is obtained using a uniqueness result for free energy minimizers,
established in a companion paper. The results imply that thermodynamic
equivalence of ensembles holds for this class of mean field models.
As an application it is shown that the critical inverse negative temperatures
- in the macroscopic as well as the microscopic setting - coincide
with the asymptotic slope of the corresponding microcanonical entropies.
Along the way we also extend previous results on the thermodynamic
equivalence of ensembles for continuous weakly positive definite interactions,
concerning positive temperature states, to the general non-continuous
case. In particular, singular situations are exhibited where, somewhat
surprisingly, thermodynamic equivalence of ensembles fails at energy
levels sufficiently close to the minimum energy level.
\end{abstract}

\maketitle

\section{\label{sec:Introduction}Introduction}

\subsection{\label{subsec:General-setup intro}General Setup}

Let $X$ be a topological space, $W$ a symmetric lower semi-continuous
(lsc) function on $X^{2}$ (the \emph{pair interaction potential})
and $V$ a lsc function on $X$ (the \emph{exterior potential}), both
taking values in $]-\infty,\infty].$ The corresponding $N-$particle
\emph{mean field $N-$particle Hamiltonian} is defined by 

\begin{equation}
H^{(N)}(x_{1},...,x_{N}):=\frac{1}{2}\frac{1}{N}\sum_{i\neq j\leq N}W(x_{i},x_{j})+\sum_{i=1}^{N}V(x_{i}).\label{eq:Hamilt for W and V intro}
\end{equation}
 The self-interactions have, as usual, been excluded in order to render
$H^{(N)}$ generically finite in the case when $W$ is singular on
the diagonal. The corresponding \emph{(macroscopic) energy $E(\mu)$
}of a probability measure $\mu$ on $X,$ i.e. $\mu\in\mathcal{P}(X),$
is defined by 
\begin{equation}
E(\mu):=\frac{1}{2}\int_{X}W\mu\otimes\mu+\int_{X}V\mu\in]-\infty,\infty]\label{eq:def of E intro}
\end{equation}
when $X$ has compact support. The definition can be extended to non-compactly
supported measures (see Section\ref{subsec:Existence-when-X is noncompact}),
but for most purposes it will be enough to consider the restriction
of $E(\mu)$ to the space of all probability measures on $X$ with
compact support, denoted by $\mathcal{P}(X)_{0}.$ 

Now, fix also a measure $\mu_{0}$ on $X$ ( the ``prior''). Following
\cite{e-s,clmp2,e-h-t,c-d-r} the \emph{entropy} $S(e)$ (at energy
$e)$ is the function on $\R$ defined by $-\infty$ in the case that
$\{E(\mu)=e\}$ is empty and otherwise
\begin{equation}
S(e):=\sup_{\mu\in\mathcal{P}(X)_{0}}\left\{ S(\mu):\,\,E(\mu)=e\right\} ,\,\,\,\,\,\,S(\mu):=-\int_{X}\log(\mu/\mu_{0})\mu\label{eq:def of S e}
\end{equation}
 where $S(\mu)$ is the\emph{ entropy} of $\mu$ relative $\mu_{0},$
which, by definition, is equal to $-\infty$ if $\mu$ is not absolutely
continuous wrt $\mu_{0}.$ A measure $\mu^{e}$ maximizing $S(\mu)$
above is called a \emph{maximum entropy measure}. In the case when
$\mu_{0}$ is a probability measure, that we shall focus on,
\[
\max_{e\in\R}S(e)=S(e_{0})=0,\,\,\,e_{0}:=E(\mu_{0})
\]
 (since $S(\mu)\leq0$ with equality iff $\mu=\mu_{0}).$ This setup
is modeled on\emph{ }repulsive\emph{ }Hamiltonians (as in the case
of identical point vortices described below), but an equivalent setup
of ``attractive'' Hamiltonians is obtained by replacing $H^{(N)}$
with $-H^{(N)}$ and $e$ with $-e.$ 

\subsection{\label{subsec:Concavity-of-}Background: concavity of $S(e)$ and
thermodynamic equivalence of ensembles}

In the case when $E(\mu)$ is linear on $\mathcal{P}(X)$ (i.e. $W=0)$
it follows directly from the concavity of $S(\mu)$ on $\mathcal{P}(X)$
that the  entropy $S(e)$ is concave with respect to $e$ (see Section
\ref{subsec:Concavity-and monot}). This is the standard setup in
information theory and statistical inference, going back to Shannon
and Jaynes \cite{ja}, but here we will be concerned with the case
when $E(\mu)$ is quadratic, motivated by mean-field models in statistical
mechanics (see \ref{eq:W and V in disc for vortex} for a comparison
between the classical linear setup and the quadratic setup appearing
in the context of plasmas). General, non-linear $E(\mu)$ also appear
naturally in engineering optimization \cite{a-c-p}. The concavity
properties of the entropy $S(e)$ for mean-field models and other
systems with long-range interactions have been studied extensively
from various points of views in the last decades; theoretical as well
as experimental and numerical \cite{e-s,clmp2,c-d-r,e-t-t,d-r-a-w}.
As stressed in \cite{c-d-r} the unusual properties of these systems
stem from the lack of additivity (i.e. lack of linearity of $E(\mu)$).
In particular, the question whether $S(e)$ is globally concave is
crucial in connection to negative temperature states in Onsager's
point-vortex model for the large-time limit of turbulent incompressible
non-viscous 2D fluids; classical fluids \cite{e-s,clmp2}, as well
as quantum fluids \cite{gau,jo}. In the case of vortices of equal
circulation moving in the whole plane $\R^{2}$ the vortex-vortex
pair interaction potential $W(x,y)$ is proportional to the Green
function for the Laplacian in $\R^{2},$ 
\begin{equation}
W(x,y)=-\log(|x-y|).\label{eq:def of W vortex intro}
\end{equation}
As amphasized in \cite{e-h-t,t-e-t,c-d-r} the relevance of the global
concavity of $S(e)$ stems from the fact that it equivalently means
that $S(e)$ may (under appropriate regularity assumptions discussed
in Section \ref{sec:Thermodynamical-equivalence}) be expressed as
the Legendre-Fenchel transform of the \emph{Helmholtz (scaled) free
energy} $F(\beta)$ at inverse temperature $\beta:$ 
\begin{equation}
S(e)=\inf_{\beta\in\R}(-F(\beta)+\beta e),,\label{eq:S as Leg intro}
\end{equation}
 where $F(\beta)$ is defined as the infimum of the (scaled) free
energy functional $F_{\beta}(\mu):$ 
\begin{equation}
F(\beta)=\inf_{\mathcal{\mu\in P}(X)_{0}}F_{\beta}(\mu),\,\,\,F_{\beta}(\mu):=\beta E(\mu)-S(\mu),\label{eq:def of F beta of F beta mu intro}
\end{equation}
(where $F(\mu)$ is defined to be equal to $+\infty$ if $S(\mu)=-\infty).$
Accordingly, when $S(e)$ is globally concave\emph{ thermodynamic
equivalence of ensembles} is said to hold \cite{e-h-t,t-e-t,c-d-r}
(since it amounts to the equivalence between the \emph{microcanonical
ensemble }at a fixed energy $e$ and the \emph{canonical ensemble}
at a corresponding fixed inverse temperature $\beta,$ in the large
$N-$limit). More generally, this duality fits into primal-dual formulations
of non-linear optimization problems, where the free energy functional
appears as the augmented Lagrangian \cite{a-c-p}. If thermodynamic
equivalence of ensembles holds then $S(e)$ is differentiable at almost
any energy level $e$ and, by the concavity of $F(\beta),$ the infimum
over the $\beta$ in formula \ref{eq:S as Leg intro} is attained
precisely at the inverse temperature
\[
\beta(e)=\frac{dS(e)}{de},
\]
Remarkably, as stressed already by Onsager in the late 40s \cite{ons},
this means, since $S(e)$ is decreasing for $e>E(\mu_{0}),$ that
in the\emph{ ``high energy regime'' }
\begin{equation}
E(\mu_{0})<e\label{eq:high energy regime intro}
\end{equation}
an energy level $e$ should correspond to a\emph{ negative }inverse
temperature $\beta.$ As a consequence, the repulsive vortex interaction
should then become effectively attractive, resulting in the aggregation
of microscopic vortices of equal circulation into large-scale coherent
clusters (as observed in oceanic and atmospheric fluids, notably Jupiter's
famous great red spot). A few years after Onsager's prediction the
existence of negative temperature states was experimentally verified
in nuclear spin systems \cite{ra}, while the original prediction
was quantitatively experimentally demonstrated only very recently
in a 2D quantum superfluid (a Bose Einstein condensate \cite{gau,jo}).
Note that the high energy region only exists if 
\[
E(\mu_{0})<\infty.
\]
 This is automatically the case if $X$ is compact and $W$ and $V$
are locally integrable, but it also holds in many non-compact situations,
for example the vortex model in $\R^{2},$ when $\mu_{0}$ is taken
as a Gaussian probability measure (incorporating conservation of angular
momentum).

As shown in \cite{e-s}, if $W(x,y)$ defines a weakly positive definite
kernel, as in the point-vortex model, then  the concavity of $S(e)$
holds in the\emph{ ``low energy regime'':}
\[
e\leq E(\mu_{0}),
\]
 which corresponds to positive inverse temperature $\beta$ (more
precisely, in \cite{e-s} it is assumed that $W$ is continuous; the
general case is discussed in Section \ref{subsec:Concavity-and monot}).
However, the concavity may fail in the ``high energy regime'' and
thus the correspondence with negative temperature then breaks down
(leading to the peculiar phenomenon of negative heat-capacity \cite{lb,c-d-r}).
This is illustrated by the mean-field Blume-Emery-Griffiths spin model
in \cite[Section 4.2.4]{c-d-r}. In the case of the point-vortex model
the global concavity of $S(e)$ has been established when $X$ is
the unit-disc in $\R^{2}$ \cite{clmp2} (or a sufficiently small
deformation of the unit-disc) or all of $\R^{2}$ \cite{clmp2,c-k,k2},
while shown to fail for some domains $X$ (e.g. a sufficiently thin
rectangle). The proofs in \cite{clmp2,c-k} exploit that in the case
of the point-vortex model any minimizer $\mu_{\beta}$ of the free
energy functional $F_{\beta}(\mu)$ satisfies a second order PDE (the
Joyce-Montgomery mean field equation/the Liouville equation). This
opens the door for the application of various PDE-techniques ( uniqueness
results, concentration/compactness alternatives, symmetrization arguments,
...). 

\subsection{\label{subsec:Summary-of-the}Summary of the main results }

To the best of the authors knowledge there are, apart from a few special
cases - such as the BEG-model and the vortex model recalled above
- no general global concavity results for mean field Hamiltonians.
Even in the case of the regularized vortex model \cite{e-s} the question
of global concavity of $S(e)$ raised in \cite{e-s} appears to have
been left open. Similar questions have also been put forward in the
context of self-gravitating matter, where regularizations appear naturally
\cite{k0}. Allowing regularizations is also crucial when comparing
theoretical results with numerical simulations (such as \cite{si-r,d-g-c})
to ensure that the concavity of $S(e)$ is a robust feature of the
models in question. The main aim of the present work is to establish
the concavity of $S(e)$ for a rich class of potentials $W,V$ and
priors $\mu_{0},$ including the point-vortex model in $\R^{2},$
as well as its regularizations and regularized plasmas and self-gravitating
systems in 2D and power-laws. However, since neither explicit calculations,
nor PDE-techniques are available for such interactions $W$ we take
a different route. First it is shown that the \emph{(upper) microcanonical
entropy} (at energy $e)$.
\begin{equation}
S_{+}^{(N)}(e):=\frac{1}{N}\log\mu_{0}^{\otimes N}\{H^{(N)}/N>e\},\label{eq:def of S N plus intro}
\end{equation}
is concave for any finite $N$ (in the context of the point vortex
model this microcanonical entropy appears in \cite[Theorem 4.2]{clmp2}).
Then, letting $N\rightarrow\infty,$ and using the asymptotics for
$S_{+}^{(N)}(e)$ from \cite{e-s}, the concavity of the upper entropy
$S_{+}(e)$ is obtained (defined by replacing the condition $E(\mu)=e$
in the definition \ref{eq:def of S e} of $S(e)$ with the condition
$E(\mu)\geq e).$ Hence, the concavity of $S(e)$ in the high energy
region, $e>E(\mu_{0}),$ results from the observation that $S_{+}(e)=S(e)$
there. This derivation of the concavity of $S(e)$ is thus in the
spirit of statistical mechanics; the macroscopic property in question
emerges from a microscopic one. The proof of the concavity of $S_{+}^{(N)}(e)$
leverages some developments in Kähler geometry \cite{bern,b-b}, centered
around complex analogs of the Brunn-Minkowski inequality. Under more
restricted assumptions $S(e)$ is shown to be\emph{ strictly} concave,
using a different (macroscopic) approach - which is more in the spirit
of \cite{clmp2} - based on a uniqueness result for free energy minimizers
of independent interest. 

Before turning to a more precise description of the main present results
it may be worth emphasizing that the concavity of $S_{+}^{(N)}(e)$
is considerably stronger than the concavity of $S(e)$ and does not
require the mean field scaling (nor the permutation symmetry). Thus
it also applies to the microcanonical study of\emph{ small systems},
considered in the physics literature (see, for example, \cite{ru,gr,d-h}).
The relation to the setup in \cite{ru,d-h} becomes clearer in the
equivalent setup of ``attractive'' Hamiltonians obtained by replacing
the $H^{(N)}$ with the Hamiltonian $-H^{(N)}$ and $e$ with $-e.$
The concavity of $S_{+}^{(N)}(e)$ then translates into the concavity
of
\begin{equation}
S_{-}^{(N)}(e):=\frac{1}{N}\log\mu_{0}^{\otimes N}\{H^{(N)}/N<e\},\label{eq:def of S N minus}
\end{equation}
 called the bulk-entropy in \cite{ru} and the microcanonical Gibbs
entropy in \cite{d-h}. In recent years it has been debated whether
this microcanonical entropy is physically more revelant than the microcanonical
Boltzman entropy, obtained by replacing the volume $\mu_{0}^{\otimes N}\{H^{(N)}/N<e\}$
with its derivative with respect to $e$ (the surface area of the
level-set $H^{(N)}/N=e);$ see the discussion in \cite{h-p-d} and
references therein. The present results may, perhaps, be interpreted
as a case for bulk/Gibbs entropy as this entropy is shown to be concave
in our class of ``attractive'' Hamiltonians, while the Boltzmann
entropy is not always concave in this class (as discussed in connection
to Theorem \ref{thm:conv for N for two-point etc}). On the other
hand, in the case when $\mu_{0}$ is Lesbesgue measure on $\R^{2n}$
the bulk/Gibbs and the Boltzmann entropy coincide in the limit when
$N\rightarrow\infty$ (in the classical thermodynamical limit this
is discussed in \cite[Section 6.2]{hu} and in the present mean field
setup it can be shown that both limits coincide with $S(e).$ \cite{be2} 

Let now $X$ be a (possible non-compact) subset of $\R^{2n}$ end
let $\phi$ be a defining function for $X,$ i.e. a continuous function
such that 
\[
X=\{\phi\leq0\}.
\]
 Endow $X$ with a probability measure $\mu_{0}$ which is absolutely
continuous wrt Lebesgue measure $d\lambda:$ 
\[
\mu_{0}=e^{-\Psi_{0}}d\lambda.
\]
 on $\R^{2n}.$ We will identify $\R^{2n}$ with $\C^{n}$ in the
usual way and denote by $(z_{1},...,z_{n})$ the standard holomorphic
coordinates on $\C^{n}.$ 

\subsubsection*{Concavity of $S_{+}^{(N)}(e)$ and $S(e)$ in the high energy region
$e>e_{0}$}

The main results, saying that upper microcanonical entropy $S_{+}^{(N)}(e)$
and the entropy $S(e)$ are concave in the high-energy regime \ref{eq:high energy regime intro}
(Theorem \ref{thm:conv for N for two-point etc} and Theorem \ref{thm:concave S in text}),
are shown to hold under appropriate plurisubharmonicity and symmetry
properties of the data. Denoting by $PSH_{\boldsymbol{a}}$ the class
of all plurisubharmonic functions which are invariant under the action
\begin{equation}
(z_{1},...,z_{n})\mapsto(e^{ia_{1}\theta}z_{1},...,e^{ia_{n}\theta}z_{n})\label{eq:action by theta intro}
\end{equation}
 for any $\theta\in\R,$ for a given ``weight vector'' $\boldsymbol{a}\in]0,\infty[^{n},$
the main results hold under the following

$\medskip$

\paragraph*{\emph{\noun{Main Assumptions: }}$\phi,\Psi_{0},-V$ are in the class
$PSH_{\boldsymbol{a}}(\C^{n})$ and $-W$ is in $PSH_{\boldsymbol{a,a}}(\C^{n}\times\C^{n})$
for some $\boldsymbol{a}\in]0,\infty[^{n}$}

$\medskip$

The definition of plurisubharmonicity is recalled in Section \ref{subsec:Background-on-plurisubharmonicit}.
For the moment we just point out that the class $PSH_{\boldsymbol{a}}$
is very rich. For example, when the weights $a_{i}$ are positive
integers the class $PSH_{\boldsymbol{a}}$ contains the functions
\begin{equation}
\psi(z)=\log(\sum_{j=1}^{r}|P_{j}(z)|^{2})\label{eq:algebraic case intro}
\end{equation}
 where $P_{j}$ is a polynomial in $z_{1},..,z_{n},$ which is homogeneous
wrt  the scaling action by $\C^{*}$ on $\C^{n}$ with weights $\boldsymbol{a}.$
In particular, for any $\boldsymbol{a}$ the class $PSH_{\boldsymbol{a}}$
contains $\psi(z)=\log|z|$ as well as \textbf{$\Psi_{0}(z):=\sum_{i=1}^{n}\lambda_{i}|z_{i}|^{2},$}
for any positive $\lambda_{i}.$ Hence the Main Assumptions apply
to the corresponding Gaussian measures 
\begin{equation}
\mu_{0}=e^{-\sum_{i=1}^{n}\lambda_{i}|z_{i}|^{2}}d\lambda.\label{eq:gaussian measures}
\end{equation}
 In the case when the data is invariant under rotations of the $z_{i}-$variables
this is - from a physical point of view - the most natural choices
of priors, as they incorporate preservation of angular momentum in
the $z_{i}-$variables (see the discussion in Section \ref{subsec:Priors-versus-linear}).

An important general feature of the class $PSH_{\boldsymbol{a}}(\C^{n})$
is that is closed under scaling by positive numbers, taking sums and
maxima, as well as under composition with a complex linear map on
$\C^{n}$ or an increasing convex function, defined on the range of
a given $\psi\in PSH_{\boldsymbol{a}}(\C^{n}).$ This means, in particular,
that the Main Assumptions are stable under a range of different regularizations
of the data. For example, the Main Assumptions apply to the point-vortex
model in $X:=\R^{2}$ (formula \ref{eq:def of W vortex intro}) endowed
with a centered Gaussian measure. But the Main Assumptions also apply
to the standard continuous regularization and smooth regularization
of the point-vortex model where, for a given positive number $\delta,$
the pair interaction $W(x,y)$ is, in the continuous case, modified
so that it is constant on $|x-y|\leq\delta,$ while the smooth regularization
is defined by 
\[
W_{\delta}(x,y)=-\frac{1}{2\pi}\log(|x-y|+\delta).
\]
More generally, they apply to the regularizations obtained by convolution
of $-\log|x|$ with a positive sufficiently rapidly decreasing density
on $\R^{2}$, as used in the vortex blob model \cite[Section 6.2.1]{m-b}
(or more generally to the convolution of $-\log|x-y|$ with a smooth
density on $\R^{2}\times\R^{2}).$ An abundance of other examples
in $PSH_{\boldsymbol{a}}$ may be obtained by replacing $\psi$ in
formula with $\chi\circ\psi$ for any convex increasing function $\chi.$ 

Imposing translational and rotational symmetry the Main Assumptions
apply, in particular, under the

$\medskip$

\paragraph*{\emph{\noun{Homogeneous Assumptions:}}}
\begin{itemize}
\item $X$ is either a ball of radius $R$ centered at the origin in $\R^{2n}$
or equal to all of $\R^{2n}$ 
\item $W(x,y)=w(|x-y|),V(x)=v(|x|)$ and $\Psi_{0}(x)=\psi_{0}(|x|)$ with
$w(r),v(r)$ and $-\psi_{0}(r)$ \emph{concave }functions of\emph{
$\log r$ }(when $0<r\leq2R)$ and bounded from below as $r\rightarrow0.$ 
\end{itemize}
$\medskip$

In fact, the special assumptions imply that $w(r)$ is decreasing
in $r.$ In other words, the Homogeneous Assumptions equivalently
mean that the pair interaction $W(x,y)$ is repulsive and a concave
function of $\log|x-y|.$ The special assumptions, apply, for example,
to the continuous repulsive power-laws 
\begin{equation}
W_{\alpha}(x,y):=-|x-y|^{\alpha},\,\,\,\alpha>0.\label{eq:cont power law}
\end{equation}

Note that the Homogeneous Assumptions apply, in particular, to the
standard centered Gaussian probability measure $\mu_{0}$ on $\R^{2n}.$
However, one virtue of the Main Assumptions is that they, as pointed
out above, apply to the more general Gaussian measures \ref{eq:gaussian measures}
incorporating conservation of angular momentum in the $z_{i}-$variables
(as discussed in Section \ref{subsec:Priors-versus-linear}).

\subsubsection*{Global concavity of $S(e)$ and thermodynamic equivalence of ensembles}

In Section \ref{sec:Concavity-of-the} it is shown that if the assumption
that $W(x,y)$ be weakly positive definite is added to the Main Assumptions,
then $S(e)$ is globally concave, i.e. concave on all of $\R$ (Theorem
\ref{thm:Main-assumptions+weakly-positiv}) and finite on $]e_{min},e_{max}[.$
For example, as explained in Section \ref{subsec:Examples}, this
applies to the logarithmic interaction in $\R^{2n},$ as well as the
continuous power-laws \ref{eq:cont power law} when $a\in]0,2]$ and
to the exponential pair-potential
\[
W(x,y)=e^{-a|x-y|},\,\,\,a>0
\]
when $X$ is taken to be a disc centered at the origin with radius
at most $1/2a$ (known as the \emph{Born-Mayer potential} in chemistry).
It should be stressed that neither the power-laws with $a\in]0,1[,$
nor the exponential pair-potential, are concave wrt $(x,y)$ (otherwise
the concavity of $S_{\text{+}}^{(N)}(e)$ could also be deduced from
the ordinary Brunn-Minkowski inequality; compare Remark \ref{rem:B-M}). 

We then deduce that thermodynamic equivalence of ensembles holds for
any energy-level $e$ in $]e_{min},e_{max}[$ using a general result
(Theorem \ref{thm:convex cont iff energy appr}), saying that for
a general lower semi-continuous convex energy functionals $E(\mu)$
and prior $\mu_{0}$ thermodynamic equivalence of ensembles holds
in the low-energy region $]e_{min},e_{0}[$ iff $E(\mu)$ and $\mu_{0}$
satisfy a certain compatibility property (the ``energy approximation
property''). This property has previously appeared in connection
to the study of large deviation principles for the corresponding canonical
ensembles at positive inverse temperatures $\beta$ \cite{c-g-z,be1}. 

We also show that the global concavity of $S(e)$ holds for singular\emph{
}repulsive power-laws (Prop \ref{prop:cata}). However, in contrast
to the continuous power-laws \ref{eq:cont power law} (and the repulsive
logarithmic interaction) the singular power-laws do not satisfy the
Main Assumptions. In fact, in this case the global concavity of $S(e)$
in high-energy region $e\geq e_{0}$ holds for a bad reason: $S(e)\equiv S(e_{0})$
and, as a consequence, there are no maximum entropy measures $\mu^{e}$
when $e>e_{0}.$ This means that the equivalence of ensembles at the
level of\emph{ macrostates} then breaks down (see Section \ref{sec:Macrostate-equivalence-of}).
Similarly, regularized singular power-laws are expected to yield non-equivalent
ensembles and thus the corresponding entropies are expected to be
non-concave (as discussed in \cite[Page 252]{k0}).

\subsubsection*{Critical negative inverse temperatures and existence of maximum entropy
measures}

The singularity structure of a pair interaction $W(x,y)$ satisfying
the Main Assumptions can be very complicated, even if $W(x,y)$ is
taken to be translationally invariant, i.e.
\begin{equation}
W(x,y)=-\Psi(x-y)\label{eq:W transl invariant intro}
\end{equation}
 for a function $\Psi$ in the class $PSH_{\boldsymbol{a}}(\C^{n}).$
Still, as Shown in Section Section \ref{subsec:Bound-on-the non-isotr},
the singularities are mild enough to ensure that both the\emph{ microscopic
critical inverse temperature }
\[
\beta_{c,N}:=\left\{ \beta\in\R:\,Z_{N,\beta}:=\int_{X^{N}}e^{-\beta H^{(N)}}\mu_{0}^{\otimes N}<\infty\right\} 
\]
and the \emph{macroscopic critical inverse temperature} 
\[
\beta_{c}:=\inf:\left\{ \beta\in\R:\,\inf_{\mu}F_{\beta}(\mu)>-\infty\right\} 
\]
 are strictly negative. As a consequence we deduce that, when $X$
is compact, there exists a maximum entropy measure $\mu^{e}$ for
any $e\in]e_{min},e_{max}[.$ The concavity of $S_{+}^{(N)}(e)$ and
$S(e)$ is exploited to establish ``dual'' formulas for $\beta_{c,N}$
and $\beta_{c},$ which hold under the Main Assumptions (Corollary
\ref{cor:integrability threshold as slope} and Corollary \ref{cor:slope formula for beta c}):
\begin{equation}
\beta_{c,N}=\lim_{e\rightarrow\sup_{X^{N}}E_{N}}\frac{dS^{(N)}(e)}{de},\,\,\,\,\,\beta_{c}=\lim_{e\rightarrow\sup_{\mathcal{P}(X)}E(\mu)}\frac{dS(e)}{de}\label{eq:slope formulas for critical betas in intro}
\end{equation}
(which are decreasing limits when using either left or right derivatives).
The derivative $\frac{dS^{(N)}(e)}{de}$ corresponds to the \emph{inverse
Gibbs temperature} a\emph{t energy $e$ }in the context of small systems
\cite{d-h,h-p-d} (when $H^{(N)}$ is replaced by $-H^{(N)}$and $e$
with $-e$ so that $\frac{dS^{(N)}(e)}{de}$ is positive).

Applied to the regularized vortex model $W_{\delta}$ in $\R^{2}$
the second formula in \ref{eq:slope formulas for critical betas in intro}
confirms the expectations expressed in \cite[Page  855]{e-s}, concerning
the slope $dS_{\delta}(e)/de$ of the corresponding entropy: on the
one hand, as $e$ converges to the maximum (finite) value of the corresponding
regularized energy $E_{\delta}(\mu)$ the entropy $S_{\delta}(e)$
and its slope $dS_{\delta}(e)/de$ both converge towards $-\infty.$
On the other hand, for a \emph{fixed} $e$ the slope $dS_{\delta}(e)/de$
converges, as $\delta\rightarrow0,$ to the slope $dS_{0}(e)/de$
for the point-vortex model, which, in turn, is close to $-4$ for
large $e$ (with our normalizations). 

\subsection{\label{subsec:Outlook}Outlook}

In the companion papers \cite{be,be2,be3} elaborations of the main
results stated above are given, which may be summarized as follows.

\subsubsection*{Attractive classical Hamiltonians and self-gravitating matter in
2D}

The concavity results discussed above also apply if $H^{(N)}$ is
replaced by $-H^{(N)}$ if the energy level $e$ is replaced by $-e.$
In particular, when $V=0$ this amounts to replacing a ``repulsive''
pair-interaction $W(x,y)$ with an ``attractive'' pair-interaction
$U(x,y)\in PSH_{\boldsymbol{a,a}}(X\times X).$ It is then natural
to include momentum variables $p\in\R^{2n}$ and consider the corresponding
classical mean field Hamiltonian on phase space $(\R_{x}^{2n}\times\R_{p}^{2n})^{N}$
obtained by adding kinetic energy to the potential pair-interaction
energy $U(x,y):$
\[
H^{(N)}(x_{1},p_{1},...,x_{N},p_{N}):=\frac{1}{2}\frac{1}{N}\sum_{i\neq j\leq N}U(x_{i},x_{j})+\frac{1}{2}\sum_{i=1}^{N}|p_{i}|^{2}
\]
The phase space $X\times\R_{p}^{2n}$ is endowed with the standard
prior $\mu_{0}$ induced from Lebesgue measure $dxdp$ on $\R_{x}^{2n}\times\R_{p}^{2n}.$
Accordingly, the upper microcanonical entropy $S_{+}^{(N)}$ is now
replaced by $S_{-}^{(N)}(e)$ (formula \ref{eq:def of S N minus}).
One new feature in this setup is that the prior $\mu_{0}$ has infinite
mass. This leads to some technical difficulties appearing in the large
$N-$limit which are deferred to the companion paper \cite{be2}.
Here we we just briefly mention the main results shown in \cite{be2}.
First the global concavity of $S_{-}^{(N)}$ follows directly from
Theorem \ref{thm:conv for N for two-point etc}. As a consequence,
$S(e)$ is shown to also be globally concave under an additional stability
assumption, which ensures the existence of a maximum entropy measure
$\mu^{e}.$ The stability assumption in question is satisfied if,
for example, the assumption $U(x,y)\in PSH_{\boldsymbol{a,a}}(X\times X)$
is complemented with the assumption that $U(x,y)$ is, up to a bounded
term, translationally invariant. 

We will in particular specialize to the case when $U(x,y)$ is the
2D gravitational (Newtonian) pair interaction and its various regularizations,
generalizing a result in \cite{al}, concerning the unregularized
case. This case has been studied extensively in the astrophysics literature,
for example, as a model of galactic filaments \cite{d-g-c} . The
global concavity of $S(e)$ in for 2D-gravity should be contrasted
with the 3D case where $S(e)$ is identically equal to $\infty$ \cite{an}. 

\subsubsection*{Strict concavity of $S(e)$ and the Q-curvature equation}

In the final Section \ref{sec:strict} it is shown that if, under
the Homogeneous Assumptions, $v$ is moreover assumed strictly concave
(or, in the case when $X$ is the ball, $w$ is strictly decreasing),
then $S(e)$ is\emph{ strictly} concave. This follows from a uniqueness
result of free energy minimizers established in the companion paper
\cite{be}. Applications to conformal geometry are also given in \cite{be},
concerning the case when 
\[
W(x,y)=-\log|x-y|
\]
in $\R^{2n}.$ This pair potential is proportional to the Green kernel
of the $n$ th power $\Delta^{n}$ of the Laplacian, which, due to
its conformal invariance, plays a key role in conformal geometry and
mathematical physics \cite{ki2,chang}. As stressed in \cite{ki2}
the potential of the corresponding maximum entropy measures $\mu^{e}$
solve the \emph{Q-curvature equation }\cite{chang} with prescribed
Q-curvature proportional to $e^{-\Psi_{0}}:$ 
\[
-\Delta^{n}\phi=e^{-\beta\phi}e^{-\Psi_{0}}
\]
 The relevance of the present results to the Q-curvature equation
are discussed in the companion paper \cite{be}. 

\subsubsection*{A maximum entropy principle for Sasaki-Einstein metrics and AdS/CFT}

The present results arose as a ``spin-off effect'' of a microcanonical
approach to emergent Sasaki-Einstein geometry and the AdS/CFT (gauge/gravity)
correspondence between superconformal supersymmetric gauge theories
and supergravity. This is the subject of a separate publication \cite{be3},
but briefly the main results may be summarized as follows. Let $X$
be an $n-$dimensional complex algebraic subvariety of $\C^{m}$ which
is invariant under the action \ref{eq:action by theta intro} in $\C^{m}$
for some weight-vector $\boldsymbol{a}.$ Then $X$ is also invariant
under the scaling action on $\C^{m}$ by $\R_{>0}$ obtained by replacing
$e^{i\theta}$ with a scalar in $\R_{>0}.$ Denote by $x_{0}$ the
point in $X$ defined by the origin in $\C^{m}$ and assume that $X-\{x_{0}\}$
is non-singular. This means that the quotient $M:=(Y-\{y_{0}\})/\R_{>0}$
is a compact manifold, which is diffeomorphic to the intersection
of $Y$ with the unit-sphere in $\C^{m}.$ In the AdS/CFT correspondence
the classical moduli space of vacua of the rank $N$ gauge theory
is parameterized by $X^{N}$ and the corresponding supergravity vacuum
is encoded by a Sasaki-Einstein metric on the corresponding ``horizon''
$M;$ the $\R_{>0}-$action on $X$ represents the conformal symmetry
of the gauge theory. As observed in \cite{b-c-p} the rank $N$ gauge
theory admits a canonical fermionic $N-$particle BPS-state $\Psi_{\det},$
represented by a holomorphic polynomial on $X^{N}.$ Its self-information
\[
H^{(N)}(x_{1},...,x_{N}):=-\log|\Psi_{\det}(x_{1},...,x_{N})|^{2}
\]
 thus defines an effective symmetric Hamiltonian on $X^{N}.$ Dividing
$H^{(N)}$ by $N\lambda_{N}$ where $\lambda_{N}$ is the BPS R-charge
of $\Psi_{\det}$ (i.e. its scaling degree) one obtains, in the limit
when $N\rightarrow\infty,$ a macroscopic energy functional $E(\mu)$
on $\mathcal{P}(X),$ which is non-quadratic when $n>2.$ \footnote{More precisely, the convergence holds in the sense of $\Gamma-$convergence
when $X$ is quasi-regular, i.e. $\boldsymbol{a}$ has rational components,
but a variant of the argument applies in general, using an approximation
argument. }Assuming that the algebraic variety $X$ is Gorenstein there is also
a canonical $\R_{>0}-$equivariant volume form $dV_{X}$ on $X.$
This volume form has infinite total volume, but ``truncating'' it
appropriately one obtains a measure $\mu_{0}$ on $Y$ of finite total
volume. For example, one can multiply $dV_{X}$ with a Gaussian factor
$e^{-\epsilon r^{2}},$ where $r$ denotes the radial coordinate on
$\C^{m},$ or replace $X$ with its intersection with a ball of a
fixed radius $R$ in $\C^{m}.$ Then the corresponding entropy function
$S(e)$ is defined as in formula \ref{eq:def of S e}. It is shown
in \cite{be3} that $S(e)$ is is strictly concave and for any fixed
energy-level $e$ there exists a unique maximum entropy measure $\mu^{e}$
on $X.$ Moreover, as $e\rightarrow\infty$ $M$ the push-forward
$\nu^{e}$ of $\mu^{e}$ to $M,$ under the natural quotient projection
from $X$ to $M,$ converges to the volume form of Sasaki-Einstein
metric on $M$ - if such a metric exists. Otherwise, the reduced energy
of $\nu^{e}$ blows up as $e\rightarrow\infty.$ The convergence as
$e\rightarrow\infty$ is obtained by establishing a macroscopic equivalence
of ensembles in this setup, which shows that $\mu^{e}$ minimizes
the free energy functional introduced in \cite{b-c-p}, for an appropriate
$\beta.$

\subsection{Acknowledgments}

Thanks to Bo Berndtsson for many stimulating discussions on the topic
of \cite{b-b}. This work was supported by grants from the Knut and
Alice Wallenberg foundation, the Göran Gustafsson foundation and the
Swedish Research Council.

\subsection{Organization}

We start in Section \ref{sec:Preliminaries-and-notation} by introducing
a very general setup and provide some background on concavity and
on plurisubharmonic functions ( appearing in the Main Assumptions).
In Section \ref{sec:Thermodynamical-equivalence} general properties
of the entropy $S(e)$ are studied. In particular, finiteness and
monotonicity properties of $S(e)$ are established and relations to
the notion of\emph{ thermodynamic equivalence of ensembles} are explored.
In the following Section \ref{sec:Macrostate-equivalence-of} the
notion of \emph{macrostate equivalence of ensembles} is discussed
and existence results for maximum entropy measures are provided. Then,
in Section \ref{sec:Concavity-of-S low-energy} we consider the case
when $E(\mu)$ is convex and show that thermodynamic equivalence of
ensembles holds in the low-energy region $\{e>e_{0}\}$ iff the\emph{
energy approximation property} holds. In the remaining sections we
specialize to the Main Assumptions. First in Section \ref{sec:Concavity-of-the}
we deduce the concavity of the upper microcanonical entropy $S_{+}^{(N)}(e)$
(Theorem \ref{thm:conv for N for two-point etc}) from a complex analog
of the Brunn-Minkowski inequality. Then, letting $N\rightarrow\infty$
the concavity of the entropy $S(e)$ in the high energy region $\{e>e_{0}\}$
(Theorem \ref{thm:concave S in text}) is deduced. In the following
Section \ref{sec:global con} this is shown to yield global concavity
of $S(e)$ when the Main Assumptions are complemented with weak positive
definiteness and some examples are exhibited. In Section \ref{sec:Critical-inverse-temperatures}
applications to slope formulas of \emph{critical inverse temperatures
}are given and some connections to algebraic geometry are explained.
In the final Section \ref{sec:strict} a strict concavity result for
$S(e)$ is deduced under the Homogeneous Assumptions from a uniqueness
result for free energy minimizers, established in the companion paper
\cite{be}.

\section{\label{sec:Preliminaries-and-notation}Setup and preliminaries }

\subsection{\label{subsec:Very-general-setup}Very General Setup and notation}

A very general formulation of the setup that we shall consider, henceforth
called the Very General Setup may be formulated as follows. Let $X$
be a topological space endowed with a probability measure $\mu_{0}$
and $E(\mu)$ a lsc functional $E(\mu)$ on the space $\mathcal{P}(X)$
of all probability measures on $X.$ We then define the corresponding
 entropy $S(e)$ and free energy $F(\beta)$ as in formula \ref{eq:def of S e}
and formula \ref{eq:def of F beta of F beta mu intro}, respectively.
Occasionally, when specializing to the General Setup introduced in
Section \ref{subsec:General-setup intro} the notation $E_{W,V}(\mu)$
will designate an energy functional $E(\mu)$ of the particular form
\ref{eq:def of E intro}. 

We set 
\[
e_{min}:=\inf_{\mathcal{P}_{0}(X)}E(\mu),\,\,\,e_{0}:=E(\mu_{0}),\,\,\,e_{max}:=\sup_{\mathcal{P}_{0}(X)}E(\mu)
\]
(recall that $\mathcal{P}_{0}(X)$ denotes the space of all probability
measures on $X$ with compact support). 

We will mainly consider the case when $X\Subset\R^{2n}$ and the Main
Assumptions (or the Homogeneous Assumptions) introduced in Section
\ref{subsec:Summary-of-the} hold. These assumptions will be recalled
in Section \ref{subsec:The-Main-and}, but we first provide some preliminaries
on concavity and plurisubharmonicity.

\subsection{Concave preliminaries}

We we will be discussing concavity properties of the entropy $S(e)$
we provide some general preliminaries on concave functions. First
recall that a function $\phi$ on a convex subset $C$ of $\R^{d}$
taking values in $]-\infty,\infty]$ is said to be \emph{convex} on
$C$ if for any given two points $x_{0}$ and $x_{1}$ and $t\in]0,1[$
\[
\phi(tx_{0}+(1-t)x_{1})\leq t\phi(x_{0})+(1-t)\phi(x_{1})
\]
 and\emph{ strictly convex} on $C$ if the inequality above is strict
for any $t\in]0,1[.$ A function $f$ on $C$ is \emph{(strictly)
concave} if $-f$ is (strictly) convex. Here we will be mainly concerned
with the case when $d=1.$ In this case, if $f$ is concave and finite
on a closed interval $C\subset\R$, but not strictly convex, then
there exist two points $x_{0}$ and $x_{1}$ in $C$ such that $f$
is affine on $[x_{0},x_{1}].$ In Sections \ref{sec:Thermodynamical-equivalence},
\ref{sec:global con} we will use some standard properties of convex
functions recalled below, translated into the setup of concave functions
(for further background see \cite{r} and \cite[Section 2.1.3]{v}).
If $\phi$ is a convex function on $\R^{d}$ then its\emph{ subdifferential}
$(\partial\phi)$ at a point $x_{0}\in\R^{d}$ is defined as the convex
set
\begin{equation}
(\partial\phi)(x_{0}):=\left\{ y_{0}:\,\phi(x_{0})+y_{0}\cdot(x-x_{0})\leq\phi(x)\,\,\,\forall x\in\R^{d}\right\} \label{eq:sub gradient prop}
\end{equation}
In particular, if $\phi(x_{0})=\infty,$ then $(\partial\phi)(x_{0})$
is empty. Similarly, if $f$ is concave on $\R^{d}$ then its \emph{superdifferential}
$(\partial f)(x_{0})$ is defined as above, but reversing the inequality.
In other words, $(\partial f)(x_{0}):=$ $-(\partial(-f)(x_{0}).$
In the case when $f$ is concave on $\R$ and finite in a neighborhood
of $x_{0}$ 
\[
(\partial f)(x)=[f'(x+),f'(x-)],
\]
 where $f'(x+)$ and $f'(x-)$ denote the right and left derivatives
of $f$ at $x,$ respectively. In particular, $f$ is differentiable
at $x$ iff $(\partial f)(x)$ consists of a single point. If $f$
is a function on $\R^{d}$ taking values in $[-\infty,\infty]$ its
(concave) \emph{Legendre-Fenchel transform }is the usc and concave
function on $\R^{d}$ (taking values in $[-\infty,\infty[$ ) defined
by

\[
f^{*}(y):=\inf_{x\in\R}\left(x\cdot y-f(x)\right).
\]
It follows readily from the definitions that 
\begin{equation}
y\in\partial f(x)\iff x\in\partial f^{*}(y).\label{eq:y in gradient iff x in gradient}
\end{equation}
Moreover, it is well-known that 
\begin{equation}
\overline{\partial f(\{f>-\infty\})}=\overline{\partial f^{*}(\{f^{*}>-\infty\}).}\label{eq:gradient im}
\end{equation}
Note that, in general, $f^{**}$ is the concave envelope of $f:$
\begin{equation}
(f^{**})(x)=\inf_{a\,\text{affine}}\left\{ a(x):\,\,\,a\geq f\right\} =\inf_{g\,\text{concave},\text{finite}}\left\{ g(x):\,\,\,g\geq f\right\} .\label{eq:f star start as concave envelope}
\end{equation}
 Indeed, the first equality follows directly from the definition and
the second one is shown by, for a fixed $x,$ taking $a(x)$ to be
any affine function coinciding with $g$ at $x$ and with gradient
in $\partial g(x).$ 

We will also make use of the following lemmas (which are without doubt
essentially well-known, but for completeness proofs are provided in
the appendix). 
\begin{lem}
\label{lem:diff implies dual strc conc}Let $f$ be a concave function
on $\R$ and assume that $f$ is differentiable in a neighborhood
of $[x_{0},x_{1}].$ Then $f^{*}$ is strictly concave in the interior
of $[y_{0},y_{1}]:=[f'(x_{1}),f'(x_{0})].$ 
\end{lem}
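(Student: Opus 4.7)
The plan is to argue by contradiction using the duality \eqref{eq:y in gradient iff x in gradient}. Suppose $f^{*}$ is not strictly concave on the interior of $[y_{0},y_{1}]$. Since $f^{*}$ is concave, this forces the existence of a subinterval $[y_{0}',y_{1}']\subset(y_{0},y_{1})$ of positive length on which $f^{*}$ is affine, say with slope $x^{*}$. Then $x^{*}\in(\partial f^{*})(y)$ for every $y\in[y_{0}',y_{1}']$, so by \eqref{eq:y in gradient iff x in gradient} we get $y\in(\partial f)(x^{*})$ for each such $y$, i.e.\ $(\partial f)(x^{*})\supset[y_{0}',y_{1}']$. In particular $(\partial f)(x^{*})$ contains more than one point, so $f$ cannot be differentiable at $x^{*}$.

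To convert this into a contradiction I need to locate $x^{*}$ inside the neighborhood of $[x_{0},x_{1}]$ on which $f$ is assumed differentiable. For this I would invoke the standard monotonicity of the superdifferential of a concave function on $\R$: if $x<x'$, then every element of $(\partial f)(x)$ is $\geq$ every element of $(\partial f)(x')$. Combined with the hypothesis that $f$ is differentiable at $x_{0}$ and $x_{1}$ with $f'(x_{0})=y_{1}$ and $f'(x_{1})=y_{0}$, this yields $(\partial f)(x)\subset[y_{1},\infty[$ for $x<x_{0}$ and $(\partial f)(x)\subset]{-\infty},y_{0}]$ for $x>x_{1}$. Hence no point outside $[x_{0},x_{1}]$ can have a superdifferential meeting the open interval $(y_{0},y_{1})$; in particular the $x^{*}$ produced above must lie in $[x_{0},x_{1}]$.

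Since $[x_{0},x_{1}]$ is contained in the neighborhood on which $f$ is differentiable, $(\partial f)(x^{*})=\{f'(x^{*})\}$ is a singleton, contradicting the inclusion $(\partial f)(x^{*})\supset[y_{0}',y_{1}']$. The main (and only mildly delicate) point is the monotonicity-based localization of $x^{*}$ just described; everything else is a direct reading of \eqref{eq:y in gradient iff x in gradient} together with the elementary fact that a concave function which fails to be strictly concave must be affine on some subinterval.
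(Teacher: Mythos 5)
Your proof is correct and follows essentially the same route as the paper's: assume $f^{*}$ is affine on a subinterval of $(y_{0},y_{1})$ with slope $x^{*}$, use the duality \ref{eq:y in gradient iff x in gradient} to conclude $(\partial f)(x^{*})$ contains an interval, and use monotonicity of the superdifferential to locate $x^{*}$ in $[x_{0},x_{1}]$, contradicting differentiability there. The only difference is cosmetic: you state the superdifferential-monotonicity step more explicitly, while the paper phrases it via containments such as $(\partial f)(]-\infty,x_{0}])\subset\{y\geq f'(x_{0})\}$.
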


Note that, in general, $f^{**}\geq f.$ Concerning the strict inequality
we have the following
\begin{lem}
\label{lem:convex envol affine}Let $f$ be a function on $\R$ such
that $\sup_{\R}f<\infty$ and $U\Subset\R$ an open set where $f$
is finite and usc. Then $\{f^{**}>f\}\cap U$ is open in $U$ and
$f^{**}$ is affine on $\{f^{**}>f\}\cap U.$ 
\end{lem}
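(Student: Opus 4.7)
The plan is to split the lemma into two assertions, openness and affineness, and handle them separately. For the openness claim, I would first observe that $\sup_{\R}f<\infty$ makes the constant $\sup f$ an affine majorant of $f$, so $f^{**}\le\sup f$ everywhere; since also $f^{**}\ge f>-\infty$ on $U$, the concave function $f^{**}$ is finite on the open set $U$, hence automatically continuous there (a standard property of concave functions on $\R$). Because $f$ is upper semi-continuous on $U$, the difference $f^{**}-f$ is lower semi-continuous on $U$ (continuous minus usc), and therefore its positivity set $\{f^{**}>f\}\cap U$ is open in $U$.

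For affineness, my plan is to exploit the geometric characterization of the concave envelope: the hypograph of $f^{**}$ is the closed convex hull of the hypograph of $f$ in $\R^{2}$. Fix $x_{0}\in\{f^{**}>f\}\cap U$ and set $p_{0}=(x_{0},f^{**}(x_{0}))$. Since $p_{0}$ lies on the upper boundary of this closed convex set, a supporting-line argument in $\R^{2}$ (Carath\'eodory reduced to the one-dimensional supporting line, together with a compactness step powered by $\sup f<\infty$) produces $\lambda\in[0,1]$ and pairs $(y_{1},s_{1}),(y_{2},s_{2})$ in the hypograph of $f$ with $p_{0}=\lambda(y_{1},s_{1})+(1-\lambda)(y_{2},s_{2})$. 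Combining $s_{i}\le f(y_{i})\le f^{**}(y_{i})$ with the concavity of $f^{**}$ yields the chain
\[
f^{**}(x_{0})=\lambda s_{1}+(1-\lambda)s_{2}\le \lambda f(y_{1})+(1-\lambda)f(y_{2})\le \lambda f^{**}(y_{1})+(1-\lambda)f^{**}(y_{2})\le f^{**}(x_{0}),
\]
forcing equality throughout; hence $s_{i}=f(y_{i})=f^{**}(y_{i})$ and $f^{**}$ is affine on $[y_{1},y_{2}]$. The strict inequality $f^{**}(x_{0})>f(x_{0})$ rules out the degenerate case $y_{1}=y_{2}=x_{0}$, so $x_{0}$ lies strictly inside $[y_{1},y_{2}]$, giving a nondegenerate interval of affineness around $x_{0}$. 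Since this holds at every point of $\{f^{**}>f\}\cap U$ and overlapping affine pieces of a function must coincide, $f^{**}$ is affine on each connected component of this open set.

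The main obstacle I anticipate is making the Carath\'eodory step rigorous, because the convex hull of the hypograph of $f$ need not be closed, so the two-point representation of $p_{0}$ arises only as a limit of such representations. I would handle this with a compactness argument: the bound $\sup f<\infty$ controls the approximating $s_{i}^{n}$ from above, and the finiteness of $f^{**}(x_{0})$ prevents them from collapsing to $-\infty$ unless the corresponding weight vanishes. Escape of the $y_{i}^{n}$ to $\pm\infty$ would already force $f^{**}$ to be affine near $x_{0}$ with slope pinned down by the upper bound on $f$, which is itself the desired conclusion. The upper semi-continuity of $f$ on $U$ is used to pass $s_{i}^{n}\le f(y_{i}^{n})$ to the limit when $y_{i}\in U$; when the limit point $y_{i}$ lies outside $U$ one only needs the weaker inequality $s_{i}\le f^{**}(y_{i})$ in the chain above, which follows from continuity of $f^{**}$ at any point where it is finite.
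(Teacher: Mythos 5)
Your openness argument is the same as the paper's: $f^{**}$ is concave, squeezed between $f>-\infty$ and the constant $\sup_{\R}f$ on $U$, hence finite and continuous there, and subtracting the usc $f$ leaves a lsc function whose positivity set is open. For affineness you take a genuinely different route, via the hypograph characterization of $f^{**}$ and a Carath\'eodory reduction in $\R^2$. The inequality chain you write is correct once a genuine two-point representation $p_0=\lambda(y_1,s_1)+(1-\lambda)(y_2,s_2)$ with $(y_i,s_i)$ in the hypograph of $f$ (or of $f^{**}$, for limit points outside $U$) is in hand, and the non-degeneracy argument ($\lambda\in(0,1)$, $y_1<x_0<y_2$) is fine. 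The unresolved step is exactly the one you flag: $(x_0,f^{**}(x_0))$ sits only in the \emph{closed} convex hull, and when $y_i^{n}\to\pm\infty$ with $\lambda^{n}\to\{0,1\}$, the claim that ``escape already forces $f^{**}$ to be affine near $x_0$'' is essentially the statement you are trying to prove and is asserted rather than established; making it rigorous (say by extracting a limit of supporting lines and showing the contact set contains an interval around $x_0$) would be a substantial additional argument.

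The paper's proof sidesteps this entirely and is worth contrasting with. It never decomposes the point $(x_0,f^{**}(x_0))$. Instead, using that $\{f^{**}>f\}\cap U$ is open and $f^{**}$ is continuous there while $f$ is usc, it picks $x_{0,-}<x_0<x_{0,+}$ in $\{f^{**}>f\}\cap U$ close enough to $x_0$ that the chord $a$ of $f^{**}$ over $[x_{0,-},x_{0,+}]$ satisfies $a>f$ on the whole interval. The function $\tilde f$ obtained by replacing $f^{**}$ with $a$ on $[x_{0,-},x_{0,+}]$ is then concave (one is cutting a cap off the concave $f^{**}$) and still $\geq f$, while trivially $\tilde f\leq f^{**}$; the characterization of $f^{**}$ as the smallest finite concave majorant (formula \ref{eq:f star start as concave envelope}) forces $\tilde f=f^{**}$, i.e. $f^{**}=a$ near $x_0$. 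This direct competitor construction never meets the non-closedness of the convex hull or any escape-to-infinity case, and is the cleaner route.
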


\subsection{\label{subsec:Background-on-plurisubharmonicit}Background on plurisubharmonicity
and the class $PSH_{\boldsymbol{a}}$}

The Main Assumptions introduced in Section \ref{subsec:Summary-of-the}
involve the notion of plurisubharmonicity. While this notion is central
in the fields of several complex variables and complex geometry, it
may not be familiar to readers lacking background in these fields.
We thus recall the main definitions and properties that we shall use
and refer to \cite[Section 5.A.]{dem} for further background. We
will identify $\R^{2n}$ with $\C^{n}$ in the standard way. A function
$\psi$ on $\C^{n}$ is said to \emph{plurisubharmonic} (\emph{psh},
for short) if $\psi$ is upper semi-continuous (usc) taking values
in $[-\infty,\infty[$ and subharmonic along complex lines, i.e. if
$\zeta\mapsto\psi(z_{0}+\zeta a_{0})$ is a local subharmonic function
on $\C$ for any given $z_{0},a_{0}\in\C^{n},$ or equivalently that
\[
\psi(z_{0})\leq\frac{1}{2\pi}\int\psi(z_{0}+e^{i\theta}a_{0})d\theta.
\]
In particular, $\psi$ is then subharmonic on $\R^{2n}.$ If $\psi$
is smooth then it is psh iff the complex Hessian $\partial\bar{\partial}\psi$
of $\psi$ is a semi-positive Hermitian matrix at any $z:$
\[
\partial\bar{\partial}\psi(z):=(\frac{\partial^{2}\psi(z)}{\partial z_{i}\partial\bar{z}_{j}})\geq0,\,\,\,\frac{\partial}{\partial z_{i}}:=\frac{1}{2}\frac{\partial}{\partial x_{i}}-\frac{i}{2}\frac{\partial}{\partial y_{i}}
\]
Equivalently, a function $\psi$ is psh if, locally, it can be expressed
as a decreasing limit of smooth psh functions $\psi_{j}.$ In fact,
$\psi_{j}$ may be taken as a convolution of $\psi$ with any suitably
scaled smooth probability density with compact support. If $-u$ is
plurisubharmonic, then $u$ is called \emph{plurisuperharmonic}. An
open set $\Omega$ in $\C^{n}$ is said to be \emph{pseudoconvex}
if $\Omega$ admits a continuous psh exhaustion function $\rho$ i.e.
$\rho$ is psh on $\Omega$ and such that $\{\rho\leq C\}$ is a compact
subset of $\Omega.$ We recall the following essentially standard
lemma (see the appendix for a proof):
\begin{lem}
\label{lem:pseudo}Let $\phi$ be a psh function on a pseudoconvex
open set $\Omega.$ Then $\{\phi<0\}\cap\Omega$ is also pseudoconvex. 
\end{lem}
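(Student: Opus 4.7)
Plan: The natural candidate for a continuous psh exhaustion of $\Omega':=\Omega\cap\{\phi<0\}$ is
\[
\rho' := \max\bigl(\rho,\,-\log(-\phi)\bigr),
\]
where $\rho$ denotes a continuous psh exhaustion of $\Omega$. This $\rho'$ is psh on $\Omega'$: the function $\chi(t):=-\log(-t)$ is convex and increasing on $]-\infty,0[$, so $\chi\circ\phi=-\log(-\phi)$ is psh as the composition of a psh function with a convex increasing one; together with the plurisubharmonicity of $\rho$ and the stability of psh functions under finite maxima this gives the claim. The point of the construction is that $-\log(-\phi)\to+\infty$ as $\phi\to 0^{-}$, providing a barrier against the new piece of boundary $\{\phi=0\}\cap\Omega$, while $\rho$ handles the barrier against $\partial\Omega$.

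For the exhaustion property one computes
\[
\{\rho'\le C\}=\{\rho\le C\}\cap\{\phi\le -e^{-C}\}.
\]
The first factor is compact in $\Omega$ by the exhaustion property of $\rho$, and the intersection lies in $\{\phi<0\}\subset\Omega'$. Were $\phi$ continuous, $\{\phi\le -e^{-C}\}$ would be closed and the intersection compact in $\Omega'$, and we would be done.

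The main obstacle is that $\phi$ is merely upper semicontinuous, so the sublevel set $\{\phi\le c\}$ need not be closed; for instance $\phi(z):=M+\sum_{k\ge 1}2^{-k}\log|z-z_{k}|$ on $\C$ (with $z_{k}\to 0$ and $M$ chosen so that $\phi(0)=0$) is psh, equals $-\infty$ on the sequence $\{z_{k}\}$ while $\phi(0)=0$, so $\{\phi\le c\}$ has $0$ as a limit point for every $c<0$. I would circumvent this by approximating $\phi$ from above by a decreasing sequence of smooth psh functions $\phi_{j}\searrow\phi$, constructed locally by convolution with a standard mollifier and globalized by a Richberg-type regularization on the pseudoconvex domain $\Omega$. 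For each $j$ the open set $\Omega'_{j}:=\Omega\cap\{\phi_{j}<0\}$ admits the continuous psh exhaustion $\max(\rho,-\log(-\phi_{j}))$, the argument now going through cleanly because $\phi_{j}$ is continuous and hence $\{\phi_{j}\le -e^{-C}\}$ is closed; so each $\Omega'_{j}$ is pseudoconvex. Since $\phi_{j}\searrow\phi$, the sets $\Omega'_{j}$ form an increasing sequence with union $\Omega'$, and the Behnke--Stein theorem (an increasing union of pseudoconvex open sets in $\C^{n}$ is pseudoconvex) yields the pseudoconvexity of $\Omega'$, as required.
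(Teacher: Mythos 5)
Your argument is correct and it is in fact \emph{more} careful than the appendix proof in the paper, which uses the same two building blocks --- plurisubharmonicity of $-\log(-\phi)$ and the maximum with a psh exhaustion $\rho$ of $\Omega$ --- but simply asserts that $\max\bigl(\rho,-\log(-\phi)\bigr)$ ``clearly'' exhausts $\Omega\cap\{\phi<0\}$. As you rightly point out, this assertion is not automatic when $\phi$ is only upper semi-continuous: the sublevel sets $\{\phi\le -\epsilon\}$ need not be closed, so $\{\max(\rho,-\log(-\phi))\le C\}$ can accumulate on $\{\phi=0\}$; your example with $\phi(z)=M+\sum 2^{-k}\log|z-z_k|$ makes this concrete, and one can even promote it to $\C^{n}$ by taking $\phi$ independent of the remaining variables. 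Your fix via Behnke--Stein is the right one and makes the argument watertight.

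Two remarks on the details. First, for the plurisubharmonicity of $-\log(-\phi)$ you invoke the composition rule with the convex increasing function $\chi(t)=-\log(-t)$ on $]-\infty,0[$, whereas the paper computes $\partial\bar\partial(-\log(-\phi))$ directly (first for smooth $\phi$, then by regularization); these are equivalent and both standard, but your route is cleaner and avoids the distributional/regularization discussion in the general case. Second, you do not actually need a global ``Richberg-type'' decreasing regularization of $\phi$ on all of $\Omega$, which is itself a somewhat delicate point. It suffices to exhaust $\Omega$ by relatively compact pseudoconvex open subsets $\Omega_{j}:=\{\rho<j\}$ and, for each $j$, take $\phi_{j}:=\phi*\sigma_{\epsilon_{j}}$ with $\epsilon_{j}$ small enough that the convolution is defined on a neighbourhood of $\overline{\Omega_{j}}$; since convolutions of a psh function decrease as the mollification parameter decreases, $\Omega_{j}\cap\{\phi_{j}<0\}$ is an increasing sequence of pseudoconvex open sets (the continuous case of the lemma, which your exhaustion computation establishes cleanly) with union $\Omega\cap\{\phi<0\}$, and Behnke--Stein concludes. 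This avoids any gluing of local mollifications and keeps the proof elementary.
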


We also recall that the following standard facts \cite[Theorem 5.5.]{dem},
which allows one to construct a range of different types of psh functions:
\begin{lem}
If $\psi_{1},...,\psi_{r}$ are psh functions and $\chi(t_{1},...,t_{r})$
is a convex function on $\R^{r}$ which is increasing in each $t_{i},$
then $\chi(\psi_{1},...,\psi_{r})$ is psh. In particular, if $\alpha_{1},...,\alpha_{r}$
are non-negative functions, then
\[
\sum_{i=1}^{r}\alpha_{i}\psi_{i},\,\,\,\log\sum_{i=1}^{r}e^{\alpha_{i}\psi_{i}}\text{ and \ensuremath{\max\{\psi_{1},...,\psi_{r}\}} }
\]
are psh functions.
\end{lem}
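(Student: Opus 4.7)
My plan is to (i) reduce to smooth data by mollification, (ii) prove the smooth case by a direct chain-rule computation of the complex Hessian, and (iii) derive the three stated corollaries by choosing $\chi$ appropriately.

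For step (i), I would work locally and replace each $\psi_i$ by a standard mollification $\psi_{i,\epsilon}$, which, as recalled just before the statement, is smooth psh and decreases to $\psi_i$ as $\epsilon\downarrow 0$. If $\chi$ is not already $C^2$, I would similarly mollify it on $\R^r$ to obtain a smooth $\chi_\delta$, still convex and increasing in each variable. Continuity and monotonicity of $\chi$ make $\chi_\delta(\psi_{1,\epsilon},\dots,\psi_{r,\epsilon})$ decrease to $\chi(\psi_1,\dots,\psi_r)$ as first $\epsilon\downarrow 0$ and then $\delta\downarrow 0$. Since a decreasing limit of psh functions is psh, it is enough to prove the statement when $\chi$ and all $\psi_i$ are smooth.

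In the smooth case, for any $z$ and any vector $v\in\C^n$, the chain rule together with the reality of each $\psi_i$ (which gives $\partial\psi_i/\partial\bar z_\ell=\overline{\partial\psi_i/\partial z_\ell}$) yields
\[
\sum_{k,\ell} v_k\bar v_\ell\,\frac{\partial^2\chi(\psi_1,\dots,\psi_r)}{\partial z_k\partial\bar z_\ell}
=\sum_{i=1}^r \chi_i \sum_{k,\ell} v_k\bar v_\ell\,\frac{\partial^2\psi_i}{\partial z_k\partial\bar z_\ell}
+\sum_{i,j=1}^r\chi_{ij}\,w_i\bar w_j,
\]
where $\chi_i$ and $\chi_{ij}$ denote the partial derivatives of $\chi$ at $(\psi_1(z),\dots,\psi_r(z))$ and $w_i:=\sum_k v_k\,\partial\psi_i/\partial z_k\in\C$. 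The first sum is $\geq 0$ because $\chi_i\geq 0$ (monotonicity of $\chi$) and each complex Hessian $\partial\bar\partial\psi_i$ is positive semi-definite (plurisubharmonicity of $\psi_i$). For the second sum, write $w_i=a_i+ib_i$ with $a_i,b_i\in\R$; using the symmetry $\chi_{ij}=\chi_{ji}$, the imaginary part cancels and the real part equals $\sum_{i,j}\chi_{ij}(a_ia_j+b_ib_j)\geq 0$, since $(\chi_{ij})$ is positive semi-definite on $\R^r$ by convexity of $\chi$. Adding the two terms shows $\partial\bar\partial(\chi\circ(\psi_1,\dots,\psi_r))\succeq 0$, i.e.\ smooth plurisubharmonicity.

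For the corollaries, take $\chi(t)=\sum_i\alpha_i t_i$ (linear, hence convex, and increasing when $\alpha_i\geq 0$) to obtain $\sum_i\alpha_i\psi_i$; take $\chi(t)=\log\sum_i e^{\alpha_i t_i}$ (the classical log-sum-exp function, convex and, for $\alpha_i\geq 0$, increasing in each $t_i$) to obtain $\log\sum_i e^{\alpha_i\psi_i}$; and for the maximum either argue directly (a finite maximum of usc subharmonic functions is usc and subharmonic along each complex line) or apply the case already proved to the smooth convex increasing approximants $k^{-1}\log\sum_i e^{k t_i}\searrow \max_i t_i$ and pass to the decreasing limit. The only step requiring more than bookkeeping is the complex-versus-real issue in the second term of the Hessian computation: one needs to observe that a real symmetric positive semi-definite matrix automatically induces a Hermitian positive semi-definite form when its arguments are allowed to be complex, which is a short calculation pivoting on the symmetry $\chi_{ij}=\chi_{ji}$.
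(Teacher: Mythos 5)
The paper offers no proof of this lemma; it is quoted as a standard fact from \cite[Theorem 5.5]{dem}, so there is no internal argument to compare against. Your proof is correct, and the Hessian computation in the smooth case is done properly -- in particular the passage from the real positive semi\-definiteness of $(\chi_{ij})$ to the Hermitian form $\sum_{i,j}\chi_{ij}w_i\bar{w}_j$ via the decomposition $w_i=a_i+ib_i$ and the symmetry $\chi_{ij}=\chi_{ji}$ is exactly the right observation. The one point to flag is in the reduction: in the final step you need $\chi_\delta\downarrow\chi$ as $\delta\downarrow0$, and this is not a consequence of convexity and a general mollifier. It does hold if you take the kernel $\rho$ symmetric, since then $\chi_\delta(t)=\int\frac{1}{2}\left[\chi(t-\delta s)+\chi(t+\delta s)\right]\rho(s)\,ds$ and each integrand is a convex even function of $\delta$, hence non\-decreasing on $\delta\geq0$; alternatively one can bypass monotone convergence by noting that a finite convex $\chi$ on $\R^r$ is locally Lipschitz, so $\chi_\delta\to\chi$ uniformly on compacts and therefore $\chi_\delta(\psi_{1,\epsilon},\dots,\psi_{r,\epsilon})\to\chi(\psi_{1,\epsilon},\dots,\psi_{r,\epsilon})$ uniformly on compacts, and uniform limits of psh functions are psh. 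Either fix is routine. It is worth knowing that the usual proof (and the one in \cite{dem}) dispenses with smoothing $\chi$ altogether: upper semi\-continuity of the composition follows from continuity and monotonicity of $\chi$, and the submean inequality along a complex line follows by first applying the submean inequality for each $\psi_i$ together with monotonicity of $\chi$, then Jensen's inequality for the convex $\chi$ against the circle average. That route handles all three corollaries (including the nonsmooth $\max$) in one stroke, whereas your computation buys a concrete picture of where the positivity comes from at the cost of the double regularization.
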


In particular, if $\psi$ is psh and $\chi$ is a convex increasing
function on $\R,$ then the composed function $\chi(\phi)$ is psh.
Since $|f(z)|^{2}$ is psh for any holomorphic function $f(z)$ on
$\C^{n}$ (as follows, for example directly from the characterization)
it follows form the previous lemma that 
\[
\psi(z):=\log(\sum_{i=1}^{r}|f_{i}(z)|^{2})
\]
 is psh for any given holomorphic functions $f_{1},...,f_{r}.$ In
particular, $\log|z|^{2}$ is psh. Moreover, if a function $\psi$
only depends on  the absolute values of $z_{i},$ then $\psi(z)$
is psh iff it is convex with respect $(\log|z_{1}|,...,\log|z_{n}|)\in\R^{n}.$ 

\subsubsection{\label{subsec:The-class-}The class $PSH_{\boldsymbol{a}}$ }

Given $\boldsymbol{a}=(a_{1},..,a_{m})\in]0,\infty[^{n}$ we denote
by $\mathcal{V}_{\boldsymbol{a}}$ the vector field on $\C^{n}$ defined
by 
\begin{equation}
\mathcal{V}_{\boldsymbol{a}}:=\sum_{i=1}^{m}a_{i}\frac{\partial}{\partial\theta_{i}},\label{eq:def of vector field}
\end{equation}
 where $\frac{\partial}{\partial\theta_{i}}$ denotes the generator
of the $S^{1}-$action on $\C^{n}$ which rotates the $z_{i}-$coordinate
and leaves the other coordinates invariant (i.e. $e^{i\theta}\cdot z:=(z_{1},...,e^{i\theta}z_{i},...,z_{n})$).
In other words, $\mathcal{V}_{a}$ is the Hamiltonian vector field
corresponding to the Hamiltonian 
\begin{equation}
h_{a}(z):=\sum_{i=1}^{m}\frac{1}{2a_{i}}|z_{i}|^{2}\label{eq:def of Hamiltonian for oscillator}
\end{equation}
 on $\R^{2n},$ endowed with its standard symplectic form. Note that
the Hamiltonian $h_{a}$ is plurisubharmonic on $\C^{n}$ (since $|z_{i}|^{2}$
is). Now if $U$ is an open connected subset of $\C^{n}$ then the
class $PSH_{\boldsymbol{a}}(U)$ is defined as the class of all psh
functions $\psi$ on $U,$ not identically $-\infty,$ such that $\mathcal{V}_{\boldsymbol{a}}(\psi)=0.$
More generally, if $X$ is closed connected subset of $\C^{n}$ we
denote by $PSH_{\boldsymbol{a}}(X)$ the class of all functions $\psi$
such that $\psi$ is in $PSH_{\boldsymbol{a}}(U)$ for some open subset
$U$ containing $X$ (depending on $\psi$). 
\begin{example}
\label{exa:quasi-homo}\emph{(The ``algebraic and quasi-homogeneous''
case).} If $P(z_{1},...,z_{n})$ is a\emph{ }quasi-homogeneous polynomial,
i.e. there exists exist positive integer weights $a_{1},..a_{n}$
such that $P$ is homogeneous of degree $d$ wrt the corresponding
$\mathbb{R}_{+}$-action: 
\begin{equation}
P(c^{a_{1}}z_{1},...,\lambda^{a_{n}}z_{n})=c^{d}F(z_{1},...,z_{n})\label{eq:hypersurf}
\end{equation}
for any $c\in\mathbb{R}_{+},$ then $\log|P(z)|$ is in $PSH_{\boldsymbol{a}}(\C^{n}).$
More generally, if $P_{j}$ are polynomials on $\C^{n}$ which are
quasi-homogeneous of degree $d_{j}$ for the same weighs $a_{1},...,a_{n}$
and $\alpha_{i}>0,$ then 
\begin{equation}
\psi(z):=\log(\sum_{j=1}^{r}|P_{j}(z)|^{\alpha_{j}})\in PSH_{\boldsymbol{a}}(\C^{n})\label{eq:alg quasi-hom case}
\end{equation}
In the particular case when all $\alpha_{i}=1$ and $d_{i}=d$ we
call $d$ the degree of $\psi.$ 
\end{example}

By composing the previous examples with convex increasing functions
$\chi$ on $\R$ one may fabricate an abundance of examples of functions
in the class $PSH_{\boldsymbol{a}}(\C^{n}).$ For example, $\sum_{j=1}^{M}|P_{j}(z)|^{\alpha_{j}}$
is in $PSH_{\boldsymbol{a}}$ if $P_{j}(z)$ is a homogeneous polynomial
(wrt $\boldsymbol{a}$) and $\alpha_{j}>0.$ 

\section{\label{sec:Thermodynamical-equivalence}General properties of $S(e)$
and thermodynamic equivalence of ensembles}

In this section general properties of the entropy $S(e)$ are studied
and the notion of thermodynamic equivalence of ensembles introduced
in \cite{e-h-t} is recalled. The main new feature in this section,
as compared to the setup in \cite{e-h-t}, is that $E(\mu)$ is not
assumed to be continuous. This leads to some subtle aspects that do
not seem to have been addressed before. Throughout the section we
will consider the Very General Setup introduced in Section \ref{subsec:Very-general-setup}.

\subsection{Monotonicity of $S(e)$ }

The following lemma generalizes \cite[Prop 2.2]{clmp2} (with a similar
proof) and involves the following ad hoc property: 
\begin{defn}
Assume that $X$ is compact. Then a functional $E(\mu)$ on $\mathcal{P}(X)$
has the \emph{affine continuity property }if for any $\mu_{1}\in\mathcal{P}(X)$
such that $E(\mu_{1})<\infty$ and $S(\mu_{1})>-\infty$ the function
$t\mapsto E(\mu_{0}(1-t)+t\mu_{1})$ is continuous on $[0,1].$ For
a general $X$ the affine continuity property is said to hold if it
holds for all compact subsets of $X.$
\end{defn}

\begin{lem}
\label{lem:decreasing}(monotonicity of $S(e))$. Assume that $X$
is compact and $e_{0}:=E(\mu_{0})<\infty.$
\begin{itemize}
\item If $E(\mu)$ is convex on $\mathcal{P}(X),$ then $S(e)$ is increasing
for $e\leq e_{0}$ and strictly increasing in the subinterval where
$S(e)>-\infty.$ In particular, 
\[
S_{-}(e):=\sup_{E(\mu)\leq e}S(\mu)
\]
\item If $E(\mu)$ has the affine continuity property, then $S(e)$ is decreasing
for $e\geq e_{0}$ and strictly decreasing in the subinterval where
$S(e)>-\infty.$ In particular, 
\[
S_{+}(e):=\sup_{E(\mu)\geq e}S(\mu)
\]
More precisely, in the second point there is no need to assume that
$E$ is lsc on $\mathcal{P}(X)$ and thus it also follows that $S(e)$
is increasing for $e\leq E(\mu_{0}).$
\end{itemize}
\end{lem}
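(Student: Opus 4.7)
The plan is to interpolate along the affine segment $\nu_t := (1-t)\mu_0 + t\mu$ for $t \in [0,1]$, where $\mu \in \mathcal{P}(X)$ is a measure carrying a specified amount of energy. Two features of this segment underpin the argument. First, $t \mapsto E(\nu_t)$ is continuous on $[0,1]$: in the first item this is because the map is convex (by convexity of $E$), lower semi-continuous (by the lsc of $E$) and finite at the endpoints, which forces continuity on the closed interval; in the second item it is the affine continuity assumption itself. Second, $t \mapsto S(\nu_t)$ is strictly concave and attains its global maximum $0$ at $t = 0$ — relative entropy is strictly concave on $\mathcal{P}(X)$ with unique maximizer $\mu_0$ — so whenever $\mu \neq \mu_0$ the map $t \mapsto S(\nu_t)$ is strictly decreasing on $[0,1]$.

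From these two facts, the non-strict monotonicity together with the identities $S_-(e) = S(e)$ for $e \leq e_0$ and $S_+(e) = S(e)$ for $e \geq e_0$ fall out of a single interpolation argument. Fix $e$ in the relevant range and any $\mu$ with $E(\mu) \leq e$ (first item) or $E(\mu) \geq e$ (second item) and $S(\mu) > -\infty$; the intermediate value theorem applied to $t \mapsto E(\nu_t)$ produces $t^* \in [0,1]$ with $E(\nu_{t^*}) = e$, and concavity of $S$ along the segment yields $S(\nu_{t^*}) \geq (1-t^*)\cdot 0 + t^* S(\mu) \geq S(\mu)$, using $S(\mu) \leq 0$. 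Taking the supremum over such $\mu$ gives the claimed identity, whence the respective monotonicity is immediate from the inclusions $\{E \leq e_1\} \subset \{E \leq e_2\}$ and $\{E \geq e_2\} \subset \{E \geq e_1\}$ for $e_1 < e_2$.

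For strict monotonicity, the strict concavity of $S$ along the segment improves this to the quantitative gap $S(\nu_{t^*}) \geq S(\mu) + (1-t^*)|S(\mu)|$ whenever $\mu \neq \mu_0$; the task is to preserve this gap upon passage to the supremum. In the first item, convexity of $E$ delivers the uniform bound $1-t^* \geq (e_0 - e_2)/(e_0 - e_1) > 0$ via $E(\nu_t) \leq (1-t)E(\mu) + t e_0$, and since $|S(\mu)|$ is bounded below by $|S(e_1)| > 0$ along a maximizing sequence (as $e_1 \neq e_0$ forces $S(e_1) < 0$), strictness survives in the limit. In the second item, lacking such a convexity handle, I would instead invoke compactness of $\mathcal{P}(X)$ together with the upper semi-continuity of $S$ to extract a weak limit $\mu^*$ from a maximizing sequence at energy $e_2$, and apply the strict-concavity argument to $\mu^*$; when lsc of $E$ is available one also obtains $E(\mu^*) \leq e_2$, and a short case analysis on the value of $E(\mu^*)$ — using the already established identity $S_+ = S$ to iterate the interpolation in the non-attained cases — closes the argument. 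I expect the main obstacle to be this non-attained case in the second item: handling it while assuming only affine continuity of $E$ (so without lsc, as the closing remark of the lemma advertises) requires one to replace sub-optimal measures by nearly-optimal ones at slightly perturbed energies, leveraging the continuity of $t \mapsto E(\nu_t)$ rather than any general semicontinuity of $E$ on $\mathcal{P}(X)$.
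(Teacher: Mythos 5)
Your proposal uses the paper's own mechanism — interpolation along the affine segment from $\mu_0$ combined with strict concavity of relative entropy — with one structural difference: in the first item the paper extracts a maximizer of $S(\mu)$ on the compact set $\{E\leq e\}$ (via usc of $S$) and argues by contradiction that its energy equals $e$, whereas you run the intermediate value theorem directly on $t\mapsto E(\nu_t)$, using the correct observation that a convex, lsc function on $[0,1]$ which is finite at both endpoints is continuous there. This makes the two items formally parallel; the paper's argument for the second item coincides with your IVT/concavity step.

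On the strict monotonicity, your first-item bound has a sign slip: with $\nu_t=(1-t)\mu_0+t\mu$ and $E(\mu)\leq e_1$, convexity gives $e_2=E(\nu_{t^*})\leq(1-t^*)e_0+t^*e_1$, hence $1-t^*\geq(e_2-e_1)/(e_0-e_1)$ rather than $(e_0-e_2)/(e_0-e_1)$. The difficulty you flag in the second item is a genuine obstruction, not a closable detail of your particular argument. The paper's own proof of the second item only establishes the identity $S_+(e)=S(e)$ and hence weak decrease; strict decrease requires the supremum defining $S(e)$ to be attained, and this can fail under the stated hypotheses. Indeed, Proposition \ref{prop:cata} exhibits a power-law kernel on a compact star-shaped domain satisfying the affine continuity property and $E(\mu_0)<\infty$ for which $S(e)\equiv S(e_0)>-\infty$ on $[e_0,\infty)$ with no maximum entropy measure above $e_0$. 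So the strict-decrease assertion needs an attainment hypothesis (such as the one supplied by Proposition \ref{prop:macro eq}); your compactness/usc extraction of a weak limit $\mu^*$ and the subsequent strict-concavity step is exactly the right idea in the case where a maximizer exists, and cannot be pushed further in general.
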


\begin{proof}
To prove the first point first observe that, since $E$ is lsc and
$X$ is compact $\{E(\mu)\leq e\}$ is compact (or empty). We may
assume that $S(\mu)$ is not identically equal to $-\infty$ on $\{E(\mu)\leq e\}$(otherwise
we are done). Since $S(\mu)$ is usc the sup of $S(\mu)$ on the set
$\{E(\mu)\leq e\}$ is thus attained at some $\mu_{1}$ in the set.
Assume in order to get a contradiction that $E(\mu_{1})<e.$ Consider
the affine segment $\mu_{t}$ in $\mathcal{P}(X)$ connecting $\mu_{0}$
and $\mu_{1};$ $\mu_{t}:=\mu_{0}(1-t)+t\mu_{1}$ for $t\in[0,1].$
By the assumed convexity of $E(\mu)$ 
\[
E(\mu_{t})\leq(1-t)E(\mu_{0})+tE(\mu_{1})<e
\]
 for $t$ sufficiently small, using that $E(\mu_{0})<\infty.$ But,
as is well-known, $S(\mu)$ is strictly concave on $\{S(\mu)>-\infty\}\subset\mathcal{P}(X)$
and attains its maximum at $\mu_{0}$ and hence $S(\mu_{t})<S(\mu_{1})$
for any $t\in[0,1[$ (as follows from Jensen's inequality). This contradicts
the assumption that $\mu_{1}$ is a maximizer and hence it must be
that $E(\mu_{1})=e,$ as desired.

To prove the second point it will be enough to show that for any $\mu_{1}\in\mathcal{P}(X)$
such that $E(\mu_{1})\geq e$ and $S(\mu_{1})>-\infty$ there exists
$\mu\in\mathcal{P}(X)$ such that $E(\mu)=e$ and $S(\mu)\geq S(\mu_{1}).$
To this it will, in the light of the previous argument, be enough
to show that there exists some $t\in[0,1]$ such that $E(\mu_{t})=e.$
But, by assumption $E(\mu_{0})\leq e$ and $E(\mu_{1})\geq e.$ We
can thus conclude by invoking the assumption that $E(\mu_{t})$ is
continuous. Since we have not used that $E$ is lsc on $\mathcal{P}(X)$
the same argument applies to $-E,$ which proves the last statement
of the lemma. 
\end{proof}

\subsection{\label{subsec:Thermodynamical-equivalence}Thermodynamic equivalence
of ensembles}

In this section we consider the Very General Setup. It follows readily
from the definitions that the Legendre-Fenchel transform $S^{*}$
of $S$ coincides with the free energy $F(\beta):$
\[
S^{*}=F.
\]
 Following \cite{e-h-t,t-e-t} we make the following
\begin{defn}
\emph{thermodynamic equivalence of ensembles} is said to hold \emph{globally}
if 
\[
S=F^{*}
\]
 and thermodynamic equivalence of ensembles is said to hold at\emph{
an energy level $e$} if $S(e)>-\infty$ and
\[
S(e)=F^{*}(e).
\]
\end{defn}

Recall that, in general, a function $S(e)$ is usc and concave iff
$S^{**}=S.$ It was shown in \cite[Prop 3.1a]{e-h-t} that $S$ is
always usc under the assumption that $X$ is compact and $E(\mu)$
is continuous wrt the weak topology on $\mathcal{P}(X)$ (this is
the case if $W$ and $V$ are continuous). In this case global thermodynamic
equivalence thus holds iff $S$ is concave. But here we need consider
the case when the continuity assumptions are not satisfied (and moreover
$X$ may be non-compact). We will impose the following compatibility
property between $\mu_{0}$ and $E(\mu)$.
\begin{defn}
A measure $\mu_{0}$ in $X$ is said to has the \emph{Energy Approximation
Property} if for any compactly supported probability measure $\mu$
there exists a sequence $\mu_{j}\in\mathcal{P}(X),$ supported in
the same compact set, converging weakly towards $\mu$ with the following
properties:
\end{defn}

\begin{itemize}
\item $\mu_{j}$ is absolutely continuous with respect to $\mu_{0}$
\item $\lim_{j\rightarrow\infty}E(\mu_{j})=E(\mu)$ 
\end{itemize}
\begin{rem}
This property was introduced in the context of large deviation theory
in\cite{c-g-z} and studied from a potential-theoretic point of view
in \cite{be1} (see the discussion in the end of Section \ref{subsec:The-necessity-of}).
\end{rem}

The energy approximation property ensures that $S(e)$ is finite on
$]e_{min},e_{max}[:$
\begin{lem}
\label{lem:energy appr implies S finite etc}Assume that $\mu_{0}$
has the energy approximation property and the affine continuity property
on compact subspaces of $X$. Then $S(e)$ is finite on $]e_{min},e_{max}[.$
\end{lem}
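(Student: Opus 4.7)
The plan is to show that for every $e\in\,]e_{\min},e_{\max}[$ one can exhibit a compactly supported probability measure $\mu^{e}$ with $E(\mu^{e})=e$ and $S(\mu^{e})>-\infty$; since $S(\mu)\leq S(\mu_{0})=0$ in general, this gives $S(e)\in\R$. By symmetry it suffices to treat the case $e\geq e_{0}$ (the case $e\leq e_{0}$ being handled identically by interchanging the roles of $e_{\min}$ and $e_{\max}$). The case $e=e_{0}$ is trivial (or reduces to a compact restriction of $\mu_{0}$), so we focus on $e_{0}<e<e_{\max}$.

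First I would pick $\nu\in\mathcal{P}_{0}(X)$ with $E(\nu)>e$, which exists by definition of $e_{\max}$. Applying the energy approximation property to $\nu$ gives a sequence $\nu_{j}\in\mathcal{P}(X)$, supported in a common compact set $K$, absolutely continuous with respect to $\mu_{0}$, and with $E(\nu_{j})\to E(\nu)$; after a truncation/normalization of the densities one may also arrange $S(\nu_{j})>-\infty$. For $j$ large enough $E(\nu_{j})>e$. Now consider the affine segment
\[
\mu_{t}:=(1-t)\mu_{0}+t\nu_{j},\qquad t\in[0,1].
\]
Because $E(\nu_{j})<\infty$ and $S(\nu_{j})>-\infty$, the affine continuity property on compact subsets (applied inside a compact set containing $K$) tells us that $t\mapsto E(\mu_{t})$ is continuous on $[0,1]$. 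Since $E(\mu_{0})=e_{0}\leq e<E(\nu_{j})$, the intermediate value theorem produces some $t^{*}\in[0,1]$ with $E(\mu_{t^{*}})=e$.

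Finally, the strict concavity of $S(\cdot)$ on $\{S>-\infty\}\subset\mathcal{P}(X)$ yields
\[
S(\mu_{t^{*}})\geq(1-t^{*})S(\mu_{0})+t^{*}S(\nu_{j})=t^{*}S(\nu_{j})>-\infty,
\]
so $S(e)\geq S(\mu_{t^{*}})>-\infty$, as required. The main obstacle I expect is the second bullet of Step 2: the energy approximation property literally only asserts absolute continuity of $\nu_{j}$, not finite relative entropy. This gap must be closed either by a further truncation of the densities (using that $-x\log x$ is bounded on $[0,1]$ and that a bounded density on a compact set has finite relative entropy against a probability prior), combined with the lsc-approximation of $E$, or by invoking the diagonal strengthening of the energy approximation property already implicit in the references to \cite{c-g-z,be1}. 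Handling this technical step carefully, while respecting that $\mu_{0}$ itself need not be compactly supported and that affine continuity is only postulated on compact subsets, is where the real work of the proof lies.
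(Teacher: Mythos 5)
Your proposal is correct and takes essentially the same route as the paper. The paper's own proof first reduces, via Lemma \ref{lem:decreasing}, to exhibiting a compactly supported $\mu$ with $E(\mu)\leq e$ (resp. $\geq e$) and $S(\mu)>-\infty$, and then produces such a $\mu$ exactly as you do: apply the energy approximation property to a $\nu$ with $E(\nu)$ strictly on the far side of $e$, truncate the densities $\rho_{j}$ to $L^{\infty}$ to guarantee $S(\mu_{j})>-\infty$, and use a diagonal argument. What you do is simply inline the affine‑segment/intermediate‑value‑theorem step and the concavity estimate $S(\mu_{t})\geq tS(\nu_{j})$ rather than citing the monotonicity Lemma \ref{lem:decreasing}, but those are precisely the same ingredients used inside that lemma's proof. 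Your closing remark correctly identifies the one genuine technical point that the definition of the energy approximation property does not assert finite relative entropy, and the truncation $\rho_{j}\mapsto\min\{\rho_{j},R\}/\!\int\min\{\rho_{j},R\}\,\mu_{0}$ combined with a diagonal argument (as in the paper) is exactly how to close it; the observation that one must also respect the ``on compact subspaces'' qualifier in the affine continuity property is likewise a real subtlety that the paper handles implicitly by working within a fixed compact $K$ and invoking Lemma \ref{lem:decreasing} there.
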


\begin{proof}
By Lemma \ref{lem:decreasing} we just have to verify the claim that
there exists some $\mu\in\mathcal{P}(X)_{0}$ such that $E(\mu)\leq e$
and $S(\mu)>-\infty.$ To this end take $\delta>0$ such that $e-\delta>e_{min}.$
By the verify definition of $e_{min}$ there exists $\mu$ such that
$E(\mu)\leq e-\delta.$ Moreover, by the monotone convergence theorem
$\mu$ may be chosen to have compact support. Now take a sequence
$\mu_{j}(=\rho_{j}\mu_{0})$ converging weakly towards $\mu$ with
the energy approximation property. Replacing $\rho_{j}$ with $\max\{\rho_{j},R\}/\int\{\rho_{j},R\}\mu_{0}$
for a given $R>0$ and using a diagonal argument we may as well assume
that $\rho_{j}\in L^{\infty}.$ In particular, 
\[
E(\mu_{j})\leq e,\,\,\,S(\mu_{j})>-\infty
\]
for $j$ sufficiently large, proving the claim when $e\in]e_{min},e_{0}[.$
A similar approximation argument applies if instead $e\in]e_{0},e_{max})[$
(again using Lemma \ref{lem:decreasing}). Finally, if $e=E(\mu_{0})$
then $S(\mu)\geq S(\mu_{0})=0,$ which concludes the proof of the
claim above.
\end{proof}
\begin{prop}
\label{prop:thermo equiv under energy appr}In the Very General Setup
the following holds: 
\begin{itemize}
\item If the  entropy $S(e)$ is concave on $]e_{min},e_{max}[$ and $\mu_{0}$
has the energy approximation property and the affine continuity property,
then $S(e)$ is continuous on $]e_{min},e_{max}[$ and thermodynamic
equivalence of ensembles holds for any $e\in]e_{min},e_{max}[.$ 
\item If the  entropy $S(e)$ is concave and continuous on $[e_{0},e_{max}[$
then thermodynamic equivalence of ensembles holds for any $e\in[e_{0},e_{max}[$
and moreover for any\emph{ }$e\in[e_{0},e_{max}[$ 
\begin{equation}
S(e)=\inf_{\beta\leq0}\left(\beta e-F(\beta)\right)\label{eq:S of e as inf over beta neg}
\end{equation}
\item If the entropy $S(e)$ is concave and continuous on $]e_{\min},e_{0}]$
then thermodynamic equivalence of ensembles holds for any $e\in]e_{\min},e_{0}]$
\end{itemize}
\end{prop}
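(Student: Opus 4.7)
The starting observation is the definitional identity $F = S^{*}$, which forces $F^{*} = S^{**}$, the upper semi-continuous concave envelope of $S$ characterised by (\ref{eq:f star start as concave envelope}) as the pointwise infimum of affine majorants. Thermodynamic equivalence at $e$ thus amounts to the assertion $S(e) = S^{**}(e)$.

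For the first assertion, Lemma \ref{lem:energy appr implies S finite etc} ensures that $S$ is finite on $I := \,]e_{\min}, e_{\max}[$ under the energy approximation and affine continuity properties, and a concave function finite on an open interval of $\R$ is automatically continuous there, which gives the continuity claim. To show $S^{**} = S$ on $I$, I would invoke Lemma \ref{lem:convex envol affine} with $U$ a relatively compact sub-interval of $I$ containing a given $e$: since $\sup_{\R} S \leq 0$ and $S$ is finite and continuous (hence usc) on $U$, the set $\{S^{**} > S\} \cap U$ is open in $U$ and $S^{**}$ is affine on each of its connected components. If $e$ lay in this set, the maximal subinterval $(a, b) \subset I$ in the corresponding component, obtained by letting $U$ exhaust $I$, would either have both endpoints inside $I$ with $S^{**} = S$ there, or a component extending to $e_{\min}$ or $e_{\max}$. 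In the former case $S^{**}|_{[a, b]}$ is the chord joining $(a, S(a))$ and $(b, S(b))$, contradicting the concavity of $S$ on $[a, b] \subset I$ which forces $S \geq$ chord. The remaining boundary case is ruled out by combining the energy approximation property with the global constraints $S \leq 0$ and $S(e_{0}) = 0$: a supporting line to $S$ inside $I$ at an interior endpoint provides a global affine majorant of $S$ (using $S = -\infty$ outside $[e_{\min}, e_{\max}]$), forcing $S^{**}(e) \leq S(e)$.

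For the second assertion, I would construct an explicit concave usc majorant $g$ of $S$ coinciding with $S$ on $[e_{0}, e_{\max}[$: set $g(e) := 0$ for $e \leq e_{0}$ and $g(e) := S(e)$ for $e \in [e_{0}, e_{\max}[$, extended by the usc concave extension for $e \geq e_{\max}$. Since $S$ attains its maximum $0$ at $e_{0}$ and is concave on $[e_{0}, e_{\max}[$, the right derivative $S'(e_{0}+) \leq 0$, so the two pieces glue into a concave function; the pointwise bound $S \leq 0$ then yields $g \geq S$ globally. Hence $S^{**} \leq g$, and combined with the trivial $S^{**} \geq S = g$ on $[e_{0}, e_{\max}[$, this gives thermodynamic equivalence there. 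For the refined formula (\ref{eq:S of e as inf over beta neg}), I would observe that $F(0) = -\sup_{e} S(e) = 0$ and that, for $\beta > 0$, the bound $F(\beta) \leq F_{\beta}(\mu_{0}) = \beta e_{0}$ gives $\beta e - F(\beta) \geq \beta(e - e_{0}) \geq 0$ for $e \geq e_{0}$; since these values exceed $S(e) \leq 0$, the infimum in $F^{*}(e)$ is already realised over $\beta \leq 0$.

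The third assertion is proved by the symmetric construction: set $g := S$ on $]e_{\min}, e_{0}]$ and $g := 0$ on $[e_{0}, \infty[$, extended by the usc concave extension past $e_{\min}$. The concavity of $S$ on $]e_{\min}, e_{0}]$ with maximum $0$ at the right endpoint $e_{0}$ forces $S'(e_{0}-) \geq 0$, so $g$ is concave, usc and dominates $S$, yielding $S^{**} = S$ on $]e_{\min}, e_{0}]$. The principal technical obstacle throughout is the boundary behaviour of $S$ at $e_{\min}$ and $e_{\max}$ in the first assertion, where $S$ may fail to be upper semi-continuous; Lemma \ref{lem:convex envol affine} together with the energy approximation property is the critical device for controlling this.
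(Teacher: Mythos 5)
Your treatment of the second and third assertions essentially matches the paper's: construct a glued concave majorant $g$ that agrees with $S$ on the relevant half-interval, equals $0$ on the other side of $e_{0}$ (using $S(e_{0})=0$, $S\leq 0$ and the sign of the one-sided derivative at $e_{0}$), and take a supporting line to $g$ at the point of interest. Your argument for the refinement (\ref{eq:S of e as inf over beta neg}) is a legitimate variant: you observe $F(0)=0$ and $F(\beta)\leq F_{\beta}(\mu_{0})=\beta e_{0}$, so that $\beta e - F(\beta)\geq 0\geq S(e)$ for $\beta>0$ and $e\geq e_{0}$, whereas the paper instead notes that the subgradient of the glued function is non-positive on $[e_{0},e_{\max}[.$ One small technical caveat: the step ``$S^{**}\leq g$'' for the glued majorant $g$ (which may be $-\infty$ beyond $e_{\max}$) should really go through a supporting affine function to $g$ at the given $e$, since formula (\ref{eq:f star start as concave envelope}) is phrased in terms of \emph{finite} concave majorants; this is easily repaired.

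Your first-point argument, however, contains a genuine gap in the ``boundary sub-case.'' You reduce to showing that $\{S^{**}>S\}\cap\,]e_{\min},e_{\max}[$ is empty; the chord argument for a component $(a,b)$ with both endpoints interior is fine, but in the case where the component extends to, say, $e_{\max}$ you claim that a supporting line $\ell$ to $S$ at the interior endpoint $a$ ``forces $S^{**}(e)\leq S(e)$'' for $e\in(a,e_{\max})$. This does not follow: $\ell$ is a \emph{majorant} of $S$, so $S^{**}\leq\ell$ only yields $S^{**}(e)\leq\ell(e)$, and $\ell(e)\geq S(e)$ is the inequality going the wrong way. The slope of $S^{**}$ on $(a,e_{\max})$ can lie strictly between $S'(a+)$ and the slope of $\ell$ without any contradiction. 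The repair is the paper's direct argument, which also renders the entire $\{S^{**}>S\}$ and Lemma \ref{lem:convex envol affine} apparatus unnecessary here: for any $e\in\,]e_{\min},e_{\max}[,$ since $S$ is concave and finite there (Lemma \ref{lem:energy appr implies S finite etc}), $\partial S(e)\neq\emptyset$; pick $\beta\in\partial S(e)$ and set $s(e'):=\beta(e'-e)+S(e)$, so $s\geq S$ on $]e_{\min},e_{\max}[$ with $s(e)=S(e)$. The monotonicity from Lemma \ref{lem:decreasing} (available because you have assumed the affine continuity property) then gives $s\geq S$ also at the closed endpoints $e_{\min},e_{\max}$, and $S\equiv-\infty$ outside $[e_{\min},e_{\max}]$, so $s$ is a global affine majorant and $S^{**}(e)\leq s(e)=S(e)$, as required.
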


\begin{proof}
In order to show that $S(e_{1})=S^{**}(e_{1})$ at a given point $e_{1}$
in $]e_{max},e_{min}[$ it is enough to find an affine function $s$
on $\R$ such that $s\geq S$ and $s(e_{1})=S(e_{1})$ (by formula
\ref{eq:f star start as concave envelope}). But since $s$ is concave
and finite on $]e_{min},e_{max}[$ its superdifferential $\partial S$
is non-empty, i.e. contains some $\beta\in\R.$ This means that the
affine function 
\begin{equation}
s(e):=\beta(e-e_{1})+S(e_{1})\label{eq:def of affine func}
\end{equation}
 coincides with $S$ at $e$ and has the property that $s\geq S$
on $]e_{min},e_{max}[.$ Hence, by Lemma \ref{lem:decreasing}, $s\geq S$
on all of $\R,$ which proves the first point. 

To prove the second point in the proposition fix $e_{1}\in]E(\mu_{0}),e_{max}[.$
By formula \ref{eq:f star start as concave envelope} it will be enough
to find an affine function $s$ on $\R$ such that $s\geq S$ and
$s(e_{1})=S(e_{1}).$ To this end first define the function $f(e)$
to be equal to $S(e)$ on $[e_{0},e_{max}[$ and $e_{0}$ when $e<e_{0}.$
Thus $f(e)=\max\{e_{0},S(e)\}$ is continuous and convex on $]-\infty,e_{max}[.$We
then obtain the desired affine function $s$ by picking an element
$\beta$ in the superdifferential $\partial f$ of $f$ at $e_{1}$
and again defining $s(e)$ by formula \ref{eq:def of affine func}.
Finally, to prove the last formula we have to show that the infimum
in formula \ref{eq:S of e as inf over beta neg} is attained for some
$\beta\leq0.$ But this follows from the fact that, in the previous
step, $\beta$ in formula \ref{eq:def of affine func} is non-positive,
since $f$ is decreasing (by Lemma \ref{lem:decreasing}). The third
point is shown in essentially the same way as the second one.
\end{proof}
\begin{rem}
If $e_{max}<\infty,$ then it could happen that $S(e_{max})\neq S^{**}(e_{max})$
in the first point of the previous proposition.  Also note that in
the case when $E(\mu)$ is of the form $E=E_{W,V}$ (as in formula\ref{eq:def of E intro})
then $e_{max}=\infty$ holds if either there exists $x_{0}$ such
that $V(x_{0})=\infty$ or $(x_{0},y_{0})$ such that $W(x_{0},y_{0})=\infty.$
Indeed, then $E(\mu)=\infty$ for $\mu=\delta_{x_{0}}/2+\delta_{x_{1}}/2.$
\end{rem}

As shown in Theorem \ref{thm:convex cont iff energy appr} below the
energy approximation property is not merely a technical assumption,
but essential.

\subsection{\label{subsec:Priors-versus-linear}Priors versus linear constraints}

Now consider the Very General Setup in the case when $X$ is a domain
in $\R^{d}$ and $\mu_{0}=dx.$ Given a continuous function $\psi_{0}$
and $\lambda\in\R$ we may then replace $\mu_{0}$ with the prior
defined by the probability measure 
\[
\mu_{\lambda}:=e^{-\lambda\psi_{0}}dx/Z_{\lambda},\,\,\,Z_{\lambda}:=\int_{X}e^{-\lambda\psi_{0}}dx,
\]
 assuming that $Z_{\lambda}<\infty.$ The corresponding corresponding
entropy function $S_{\mu_{\lambda}}(e)$ is closely related to the
multi-variable entropy function $S(e,l)$ on $\R^{2}$ defined by
\[
S(e,l):=\sup_{\mu\in\mathcal{P}(X)_{0}}\left\{ S(\mu):\,\,E(\mu)=e,\,\,\,L(\mu)=l\right\} ,\,\,\,L(\mu):=\int_{X}\psi_{0}\mu,
\]
obtained by imposing the linear constraint $L(\mu)=l$ (where $S(\mu)$
denotes the entropy of $\mu$ relative to $dx).$ Indeed, it follows
readily from the definition that, for a fixed $e,$ the Legendre-Fenchel
transform of the function $\lambda\mapsto S_{\mu_{\lambda}}(e)$ is
given by $-S(e,l)-\log Z_{\lambda}.$ Hence, under the hypothesis
that\emph{ $S(e,l)$ is concave and lower-semicontinuous wrt $l,$}
inverting the Legendre-Fenchel transform gives
\[
S(e,l)=\inf_{\lambda}\left(S_{\mu_{\lambda}}(e)+\lambda l+\log Z_{\lambda}\right).
\]
 As a consequence, if $S_{\mu_{\lambda}}(e)$ is globally concave
with respect to $e,$ for any fixed $\lambda$ such that $Z_{\lambda}$
is finite, then $S(e,l)$ is globally concave on $\R^{2}.$ Multi-variable
entropy functions are studied in \cite{e-h-t}, from the point of
view of equivalence of ensembles, but here we will focus on one-variable
entropy functions defined with respect to appropriate priors. Note
that in the non-compact case when $X=\R^{d}$ the inclusion of a function
$\psi_{0}$ with sufficient growth at infinity is crucial in order
to get a prior measure with finite total mass. In the presence of
rotational symmetry the standard choice of a prior is a centered Gaussian
measure. 
\begin{rem}
More generally, given $r$ functions $\psi_{1},...,\psi_{r}$ on $\R^{d}$
and $\lambda_{1},...,\lambda_{r}\in\R^{d}$ one can consider the prior
$\mu_{\boldsymbol{\lambda}}=e^{-\sum\lambda_{i}\psi_{i}}/Z_{\boldsymbol{\lambda}}$
and the corresponding entropy function $S(e,\boldsymbol{l})$ on $\R^{1+d}.$
Then the previous considerations still apply if $\lambda l$ is replaced
by the scalar product between $\boldsymbol{\lambda}$ and $\boldsymbol{l}.$ 
\end{rem}

\section{\label{sec:Macrostate-equivalence-of}Macrostate equivalence of ensembles
and existence of maximum entropy measures}

An important motivation for the notion of thermodynamic equivalence
of ensembles is that it implies that any maximum entropy measure $\mu^{e}$
(representing an equilibrium macrostate in the microcanonical ensemble)
minimizes the free energy $F_{\beta}(\mu)$ at an inverse temperature
$\beta$ corresponding to the energy level $e.$ This is made precise
by the following result (essentially contained in \cite{e-h-t}).
\begin{lem}
\label{lemma:(macrostate-equivalence).-Assume}(macrostate equivalence
of ensembles). Consider the Very General Setup. Assume that $S^{**}(e)=S(e)>-\infty$
and assume that $\partial S(e)$ is non-empty (this is the case if,
for example, $S^{**}=S>-\infty$ in a neighborhood of $e$ ). If $\mu^{e}$
is a maximal entropy measure with energy $e,$ i.e. $S(\mu^{e})=S(e),$
then $\mu^{e}$ minimizes the free energy functional $F_{\beta}(\mu)$
for any $\beta\in\partial S(e).$ 
\end{lem}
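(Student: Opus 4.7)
The plan is to combine the Legendre--Fenchel duality identity $F = S^{*}$ (already noted in Section~\ref{subsec:Thermodynamical-equivalence}) with the supporting-hyperplane property encoded by $\beta \in \partial S(e)$ to show that $\mu^{e}$ attains the infimum defining $F(\beta)$.

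First, I would unpack what $\beta \in \partial S(e)$ gives us. Since $S \leq S^{**}$ everywhere on $\R$ and $S^{**}(e) = S(e)$ by hypothesis, the supporting-hyperplane inequality for the concave envelope $S^{**}$ at $e$ transfers back to $S$:
\[
S(e') \;\leq\; S^{**}(e') \;\leq\; S^{**}(e) + \beta(e' - e) \;=\; S(e) + \beta(e' - e) \qquad \forall\, e' \in \R.
\]
Rewriting this as $\beta e - S(e) \leq \beta e' - S(e')$ for every $e' \in \R$, I conclude that the infimum defining $S^{*}(\beta)$ is attained at $e' = e$, so
\[
F(\beta) \;=\; S^{*}(\beta) \;=\; \beta e - S(e).
\]

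Second, I would plug $\mu^{e}$ into the free energy functional: using $E(\mu^{e}) = e$ and $S(\mu^{e}) = S(e)$,
\[
F_{\beta}(\mu^{e}) \;=\; \beta E(\mu^{e}) - S(\mu^{e}) \;=\; \beta e - S(e) \;=\; F(\beta) \;=\; \inf_{\mu \in \mathcal{P}(X)_{0}} F_{\beta}(\mu),
\]
which is exactly the claim that $\mu^{e}$ minimizes $F_{\beta}$. For the parenthetical remark, whenever $S^{**} = S$ is finite on a whole neighborhood of $e$, the function $S^{**}$ restricts to a finite concave function on an open interval containing $e$ and therefore admits a supporting affine line at the interior point $e$, so $\partial S(e)$ is automatically non-empty.

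There is no serious obstacle; the only subtle point is interpretive. The symbol $\partial S(e)$ has to be read as the set of supporting slopes of the concave envelope $S^{**}$ at $e$ (since $S$ itself is not assumed concave), and it is precisely the equality $S^{**}(e) = S(e)$ at the single point $e$ that allows the supporting-hyperplane bound for $S^{**}$ to be pulled back to a pointwise bound on $S$, which is the only ingredient the argument actually uses.
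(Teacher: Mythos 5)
Your proof is correct and uses the same Legendre--Fenchel duality idea as the paper; you argue directly from $F = S^{*}$ and the supporting-line property of $\beta \in \partial S(e)$, whereas the paper phrases the same step via $F^{*} = S^{**}$ and $F = (F^{*})^{*}$, but the mathematical content and the final computation $F_{\beta}(\mu^{e}) = \beta e - S(e) = F(\beta)$ are identical.
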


\begin{proof}
By assumption $S(e)>-\infty.$ Hence, the assumption that $\beta\in\partial S(e)$
means that $\beta\in(\partial F^{*})(e).$ Since $F=(F^{*})^{*}$
it follows from the definition of $\partial F^{*}$ that 
\[
F(\beta)=-F^{*}(e)+\beta e
\]
(since $0\in\partial(-F^{*}(e)+\beta e)$). In other words, 
\[
\inf_{\mu\in\mathcal{P}(X)}F_{\beta}(\mu)=-S(\mu^{e})+\beta E(\mu^{e}),
\]
 which means that $\mu^{e}$ minimizes $F_{\beta}(\mu),$ as desired.
\end{proof}
\begin{rem}
Without the property that $S(e)=S^{**}(e)$ a maximal entropy measure
$\mu^{e}$ will, in general, not minimize $F_{\beta}(\mu).$ This
is discussed in the context of BEG-model in the final section of \cite{e-t-t}
(where it is pointed out that $\mu^{e}$ may be merely a local minimizer
of $F_{\beta}(\mu)$ or even a saddle point). Moreover, even if $S(e)=S^{**}(e)$
there may, in general, exists minimizers of $F_{\beta}(\mu),$for
$\beta\in\partial S(e),$ which are not maximum entropy measures (at
energy $e),$ unless $S(e)$ is strictly concave at $e$ (see \cite{e-h-t}).

As shown in \cite{e-h-t}, the existence of $\mu^{e}$ is automatic
for any $e\in]e_{0},e_{max}[,$ when $X$ is compact and $E(\mu)$
is a continuous functional on $\mathcal{P}(X).$ However, since we
do not impose these assumptions in the Main Assumptions we next provide
some general existence result for $\mu^{e},$ that will be applied
to the Main Assumptions in Section \ref{subsec:Existence-of-maximum Main As}. 
\end{rem}

\subsection{\label{subsec:Existence-when compact}Existence of $\mu^{e}$ when
$X$ is compact}

We start with the low-energy region:
\begin{prop}
Consider the Very General Setup. Assume that $X$ is compact and that
the energy approximation property and the affine continuity property
holds. Then, for any $e\in]e_{min},e_{0}]$ there exists a maximum
entropy measure $\mu^{e}.$
\end{prop}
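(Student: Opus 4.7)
The plan is to use the direct method of the calculus of variations, combined with an interpolation step that exploits the affine continuity property to correct the energy of a weak limit back up to the prescribed value $e$. To begin, since $e\in]e_{min},e_0]\subset]e_{min},e_{max}[$, the hypotheses of Lemma \ref{lem:energy appr implies S finite etc} are met, so $S(e)\in\R$. This is the only place where the energy approximation property enters the argument: it guarantees that the supremum defining $S(e)$ is finite, and hence that a maximizing sequence with finite relative entropy exists.

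Next I would pick $\mu_j\in\mathcal{P}(X)_0$ with $E(\mu_j)=e$ and $S(\mu_j)\to S(e)$. Since $X$ is compact, $\mathcal{P}(X)$ is sequentially compact in the weak topology, so after extraction $\mu_j\rightharpoonup\mu_\infty\in\mathcal{P}(X)$. The relative entropy $S(\cdot)$ is weakly usc on $\mathcal{P}(X)$ and $E(\cdot)$ is weakly lsc by assumption, so
\[
S(\mu_\infty)\,\geq\,\limsup_j S(\mu_j)\,=\,S(e),\qquad E(\mu_\infty)\,\leq\,\liminf_j E(\mu_j)\,=\,e.
\]
If equality holds in the energy inequality, then $\mu^e:=\mu_\infty$ already does the job. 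Otherwise $E(\mu_\infty)<e$: the weak limit has in some sense ``leaked'' to strictly lower energy, and the remaining task is to repair this by interpolating back toward $\mu_0$.

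For that repair I would consider the affine segment $\mu_t:=(1-t)\mu_0+t\mu_\infty$, $t\in[0,1]$. Since $E(\mu_\infty)\leq e<\infty$ and $S(\mu_\infty)\geq S(e)>-\infty$, the affine continuity property yields the continuity of $t\mapsto E(\mu_t)$ on $[0,1]$. Because $E(\mu_0)=e_0\geq e>E(\mu_\infty)$, the intermediate value theorem supplies $t_\ast\in(0,1]$ with $E(\mu_{t_\ast})=e$. Finally, by concavity of the relative entropy and $S(\mu_0)=0\geq S(\mu_\infty)$,
\[
S(\mu_{t_\ast})\,\geq\,(1-t_\ast)S(\mu_0)+t_\ast S(\mu_\infty)\,=\,t_\ast S(\mu_\infty)\,\geq\,S(\mu_\infty)\,\geq\,S(e),
\]
where the penultimate inequality uses $S(\mu_\infty)\leq 0$ and $t_\ast\leq 1$. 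Hence $\mu^e:=\mu_{t_\ast}$ is a maximum entropy measure at energy $e$.

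The main obstacle is precisely that weak lower semicontinuity of $E$ only delivers $E(\mu_\infty)\leq e$, not equality, so a naive compactness argument fails to respect the microcanonical constraint. The affine continuity property, combined with the fact that $\mu_0$ lies on the ``correct side'' of the energy level ($E(\mu_0)=e_0\geq e$), is exactly what is needed to bridge this gap via the IVT-plus-concavity interpolation above. Note that no convexity of $E$ is required for this low-energy existence result, in contrast to the first point of Lemma \ref{lem:decreasing}.
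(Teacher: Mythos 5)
Your proof is correct. The paper's own proof is organizationally slightly different: it maximizes $S(\mu)$ directly on the compact sublevel set $\{E(\mu)\leq e\}$ (compact by lower semicontinuity of $E$ and compactness of $\mathcal{P}(X)$, and $S$ is usc, so a maximizer $\mu_1$ exists), and then invokes Lemma \ref{lem:decreasing} ``and its proof'' --- specifically the affine-continuity variant of the monotonicity argument, applied to $-E$ as noted in that lemma's last sentence --- to conclude that the maximizer must satisfy $E(\mu_1)=e$. You instead take a maximizing sequence constrained to $\{E(\mu)=e\}$, pass to a weak limit $\mu_\infty$ (which may leak to $E(\mu_\infty)<e$), and repair the energy by interpolating along the segment from $\mu_0$ to $\mu_\infty$; the intermediate value theorem (via affine continuity) plus concavity of $S(\mu)$ and $S(\mu_0)=0$ deliver a competitor with energy exactly $e$ and entropy at least $S(e)$. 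These are two entry points into the same argument: both combine compactness, usc of $S$, lsc of $E$, the affine continuity property, and concavity of entropy. Your closing observation that no convexity of $E$ is needed --- only affine continuity --- is accurate, and is precisely why the proposition is stated in the Very General Setup without a convexity hypothesis; the paper's reference to Lemma \ref{lem:decreasing} works because the relevant IVT-plus-concavity step in that lemma's second point does not use convexity and applies equally to $-E$.
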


\begin{proof}
Fix $e\in]e_{min},e_{0}].$ First recall that by Lemma \ref{lem:energy appr implies S finite etc}
$S(e)$ is finite. Next, by Lemma \ref{lem:decreasing} (and its proof)
it is enough to prove that the functional $S(\mu)$ admits a maximizer
on $\{E(\mu)\leq e\}.$ But since $E$ is lsc, $\{E(\mu)\leq e\}$
is closed in the compact space $\mathcal{P}(X),$ hence compact. The
existence of $\mu^{e}$ thus follows from the upper-semicontinuity
of $S(\mu)$ on $\mathcal{P}(X).$ 
\end{proof}
In order to ensure the existence of maximum entropy measures in the
high-energy region we introduce the following stability property:
\begin{defn}
In the Very General Setup the \emph{thermal stability property} is
said to hold if there exists $\beta_{0}<0$ such that 
\[
\inf_{\mathcal{P}(X)}\left(\beta_{0}E-S\right)>-\infty.
\]
In other words, this property says that the critical inverse temperature
$\beta_{c}$ (discussed in Section \ref{sec:Critical-inverse-temperatures})
is strictly negative. Turning to the General Setup we will use the
following result, shown in the course of the proof of \cite[Lemma 2.13, formula 2.12]{be0}):
\end{defn}

\begin{lem}
\label{lem:(Energy/Entropy-compactness).-As}Consider the General
Setup and assume that $X$ is compact. If the thermal stability property
holds, then the functional $E_{V,W}$ is continuous on $\{\mu:\,S(\mu)\geq-C\}\Subset\mathcal{P}(X)$
for any given constant $C>0.$ 
\end{lem}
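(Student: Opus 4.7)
The plan is to combine the a priori energy bound given by thermal stability with a truncation argument controlled by the Gibbs (Donsker-Varadhan) variational principle, upgrading the automatic lower semicontinuity of $E_{V,W}$ to continuity on the entropy sublevel set.

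The compactness $\{\mu:S(\mu)\ge-C\}\Subset\mathcal{P}(X)$ is immediate: $\mathcal{P}(X)$ is weakly compact since $X$ is compact, and the sublevel set is closed by upper semicontinuity of $S$. Thermal stability supplies $\beta_{0}<0$ and $M<\infty$ with $\beta_{0}E-S\ge-M$; intersecting with $\{S\ge-C\}$ this yields the uniform bound $E(\mu)\le(C+M)/|\beta_{0}|$, so $E_{V,W}$ takes finite values on the set. It then remains to establish upper semicontinuity.

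For this, I would approximate $V$ and $W$ from below by bounded continuous functions $V_{k}\uparrow V$ on $X$ and $W_{k}\uparrow W$ on $X\times X$ (available by the assumed lower semicontinuity on the compact spaces) and form the truncated energies $E^{(k)}(\mu):=\tfrac12\int W_{k}\,d\mu\otimes\mu+\int V_{k}\,d\mu$, each weakly continuous on $\mathcal{P}(X)$ and increasing pointwise to $E_{V,W}$. A standard three-term triangle inequality reduces continuity to the uniform approximation
\[
\lim_{k\to\infty}\sup_{\{S\ge-C\}}\bigl(E_{V,W}(\mu)-E^{(k)}(\mu)\bigr)=0,
\]
which is obtained from the Gibbs inequality $\int f\,d\nu\le-S(\nu|\nu_{0})+\log\int e^{f}\,d\nu_{0}$ applied with $\nu=\mu$, $\nu_{0}=\mu_{0}$, $f=\alpha(V-V_{k})$ for the $V$-term, and with $\nu=\mu\otimes\mu$, $\nu_{0}=\mu_{0}^{\otimes2}$, $f=\tfrac{\alpha}{2}(W-W_{k})$ for the $W$-term (using $-S(\mu\otimes\mu|\mu_{0}^{\otimes2})=-2S(\mu)\le2C$): the resulting right-hand sides feature $\log\int e^{\alpha(V-V_{k})}d\mu_{0}$ and $\log\int e^{\tfrac{\alpha}{2}(W-W_{k})}d\mu_{0}^{\otimes2}$, which tend to $0$ by dominated convergence provided $\int e^{\alpha V}d\mu_{0}$ and $\int e^{\tfrac{\alpha}{2}W}d\mu_{0}^{\otimes2}$ are finite for some $\alpha>0$.

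The main obstacle is extracting this exponential integrability from the qualitative hypothesis $\inf_{\mu}(\beta_{0}E-S)>-\infty$ strongly enough that the residual constant $3C/\alpha$ coming from the Gibbs estimate can be made small as $k\to\infty$. The quadratic character of $E$ in $\mu$ prevents a direct log-Laplace dualisation of $\inf_{\mu}(\beta_{0}E-S)$; the clean route is to use the lower bound on $W$ on the compact $X\times X$ to decouple the one- and two-body contributions in the stability inequality, then apply the classical Gibbs dual formula to each linear piece separately, optimizing $\alpha$ against the truncation level $k$ (exploiting that $V-V_{k}$ is supported where $V>V_{k}$, so that higher exponential moments of the tail remain integrable) to close the estimate.
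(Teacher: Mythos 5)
The preliminary steps --- compactness of the sublevel set and the uniform energy bound from thermal stability --- are correct, and truncating $V,W$ from below and controlling the tails via the Gibbs (Donsker--Varadhan) inequality is a natural line of attack. But the plan to close the resulting estimate by \emph{optimizing $\alpha$ against the truncation level $k$} does not work, and this is the decisive gap. After Gibbs, the tail bound on $\{S\ge -C\}$ has the shape $E_{V,W}(\mu)-E^{(k)}(\mu)\le \mathrm{const}\cdot C/\alpha + o_k(1)$, where $\alpha$ must lie below the exponential integrability threshold of $W$ against $\mu_0^{\otimes 2}$ (and of $V$ against $\mu_0$). The $o_k(1)$ term does vanish, but the $C/\alpha$ term does not: since $W_k$ is bounded, $W-W_k$ has exactly the same singular set as $W$, so $\int e^{\gamma(W-W_k)}\,d\mu_0^{\otimes 2}$ is infinite whenever $\gamma$ exceeds the original threshold, \emph{regardless of $k$}. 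The remark that ``higher exponential moments of the tail remain integrable'' is therefore incorrect, and the admissible range of $\alpha$ does not widen as $k\to\infty$. What truncation plus Gibbs actually yields is $\limsup_j E(\mu_j)\le E(\mu_\infty)+ \mathrm{const}\cdot C/\alpha_0$ with $\alpha_0$ fixed --- a bounded but non-vanishing defect, not upper semicontinuity.

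There is a second, independent gap upstream. Thermal stability, $\inf_\mu(\beta_0 E - S)>-\infty$, is a variational inequality in which $E$ is quadratic in $\mu$, so it does not dualize directly to an exponential moment of $W$. Your decoupling idea --- bound $W$ from below on the compact $X\times X$ and apply the classical Gibbs dual formula to the remaining linear piece --- does yield $\int e^{|\beta_0|V}\,d\mu_0<\infty$, but it discards the quadratic $W$-contribution entirely and hence produces no bound on $\int_{X^2}e^{\gamma W}\,d\mu_0^{\otimes 2}$ for any $\gamma>0$. Without such a moment, the Gibbs estimate on the $W$-tail is unavailable even in the weakened form above. Repairing both gaps requires a mechanism genuinely different from truncation-plus-Gibbs; as written, the argument establishes only boundedness of the defect, not its vanishing.
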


The following result generalizes the existence result in \cite{clmp2},
concerning the case when $W(x,y)$ has a logarithmic singularity along
the diagonal:
\begin{prop}
\label{prop:macro eq}Consider the General Setup. Assume that $X$
is compact and that the energy approximation property and the thermal
stability property hold. Then $S(e)$ is usc on $]e_{min},e_{max}[$
and for any $e$ in $]e_{min},e_{max}[$ there exists a maximum entropy
measure $\mu^{e}.$
\end{prop}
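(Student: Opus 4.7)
The plan is to prove both statements via the direct method in the calculus of variations, exploiting the weak compactness of the sub-level sets $K_{C}:=\{\mu\in\mathcal{P}(X):S(\mu)\geq-C\}$ together with the continuity of $E=E_{W,V}$ on $K_{C}$ supplied by Lemma \ref{lem:(Energy/Entropy-compactness).-As} (which uses the thermal stability property). Granted the finiteness of $S(e)$ on $]e_{min},e_{max}[$ (the main step, treated below), existence of a maximum entropy measure at $e$ is then immediate: pick a sequence $\mu_{k}\in\mathcal{P}(X)$ with $E(\mu_{k})=e$ and $S(\mu_{k})\to S(e)$; since $S(\mu_{k})\geq S(e)-1$ eventually, $\mu_{k}$ lies in the compact set $K_{|S(e)|+1}$ for $k$ large, so a weak subsequential limit $\mu^{e}$ exists; continuity of $E$ on $K_{|S(e)|+1}$ gives $E(\mu^{e})=e$, while upper semicontinuity of $S$ on $\mathcal{P}(X)$ gives $S(\mu^{e})\geq S(e)$, so $\mu^{e}$ is a maximum entropy measure.

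Upper semicontinuity of $e\mapsto S(e)$ on $]e_{min},e_{max}[$ is proved by a nearly identical argument. Given $e_{k}\to e$ in this interval, set $L:=\limsup_{k}S(e_{k})$; if $L=-\infty$ there is nothing to show, so assume $L>-\infty$ and pass to a subsequence with $S(e_{k})\to L$. Choose $\mu_{k}$ with $E(\mu_{k})=e_{k}$ and $S(\mu_{k})\geq S(e_{k})-1/k$; then eventually $\mu_{k}\in K_{|L|+1}$, so after a further extraction $\mu_{k}$ converges weakly to some $\mu\in K_{|L|+1}$. Continuity of $E$ on $K_{|L|+1}$ gives $E(\mu)=\lim e_{k}=e$, while upper semicontinuity of $S$ gives $S(\mu)\geq\limsup S(\mu_{k})=L$, so $S(e)\geq S(\mu)\geq L$, as required.

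The main obstacle is the reduction step: establishing $S(e)>-\infty$ throughout $]e_{min},e_{max}[$ without the affine continuity property, so that Lemma \ref{lem:energy appr implies S finite etc} does not apply verbatim. The idea is to use the energy approximation property together with Lemma \ref{lem:(Energy/Entropy-compactness).-As} to \emph{manufacture} affine continuity on a carefully chosen path. Starting from arbitrary compactly supported $\nu^{\pm}\in\mathcal{P}(X)_{0}$ with $E(\nu^{-})<e<E(\nu^{+})$ (which exist by the definitions of $e_{min}$ and $e_{max}$), the energy approximation property followed by the truncation trick of Lemma \ref{lem:energy appr implies S finite etc} yields bounded-density approximants $\mu^{\pm}=\rho^{\pm}\mu_{0}$ with $0\leq\rho^{\pm}\leq R$ and still $E(\mu^{-})<e<E(\mu^{+})$. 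Along the affine segment $\mu_{t}:=(1-t)\mu^{-}+t\mu^{+}$, $t\in[0,1]$, the density is bounded by $R$, so a direct estimate $-\rho_{t}\log\rho_{t}\geq-\rho_{t}\log R$ yields $S(\mu_{t})\geq-\log R$ uniformly in $t$, and the entire segment lies in the compact set $K_{\log R}$. By Lemma \ref{lem:(Energy/Entropy-compactness).-As} the map $t\mapsto E(\mu_{t})$ is therefore continuous on $[0,1]$, so the intermediate value theorem supplies $t^{*}\in[0,1]$ with $E(\mu_{t^{*}})=e$. Since $\mu_{t^{*}}$ also has bounded density, $S(\mu_{t^{*}})\geq-\log R>-\infty$, hence $S(e)\geq S(\mu_{t^{*}})>-\infty$, completing the reduction.
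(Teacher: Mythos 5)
Your upper semicontinuity and existence arguments coincide with the paper's: both deduce them from the weak compactness of $\{\mu:S(\mu)\geq -C\}$ together with the continuity of $E_{W,V}$ on that set furnished by Lemma \ref{lem:(Energy/Entropy-compactness).-As}, i.e.\ from the thermal stability hypothesis. Where you depart, and improve on the written proof, is the reduction to $S(e)>-\infty$. The paper simply asserts a uniform bound $S(\mu_j)\geq -C$, which tacitly presupposes finiteness of $S(e_j)$; since the affine continuity property is not among the stated hypotheses, Lemma \ref{lem:energy appr implies S finite etc} cannot be cited directly, and the paper leaves this step implicit. Your replacement --- truncating energy-approximating densities at a level $R$, observing that the affine segment between two such endpoints lies entirely in the compact set $\{S\geq -\log R\}$, invoking Lemma \ref{lem:(Energy/Entropy-compactness).-As} to get continuity of $t\mapsto E(\mu_t)$ along it, and applying the intermediate value theorem --- correctly manufactures an intermediate state of finite entropy using the same thermal-stability lemma the rest of the argument already employs. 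This plugs a small but real hole in the paper's own proof; the remaining subsequence-extraction and semicontinuity steps match the paper's.
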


\begin{proof}
Take $e_{j}\rightarrow e\in]e_{min},e_{max}[$ and let $\mu_{j}$
be a sequence in $\mathcal{P}(X)$ such that $E(\mu_{j})=e_{j}$ and
$S(\mu_{j})\geq s(e_{j})-1/j.$ In particular, there exists a constant
$C$ such that $S(\mu_{j})\geq-C.$ By the previous lemma, we may,
after perhaps passing to a subsequence, assume that $\mu_{j}\rightarrow\mu_{\infty}$
in $\mathcal{P}(X)$ and $E(\mu_{j})\rightarrow E(\mu_{\infty}).$
Hence, $E(\mu_{\infty})=e$ and since $S$ is usc on $\mathcal{P}(X)$
$S(\mu_{\infty})\geq\limsup_{j\rightarrow\infty}S(\mu_{j}).$ This
shows that $S(e)\geq S(\mu_{\infty})\geq\limsup_{j\rightarrow\infty}S(e_{j}),$
i.e. that $S$ is usc. Similarly, the \emph{existence} of $\mu^{e}$
also follows from  the previous lemma, since it shows that $\{E(\mu)=e\}\cap S(\mu)\geq-C$
is closed (and thus $S$ attains its maximum value there for $C$
sufficiently large). 
\end{proof}
If the thermal stability property does not hold, then there may not
be no maximum entropy measures,where $S(e)$ is globally concave.
In fact, we have the following converse to the previous proposition
when $S(e)$ is concave and continuous on $[e_{0},e_{max}[.$
\begin{prop}
Consider the Very General Setup and assume that $X$ is compact and
that there exists a maximum entropy measure $\mu^{e}$ for some $e\in]e_{0},e_{max}[.$
Then the thermal stability property holds.
\end{prop}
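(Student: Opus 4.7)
My strategy is to apply the macrostate-equivalence Lemma~\ref{lemma:(macrostate-equivalence).-Assume} to identify $\mu^{e}$ with a minimizer of $F_{\beta_{0}}$ for some strictly negative $\beta_{0}$; the thermal stability property then drops out as $\inf_{\mu}F_{\beta_{0}}(\mu)=F_{\beta_{0}}(\mu^{e})>-\infty$. The hypotheses of that lemma require $S(e)=S^{**}(e)>-\infty$ and $\partial S(e)\neq\emptyset$. Both are in force here: since $\mu^{e}$ is a genuine maximum entropy measure we have $S(e)=S(\mu^{e})>-\infty$, and the standing assumption from the paragraph preceding the proposition that $S$ is concave and continuous on $[e_{0},e_{max}[$, combined with Proposition~\ref{prop:thermo equiv under energy appr} (second bullet), yields $S^{**}=S$ on that interval together with nonempty superdifferentials at interior points.

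The crux is to show that every $\beta\in\partial S(e)$ is strictly negative. I would first establish that $S(e)<0$. By the monotonicity part of Lemma~\ref{lem:decreasing}, $S$ is decreasing on $[e_{0},e_{max}[$, so $S(e)\le S(e_{0})=0$. Suppose for contradiction that $S(e)=0$. Then $\mu^{e}$ attains the global maximum of the relative-entropy functional $S(\mu)$ on $\mathcal{P}(X)$; but by Jensen's inequality this maximum is attained uniquely at $\mu_{0}$, so $\mu^{e}=\mu_{0}$, forcing $e=E(\mu^{e})=e_{0}$ and contradicting $e>e_{0}$. Hence $S(e)<0$. The standard chord-slope estimate for a concave function finite on a neighborhood of $[e_{0},e]$ then gives
\[
\beta\;\le\;S'(e-)\;\le\;\frac{S(e)-S(e_{0})}{e-e_{0}}\;=\;\frac{S(e)}{e-e_{0}}\;<\;0
\]
for every $\beta\in\partial S(e)$.

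Picking any $\beta_{0}\in\partial S(e)$, Lemma~\ref{lemma:(macrostate-equivalence).-Assume} shows that $\mu^{e}$ minimizes $F_{\beta_{0}}$, and therefore
\[
\inf_{\mu\in\mathcal{P}(X)}\bigl(\beta_{0}E(\mu)-S(\mu)\bigr)\;=\;\beta_{0}e-S(e)\;>\;-\infty,
\]
which is exactly the thermal stability property with $\beta_{0}<0$.

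I expect the only delicate step to be the strict inequality $S(e)<0$, since it relies on the uniqueness of $\mu_{0}$ as the global maximizer of relative entropy together with the implicit finiteness $S(\mu^{e})>-\infty$ built into the notion of a maximum entropy measure. Once this is in hand, the conclusion is a routine concave-analysis manipulation combined with macrostate equivalence.
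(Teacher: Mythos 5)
Your proof is correct, and it takes a genuinely different route from the paper's. The paper argues by contrapositive: it invokes Corollary~\ref{cor:slope formula for beta c}, which identifies $\beta_c$ with the limit of $S'(e)$ as $e\to e_{max}$, then deduces that $\beta_c=0$ together with concavity and monotonicity would force $S\equiv S(e_0)$ on $[e_0,e_{max}[$, which is incompatible with the existence of a maximum entropy measure $\mu^e\ne\mu_0$ by Jensen's inequality. Your argument is more local and avoids the slope-formula corollary entirely: you work at the single energy level $e$, show via the chord through $(e_0,0)$ and $(e,S(e))$ with $S(e)<0$ that every $\beta\in\partial S(e)$ is strictly negative, and then read off the thermal stability bound directly from the macrostate-equivalence Lemma~\ref{lemma:(macrostate-equivalence).-Assume}, since $F(\beta_0)=F_{\beta_0}(\mu^e)=\beta_0 e-S(e)>-\infty$. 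Both proofs rely on the same two inputs — the standing assumption (from the paragraph preceding the proposition) that $S$ is concave, finite and continuous on $[e_0,e_{max}[$, and the strict concavity of $\mu\mapsto S(\mu)$ giving $S(\mu)=0$ iff $\mu=\mu_0$ — but yours replaces the limiting-slope characterization of $\beta_c$ with a pointwise superdifferential estimate, which is a small gain since Corollary~\ref{cor:slope formula for beta c} is stated under the Main Assumptions rather than the Very General Setup (though its proof does carry over). One minor point worth making explicit, which you flagged yourself, is that the identity $\partial S(e)=\partial S^{**}(e)$ used in the lemma is guaranteed by $S(e)=S^{**}(e)>-\infty$; under the standing finiteness/continuity assumption this is automatic.
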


\begin{proof}
The assumed concavity of $S(e)$ implies that the right derivative
of $S(e)$ tends to $\beta_{c}$ as $e\rightarrow e_{max}$ (see Cor
\ref{cor:slope formula for beta c} below and its proof). Hence, if
we assume that the thermal stability property does not hold, i.e.
that $\beta_{c}=0$ it follows, since $S(e)$ attains its maximum
at $e$ and is assumed continuous and concave on $[e_{0},e_{max}]$
that $S(e)\equiv S(e_{0}).$ But $S(\mu)=S(\mu_{0})$ iff $\mu=\mu_{0}$
(which implies $E(\mu)=e_{0})$ and hence there exists no maximum
entropy measure $\mu^{e}$ when $e>e_{0}.$
\end{proof}
The previous proposition is illustrated by the case of singular power-laws
in Section \ref{subsec:Non-concavity-in-the cata}. Before turning
to the non-compact case we point out that the following concrete bound
implies the thermal stability property (see Lemma \ref{lem:general bounds on the partition function}
below): 
\begin{equation}
\sup_{x\in X}\int e^{-\beta_{0}\left(\frac{1}{2}W(x,y)+V(y)\right)}\mu_{0}(y)<\infty,\,\,\,\int_{X}e^{-\beta_{0}V}\mu_{0}<\infty,\label{eq:sup exp integral finite}
\end{equation}
 for some $\beta_{0}<0,$which will turn out to be satisfied if the
Main Assumptions are complemented with the assumption that $W$ is
translationally invariant, up to a bounded term.

\subsection{\label{subsec:Existence-when-X is noncompact}Existence of $\mu^{e}$
when $X$ is non-compact}

In order to discuss maximum entropy measures in the case when $X$
is non-compact we first need to replace  the space $\mathcal{P}(X)_{0}$
of all probability measures with compact support, appearing in the
definition\ref{eq:def of S e} of $S(e)$, with probability measures
satisfying an appropriate growth assumption ``at infinity''. Indeed,
if for example, $E=E_{V}$ for a lsc function $V$ which is unbounded
both from above \emph{and} from below (say, $V(x)=-\log|x|$ in $\R^{d}),$
then it is not a priori clear how to define $E_{V}(\mu)$ if $\mu$
have unbounded support. To handle this issue we will make the following
\emph{growth assumption:} exists a continuous non-negative function
$\phi_{0}$ of $X$ such that 

\begin{equation}
-W(x,y)-\frac{1}{2}V(x)-\frac{1}{2}V(y)\leq\frac{1}{2}\phi_{0}(x)+\frac{1}{2}\phi_{0}(y)+C_{0}.\label{eq:growth assumption on W V}
\end{equation}
 Then we can decompose 
\begin{equation}
E(\mu)=E_{\phi_{0}}(\mu)-\int\mu\phi_{0},\,\,\,E_{\phi_{0}}(\mu):=\int\left(W(x,y)+\frac{1}{2}V(x)+\frac{1}{2}V(y)+\frac{1}{2}\phi_{0}(x)+\frac{1}{2}\phi_{0}(y)\right)\mu\otimes\mu\label{eq:decomp of E}
\end{equation}
 where the first term has a well-defined value in $]-\infty,\infty]$,
since the corresponding integrand is bounded from below. This means
that if we replace $\mathcal{P}(X)$ with the subspace 
\[
\mathcal{P}_{\phi_{0}}(X):=\left\{ \mu\in\mathcal{P}(X):\,\int_{X}\phi_{0}\mu<\infty\right\} 
\]
then $S(e)$ may be expressed as 
\begin{equation}
S(e):=\sup_{\mu\in\mathcal{P}_{\phi_{0}}(X)}\left\{ S(\mu):\,\,E(\mu)=e\right\} ,\label{eq:def of S e with phi not}
\end{equation}
 where $E(\mu)$ is defined by formula\ref{eq:decomp of E}. According
to the following result the existence of a maximizer $\mu^{e}$ is
guaranteed if $\phi_{0}$ has slower growth then an an appropriate
exhaustion function $\psi_{0}$ of $X$ (i.e. the sub-level sets $\{\psi_{0}\leq R\}$
are compact and exhaust $X$ when $R\rightarrow\infty$):
\begin{prop}
\label{prop:existence non-compact}Consider the General Setup and
assume that there exists a continuous exhaustion function $\psi_{0}$
of $X$ such that the following growth-properties hold:
\begin{itemize}
\item $\int e^{\delta\psi_{0}}\mu_{0}<\infty$ for some $\delta>0$
\item The growth-assumption \ref{eq:growth assumption on W V} holds for
a $\phi_{0}$ such that $\phi_{0}/\psi_{0}\rightarrow0$ uniformly
as $\psi_{0}\rightarrow\infty$ (e.g. for $\phi_{0}=\psi_{0}^{(1-\epsilon)}$
for some $\epsilon\in]0,1[$).
\end{itemize}
If the thermal stability property holds (i.e. $\beta_{c}<0),$ then
there exists a measure $\mu^{e}$ realizing the sup in formula \ref{eq:def of S e with phi not}
for any given $e\in[e_{0},e_{max}[.$ 
\end{prop}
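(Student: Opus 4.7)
The plan is a direct-method argument in the spirit of Proposition \ref{prop:macro eq} from the compact case: extract a subsequential weak limit of a maximizing sequence and verify that the limit realizes the supremum. Assume $S(e)>-\infty$ (for $e=e_{0}$ the choice $\mu^{e_{0}}=\mu_{0}$ works immediately), and choose a maximizing sequence $\{\mu_{j}\}\subset\mathcal{P}_{\phi_{0}}(X)$ with $E(\mu_{j})=e$ and $S(\mu_{j})\to S(e)$, so that $S(\mu_{j})\geq-C$ for all large $j$.

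The first step is tightness. The Gibbs variational inequality (equivalently, the elementary bound $\int f\,d\mu\leq\log\int e^{f}\,d\mu_{0}-S(\mu)$, valid for any measurable $f$ and any $\mu\ll\mu_{0}$) applied to $f=\delta\psi_{0}$ yields
\[
\delta\int_{X}\psi_{0}\,d\mu_{j}\;\leq\;\log\int_{X}e^{\delta\psi_{0}}\,d\mu_{0}\;-\;S(\mu_{j})\;\leq\;C_{1}
\]
for some finite $C_{1}$, using the first growth hypothesis. Since $\psi_{0}$ is a continuous exhaustion, Markov's inequality gives $\mu_{j}(\{\psi_{0}>R\})\leq C_{1}/(\delta R)$, so $\{\mu_{j}\}$ is tight in $\mathcal{P}(X)$. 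By Prokhorov, pass to a subsequence $\mu_{j}\rightharpoonup\mu^{e}$; Fatou then gives $\int\psi_{0}\,d\mu^{e}\leq C_{1}/\delta<\infty$, and since $\phi_{0}\leq C\psi_{0}+C'$ (a consequence of $\phi_{0}/\psi_{0}\to0$ uniformly), $\mu^{e}\in\mathcal{P}_{\phi_{0}}(X)$. Weak upper semi-continuity of the relative entropy yields $S(\mu^{e})\geq\limsup S(\mu_{j})=S(e)$, so once we verify $E(\mu^{e})=e$ we are done.

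The main obstacle is exactly this last point. Using the decomposition $E=E_{\phi_{0}}-\int\phi_{0}\mu$ of \ref{eq:decomp of E}, the functional $E_{\phi_{0}}$ is weakly lower semi-continuous since its integrand is bounded from below by the growth hypothesis \ref{eq:growth assumption on W V}. For the linear term, the condition $\phi_{0}/\psi_{0}\to0$ together with the uniform bound $\int\psi_{0}\,d\mu_{j}\leq C_{1}/\delta$ gives uniform integrability: on $\{\psi_{0}\geq R\}$ one has $\phi_{0}\leq\varepsilon_{R}\psi_{0}$ with $\varepsilon_{R}\to0$, so
\[
\int_{\{\psi_{0}\geq R\}}\phi_{0}\,d\mu_{j}\;\leq\;\varepsilon_{R}\int\psi_{0}\,d\mu_{j}\;\leq\;C_{1}\varepsilon_{R}/\delta
\]
uniformly in $j$, while on the compact sub-level set $\{\psi_{0}\leq R\}$ the continuous $\phi_{0}$ integrates continuously under weak convergence. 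Hence $\int\phi_{0}\,d\mu_{j}\to\int\phi_{0}\,d\mu^{e}$. The delicate point is the analogous conclusion for the singular pair-interaction piece $E_{\phi_{0}}$: here the thermal stability property is essential, because the argument behind Lemma \ref{lem:(Energy/Entropy-compactness).-As}, applied on the compact truncations $\{\psi_{0}\leq R\}$ (whose complements carry arbitrarily small mass by the tail estimate above), forces continuity of $E_{\phi_{0}}$ on $\{S(\mu)\geq-C\}$. Combining this with the tail control gives $E_{\phi_{0}}(\mu_{j})\to E_{\phi_{0}}(\mu^{e})$, whence $E(\mu^{e})=e$. The hard part is precisely this simultaneous control of the singularities of $W,V$ and the non-compact tails; the two hypotheses interact here exactly as needed, with thermal stability providing the exponential integrability that makes the energy entropy-continuous, and the growth condition $\phi_{0}/\psi_{0}\to0$ making the tails uniformly negligible.
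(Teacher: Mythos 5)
Your overall strategy mirrors the paper's (direct method: maximizing sequence, tightness, USC of entropy, then verify that the energy survives the weak limit), and you correctly locate the sole nontrivial point, namely showing $E_{\phi_{0}}(\mu_{j})\to E_{\phi_{0}}(\mu^{e})$. However, the way you propose to close this is where a real gap appears.

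Your sketch says: apply ``the argument behind Lemma \ref{lem:(Energy/Entropy-compactness).-As} on the compact truncations $\{\psi_{0}\leq R\}$'' and combine with ``tail control''. This does not go through for the quadratic piece. The kernel $\tilde{W}$ of $E_{\phi_{0}}$ is lsc and bounded \emph{below}, not above, so the contributions to $\int\tilde{W}\,\mu_{j}\otimes\mu_{j}$ from $(X\times X)\setminus(X_{R}\times X_{R})$ (in particular the cross-terms $X_{R}\times(X\setminus X_{R})$) are not made small merely by the fact that $\mu_{j}(X\setminus X_{R})$ is small; they can be large, and your tightness estimate gives no handle on them. Moreover Lemma \ref{lem:(Energy/Entropy-compactness).-As} is a statement for a compact base space with the given prior; it is not clear how you would invoke it for the non-compactly-supported $\mu_{j}$, nor how the restricted statements would assemble into convergence of the full bilinear integral. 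The assertion ``Combining this with the tail control gives $E_{\phi_{0}}(\mu_{j})\to E_{\phi_{0}}(\mu^{e})$'' is therefore doing all the hard work without a proof.

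The paper's proof closes this gap with a genuinely different device: replace $X$ by its one-point compactification $\tilde{X}$, extend the (bounded-below, lsc) $\tilde{W}$ to a lsc kernel on $\tilde{X}\times\tilde{X}$, regard $\mu_{0}$ and $\mu_{\beta_{0}}:=e^{\beta_{0}\phi_{0}}\mu_{0}$ as finite measures on $\tilde{X}$, and then observe that the original thermal stability inequality is \emph{exactly} the statement that $E_{\phi_{0}}$ satisfies thermal stability on the compact space $\tilde{X}$ with respect to $\mu_{\beta_{0}}$. That reformulation makes Lemma \ref{lem:(Energy/Entropy-compactness).-As} directly applicable on a compact space, giving continuity of $E_{\phi_{0}}$ on $\{S_{\mu_{\beta_{0}}}\geq-C\}\subset\mathcal{P}(\tilde{X})$, and the entropy lower bound then translates (via the tightness estimate on $\int\psi_{0}\,\mu_{j}$ and the growth condition on $\phi_{0}$) into the required bound $S_{\mu_{\beta_{0}}}(\mu_{j})\geq-C$. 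Without this compactification-and-reweighting step, your argument as written is incomplete; with it, your outline becomes essentially the paper's proof. Everything else in your write-up (the entropy bound $\int\psi_{0}\,\mu_{j}\leq C_{1}/\delta$, tightness, Fatou, uniform integrability of $\phi_{0}$ from $\phi_{0}/\psi_{0}\to0$, and the USC of $S$) is correct and matches the paper.
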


\begin{proof}
Setting $\tilde{W}(x,y):=W(x,y)+\frac{1}{2}V(x)+\frac{1}{2}V(y)-\frac{1}{2}\phi_{0}(x)+\frac{1}{2}\phi_{0}(y)$
we can express $E_{\phi_{0}}(\mu)=\int\tilde{W}(x,y)\mu\otimes\mu.$
Now fix $e\in[e_{0},e[$ and recall that $S(e)$ is finite. Since,
by assumption, $\tilde{W}(x,y)$ is lsc on $X\times X$ and bounded
from below it extends to a lsc function on $\tilde{X}\times\tilde{X},$
where $\tilde{X}$ denotes the one-point compactification of $X.$
Moreover, we identify $\psi_{0}$ with a lsc function on $\tilde{X},$
taking the value $\infty$ at the point at infinity and $\mu_{0}$
with a probability measure on $\tilde{X},$ not charging the point
at infinity. Accordingly, we can identify $E_{\phi_{0}}(\mu)$ and
$S(\mu)$ with functionals on $\mathcal{P}(\tilde{X}).$ Denote by
$\tilde{S}(e)$ the corresponding entropy function. Since $\int_{\tilde{X}}\mu\psi_{0}<\infty$
implies that $\mu$ does not charge the point at infinity it will,
in order to prove the proposition, be enough to show that the sup
defining $\tilde{S}(e)$ is attained. To this end take a sequence
$\mu_{j}\in\mathcal{P}(X)$ such that $E(\mu_{j})=e$ and $S(\mu_{j})$
increases to $\tilde{S}(e).$ Decompose $\mu=e^{-\delta\Psi_{0}}\mu_{\delta}$
for $\delta>0$ such that $\mu_{\delta}:=e^{\delta\Psi_{0}}\mu_{0}$
has finite total mass. Then there exists a constant $C$ such that
\begin{equation}
S(\mu_{j})=S_{\mu_{\delta}}(\mu)-\delta\int\Psi_{0}\mu_{j}\geq-C.\label{eq:lower bound S mu j}
\end{equation}
Since $S_{\mu_{\delta}}(\mu)$ is uniformly bounded from above on
$\mathcal{P}(X)$ (using that $\mu_{\delta}$ has total finite mass)
this means that there exists a finite constant $C_{\delta}$ such
that
\begin{equation}
\int\psi_{0}\mu_{j}\leq C_{\delta}<\infty.\label{eq:upper bound integral Psi not}
\end{equation}
Now, since $\tilde{X}$ compact we may, after perhaps passing to a
subsequence, assume that $\mu_{j}\rightarrow\mu_{\infty}$ weakly
in $\mathcal{P}(\tilde{X})$ for some $\mu_{\infty}$ (which, by the
bound \ref{eq:upper bound integral Psi not}, does not charge the
point at infinity). Moreover, combining the bound \ref{eq:upper bound integral Psi not}
with the growth-assumption on the continuous function $\phi_{0}$
gives (using Markov's inequality) that 
\[
\lim_{j\rightarrow\infty}\int\phi_{0}\mu_{j}=\int\phi_{0}\mu_{\infty}.
\]
Since $S(\mu)$ is usc on $\mathcal{P}(\tilde{X})$ all that remains
is to verify that 
\begin{equation}
\lim_{j\rightarrow\infty}E_{\phi_{0}}(\mu_{j})=E_{\phi_{0}}(\mu_{\infty})\label{eq:conv of weighted energy}
\end{equation}
To this end we rewrite the assumed thermal stability property as 
\begin{equation}
\beta_{0}E_{\phi_{0}}(\mu)-\beta_{0}\int\phi_{0}\mu-S(\mu)\geq-C_{0},\,\,\,\beta_{0}<0\label{eq:thermal stab in pf}
\end{equation}
 Note that 
\begin{equation}
-\beta_{0}\int\phi_{0}\mu-S(\mu)=-S_{\mu_{\beta_{0}}}(\mu),\,\,\,\,\mu_{\beta_{0}}:=e^{\beta_{0}\phi_{0}}\mu_{0},\label{eq:decomp of entropy}
\end{equation}
 where the measure $\mu_{\beta_{0}}$ has finite mass (since $\beta_{0}\leq0$
and $\phi\geq0)$ and thus identifies with a measure on $\tilde{X}.$
Accordingly, can view \ref{eq:thermal stab in pf} as an inequality
on $\mathcal{P}(\tilde{X}),$ saying that lsc functional $E_{\phi_{0}}(\mu)$
has the thermal stability property wrt the measure $\mu_{\beta_{0}}$
on the compact space $\tilde{X}.$ Thus, it follows from Lemma \ref{lem:(Energy/Entropy-compactness).-As}
that $E_{\phi_{0}}$ is continuous on $\{S_{\mu_{\beta_{0}}}(\mu)\geq-C\}.$
Finally, combining \ref{eq:decomp of entropy}, \ref{eq:upper bound integral Psi not}
and \ref{eq:lower bound S mu j} reveals that $S_{\mu_{\beta_{0}}}(\mu_{j})\geq-C$
for some constant $C$ and hence the desired convergence \ref{eq:conv of weighted energy}
follows. 
\end{proof}
\begin{rem}
To see that the growth-properties in the previous proposition are
essential consider the case when $X=\R^{d},$ $\mu_{0}=e^{-|x|}dx$
and $V(x)=-|x|^{p}$ for $p>0.$ Then the thermal stability property
does hold (in fact, $\beta_{c}=-\infty,$ since $Z_{\beta}:=\int e^{-\beta V}\mu_{0}<\infty$
for any $\beta<0).$ Moreover, $\int e^{\delta\psi_{0}}\mu_{0}<\infty$
for $\psi_{0}:=|x|^{2}.$ However, for $e\leq e_{0}$ a maximum entropy
measure $\mu^{e}$ only exists under the assumption that $p<2,$ i.e.
precisely when $-V/\psi_{0}\rightarrow\infty$ (indeed, if $\mu^{e}$
exists, then $\mu^{e}=e^{-\beta V}/\int e^{-\beta V}\mu_{0}$ for
some $\beta>0$ (see Section \ref{subsec:The-case-of E affine}). 
\end{rem}

\section{\label{sec:Concavity-of-S low-energy}Concavity of $S(e)$ in the
low-energy region for convex $E(\mu)$}

\subsection{\label{subsec:Concavity-and monot}Concavity and monotonicity of
$S(e)$ in the low-energy region $e\protect\leq e_{0}$ when $E(\mu)$
is convex.}

We now consider the entropy $S(e)$ in the low-energy region $e\leq e_{0}$
under the assumption that $E(\mu)$ is convex. By way of motivation
we start with the case when $E(\mu)$ is affine. 

\subsubsection{\label{subsec:The-case-of E affine}The case of $E(\mu)$ affine}

In the case when $E(\mu)$ is affine on $\mathcal{P}(X)$ it follows
directly from the definition of $S(e)$ that $S(e)$ is globally concave,
using the concavity of $S(\mu)$ on $\mathcal{P}(X).$ Moreover, if
$X$ is compact and $E(\mu)=\left\langle V,\mu\right\rangle $ for
$V\in C^{0}(X),$ then a duality argument reveals that $S(e)$ is
finite and strictly concave on $]e_{min},e_{max}[.$ In fact,
\[
S(e)=F_{V}^{*}(e),\,\,F_{V}(\beta)=-\log\int_{X}e^{-\beta V}\mu_{0},
\]
 where $F_{V}(\beta)<\infty$ for all $\beta,$ since $X$ is compact
and $V$ is bounded. Indeed, in this case it follows from Jensen's
inequality that the free energy $F(\beta)$ is of the form $F_{V}(\beta)$
above \footnote{This is an instance of the classical Gibb's variational principle}.
Since $F_{V}(\beta)$ is differentiable on all of $\R$ and its derivative
tends to $\inf_{X}V(=e_{min})$ and $\sup_{X}V(=e_{max})$ as $\beta\rightarrow\infty$
and $\beta\rightarrow-\infty,$ respectively, it thus follows from
Lemma \ref{lem:f diff implies S strict concav in gener} below that
$S_{V}(e)$ is strictly concave on $]e_{min},e_{max}[.$ However,
if $X$ is non-compact, then the strict concavity of $S_{V}(e)$ may
fail as illustrated by the following simple example: 
\[
X=\R,\,\,\,\mu_{0}=e^{-|x|}dx,\,\,\,V(x)=|x|^{2}.
\]
 In this case $E(\mu_{0})<\infty,$ but $\int e^{-\beta V}\mu_{0}<\infty$
iff $\beta\geq0.$ It follows that $S(e)=S(e_{0})=0$ for $e>e_{0}$
and thus $S(e)$ is not strictly concave. Indeed, applying the second
point in Prop \ref{prop:thermo equiv under energy appr}, we get,
for $e\geq e_{0}$
\[
S(e)=\inf_{\beta\leq0}\left(\beta e-F_{V}(\beta)\right).
\]
 But, since $F_{V}(\beta)=\infty$ for $\beta<0$ the rhs above is
attained at $\beta=0,$ showing that $S(e)=0.$ Also note that replacing
$V$ with $-V$ yields an example where $S(e)$ fails to be strictly
concave in the low-energy region. Note also that in this example,
the sup defining $S(e)$ is not attained in the region where $S(e)=S(e_{0}),$
if $e\neq e_{0}.$ Indeed, if the sup is attained at $\mu^{e}$ satisfying
$E(\mu)=e,$ then $S(\mu^{e})=S(\mu_{0})$ and hence $\mu^{e}=\mu_{0},$
which forces $e=E(\mu_{0}):=e_{0}.$ 

\subsubsection{The case of $E(\mu)$ convex}

Using Lemma \ref{lem:decreasing} we observe that similar arguments
apply in the low-energy region when $E(\mu)$ is convex, under some
further regularity assumptions: 
\begin{prop}
\label{prop:E convex plus energy approx implies S concave}Let $X$
be a topological space and $E(\mu)$ a lsc convex functional on $\mathcal{P}(X)$
and $\mu_{0}\in\mathcal{P}(X).$ 
\begin{itemize}
\item If $X$ is compact and $e_{0}:=E(\mu_{0})<\infty,$ then $S(e)$ is
concave on $]-\infty,e_{0}].$
\item If $X$ is $\sigma-$compact (i.e. a countable union of compact space)
and $E(1_{K}\mu_{0})<\infty$ for any compact subspace $K$ of $X,$
then, if the energy approximation property holds, $S(e)$ is concave,
increasing and finite (hence continuous) on $]e_{min},e_{0}[.$
\end{itemize}
\end{prop}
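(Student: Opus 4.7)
My plan is to prove the first point by a direct Jensen-type argument combining the monotonicity from Lemma \ref{lem:decreasing} with the concavity of the relative entropy, and then deduce the second point by exhausting $X$ by compact subsets and passing to the limit in the first point.

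For the first point, fix $e_1\le e_2\le e_0$ and $t\in]0,1[$, set $e_t:=te_1+(1-t)e_2\le e_0$, and assume $S(e_i)>-\infty$ (else the inequality is trivial). Since $E$ is convex and $X$ is compact with $e_0<\infty$, the first item of Lemma \ref{lem:decreasing} gives $S(e_i)=S_-(e_i):=\sup\{S(\mu):E(\mu)\le e_i\}$. The sublevel set $\{E(\mu)\le e_i\}$ is compact by lower semicontinuity of $E$ and compactness of $\mathcal{P}(X)$, and $S$ is upper semicontinuous, so each supremum is attained at some $\mu_i$. The convex combination $\mu_t:=t\mu_1+(1-t)\mu_2$ satisfies $E(\mu_t)\le e_t$ by convexity of $E$ and $S(\mu_t)\ge tS(\mu_1)+(1-t)S(\mu_2)$ by concavity of $S$ on $\mathcal{P}(X)$, whence $S(e_t)=S_-(e_t)\ge S(\mu_t)\ge tS(e_1)+(1-t)S(e_2)$.

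For the second point I exhaust $X$ by an increasing sequence of compact subsets $K_j$, so that $\mu_0(K_j)\uparrow 1$ by continuity of measure. Put $\mu_0^{(j)}:=\mathbf{1}_{K_j}\mu_0/\mu_0(K_j)$, a probability measure on $K_j$ with $e_0^{(j)}:=E(\mu_0^{(j)})<\infty$ by hypothesis. Applying the first point to $(K_j,\mu_0^{(j)})$, the corresponding entropy $S^{(j)}$ is concave and, by Lemma \ref{lem:decreasing}, increasing on $]-\infty,e_0^{(j)}]$. A direct computation gives $S_{\mu_0^{(j)}}(\mu)=S_{\mu_0}(\mu)-\log\mu_0(K_j)$ whenever $\mu$ is supported in $K_j$, so $S_{K_j}(e):=\sup\{S_{\mu_0}(\mu):\mu\in\mathcal{P}(K_j)_0,\ E(\mu)=e\}$ differs from $S^{(j)}$ by an additive constant and inherits concavity and monotonicity on $]-\infty,e_0^{(j)}]$. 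Since any $\mu\in\mathcal{P}(X)_0$ is eventually supported in some $K_j$, one has $S_{K_j}(e)\uparrow S(e)$ pointwise; moreover lower semicontinuity of $E$ applied to the weak limit $\mu_0^{(j)}\to\mu_0$ yields $\liminf_j e_0^{(j)}\ge e_0$. Consequently, for any $e_1<e_2$ in $]e_{min},e_0[$ the $S_{K_j}$ are eventually concave and increasing on $[e_1,e_2]$, and these properties pass to the pointwise limit $S$.

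For finiteness, fix $e\in]e_{min},e_0[$ and pick $e'\in]e_{min},e[$. By definition of $e_{min}$ there is $\mu\in\mathcal{P}(X)_0$ with $E(\mu)\le e'$, and the energy approximation property provides compactly supported $\nu_k$, absolutely continuous with respect to $\mu_0$, with $E(\nu_k)\to E(\mu)$; after truncating the densities as in the proof of Lemma \ref{lem:energy appr implies S finite etc} one may additionally arrange $S(\nu_k)>-\infty$. For $k$ large, $E(\nu_k)<e$; choosing $j$ so that $\text{supp}\,\nu_k\subset K_j$ and $e_0^{(j)}>e$, the monotonicity of $S_{K_j}$ on $]-\infty,e_0^{(j)}]$ yields $S(e)\ge S_{K_j}(e)\ge S_{K_j}(E(\nu_k))\ge S(\nu_k)>-\infty$, and continuity on the open interval then follows from concavity and finiteness. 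The main obstacle is ensuring that the concavity ranges $]-\infty,e_0^{(j)}]$ of the truncated priors eventually cover $]-\infty,e_0[$; this is precisely what lower semicontinuity of $E$ delivers via $\mu_0^{(j)}\to\mu_0$, while the energy approximation property is used only in the finiteness step to produce absolutely continuous, bounded-density approximants.
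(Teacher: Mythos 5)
Your proof of the first point is essentially identical to the paper's: form the convex combination $\mu_t$, use convexity of $E$ to conclude $E(\mu_t)\le e_t$, and then use the identification $S=S_-$ on $]-\infty,e_0]$ from Lemma \ref{lem:decreasing} together with concavity of the relative entropy. The only cosmetic difference is that you first extract exact maximizers via compactness and upper semicontinuity, whereas the paper works directly with near-maximizers; both work.

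For the second point the overall strategy is the same as the paper's (exhaust $X$ by compacts, apply the first point on each, pass to the limit), but your execution is more explicit and more self-contained. Where the paper invokes ``a slight variant of the argument in the end of the proof of Theorem \ref{thm:concave S in text}'' to show $S_R(e)\to S(e)$, you observe directly that $S_{K_j}(e)\uparrow S(e)$ because every $\mu\in\mathcal{P}(X)_0$ is eventually supported in some $K_j$, and you carefully track the additive constant $-\log\mu_0(K_j)$ coming from normalizing the truncated prior so that concavity/monotonicity transfer from $S^{(j)}$ to $S_{K_j}$. You also make explicit the point the paper leaves implicit, namely that the interval of validity $]-\infty,e_0^{(j)}]$ eventually contains any fixed $[e_1,e_2]\subset]e_{min},e_0[$, which you correctly deduce from $\liminf_j e_0^{(j)}\ge e_0$ via lower semicontinuity of $E$ along $\mu_0^{(j)}\rightharpoonup\mu_0$. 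The finiteness argument you give is essentially the content of Lemma \ref{lem:energy appr implies S finite etc}, which the paper simply cites; re-deriving it is fine but not necessary. Net effect: same route, but your writeup fills in two genuine gaps (the normalization bookkeeping and the $e_0^{(j)}\to e_0$ step) that the paper's terse proof glosses over.
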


\begin{proof}
Given $e_{1}$ and $e_{2}$ in $]-\infty,e_{0}]$ and $t\in[0,1]$
set $e_{t}:=(1-t)e_{0}+te_{1}.$ Let $\mu_{1}$ and $\mu_{2}$ be
contenders for the sup defining $S(e_{1})$ and $S(e_{2}),$ respectively.
Set $\mu_{t}:=(1-t)\mu_{1}+t\mu_{2}.$ Since $E(\mu)$ is assumed
convex, $E(\mu_{t})\leq e_{t}.$ Hence, if $X$ is compact and $E(\mu_{0})<\infty,$
then Lemma \ref{lem:decreasing} gives, $S(e_{t})\geq S(\mu_{t})\geq(1-t)S(\mu_{1})+tS(\mu_{2}),$
using that $S$ is concave on $\mathcal{P}(X).$ This proves the first
point. To prove the second one we write $X$ is an increasing union
of compact subspaces $X_{R}.$ Denoting by $S_{R}$ the  entropy corresponding
to $X_{R}$ it follows directly from the definition that $S_{R}(e)\leq S(e).$
Now, by the energy approximation property in Lemma \ref{lem:energy appr implies S finite etc},
$-\infty<S_{R}(e)\leq S(e).$ A slight variant of the argument in
the end of the proof of Theorem \ref{thm:concave S in text} then
shows that $S_{R}(e)$ increases towards $S(e)$ as $R\rightarrow\infty.$
Hence, we can conclude by invoking the first point.
\end{proof}
Next, a different duality argument yields \emph{strict} concavity
and continuity up to $e=e_{0}$ when $X$ is compact. The proof uses
the following duality criterion:
\begin{lem}
\label{lem:f diff implies S strict concav in gener}Consider the Very
general setup and assume that $X$ is compact and that the energy
approximation property holds. If $F(\beta)$ is differentiable in
a neighborhood of $[\beta_{0},\beta_{1}]$ and $[F'(\beta_{1}),F'(\beta_{0})]\subset]e_{min},e_{max}[,$
then $S(e)$ is strictly concave and equal to $F^{*}$ on $[F'(\beta_{1}),F'(\beta_{0})].$
Moreover, in general, if $F$ is differentiable at $\beta,$ then
$F'(\beta)=E(\mu_{\beta})$ for any minimizer of $F_{\beta}.$ 
\end{lem}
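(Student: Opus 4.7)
The argument has two parts, matching the two assertions.

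For the formula $F'(\beta)=E(\mu_\beta)$, I would argue by direct comparison. If $\mu_\beta$ is any minimizer of $F_\beta$, then for every $\beta'\in\R$,
\[
F(\beta')\leq F_{\beta'}(\mu_\beta)=F_\beta(\mu_\beta)+(\beta'-\beta)\,E(\mu_\beta)=F(\beta)+(\beta'-\beta)E(\mu_\beta).
\]
Rewriting this on each side of $\beta$ shows that the right and left difference quotients of $F$ at $\beta$ bracket $E(\mu_\beta)$ from opposite sides, so differentiability of $F$ at $\beta$ pins both one-sided derivatives to $E(\mu_\beta)$.

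For the main assertion, note first that $F=S^*$ is concave and usc in the paper's concave Legendre convention. Differentiability of $F$ on a neighborhood of $[\beta_0,\beta_1]$ together with Lemma~\ref{lem:diff implies dual strc conc} (applied to $f=F$) gives that $F^*=S^{**}$ is strictly concave on the open interval $]F'(\beta_1),F'(\beta_0)[$. In view of the general inequality $F^*\geq S$ that follows from formula \eqref{eq:f star start as concave envelope}, the task reduces to proving $S(e)\geq F^*(e)$ on $[F'(\beta_1),F'(\beta_0)]$. Fix $e$ in this interval, and use continuity of $F'$ to pick $\beta\in[\beta_0,\beta_1]$ with $F'(\beta)=e$. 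Take a minimizing sequence $\mu_j$ for $F_\beta$; the finiteness $F_\beta(\mu_j)<\infty$ automatically forces $\mu_j\ll\mu_0$. For any $\beta_\pm$ near $\beta$ in the differentiability region,
\[
F(\beta_\pm)\leq F_\beta(\mu_j)+(\beta_\pm-\beta)\,E(\mu_j),
\]
and passing first $j\to\infty$ and then $\beta_\pm\to\beta$ yields the squeeze $E(\mu_j)\to F'(\beta)=e$. Then $S(\mu_j)=\beta E(\mu_j)-F_\beta(\mu_j)\to \beta e-F(\beta)=F^*(e)$, and since $\mu_j$ is an admissible contender for $S(E(\mu_j))$, the upper semicontinuous envelope $\bar S$ of $S$ must coincide with $F^*$ on the interior of the interval.

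The main obstacle is the final passage from $\bar S=F^*$ to pointwise equality $S=F^*$. My plan is to invoke Lemma~\ref{lem:convex envol affine}: on any open set $U$ where $S$ is finite and usc, $F^*=S^{**}$ is affine on $\{F^*>S\}\cap U$, so strict concavity of $F^*$ on $]F'(\beta_1),F'(\beta_0)[$ forces $\{F^*>S\}$ to have empty interior there. The Energy Approximation Property, combined with Lemma~\ref{lem:energy appr implies S finite etc}, supplies the local finiteness and upper semicontinuity of $S$ needed to apply Lemma~\ref{lem:convex envol affine} in a neighborhood of each interior point. This yields $S=F^*$ on the open interval, hence strict concavity of $S$ there; the endpoints are then handled by upper semicontinuity of $F^*$ and monotone passage to the limit from the interior.
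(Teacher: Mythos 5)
Your proof of the last assertion, $F'(\beta)=E(\mu_\beta)$, is the same two-sided comparison that the paper uses (formula \ref{eq:minim ineq}), and your proof of the first assertion follows the same skeleton as the paper's: apply Lemma \ref{lem:diff implies dual strc conc} to get strict concavity of $S^{**}=F^*$ on the interval, and then use Lemma \ref{lem:convex envol affine} to rule out the possibility $S^{**}>S$. Two remarks. First, the intermediate step with the minimizing sequence $\mu_j$ — showing that the usc envelope $\bar S$ agrees with $F^*$ on the interior — is logically redundant: Lemma \ref{lem:convex envol affine} already upgrades $S^{**}\geq S$ to $S^{**}=S$ on the strict-concavity region once $S$ is finite and usc there, and once that is done $\bar S=S=F^*$ is automatic, so the sequence argument adds nothing that the final step does not already provide. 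Second, and more importantly, your claim that ``the Energy Approximation Property, combined with Lemma~\ref{lem:energy appr implies S finite etc}, supplies the local finiteness \emph{and upper semicontinuity} of $S$'' is not accurate: that lemma establishes only finiteness of $S$ on $]e_{\min},e_{\max}[$ and says nothing about upper semicontinuity. Upper semicontinuity of $S$ is genuinely needed to invoke Lemma \ref{lem:convex envol affine}, and it does not follow automatically in the Very General Setup from compactness and the energy approximation property alone. The paper's own proof at this point references a proposition (the macrostate-equivalence lemma) that likewise does not assert usc of $S$; the assertion that $S$ is usc on $]e_{\min},e_{\max}[$ is proved in Prop.~\ref{prop:macro eq}, but that requires the thermal stability property, which is not among the hypotheses of the present lemma. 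So you have correctly identified the delicate step, and you mirror the paper's reasoning, but the justification of upper semicontinuity is a gap — albeit one the paper leaves implicit as well — and in the applications it is supplied by the additional structure available there (e.g.\ convexity of $E$, which gives concavity hence continuity of $S$ in the low-energy region, or existence and uniqueness of free energy minimizers).
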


\begin{proof}
Since $F(\beta)$ is concave and $F=S^{*}$ Lemma \ref{lem:diff implies dual strc conc}
implies that $S^{**}$ is strictly concave on $[F'(\beta_{1}),F'(\beta_{0})].$
Next, by Prop \ref{lemma:(macrostate-equivalence).-Assume} $S$ is
usc on $U:=]e_{min},e_{max}[$ and hence Lemma \ref{lem:convex envol affine}
forces $S^{**}=S$ on $[F'(\beta_{1}),F'(\beta_{0})],$ which concludes
the proof of the first statement. The last statement follows directly
from letting $\delta$ tend to zero (from left and from right) in
the inequality
\begin{equation}
F(\beta+\delta)-F(\beta)\leq F_{\beta+\delta}(\mu_{\beta\text{ }})-F_{\beta}(\mu_{\beta})=\delta E(\mu_{\beta}).\label{eq:minim ineq}
\end{equation}
\end{proof}
\begin{prop}
\label{prop:E convex plus energy approx implies S str conc}Assume
that $X$ is compact, $E(\mu)$ is lsc and convex on $\mathcal{P}(X).$
Then $S(e)$ is strictly concave and $S(e)=F^{*}(e)$ on $]e_{min},e_{0}[.$
Moreover, $S(e)$ is continuous on $]e_{min},e_{0}].$ 
\end{prop}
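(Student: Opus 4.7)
The plan is to invoke Lemma \ref{lem:f diff implies S strict concav in gener} on every compact sub-interval $[\beta_{0},\beta_{1}]\Subset\,]0,\infty[$ and take the union. This reduces the proposition to showing that the free energy $F$ is differentiable on $]0,\infty[$ with $F'(\beta)=E(\mu_{\beta})$ for a unique minimizer $\mu_{\beta}$ of $F_{\beta}=\beta E-S$, and to identifying the limits of $F'$ at the endpoints. For each $\beta>0$, the functional $F_{\beta}$ is lsc on the weakly compact space $\mathcal{P}(X)$ (since $E$ is lsc and $-S\geq 0$ is lsc), hence attains its infimum. The classical strict concavity of $S$ on $\{\mu:S(\mu)>-\infty\}$ together with convexity of $E$ makes $F_{\beta}$ strictly convex on that set; as $F_{\beta}(\mu_{0})=\beta e_{0}<\infty$ forces any minimizer to have $S>-\infty$, uniqueness of $\mu_{\beta}$ follows.

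To turn uniqueness into differentiability I start from the inequality $F(\beta+\delta)-F(\beta)\leq\delta E(\mu_{\beta})$ recorded at the end of Lemma \ref{lem:f diff implies S strict concav in gener} and combine it with its mirror at $\beta+\delta$ to obtain the sandwich
\[
E(\mu_{\beta+\delta})\;\leq\;\frac{F(\beta+\delta)-F(\beta)}{\delta}\;\leq\;E(\mu_{\beta})\qquad(\delta>0).
\]
Weak compactness together with uniqueness force $\mu_{\beta+\delta}\to\mu_{\beta}$ as $\delta\to 0$; coupled with lsc of $E$, the sandwich pins down $\lim_{\delta\to 0}E(\mu_{\beta+\delta})=E(\mu_{\beta})$, giving $F'(\beta)=E(\mu_{\beta})$. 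The same scheme yields $F'(\beta)\to e_{0}$ as $\beta\to 0^{+}$ (using $F_{\beta}(\mu_{\beta})\leq\beta e_{0}$ together with uniqueness of $\mu_{0}$ as minimizer of $-S$) and $F'(\beta)\searrow\tilde{e}:=\inf\{E(\mu):S(\mu)>-\infty\}$ as $\beta\to\infty$.

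By the very definition of $\tilde e$, $S(e)=-\infty$ for $e\in\,]e_{min},\tilde e[$, so the strict concavity and the identity $S=F^{*}$ are vacuous there; on $]\tilde e,e_{0}[$ they follow from Lemma \ref{lem:f diff implies S strict concav in gener} applied to each $[\beta_{0},\beta_{1}]\Subset\,]0,\infty[$ and passed to the union. Continuity on $]e_{min},e_{0}[$ is then automatic from concavity of the finite restrictions. For left-continuity at $e_{0}$, I fix $\beta>0$ with $\mu_{\beta}\neq\mu_{0}$ (so $E(\mu_{\beta})<e_{0}$) and consider the affine segment $\mu_{t}=(1-t)\mu_{0}+t\mu_{\beta}$: convexity of $E$ gives $E(\mu_{t})\leq e_{0}-t(e_{0}-E(\mu_{\beta}))$ and lsc gives $\liminf E(\mu_{t})\geq e_{0}$, so $E(\mu_{t})\to e_{0}$, while concavity of $S$ yields $S(\mu_{t})\geq tS(\mu_{\beta})\to 0=S(e_{0})$, hence $S(E(\mu_{t}))\to 0$.

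The main obstacle is the passage from uniqueness of $\mu_{\beta}$ to the actual convergence $E(\mu_{\beta+\delta})\to E(\mu_{\beta})$ under the sole assumption that $E$ is lower semi-continuous --- this is precisely what the monotone sandwich delivers, and it is what identifies $F'(\beta)=E(\mu_{\beta})$ cleanly for every $\beta>0$.
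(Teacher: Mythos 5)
Your plan mirrors the paper's own proof quite closely: both arguments start from the strict convexity of $F_{\beta}$ (via the strict concavity of $S(\mu)$ plus convexity of $E$), deduce uniqueness of the minimizer $\mu_{\beta}$, convert this to differentiability of $F$ with $F'(\beta)=E(\mu_{\beta})$, apply Lemma \ref{lem:f diff implies S strict concav in gener}, and then identify the endpoints of the resulting interval of strict concavity. The paper outsources the differentiability step to the appendix of \cite{b-b0}, whereas you attempt a direct sandwich argument; and the paper proves continuity at $e_0$ via continuity of $F^*$, whereas you use an explicit affine segment $\mu_t$. Those are genuinely more elementary.

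There are two places where the argument as written does not close. First, the sandwich
\[
E(\mu_{\beta+\delta})\leq\frac{F(\beta+\delta)-F(\beta)}{\delta}\leq E(\mu_{\beta}),\qquad\delta>0,
\]
together with lsc of $E$ and weak convergence $\mu_{\beta+\delta}\to\mu_\beta$, pins down the \emph{right} derivative $F'(\beta^{+})=E(\mu_\beta)$. But when $\delta<0$ the two inequalities reverse, and you only obtain $E(\mu_\beta)\leq F'(\beta^{-})\leq\liminf_{\delta\to0^-}E(\mu_{\beta+\delta})$; lower semicontinuity of $E$ then gives $\liminf\geq E(\mu_\beta)$, which is the wrong direction and cannot exclude a kink $F'(\beta^{-})>F'(\beta^{+})$. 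To rule the kink out you must use the \emph{upper} semicontinuity of $S$: for $\beta_j\uparrow\beta$ one writes $F(\beta_j)=\beta_jE(\mu_{\beta_j})-S(\mu_{\beta_j})\to F(\beta)$, extracts $E(\mu_{\beta_j})\to e^{*}$, hence $S(\mu_{\beta_j})\to\beta e^{*}-F(\beta)$; usc of $S$ gives $S(\mu_\beta)\geq\beta e^{*}-F(\beta)$ and therefore $\beta E(\mu_\beta)\leq\beta e^{*}$, forcing $e^{*}=E(\mu_\beta)$ for $\beta>0$. This is essentially the content of the \cite{b-b0} appendix the paper cites, and it is the one missing ingredient in your sandwich.

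Second, you identify the far endpoint as $\tilde e:=\inf\{E(\mu):S(\mu)>-\infty\}$ and declare the claim ``vacuous'' on $]e_{\min},\tilde e[$ because $S\equiv-\infty$ there. That is not quite accurate: if $\tilde e>e_{\min}$ then $S$ is identically $-\infty$ on a nondegenerate subinterval of $]e_{\min},e_0[$, and a function that is constant ($=-\infty$) on an interval is not strictly concave in the usual sense, so the stated conclusion would actually fail on $]e_{\min},\tilde e[$. The paper resolves this by invoking the energy approximation property (which, although not listed in the statement of the proposition, is in force whenever the proposition is used, and is used explicitly in the paper's proof): it implies $\lim_{\beta\to\infty}F'(\beta)=e_{\min}$, i.e.\ $\tilde e=e_{\min}$. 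You should invoke that property here rather than bracket the interval $]e_{\min},\tilde e[$ as vacuous. With these two repairs your argument is correct and matches the paper's route.
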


\begin{proof}
The concavity was shown in \cite{e-s} under the extra assumption
that $E(\mu)$ be continuous on $\mathcal{P}(X).$ Here we note that
an alternative argument yields\emph{ strict} concavity under the more
general assumptions in the proposition. The starting point is the
observation that $F_{\beta}(\mu)$ is convex on $\mathcal{P}(X)$
for $\beta\geq0$ and strictly convex on $\{F_{\beta}<\infty\}.$
Indeed, since $E(\mu)$ is assumed convex this follows directly from
the corresponding property of $-S(\mu)$ (i.e. from the case $\beta=0),$
which is well-known \cite{d-z}. It then follows from general principles
that $F(\beta)$ is differentiable with derivative at $\beta$ given
by $e(\beta):=E(\mu_{\beta}),$ where $\mu_{\beta}$ is the unique
minimizer of $F_{\beta}.$ Indeed, this follows from the general statement
in the appendix of \cite{b-b0}, using that $E(\mu_{\beta})$ is continuous
in $\beta$ by the argument below. Hence, by Lemma \ref{lem:f diff implies S strict concav in gener}
$S(e)$ is strictly concave and equal to $F^{*}$ on the interval
$]\lim_{\beta\rightarrow\infty}e(\beta),\lim_{\beta\rightarrow0}e(\beta)[.$
By the concavity of $F(\beta)$ the function $e(\beta)$ is decreasing.
Moreover, the energy approximation property implies, in a rather straight-forward
manner, that 
\[
\lim_{\beta\rightarrow0}e(\beta)=e_{min}
\]
 (see \cite{be1}). All that remains is thus to verify that 
\[
\lim_{\beta\rightarrow0}e(\beta)=e_{0}.
\]
 But since $e(\beta)$ is decreasing this follows readily from the
lower-semi continuity of $E(\mu)$ (see \cite{be1}). To prove that
that $S(e)$ is continuous on $]e_{min},e_{0}]$ it will be enough,
by the previous step, to show that $F^{*}(e)$ is continuous on $]e_{min},e_{0}]$
and $F^{*}(e_{0})=0.$ Since $F^{*}$ is concave it is enough to show
that $F^{*}(e)$ is finite on $]e_{min},e_{max}[.$ But
\[
S\leq S^{**}=F^{*}\leq0,
\]
 where the last inequality follows from restricting the inf defining
$F^{*}$ to $\beta=0.$ Since $S$ is finite (by the previous proposition)
it follows that is $F^{*}$ is also finite and thus continuous on
$]e_{min},e_{max}[.$ Hence, by the continuity of $F^{*}$ at $e_{0}$
we get $S(e)\rightarrow F^{*}(e_{0})$ as $e\rightarrow e_{0}.$ But
\[
F^{*}(e_{0})=\inf_{\beta\in\R}\left(\beta e_{0}-F(\beta)\right),\,\,\,F(\beta)=\inf_{\mu\in\mathcal{P}(X)}\beta E(\mu)-S(\mu)\leq\beta E(\mu_{0})-S(\mu_{0})=\beta e_{0}
\]
Hence, 
\[
F^{*}(e_{0})=\inf_{\beta\in\R}\left(\beta e_{0}-F(\beta)\right)\geq\inf_{\beta\in\R}\left(\beta e_{0}-\beta e_{0}\right)=0.
\]
which gives $\liminf_{e\rightarrow e_{0}}S(e)\geq0.$ Since, trivially,
$S(e)\leq S(e_{0})=0$ it follows that $S(e)\rightarrow0=S(e_{0}),$
as desired.
\end{proof}

\subsection{\label{subsec:The-necessity-of}The necessity of the energy approximation
property for thermodynamic equivalence of ensembles}

We next show that the assumption that $\mu_{0}$ has the energy approximation
property, used in the previous section is necessary for having thermodynamic
equivalence of ensembles:
\begin{thm}
\label{thm:convex cont iff energy appr}Let $X$ be a compact topological
space endowed with a measure $\mu_{0}$ such that $E(\mu_{0})<\infty$
and assume that $E(\mu)$ is a lsc convex functional on $\mathcal{P}(X)$
and $V\in C^{0}(X).$ Denote by $S_{V}(e)$  entropy $S_{V}(e)$ associated
to $E_{V}(\mu):=E(\mu)+\left\langle V,\mu\right\rangle $ and the
measure $\mu_{0}.$ Then $S_{V}(e)$ is concave and finite on $]e_{min},e_{0}]$
for any $V\in C^{0}(X)$ iff $\mu_{0}$ has the energy approximation
property. In other words, thermodynamic equivalence of ensembles holds
in the low-energy regions $]e_{min},e_{0}]$ for all $V\in C^{0}(X)$
iff $\mu_{0}$ has the energy approximation property.
\end{thm}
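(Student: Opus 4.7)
The plan is to prove both directions separately, with the reverse implication being the substantive one.

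For the forward direction ($\mu_{0}$ has the energy approximation property $\Rightarrow$ concavity and finiteness for all $V$), the point is that the property transfers from $E$ to $E_{V}:=E+\langle V,\cdot\rangle$: adding a weakly continuous affine functional preserves lower semicontinuity and convexity, and clearly preserves energy convergence along any weak approximation. Applying Proposition \ref{prop:E convex plus energy approx implies S str conc} to $E_{V}$ directly gives strict concavity of $S_{V}$ on $]e_{\min,V},e_{0}[$ and continuity, hence finiteness, on $]e_{\min,V},e_{0}]$.

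For the reverse direction I argue by contrapositive. Assuming the energy approximation property fails, I introduce the \emph{regularized energy}
\[
\hat E(\mu):=\inf\bigl\{\liminf_{j}E(\mu_{j}):\mu_{j}\ll\mu_{0},\ \mu_{j}\to\mu\text{ weakly in }\mathcal{P}(X)\bigr\},
\]
i.e.\ the lsc envelope, on $\mathcal{P}(X)$, of the functional equal to $E$ on $\{\mu\ll\mu_{0}\}$ and to $+\infty$ elsewhere. A routine diagonal argument shows $\hat E$ is lsc and convex on $\mathcal{P}(X)$, and the energy approximation property is by construction equivalent to $\hat E=E$ throughout $\mathcal{P}(X)$. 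The decisive observation is that $S_{V}$ depends on $E_{V}$ only through its values on absolutely continuous $\mu$, where $E=\hat E$; in particular
\[
\hat e_{\min,V}:=\inf_{\mu\in\mathcal{P}(X)}\hat E_{V}(\mu)=\inf\{E_{V}(\mu):\mu\ll\mu_{0}\},
\]
and for any $e<\hat e_{\min,V}$ no absolutely continuous $\mu$ satisfies $E_{V}(\mu)=e$, forcing $S_{V}(e)=-\infty$.

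To exploit this, I would extend both $E$ and $\hat E$ by $+\infty$ to the space of signed measures $M(X)$ paired with $C^{0}(X)$; both extensions remain lsc and convex, so Fenchel--Moreau biduality yields $E=E^{**}$ and $\hat E=\hat E^{**}$. Since by assumption $\hat E\neq E$, the Legendre transforms cannot agree either, and the pointwise inequality $E\leq\hat E$ forces $E^{*}\geq\hat E^{*}$ on $C^{0}(X)$; hence some $V_{0}\in C^{0}(X)$ satisfies $E^{*}(V_{0})>\hat E^{*}(V_{0})$. Unwinding the Legendre transforms, the choice $V:=-V_{0}$ translates this strict inequality into $e_{\min,V}<\hat e_{\min,V}$. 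Because $\mu_{0}\ll\mu_{0}$ we have $e_{0}=\hat E_{V}(\mu_{0})\geq\hat e_{\min,V}$, so the non-empty interval $]e_{\min,V},\hat e_{\min,V}[$ sits inside $]e_{\min,V},e_{0}]$; on it $S_{V}\equiv-\infty$ by the observation above, contradicting the assumed finiteness of $S_{V}$ on $]e_{\min,V},e_{0}]$.

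The main obstacle I anticipate is making the Fenchel--Moreau step rigorous in the correct functional-analytic setup: one has to pair $M(X)$ with $C^{0}(X)$ in the appropriate locally convex topology, verify that the extensions of $E$ and $\hat E$ by $+\infty$ off $\mathcal{P}(X)$ remain lower semicontinuous there (which is fine for $X$ compact Hausdorff, since $\mathcal{P}(X)$ is weak-$*$ closed), and justify the sequential definition of $\hat E$; when $\mathcal{P}(X)$ is not metrisable one should instead define $\hat E$ directly as the pointwise lsc envelope and argue with nets. A secondary but routine check is the convexity of $\hat E$, which follows from the standard diagonal construction applied to convex combinations of approximating sequences.
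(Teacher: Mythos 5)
Your forward direction is essentially the same as the paper's (apply Prop \ref{prop:E convex plus energy approx implies S str conc} to $E_V$, after noting that the energy approximation property for $(\mu_0,E)$ transfers to $(\mu_0,E_V)$ because $\langle V,\cdot\rangle$ is weak-$*$ continuous). The converse, however, takes a genuinely different and more self-contained route. The paper passes through the free-energy family $F_V(\beta)$: from the assumed finiteness of $S_V$ on $]e_{\min},e_0]$ it extracts $\lim_{\beta\to\infty}F_V(\beta)/\beta=\inf E_V$, and then invokes a $\Gamma$-convergence characterization of the energy approximation property from the companion paper \cite{be1}, so the argument is not closed inside the present paper. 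You instead introduce the lsc envelope $\hat E$ of $E$ restricted to $\mu_0$-absolutely-continuous measures, note that the energy approximation property is exactly $\hat E=E$, and use Fenchel--Moreau biduality on $M(X)$ paired with $C^0(X)$ to show that $\hat E\neq E$ forces $E^*(V_0)>\hat E^*(V_0)$ for some $V_0\in C^0(X)$; taking $V=-V_0$ produces a nonempty subinterval $]e_{\min,V},\hat e_{\min,V}[\subset]e_{\min,V},e_0]$ on which $S_V\equiv-\infty$, contradicting finiteness. This is precisely the duality content that \cite{be1} encapsulates, made explicit: $\lim_\beta F_V(\beta)/\beta=\hat e_{\min,V}=-\hat E^*(-V)$, so equality with $e_{\min,V}=-E^*(-V)$ for every $V$ is the Fenchel--Moreau statement $\hat E=E$. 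What you buy is independence from \cite{be1}; what you pay is the functional-analytic bookkeeping you already flag: $\hat E$ must be shown lsc for the weak-$*$ topology on $M(X)$, and your sequential definition of $\hat E$ is adequate only when $\mathcal P(X)$ is metrizable (i.e. $C^0(X)$ separable, as for $X$ compact metric); for a general compact Hausdorff $X$ one should take the topological lsc envelope and argue with nets. With that caveat resolved, the proof is correct.
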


\begin{proof}
First assume that $\mu_{0}$ has the energy approximation property.
Since $E_{V}(X)$ is lsc and convex it then follows from the previous
proposition that $S_{V}(e)$ is concave on $]e_{min},e_{0}[$ for
any $V\in C^{0}(X).$ To prove the converse first note that, by the
third point in Prop \ref{prop:thermo equiv under energy appr}, the
restriction of $S_{V}$ to $]e_{min},e_{0}]$ is equal to the Legendre-Fenchel
transform of $F_{V}(\beta).$ Hence, since $S_{V}(e)$ is assumed
finite on $]e_{min},e_{0}[$ it follows from the property of gradient
images in formula \ref{eq:gradient im} that $dF_{V}(\beta)/d\beta\rightarrow e_{min}$
as $\beta\rightarrow\infty$ (using either left or right derivatives).
Since $F_{V}(\beta)$ is concave this means that 
\[
\lim_{\beta\rightarrow\infty}F_{V}(\beta)/\beta=\inf_{\mathcal{P}(X)}E(\mu).
\]
 Now, by definition, $F_{V}(\mu)/\beta=E(\mu)-S(\mu)/\beta$ and hence
\[
\lim_{\beta\rightarrow\infty}\inf_{\mathcal{P}(X)}\left(E(\mu)-S(\mu)/\beta\right)=\inf_{\mathcal{P}(X)}E(\mu).
\]
But as as shown in \cite{be1} the latter convergence holds for all
$V\in C^{0}(X)$ iff $\mu_{0}$ has the approximation property (briefly,
the point is that the convergence in question is, since $E(\mu)$
is convex equivalent to the $\Gamma-$convergence of $E(\mu)-S(\mu)/\beta$
towards $E(\mu),$ which, in turn, is equivalent to the energy approximation
property of $\mu_{0}).$
\end{proof}
In the case when $W(x,y)$ is the repulsive logarithmic interaction
in $\R^{2}$ or $W(x,y)=|x-y|^{-s}$ in $\R^{n}$ for $s\in]d-2,d[$
(specializing to the Coulomb interaction when $s=d-2)$ a potential-theoretic
characterization of measures $\mu_{0}$ satisfying the energy approximation
property was given in \cite{be1}. In particular, it was shown that
any compact domain $X$ with smooth boundary admits probability measures
$\mu_{0}$ with support $X$ and a density in $L^{1}(X,dx),$ for
which the energy-approximation property fails. Hence, by the previous
theorem thermodynamic equivalence of ensembles also fails. On the
other hand, Lebesgue measure on a compact domain $X$ has the energy
approximation property, if $X$ is non-thin at all boundary points,
in the sense of classical potential theory. For example, this is the
case if any point $x\in\partial X$ is the vertex of a cone contained
in $X$(e.g. if $X$ is a Lipschitz domain). 

\subsection{\label{subsec:Non-concavity-in-the cata}The catastrophic case of
singular power-laws}

Consider now the case when $X$ is compact and $W(x,y)$ is a repulsive
power-law singularity:
\begin{equation}
W(x,y):=|x-y|^{-\alpha}+H(x,y),\,\,\,\alpha>0\label{eq:power-law sing}
\end{equation}
 for $H$ continuous on $X\times X.$ In particular, $e_{\infty}=\infty.$
We will say that a compact set $X$ is\emph{ strictly star-shaped}
if for any point $x\in X$ and $c\in[0,1[$ the scaled point $cx$
is contained in the interior of $X.$
\begin{prop}
\label{prop:cata}Consider a repulsive power-law singularity $W$
on a compact strictly star-shaped subset $X$ of $\R^{d}$ and let
$\mu_{0}$ be proportional (or comparable) to Lebesgue measure on
$X.$ Then $S(e)$ is concave on $\R$ and finite (hence continuous)
on $]e_{min},\infty[.$ Moreover, $S(e)=S(e_{0})$ for for any $e\geq e_{0}$
and, as a consequence, there exists no maximum entropy measure $\mu^{e}$
when $e>e_{0}.$ 
\end{prop}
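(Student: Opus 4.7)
The plan is to establish the three ingredients of the statement---the constancy $S(e)=S(e_{0})=0$ for $e\geq e_{0}$, the concavity on the low-energy side, and the non-existence of maximum entropy measures at high energies---in that order, with the first being by far the main content.

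For the constancy claim, since $S(\mu)\leq 0$ on $\mathcal{P}(X)$ with equality iff $\mu=\mu_{0}$, the upper bound $S(e)\leq 0$ is automatic; the task is to produce, for each $e>e_{0}$, a sequence of absolutely continuous probability measures $\mu_{n}$ with $E(\mu_{n})=e$ and $S(\mu_{n})\to 0$. The idea is to exploit the diagonal singularity of $W$: a small mass $\epsilon$ concentrated in a ball of radius $\delta$ around a fixed point $x_{0}$ in the interior of $X$ (which exists since $X$ is strictly star-shaped) carries an energy contribution of order $\epsilon^{2}\delta^{-\alpha}$ from the $|x-y|^{-\alpha}$ term, while its entropy penalty is only of order $\epsilon|\log\delta|$. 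Concretely, for $\delta<\mathrm{dist}(x_{0},\partial X)$ let $\nu_{\delta}$ denote the normalized Lebesgue measure on $B_{\delta}(x_{0})$ and set
\[
\mu_{\epsilon,\delta}:=(1-\epsilon)\mu_{0}+\epsilon\nu_{\delta}.
\]
A direct expansion of $E$ and $S$ in $\epsilon,\delta$ gives, for small $\epsilon,\delta$,
\[
E(\mu_{\epsilon,\delta})=E(\mu_{0})+O(\epsilon)+\epsilon^{2}\bigl(c\,\delta^{-\alpha}+O(1)\bigr),\qquad S(\mu_{\epsilon,\delta})=O\bigl(\epsilon|\log\delta|\bigr),
\]
with an explicit $c>0$. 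Choosing $\delta=\delta(\epsilon)$ of order $\epsilon^{2/\alpha}$ makes $\epsilon^{2}\delta^{-\alpha}$ comparable to $e-e_{0}$, and the intermediate value theorem applied to the continuous function $\delta\mapsto E(\mu_{\epsilon,\delta})$, which ranges from $+\infty$ as $\delta\downarrow 0$ to a finite value bigger than $e$ is false for large $\delta$, adjusts $\delta(\epsilon)$ so that $E(\mu_{\epsilon,\delta(\epsilon)})=e$ exactly. Since $|\log\delta(\epsilon)|$ is comparable to $|\log\epsilon|$, we have $\epsilon|\log\delta(\epsilon)|\to 0$, hence $S(\mu_{\epsilon,\delta(\epsilon)})\to 0$, which forces $S(e)=0$.

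For concavity and finiteness on the low-energy side $]e_{\min},e_{0}]$, note first that $E$ is convex on $\mathcal{P}(X)$: the Riesz kernel $|x-y|^{-\alpha}$ is positive definite in the relevant range $0<\alpha<d$ (which is precisely where $e_{0}=E(\mu_{0})$ is finite), and the continuous part $H$ defines a bounded symmetric bilinear form that can be treated as a regular perturbation (or absorbed by decomposing $W$ appropriately). Moreover, strict star-shapedness of $X$ implies that Lebesgue measure on $X$ has the energy approximation property: any $\mu\in\mathcal{P}(X)_{0}$ is a weak limit of $(T_{c})_{\ast}\mu\ast\phi_{\rho}$ with $T_{c}(x)=cx$, $c\nearrow 1$, and $\phi_{\rho}$ a mollifier of scale $\rho\to 0$, using that $T_{c}X$ lies in the interior of $X$; standard Riesz-potential arguments give convergence of the energies. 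Proposition \ref{prop:E convex plus energy approx implies S concave} then yields concavity and finiteness of $S$ on $]e_{\min},e_{0}[$. Gluing this to the constancy $S\equiv 0$ on $[e_{0},\infty[$ from the previous step gives global concavity on $\R$ and finiteness (hence continuity) on $]e_{\min},\infty[$.

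Finally, non-existence of a maximum entropy measure at $e>e_{0}$ follows immediately: if $\mu^{e}$ realized the supremum defining $S(e)$, then $S(\mu^{e})=S(e)=0=S(\mu_{0})$, so strict concavity of $S(\mu)$ on $\{S>-\infty\}\subset\mathcal{P}(X)$ forces $\mu^{e}=\mu_{0}$, contradicting $E(\mu^{e})=e\neq e_{0}$. The main technical obstacle is in the first step: balancing $\epsilon$ and $\delta$ so that the energy hits $e$ exactly while the entropy penalty $\epsilon|\log\delta|$ vanishes. This succeeds here precisely because the singular power-law allows $\delta$ to be taken polynomial in $\epsilon$, rendering the logarithmic factor harmless---an essentially singular effect that would not occur for a regularized interaction.
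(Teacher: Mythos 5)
Your proof follows the same strategy as the paper's: concavity and finiteness on $]e_{\min},e_{0}]$ from the convexity of $E$, the energy approximation property, and Proposition \ref{prop:E convex plus energy approx implies S concave}; then $S(e)=S(e_{0})$ for $e>e_{0}$ via a core--halo construction in which the power-law singularity allows a small mass $\epsilon$ concentrated at scale $\delta\sim\epsilon^{2/\alpha}$ to carry arbitrarily large energy while the entropy penalty $O(\epsilon|\log\delta|)=O(\epsilon|\log\epsilon|)\to 0$. The differences from the paper are technical rather than conceptual: you take the concentrated core to be the uniform measure on a small ball and use the intermediate value theorem to hit $E=e$ exactly, whereas the paper takes the core to be the pushforward $(T_{\epsilon})_{*}\mu_{0}$ under scaling and exploits exact $\alpha$-homogeneity (so $E((T_{\epsilon})_{*}\mu_{0})=\epsilon^{-\alpha}E(\mu_{0})$ and $S((T_{\epsilon})_{*}\mu_{0})=d\log\epsilon$ hold with no error terms), invoking the monotonicity Lemma \ref{lem:decreasing} to reduce the constancy claim to exhibiting $\nu_{\epsilon}$ with $E(\nu_{\epsilon})\to\infty$ and $S(\nu_{\epsilon})\to 0$, thereby sidestepping the need to tune parameters to land exactly on $e$. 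One omission to patch: both Lemma \ref{lem:decreasing} and Lemma \ref{lem:energy appr implies S finite etc} (on which the finiteness claim in Proposition \ref{prop:E convex plus energy approx implies S concave} rests) require the affine continuity property for $E$, which the paper checks explicitly by modifying the proof of Lemma \ref{lem:Main Ass implies affine} using Cauchy--Schwarz, positive definiteness, and $W\in L^{1}(X^{2})$ (so implicitly $\alpha<d$); your proposal never verifies it.
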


\begin{proof}
For simplicity we will assume that $H=0,$ but the general case is
shown in essentially the same way. First observe that $E$ has has
the energy approximation property. Indeed, using that $X$ is assumed
strictly star-shaped and $W(e^{t})$ is monotone in $t$ it is, by
the argument in the proof of Lemma \ref{lem:superharmonic gives energy approx},
enough to show this when the support of $\mu$ is contained in the
interior of $X.$ Let $\mu_{\epsilon}$ be defined as in formula \ref{eq:def of mu eps}.
First using that $E$ is convex and then that $E$ is translationally
invariant gives 
\[
E(\mu_{\epsilon})\geq\int_{a\in B_{\epsilon}}\sigma_{\epsilon}E\left((T_{a})_{*}\mu\right)=\int_{a\in B_{\epsilon}}\sigma_{\epsilon}E\left(\mu\right)=E(\mu).
\]
 The reversed asymptotic inequality follows directly from the lower
semi-continuity of $E,$ resulting from the assumed lower semi-continuity
of $w.$ Next note that the affine continuity property appearing in
Lemma \ref{lem:decreasing} holds, as is seen by modifying the proof
of Lemma \ref{lem:Main Ass implies affine}. Indeed, using the Cauchy-Schwartz
inequality the finiteness in formula \ref{eq:finiteness in proof lemma affine cont}
follows from the positive definiteness of $W(x,y)$ and that $W\in L^{1}(X^{2}).$
Hence, by Prop \ref{prop:E convex plus energy approx implies S concave}
$S(e)$ is concave and continuous on $]e_{min},e_{0}]$ and $S(e)>-\infty$
for all $e\in]e_{min},\infty[.$ 

Next, we will show that $S(e)=S(e_{0})$ for for any $e>e_{0}.$ To
this end it will, thanks to Lemma \ref{lem:decreasing} , be enough
to show that there exists a family $\nu_{\epsilon}\in\mathcal{P}(X)$
parametrized by $\epsilon>0$ such that, as $\epsilon\rightarrow0,$
\begin{equation}
(i)\,E(\nu_{\epsilon})\rightarrow\infty,\,\,\,(ii)\,S(\nu_{\epsilon})\rightarrow S(\mu_{0})=0.\label{eq:E and S along nu eps}
\end{equation}
 We will take $\nu_{\epsilon}=\epsilon^{\alpha/4}(T_{\epsilon})_{*}\mu_{0}+(1-\epsilon^{\alpha/4})\mu_{0},$
where, as before, $T_{\epsilon}$ denotes the scaling map $x\mapsto\epsilon x.$
First observe that since $S((T_{\epsilon})_{*}\mu_{0})=d\log\epsilon$
for some (as seen by making the change of variables $x\mapsto T_{\epsilon}(x)$
in the integrals) we get, using the concavity of $S(\mu)$ on $\mathcal{P}(X),$
\[
S(\nu_{\epsilon})\geq\epsilon^{\alpha/4}S((T_{\epsilon})_{*}\mu_{0})+(1-\epsilon^{\alpha/4})S(\mu_{0})\geq\epsilon^{\alpha/4}d\log\epsilon+0,
\]
 which verifies the second item in formula \ref{eq:E and S along nu eps}.
To prove the first one observe that, making the change of variables
$x\mapsto T_{\epsilon}(x)$ in the integrals, reveals that $E((T_{\epsilon})_{*}\mu_{0})=\epsilon^{-\alpha}E(\mu_{0})$
and hence $E(\nu_{\epsilon})=\epsilon^{\alpha/2}E((T_{\epsilon})_{*}\mu_{0})=\epsilon^{\alpha/2}\epsilon^{-\alpha}E(\mu_{0}),$
which proves the first item in formula \ref{eq:E and S along nu eps}.
Hence, $S(e)=S(e_{0})$ for any $e>e_{0}.$ Since we have shown that
$S(e)$ is concave, increasing and continuous on $]e_{min},e_{0}]$
and $S(e)=S(e_{0})$ it follows that $S(e)$ if concave and continuous
on $]e_{min},e_{0}].$ Moreover, since $S(\mu)=S(\mu_{0})$ iff $\mu=\mu_{0}$
(which implies $E(\mu)=e_{0})$ it follows that there exists no maximum
entropy measure for $e>e_{0}.$
\end{proof}
More precisely, the proof of the previous proposition reveals that,
for any given $e>e_{0}$ there exists $\mu\in\mathcal{P}(X)$ with
energy $e,$ i.e. $E(\mu)=e,$ whose entropy $S(\mu)$ can be taken
to be arbitrarily close to the maximal entropy and such that $\mu$
has a ``core-halo'' structure, i.e. $\mu$ is a convex combination
\[
(1-\lambda)\mu_{0}+\lambda\mu_{1}
\]
for some $\lambda\in]0,1[,$ where $\mu_{1}$ (the ``core'') can
be taken to be a uniform measure of arbitrarily large density on a
ball with arbitrarily small radius, centered at a given point in the
interior of $X.$ 
\begin{rem}
The assumption the $X$ be star-shaped was imposed to ensure the energy
approx property and can certainly be relaxed. For example, if $W(x,y)$
is the Coulomb interaction in $\R^{n},$ then, as pointed out in Section
\ref{subsec:The-necessity-of}, the energy approximation property
in question holds if the interior of $X$ is non-thin at all boundary
points.
\end{rem}

The previous proposition also applies to the corresponding singular
power-laws obtained by switching the sign of $W,$ if at the same
time $e$ is replaced by $-e$ (using that $S(e)$ is concave iff
$S(-e)$ is). In the case of the Newtonian pair-interaction in $\R^{3}$
the non-existence of the corresponding maximum entropy measure is
closely related to the \emph{gravitational catastrophe} (Antonov instability)
which plays a central role in astrophysics \cite[Section 4.10.1]{b-t}
(as explained in \cite{be2}).

\section{\label{sec:Concavity-of-the}Concavity in the high energy region
under the Main Assumptions}

We start by recalling the Main/Homogeneous Assumptions stated in the
introduction of the paper. 

\subsection{\label{subsec:The-Main-and}The Main and Homogeneous Assumptions}

Let $X$ is a (possible non-compact) subset of $\R^{2n}$ end let
$\phi$ be a defining function for $X,$ i.e. a continuous function
such that 
\[
X=\{\phi\leq0\}.
\]
 Endow $X$ with a measure $\mu_{0}$ which is absolutely continuous
wrt Lebesgue measure $d\lambda:$ 
\[
\mu_{0}=e^{-\Psi_{0}}d\lambda.
\]
 on $\R^{2n}.$ As pointed out above we will identify $\R^{2n}$ with
$\C^{n}$ and denote by $(z_{1},...,z_{n})$ the standard holomorphic
coordinates on $\C^{n}.$ 

$\medskip$

\paragraph*{\emph{\noun{Main Assumptions: }}$\phi\in PSH_{\boldsymbol{a}}(\C^{n}),$
$\Psi_{0},-V\in PSH_{\boldsymbol{a}}(X)$ and $-W\in PSH_{\boldsymbol{a,a}}(X\times X)$
for some $\boldsymbol{a}\in]0,\infty[^{n}$}

$\medskip$

The class $PSH_{\boldsymbol{a}}(X)$ was defined in Section\ref{subsec:Background-on-plurisubharmonicit}
and the class $PSH_{\boldsymbol{a,a}}(X\times X)$ is defined similarly,
by identifying $\C^{n}\times\C^{n}$ with $\C^{2n}$ and using the
weight vector $(\boldsymbol{a},\boldsymbol{a}).$ Recall that we also
introduced the Homogeneous Assumptions in Section \ref{subsec:Summary-of-the},
which according to the following lemma is a special case of the Main
Assumptions:
\begin{lem}
\label{lem:special}If the Homogeneous Assumptions are satisfied,
then so are the Main Assumptions. 
\end{lem}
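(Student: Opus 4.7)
The plan is to verify each of the Main Assumptions in turn, taking as the weight vector $\boldsymbol{a}=(1,1,\ldots,1)$, so that the action \eqref{eq:action by theta intro} becomes the diagonal rotation $z\mapsto e^{i\theta}z$. Every object in the Homogeneous Assumptions depends only on $|x|$ or $|x-y|$, and is therefore automatically invariant under this action; only the plurisubharmonicity statements need checking. For the defining function $\phi$, I would take $\phi(z):=|z|^2-R^2$ when $X=B_R$ and $\phi\equiv-1$ when $X=\R^{2n}$; both are continuous, rotation-invariant, and psh (the first because $|z|^2$ is psh, the second trivially), so $\phi\in PSH_{\boldsymbol{a}}(\C^n)$.

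Before treating $\Psi_0$, $V$ and $W$, I would record the following elementary observation, which is where the ``bounded below as $r\to 0$'' hypothesis is used: if $g:(-\infty,T]\to\R$ is concave and $\liminf_{t\to-\infty}g(t)>-\infty$, then $g$ is non-increasing. Indeed, if $g$ had positive slope at some point $t_0$, then by concavity the slope would be $\geq g'(t_0)>0$ on $(-\infty,t_0]$, forcing $g(t)\to-\infty$ as $t\to-\infty$, contradicting the lower bound. Applying this to the concave functions $g(t)=w(e^t),\,v(e^t),\,-\psi_0(e^t)$ (concave by assumption, bounded below as $t\to-\infty$) shows that $w$ and $v$ are non-increasing in $r$, and $\psi_0$ is non-decreasing in $r$. (This is exactly the monotonicity statement for $w$ noted in the main text.)

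Next I would invoke the standard characterization of plurisubharmonicity for radial functions: for a function of the form $\psi(|z|)$ on $\C^n$, pshness is equivalent to $\psi(e^t)$ being convex and non-decreasing in $t$. One direction is immediate by writing $\psi(|z|)=\chi\circ\log|z|$ with $\chi(t):=\psi(e^t)$, noting that $\log|z|$ is psh and that composition with a convex non-decreasing function preserves pshness; the other direction follows by restricting to the complex line $\zeta\mapsto\zeta v$ for a unit vector $v$ and using the analogous one-variable result on $\C$. Applied to $\Psi_0=\psi_0(|x|)$, convexity of $\psi_0$ in $\log r$ (the hypothesis on $-\psi_0$) and the monotonicity from the preceding paragraph give $\Psi_0\in PSH_{\boldsymbol{a}}(\C^n)$. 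Exactly the same argument applied to $-v$ (which is convex in $\log r$ and non-decreasing in $r$) yields $-V\in PSH_{\boldsymbol{a}}(\C^n)$.

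For the pair interaction $-W(z,w)=-w(|z-w|)$, the same radial criterion, applied to the function $u\mapsto -w(|u|)$ on $\C^n$, shows this is psh on $\C^n$; then pulling back by the $\C$-linear map $L:\C^n\times\C^n\to\C^n$, $L(z,w):=z-w$, and using that pullbacks of psh functions by holomorphic maps are psh, gives that $-W\in PSH(\C^n\times\C^n)$. Invariance under the diagonal action $(z,w)\mapsto(e^{i\theta}z,e^{i\theta}w)$ is immediate since $|e^{i\theta}z-e^{i\theta}w|=|z-w|$, so $-W\in PSH_{\boldsymbol{a,a}}(\C^n\times\C^n)$. This completes all the clauses of the Main Assumptions. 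The only genuinely nontrivial ingredient is the monotonicity lemma of the second paragraph, and even that is a short convexity argument; everything else is a routine invocation of standard facts about psh functions recalled in Section \ref{subsec:Background-on-plurisubharmonicit}.
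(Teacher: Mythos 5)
Your proof is correct and follows essentially the same route as the paper's: the same monotonicity observation from concavity plus boundedness below as $r\to 0$, followed by composing convex non-decreasing functions with the psh functions $\log|z|$ and $\log|z-\zeta|$, with weight vector $\boldsymbol{a}=(1,\ldots,1)$. The only cosmetic differences are that you exhibit the defining function $\phi$ explicitly, whereas the paper is slightly more careful about extending the psh composition over the pluripolar singularity at the origin by citing the relevant extension theorem.
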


\begin{proof}
First note that $-v(r)$ is increasing in $r.$ Indeed, since $\phi(t):=v(e^{t})$
is convex in $t$ the limit, denoted by $\dot{\phi}(-\infty),$ of
the one sided derivative $\phi'(t+)$ exists as $t\rightarrow\infty.$
Since $\phi(t)$ is assumed bounded from above as $t\rightarrow-\infty$
it follows that $\dot{\phi}(-\infty)\geq0.$ Hence, by convexity,
$\phi'(t+)\geq0$ for all $t,$ showing that $\phi(t)$ is increasing
in $t,$ as desired. Since $\log|z|$ is psh this means that $V(z)$
is an increasing convex function of the psh function $\log|z|$ when
$z\neq0$ and bounded from above in a punctured neighborhood of the
origin in $\C^{n}.$ But any psh function which is locally bounded
from a above on the complement of a pluripolar set $A$ (i.e. a set
which is locally the $-\infty-$set of a psh function) extends over
$A$ to a unique psh function \cite[Thm 5.24]{dem}. Thus $-V$ indeed
defines a psh function on $X$ and the same argument applies to $\Psi_{0}(z).$
Similarly since $(z,\zeta)\mapsto(z-\zeta)$ is holomorphic the function
$\log|z-\zeta|$ is psh on $\C^{2n}$ and thus $-W(z,\zeta)$ is an
increasing convex function of a psh function when $\log|z-\zeta|\neq-\infty$
and thus psh. All in all this means that the Main Assumptions are
satisfied with, for example, $a_{0}=...=a_{n}=1.$ 
\end{proof}
We next show that the ``affine continuity property'' and energy
approximation property introduced in Section \ref{sec:Thermodynamical-equivalence}
both hold under the Main Assumptions. 
\begin{lem}
\label{lem:Main Ass implies affine}Under the Main Assumptions the
affine continuity property holds.
\end{lem}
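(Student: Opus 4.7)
The plan is to expand $E((1-t)\mu_0 + t\mu_1)$ as a polynomial of degree two in $t$ using the bilinearity of $E_{W,V}$, and then to verify that every coefficient is finite; continuity on $[0,1]$ would follow for free, since polynomials are continuous. One may assume $X$ is compact, in accordance with the definition. Setting $A_{ij}:=\int W\,d\mu_i\otimes d\mu_j$ and $B_i := \int V\,d\mu_i$ for $i,j\in\{0,1\}$, the symmetry of $W$ gives
\begin{equation*}
E((1-t)\mu_0 + t\mu_1) = \tfrac{(1-t)^2}{2}A_{00} + (1-t)t\,A_{01} + \tfrac{t^2}{2}A_{11} + (1-t)B_0 + t\,B_1.
\end{equation*}
Since $-V$ and $-W$ are plurisubharmonic, hence upper semi-continuous and bounded above on the compact sets $X$ and $X\times X$, one has $V, W \ge -C$; this rules out any coefficient being $-\infty$, and the hypothesis $E(\mu_1) < \infty$ immediately gives $A_{11}, B_1 < \infty$. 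It remains to bound $B_0$, $A_{00}$, and $A_{01}$ from above.

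Writing $\mu_0 = e^{-\Psi_0}\,d\lambda$, my approach to $B_0$ and $A_{00}$ would rely on two pluripotential-theoretic inputs. First, every plurisubharmonic function lies in $L^p_{\mathrm{loc}}(d\lambda)$ for every $p<\infty$, applied to $-V$ and $-W$. Second, the strong openness of plurisubharmonic integrability (Demailly--Koll\'ar, Guan--Zhou) upgrades $\int e^{-\Psi_0}\,d\lambda < \infty$ to $e^{-\Psi_0}\in L^q_{\mathrm{loc}}(d\lambda)$ for some $q>1$, and the analogous statement applies on $\C^n\times\C^n$ to the plurisubharmonic weight $\Psi_0(z)+\Psi_0(w)$. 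H\"older's inequality with dual exponent $p = q/(q-1) < \infty$ then yields $B_0 < \infty$ and $A_{00} < \infty$.

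For the cross term $A_{01}$, I would decompose $\mu_1 = \rho\,\mu_0$ with $\int \rho\log\rho\,d\mu_0 <\infty$ (from $S(\mu_1) > -\infty$) and split $\rho = \min(\rho,M) + (\rho-M)^+$ for a large $M$: the bounded piece is controlled by $M\int|W|\,d\mu_0\otimes d\mu_0$, finite by the previous step, while the tail is handled by the Donsker--Varadhan form of Young's inequality applied with $g(w):=\int |W(z,w)|\,d\mu_0(z)$ (plurisuperharmonic in $w$ as a fiberwise integral of the plurisubharmonic $-W$) and $h:=(\rho-M)^+$, using one further application of the openness theorem to secure $\int e^{\alpha g}\,d\mu_0<\infty$ for small $\alpha>0$. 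I expect the openness of plurisubharmonic integrability to be the main obstacle: it is what lets us trade the probability-measure normalization of $e^{-\Psi_0}$ for an $L^q$-bound with $q>1$, so that H\"older can combine with the $L^p$-regularity of the plurisubharmonic potentials $-V$ and $-W$. Once this analytic input is in hand, the polynomial structure of $E((1-t)\mu_0 + t\mu_1)$ makes continuity on $[0,1]$ automatic.
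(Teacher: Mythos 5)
Your proposal is correct in substance and the overall plan (expand $E(\mu_t)$ as a degree-two polynomial in $t$, then verify finiteness of each coefficient via pluripotential-theoretic regularity plus the openness theorem and H\"older) is sound, but it diverges from the paper's proof precisely where the paper makes its key observation. For $B_0$ and $A_{00}$ your route coincides with the paper's: local $L^p$-regularity of psh functions, $L^q$-regularity of $e^{-\Psi_0}$ for some $q>1$ from openness, and H\"older. For the cross term $A_{01}$, however, you bypass the psh structure and instead write $\mu_1=\rho\mu_0$, split $\rho$ at a level $M$, and control the tail by a Donsker--Varadhan/Young duality between $\int\rho\log\rho\,d\mu_0$ and exponential integrability of the averaged potential. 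The paper's argument is shorter and exploits plurisubharmonicity more directly: since $-W\in PSH_{\boldsymbol{a},\boldsymbol{a}}$ on a neighborhood of $X\times X$ and $\mu_1$ is positive, the averaged potential $u_{\mu_1}(x):=\int W(x,y)\,d\mu_1(y)$ is plurisuperharmonic, and $E(\mu_1)<\infty$ forces it to be $\not\equiv+\infty$; hence it is $L^1_{\mathrm{loc}}$, and integrating against $\mu_0$ is handled by the same openness-plus-H\"older step already used for $A_{00}$. In other words, the paper reuses the same plurisuperharmonicity-of-the-averaged-kernel observation for all terms, whereas you invoke an independent entropy mechanism for the cross term; both work, but the paper's is simpler and needs no appeal to $S(\mu_1)>-\infty$ beyond absolute continuity. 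One small inaccuracy in your writeup: $g(w)=\int|W(z,w)|\,d\mu_0(z)$ is \emph{not} plurisuperharmonic as stated, because $|W|$ is not of the form $-(\mathrm{psh})$; what is plurisuperharmonic is $u_{\mu_0}(w)=\int W(z,w)\,d\mu_0(z)$, and $g$ only differs from it by the bounded quantity $2\int W^-\,d\mu_0$ (using $W\geq -C$ on the compact $X\times X$). This does not break the argument, since the exponential integrability of $g$ reduces to that of $u_{\mu_0}$, but the justification should be phrased in those terms.
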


\begin{proof}
Since $X$ may be assumed compact and $W$ is lsc we may after perhaps
replacing $W$ with $W+C$ i.e. $E$ with $E+C,$ as well assume that
$W\geq0$ on $X\times X.$ Hence, by the dominated convergence theorem
it is enough to verify that if $E(\mu)<\infty,$ then 
\begin{equation}
\int_{X\times X}W\mu\otimes\mu_{0}<\infty.\label{eq:finiteness in proof lemma affine cont}
\end{equation}
 Set $u_{\mu}(x):=\int_{X}W(x,y)\mu(y).$ Since $-W$ is psh on a
neighborhood of $X\times X$ the function $-u_{\mu}(x)$ psh on $X.$
Now since $X$ is connected, as shown in the course of the proof of
Theorem \ref{thm:conv for N for two-point etc}, any psh function
(or more generally, subharmonic function) is either identically equal
to $-\infty$ or in $L_{loc}^{1}$ (as follows from the submean property
of subharmonic functions). But, by assumption, $\int_{X}u_{\mu}\mu=E(\mu)<\infty$
and hence $-u_{\mu}$ cannot be identically $-\infty.$ Since $\mu_{0}=e^{-\Psi_{0}}d\lambda$
\ref{eq:finiteness in proof lemma affine cont} thus follows directly
in the case when $\Psi_{0}$ is bounded. In the general case we can
use that by Cor \ref{cor:integrability threshold as slope} below,
there exists $q>1$ such that $\int_{X}e^{-q\psi}d\lambda<\infty$
for any psh function $\psi$ (not identically $-\infty)$ and apply
Hölder's inequality to conclude .
\end{proof}
\begin{lem}
\label{lem:superharmonic gives energy approx}Assume that the Main
Assumptions hold. Then the corresponding energy approximation property
is satisfied. 
\end{lem}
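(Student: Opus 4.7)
The strategy is to approximate a given $\mu\in\mathcal{P}_{0}(X)$ by mollifications of $\mu$, after first shrinking its support into the interior $X^{\circ}$ of $X$.

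\emph{Mollification step.} Assume first that $\mathrm{supp}(\mu)\Subset X^{\circ}$. Let $\chi\in C_{c}^{\infty}(\C^{n})$ be nonnegative and radially symmetric with $\int\chi\, d\lambda=1$, and put $\chi_{\epsilon}(z):=\epsilon^{-2n}\chi(z/\epsilon)$. For $\epsilon$ small, $\mu_{\epsilon}:=\mu*\chi_{\epsilon}$ is supported in a fixed compact subset of $X^{\circ}$, has a bounded Lebesgue density, and --- since the polar set $\{\Psi_{0}=+\infty\}$ is pluripolar and hence Lebesgue-null --- is absolutely continuous with respect to $\mu_{0}=e^{-\Psi_{0}}d\lambda$. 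Clearly $\mu_{\epsilon}\to\mu$ weakly. The crucial estimate is $E(\mu_{\epsilon})\leq E(\mu)$, which follows from plurisuperharmonicity of $W$ and $V$: applying the submean inequality for $-W\in PSH_{\boldsymbol{a},\boldsymbol{a}}$ separately in each of the two complex variables gives
\[
\iint W(x+a,y+b)\,\chi_{\epsilon}(a)\,\chi_{\epsilon}(b)\, d\lambda(a)\, d\lambda(b)\leq W(x,y)
\]
pointwise, and integrating against $\mu\otimes\mu$ and invoking Fubini yields $\int W\,\mu_{\epsilon}\otimes\mu_{\epsilon}\leq\int W\,\mu\otimes\mu$. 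The analogous inequality $\int V\,\mu_{\epsilon}\leq\int V\,\mu$ comes from plurisuperharmonicity of $V$. Combined with the reverse bound $\liminf_{\epsilon\to 0}E(\mu_{\epsilon})\geq E(\mu)$ from lower semicontinuity of $E$, this gives $E(\mu_{\epsilon})\to E(\mu)$.

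\emph{Reduction to the interior.} For a general compactly supported $\mu$, introduce the $\boldsymbol{a}$-weighted scaling $T_{r}z:=(r^{a_{1}}z_{1},\dots,r^{a_{n}}z_{n})$, $r\in(0,1]$. Since $\phi\in PSH_{\boldsymbol{a}}(\C^{n})$, the function $s\mapsto\phi(T_{e^{s}}z)$ is convex in $s$ (it is subharmonic on $\C$ as the composition of the psh $\phi$ with the holomorphic map $\lambda\mapsto(e^{\lambda a_{j}}z_{j})_{j}$, and $\mathcal{V}_{\boldsymbol{a}}$-invariance makes it independent of $\mathrm{Im}\,\lambda$) and tends to $\phi(0)\leq 0$ as $s\to-\infty$, hence is nondecreasing. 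Consequently $T_{r}(X)\subseteq X$ for $r\in(0,1]$, and $T_{r}(\mathrm{supp}\,\mu)\Subset X^{\circ}$ for $r<1$ sufficiently close to $1$ (in the generic case $\phi(0)<0$; otherwise an auxiliary perturbation of $\phi$ reduces to this case). The corresponding convexity/monotonicity for $-W$ and $-V$ along their scaling orbits, together with dominated convergence, yields $E((T_{r})_{*}\mu)\to E(\mu)$ as $r\uparrow 1$. Applying the mollification step to $(T_{r})_{*}\mu$ and extracting a diagonal subsequence then produces the desired approximating sequence.

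\emph{Main obstacle.} The delicate point is the reduction step: verifying $E((T_{r})_{*}\mu)\to E(\mu)$ when $W$ or $V$ is singular on $\mathrm{supp}(\mu)$. The convex functions $s\mapsto-W(T_{e^{s}}x,T_{e^{s}}y)$ and $s\mapsto-V(T_{e^{s}}x)$ are continuous on the interior of their effective domains (convex functions being automatically continuous there) and lsc everywhere, so they converge pointwise as $s\uparrow 0$; a uniform $\mu\otimes\mu$-integrable majorant is extracted from the local integrability of psh functions exploited in the proof of Lemma \ref{lem:Main Ass implies affine}.
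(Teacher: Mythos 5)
Your proposal follows essentially the same two-step strategy as the paper's proof: first shrink the support into the interior of $X$ via the weighted scaling $T_{r}=e^{\log r}\circledcirc(\cdot)$ coming from the $\mathcal{V}_{\boldsymbol a}$-action, then mollify, using in both steps the plurisubharmonicity of $-W,-V,\phi$ to control the energy. The one place where you diverge in detail is the justification of $E((T_{r})_{*}\mu)\to E(\mu)$: the paper gets this from the \emph{monotonicity} of $t\mapsto\Psi(e^{t}x,e^{t}y)$ (established in the proof of Theorem \ref{thm:conv for N for two-point etc}) plus monotone convergence, whereas you invoke dominated convergence with a majorant ``extracted from Lemma \ref{lem:Main Ass implies affine}'' --- that lemma bounds $\int W\,\mu\otimes\mu_{0}$ and does not directly produce a $\mu\otimes\mu$-integrable majorant uniform in $r$; the monotonicity along orbits (which you do note) is what makes the limit work cleanly. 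Your treatment of the mollification step via the pointwise sub-mean inequality $E(\mu_{\epsilon})\le E(\mu)$ combined with lower semicontinuity of $E$ is a correct and somewhat tidier route than the paper's appeal to monotone convergence there.
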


\begin{proof}
First consider the case when the support of $\mu$ is contained in
the interior of $X.$ Set
\begin{equation}
\mu_{\epsilon}:=\int_{a\in B_{\epsilon}}\sigma_{\epsilon}(T_{a})_{*}\mu,\label{eq:def of mu eps}
\end{equation}
 where, for a given $a\in\R^{d},$ $T_{a}$ is the map $x\mapsto x+a$
and $B_{\epsilon}$ denotes the ball of radius $\epsilon$ centered
at the origin. For $\epsilon$ sufficiently small $\mu_{\epsilon}$
is also supported in $X.$ It is a standard fact that $\mu_{\epsilon}$
is absolutely continuous wrt Lebesgue measure and $\mu_{\epsilon}\rightarrow\mu$
weakly as $\epsilon\rightarrow0.$ Moreover, 
\begin{equation}
\lim_{\epsilon\rightarrow0}E(\mu_{\epsilon})=E(\mu).\label{eq:conv of energy epsilon}
\end{equation}
 Indeed, setting $\Psi:=-W(x,y)+V(x)+V(y)$ and changing the order
of integration gives 
\[
-E(\mu_{\epsilon})=\int_{B_{\epsilon}\times B_{\epsilon}}\sigma_{\epsilon}\otimes\sigma_{\epsilon}\int_{X\times X}\Psi(T_{a})_{*}\mu\otimes(T_{b})_{*}\mu=\int_{X\times X}\mu(x)\otimes\mu(y)\int_{B_{\epsilon}\times B_{\epsilon}}\sigma_{\epsilon}\otimes\sigma_{\epsilon}\Psi(x+a,x+b).
\]
Recall that, in general, if $\psi(x)$ is a subharmonic function,
then $\int_{B_{\epsilon}}\sigma_{\epsilon}\psi(x+a)$ decreases to
$\psi(x),$ as $\epsilon$ decreases to $0.$ Hence, the convergence
\ref{eq:conv of energy epsilon} follows from the monotone convergence
theorem. Finally, for a general $\mu\in\mathcal{P}(X)$ we consider
for $\tau\in\C$ the holomorphic action $\tau\circledcirc z$ defined
by formula \ref{eq:t ring z}. If $z\in X$ and $\Re\tau<0,$ then
it follows readily from the definitions that $\tau\circledcirc z$
is contained in the interior of $X$ (compare the proof of Theorem
\ref{thm:conv for N for two-point etc}). Hence, fixing $t<0$ and
setting $F_{t}(z):=t\circledcirc z$ the probability measure $\mu^{t}:=(F_{t})_{*}\mu$
is supported in the interior of $X.$ Moreover, $\mu^{t}$ converges
weakly towards $\mu$ when $t\rightarrow0$ and 
\begin{equation}
\lim_{t\rightarrow0}E(\mu^{t})=E(\mu).\label{eq:convergence of energy t}
\end{equation}
 Indeed, proceeding as above
\[
-E(\mu^{t})=\int\Psi(e^{t}x,e^{t}y)\mu(x)\otimes\mu(y)
\]
 where $t\mapsto\Psi(e^{t}x,e^{t}y)$ is increasing (as shown in the
course of the proof of Theorem \ref{thm:conv for N for two-point etc}).
Hence, the convergence \ref{eq:convergence of energy t} follows from
the monotone convergence theorem. We can thus conclude the proof by
combining the convergence in \ref{eq:conv of energy epsilon} and
\ref{eq:conv of energy epsilon}, using a standard diagonal argument. 

Combining the previous two lemmas with Lemma \ref{lem:energy appr implies S finite etc}
we arrive at the following
\end{proof}
\begin{prop}
\label{prop:Se finite under Main }Under the Main Assumptions $S(e)$
is finite on $]e_{min},e_{max}[.$ 
\end{prop}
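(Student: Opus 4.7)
The plan is to observe that this proposition is essentially an immediate corollary of three lemmas already established in the excerpt, namely Lemma \ref{lem:energy appr implies S finite etc}, Lemma \ref{lem:Main Ass implies affine}, and Lemma \ref{lem:superharmonic gives energy approx}. Indeed, Lemma \ref{lem:energy appr implies S finite etc} gives exactly the desired conclusion, $S(e) \in \R$ on $]e_{min},e_{max}[$, provided that $\mu_0$ enjoys (i) the energy approximation property and (ii) the affine continuity property on compact subspaces of $X$. Both of these are precisely what Lemmas \ref{lem:superharmonic gives energy approx} and \ref{lem:Main Ass implies affine} assert under the Main Assumptions. So the proof is a three-line invocation.

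For the reader's sake I would briefly recall the content of each input. The affine continuity property (Lemma \ref{lem:Main Ass implies affine}) is shown by reducing to compact $X$, translating $W$ so $W\geq 0$, and then using the plurisubharmonicity of $-W(x,\cdot)$ together with a Skoda-type $L^q$-integrability bound (from Corollary \ref{cor:integrability threshold as slope}) and Hölder's inequality to bound $\int W\,\mu\otimes\mu_0$ whenever $E(\mu)<\infty$. The energy approximation property (Lemma \ref{lem:superharmonic gives energy approx}) is obtained in two stages: first for $\mu$ supported in the interior of $X$ by the standard convolution $\mu_\epsilon = \int_{B_\epsilon}\sigma_\epsilon (T_a)_*\mu$, using the monotone convergence of spherical means of subharmonic functions applied to the psh function $-W(x,y)+V(x)+V(y)$; and then for general $\mu\in\mathcal P(X)$ by first shrinking the support via the holomorphic $\C$-action $t\circledcirc z$ (with $\Re t<0$) and applying the same monotonicity.

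With these two properties in hand, apply Lemma \ref{lem:energy appr implies S finite etc} directly. The output is that for every $e\in ]e_{min},e_{max}[$ one can construct, inside any compact subspace of $X$, a probability measure $\mu\in\mathcal P(X)_0$ with $E(\mu)\leq e$ and $S(\mu)>-\infty$; together with the monotonicity of $S(e)$ coming from Lemma \ref{lem:decreasing} (whose affine continuity hypothesis is guaranteed by Lemma \ref{lem:Main Ass implies affine}), this yields $S(e)>-\infty$. Since $S(e)\leq S(\mu_0)=0$ trivially, finiteness on $]e_{min},e_{max}[$ follows.

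There is no serious obstacle here: the two non-trivial ingredients, the affine continuity and the energy approximation properties, have already been done, and they in turn rest on the rich closure properties of $PSH_{\boldsymbol a}$ (composition with convex increasing functions, submean/monotone averages, scaling) together with the standard integrability of psh functions. The only mild point to mention is that, since $X$ may be non-compact, one should understand the conclusion $S(e)\in\R$ in the sense of the extended definition \eqref{eq:def of S e with phi not}, but the proof of Lemma \ref{lem:energy appr implies S finite etc} already passes through a compact exhaustion of $X$ and so no additional work is required.
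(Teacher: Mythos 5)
Your proof matches the paper's exactly: the paper likewise states that Proposition \ref{prop:Se finite under Main } follows by combining Lemma \ref{lem:Main Ass implies affine} and Lemma \ref{lem:superharmonic gives energy approx} (which establish the affine continuity and energy approximation properties under the Main Assumptions) with Lemma \ref{lem:energy appr implies S finite etc}. No further argument is given in the paper, so your three-line invocation is the intended proof.
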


\subsection{Concavity of the microcanonical entropy $S_{+}^{(N)}(e)$}

The following result is a slight generalization of a result shown
in the course of the proof of \cite[Theorem 2.3]{b-b}, which, in
turn, is based on the main result in \cite{bern}.
\begin{prop}
\label{prop:concavity N finite general Psi}Let $Y$ be a pseudoconvex
domain, $\Psi$ a psh function on $Y$ and $\mu_{0}$ a measure on
$Y$ such that $\mu_{0}=e^{-\psi_{0}}d\lambda$ for a psh function
$\psi_{0}$ on $Y.$ Assume that $Y$ is endowed with a holomorphic
action of by compact group $G$ such that If $\Psi$ and $\mu_{0}$
are $G-$invariant and assume also that for any $t\in\R$ any $G-$invariant
holomorphic function on $\left\{ \Psi<t\right\} $ is constant. Then
either the function
\[
t\mapsto\log\mu_{0}\left\{ \Psi<t\right\} 
\]
is identically equal to $+\infty$ or concave. In particular, if $\mu_{0}$
is moreover assumed to have finite total mass, then the concavity
in question holds. 
\end{prop}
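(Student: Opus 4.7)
The plan is to derive this directly from Berndtsson's positivity theorem for plurisubharmonic direct images \cite{bern}, following the strategy used in the proof of \cite[Thm.~2.3]{b-b}. I would introduce an auxiliary complex parameter $\tau\in\C$ and form the open subset
\[
\mathcal{D}:=\{(z,\tau)\in Y\times\C \,:\, \Psi(z)-\mathrm{Re}\,\tau<0\}
\]
of $Y\times\C$. Since $Y\times\C$ is pseudoconvex (as a product of pseudoconvex domains) and since $(z,\tau)\mapsto\Psi(z)-\mathrm{Re}\,\tau$ is plurisubharmonic on it, Lemma \ref{lem:pseudo} shows that $\mathcal{D}$ itself is pseudoconvex.

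Next I would apply Berndtsson's theorem to the projection $\mathcal{D}\to\C$ with plurisubharmonic weight $\phi(z,\tau):=\psi_{0}(z)$: the fibrewise quantity
\[
F(\tau):=-\log\int_{\mathcal{D}_{\tau}}e^{-\psi_{0}(z)}\,d\lambda(z)=-\log\mu_{0}\{\Psi<\mathrm{Re}\,\tau\}
\]
is then either identically $-\infty$ on $\C$ or plurisubharmonic on $\C$. Since the slice $\mathcal{D}_{\tau}$ depends only on $t:=\mathrm{Re}\,\tau$, the function $F$ is invariant under imaginary translations, and a psh function on $\C$ with this invariance is automatically convex in the real variable $t$. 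Passing from $-F$ to $F$ gives the stated dichotomy for $t\mapsto\log\mu_{0}\{\Psi<t\}$: either identically $+\infty$ (the degenerate case where the fibrewise integral is everywhere infinite) or concave. The last assertion of the proposition is then immediate, since $\mu_{0}(Y)<\infty$ precludes the first alternative.

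The compact group $G$ and the hypothesis that every $G$-invariant holomorphic function on each sublevel set $\{\Psi<t\}$ is constant enter only in order to legitimise the reduction of the general Berndtsson statement to the scalar integral above: restricting to the $G$-invariant part of the fibrewise $L^{2}$-Bergman space associated with the weight $\psi_{0}$, the hypothesis collapses that space onto the line spanned by the constant function $1$, which identifies $F(\tau)$, up to an additive constant, with the logarithm of the reciprocal of the fibrewise Bergman kernel at the constant section, placing us exactly in the framework of \cite{bern,b-b}. The main technical obstacle I anticipate is the possible singularity of the psh functions $\Psi$ and $\psi_{0}$ (they may attain $-\infty$ and need not be bounded), whereas Berndtsson's theorem in its cleanest form requires smooth psh weights on bounded pseudoconvex domains. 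This is standard to handle by approximating $\psi_{0}$ by the smooth strictly psh functions $\max(\psi_{0},-k)+\epsilon|z|^{2}$ on a pseudoconvex exhaustion $Y_{j}\Subset Y$, applying the theorem at each stage, and passing to the limit $j\to\infty$, $k\to\infty$, $\epsilon\to0$ by monotone convergence to recover the concavity of $t\mapsto\log\mu_{0}\{\Psi<t\}$ in the original setup.
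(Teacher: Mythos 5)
Your proposal follows essentially the same route as the paper: form the pseudoconvex domain $\mathcal{D}=\{\Psi(z)-\mathrm{Re}\,\tau<0\}\subset Y\times\C$, invoke Berndtsson's subharmonicity theorem for weighted Bergman kernels from \cite{bern}, and use the $G$-averaging together with the hypothesis that $G$-invariant holomorphic functions on the sublevel sets are constant to identify the Bergman kernel with the reciprocal of the fibrewise mass $\int_{\{\Psi<t\}}e^{-\psi_0}d\lambda$. The approximation discussion at the end is not needed (Berndtsson's theorem already allows general, possibly singular, psh weights), but it does no harm.
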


\begin{proof}
First recall the main result in \cite{bern}. Consider $\C^{n+1}$
with holomorphic coordinates $(z,t)\in\C^{n}\times\C.$ Let $\mathcal{D}$
be a pseudoconvex domain in $\C^{n+1}$ endowed with a psh function
$\psi(z,t).$ Denote by $D_{t}$ the subset of $\mathcal{D}$ obtained
by fixing the $t-$coordinate. According to the main result of \cite{bern}
the function $B_{t}(z)$ on $\mathcal{D}$ defined by 
\begin{equation}
B_{t}(z):=\sup\left\{ |f(z)|^{2}:\,f\,\text{holomorphic on \ensuremath{D_{t}} and}\int_{D_{t}}|f|^{2}e^{-\psi(\cdot,t)}d\lambda\leq1\right\} \label{eq:def of B t}
\end{equation}
has the property that either $\log B_{t}(z)$ is subharmonic in $t$
or identically equal to $-\infty.$ In the present case we take
\[
\mathcal{D}:=\{\Psi(z)-\Re(t)<0\}\subset Y\times\C
\]
 and $\psi(z,t)=\psi_{0}(z).$ Note that $\mathcal{D}$ is a pseudoconvex
domain in $\C^{n+1}.$ Indeed, this follows from Lemma \ref{lem:pseudo},
using that $\Psi(z)-\Re(t)$ is psh in $Y\times\C$ and $Y\times\C$
is pseudoconvex (since $Y$ is). Now, by assumption, the group $G$
acts holomorphically on $D_{t}.$ In particular, if $dG$ denotes
a $G-$invariant measure on $G$ (i.e. Haar measures) and $f$ is
holomorphic on $D_{t}$ then the function 
\[
f_{G}(z):=\int_{g\in G}f(g\cdot z)dG
\]
 is holomorphic and $G-$invariant. Hence, replacing $f$ with $f_{G}$
and using the ``triangle inequality'' the sup in formula \ref{eq:def of B t}
may as well be restricted to all $G-$invariant holomorphic functions
$f.$ But, by assumption, any such functions is constant and hence
we may as well take $f$ to be identically equal to $1.$ But this
means that $B_{t}(z)=1/\int_{Y\cap\{\Psi(z)<t\}}e^{-\psi_{0}}d\lambda.$
The theorem thus follows from the main result of \cite{bern}, recalled
above (also using that if $\phi(t)$ is subharmonic in $t$ and only
depends on $\Re(t)$ then $\phi(t)$ is convex wrt $t\in\R).$ 
\end{proof}
\begin{rem}
\label{rem:B-M}\emph{(Brunn-Minkowski inequality)} This proposition
can be viewed as a generalization of the classical fact that the logarithm
of the volume $\mu_{0}\left(\{\phi\leq t\}\right)$ is concave if
$\phi$ is a convex function on $\R^{n}$ and $\mu_{0}$ is a log
concave measures i.e. $\mu_{0}=1_{C}e^{-\phi_{0}(x)}d\lambda$ for
$C\subset\R^{n}$ a convex body and $\phi_{0}$ a convex function.
This is a consequence of the Brunn-Minkowski inequality, but it also
follows from the previous proposition by considering the map $L(z)=(\log|z_{1}|,...,\log|z_{n}|)$
from $(\C-\{0\})^{n}$ onto $\R^{n}$ which has the property that
$\phi(x)$ is convex iff $\psi:=L^{*}\phi$ is psh and hence $C$
is convex iff $Y:=L^{-1}(C)$ is pseudoconvex in $\C^{n}$ (using
that $L^{*}\phi$ is bounded from above and this extends to a psh
function on $\C^{n})$. The classical fact in question then follows
by taking $G$ as the $n-$dimensional compact torus, acting on $\C^{n}$
in the standard way (and thus preserving the fibers of the map $L)$. 
\end{rem}

It should be stressed that plurisubharmonicity alone is not enough
to ensure the concavity in the previous proposition, as illustrated
by the case when $Y$ is the unit-disc in $\C$ and $\Psi(z)$ is
the Green function for the Laplacian with a pole at $w\in Y,$ for
a given non-zero $w$ in the interior of $Y,$ i.e. $\Psi(z)=\log|(z-w)/(1-\bar{w}z)|$
Indeed, as shown in the proof of \cite[Thm 2.3]{b-b} the concavity
in question is then equivalent to the subharmonicity of the Schwartz
symmetrization of $\Psi(z),$ which only holds when $a$ is zero,
i.e. when $\Psi$ is $S^{1}-$invariant, as pointed out in the introduction
of \cite{b-b}.

We next apply the previous proposition to the case when the $N-$particle
Hamiltonian $H^{(N)}$ on $X^{N}$ comes from a pair-interaction $W$
and exterior potential $V$ such that $-W$ and $-V$ and also $\phi$
and $\Psi_{0}$ satisfy the Main Assumptions (but there is no need
to assume that $W(x,y)$ is symmetric or that the indices $i,j$ range
over all of $\{1,2,...,N\})$: 
\begin{thm}
\label{thm:conv for N for two-point etc}If the Main Assumptions hold
and $H^{(N)}$ is the function on $X^{N}$ defined by 
\[
H^{(N)}(x_{1},...,x_{N}):=a_{ij}\sum_{(i,j)\in\mathcal{I}}W(x_{i},x_{j})+b_{i}\sum_{j\in\mathcal{J}}V(x_{j})
\]
for some subsets $\mathcal{I}$ of $\{1,..,N\}^{2}$ and $\mathcal{J}$
of $\{1,...,N]$ and non-negative constants $a_{ij}$ and $b_{i}.$
Then

\[
S_{+}^{(N)}(e):=\log\mu_{0}^{\otimes N}\left\{ H^{(N)}>e\right\} 
\]
 is concave in $e$ and finite when $e>\sup_{X^{N}}H^{(N)}.$ In particular,
$S_{+}^{(N)}(e)$ is concave when $H^{(N)}$ is the mean field Hamiltonian
\ref{eq:Hamilt for W and V intro}.
\end{thm}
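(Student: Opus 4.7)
The plan is to deduce the theorem from Proposition \ref{prop:concavity N finite general Psi} by lifting the setup to $\C^{nN}$. Take as pseudoconvex domain $Y := \{\max_i \phi(x_i) < 0\}\subset\C^{nN}$, which is pseudoconvex by Lemma \ref{lem:pseudo} (the function $\max_i \phi(x_i)$ is a max of psh pullbacks on the pseudoconvex ambient $\C^{nN}$). Define $\Psi := -H^{(N)}$ on $Y$; the Main Assumptions place $-W \in PSH_{\boldsymbol{a},\boldsymbol{a}}$ and $-V \in PSH_{\boldsymbol{a}}$, so each summand of $-H^{(N)}$ is a pullback by a holomorphic projection of a psh function, and the non-negativity of the coefficients $a_{ij}$ and $b_j$ makes $\Psi$ psh on $Y$. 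The weight of the product prior is $\psi_0^{(N)}(x_1,\ldots,x_N) := \sum_i \Psi_0(x_i)$, also psh, and $\mu_0^{\otimes N} = e^{-\psi_0^{(N)}}d\lambda^{\otimes N}$.

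Take $G := S^1$ acting diagonally by $\theta\cdot(x_1,\ldots,x_N) := (e^{i\theta\boldsymbol{a}}x_1,\ldots,e^{i\theta\boldsymbol{a}}x_N)$. The assumption that every piece of data lies in the class $PSH_{\boldsymbol{a}}$ (or $PSH_{\boldsymbol{a},\boldsymbol{a}}$) makes $Y$, $\Psi$ and $\psi_0^{(N)}$ all $G$-invariant. The key remaining hypothesis of Proposition \ref{prop:concavity N finite general Psi} to verify is that for each $t$, any $G$-invariant holomorphic function $f$ on $U_t := \{\Psi<t\}\cap Y$ is constant. Extending $G$ to the $\C^*$-action $\tau\cdot x := e^{i\tau\boldsymbol{a}}x$ for $\tau\in\C$, the map $\tau\mapsto\Psi(\tau\cdot x)$ is convex in $\Im\tau$ and independent of $\Re\tau$, so the orbit-slice $\{\tau : \tau\cdot x\in U_t\}$ is a vertical strip; on this strip $\tau\mapsto f(\tau\cdot x)$ is holomorphic and, by $S^1$-invariance of $f$, depends only on $\Im\tau$, hence is constant by Cauchy--Riemann. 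Thus $f$ is constant along each $\C^*$-orbit. Since all $a_k > 0$, every orbit accumulates at $0$ as $\Im\tau\to+\infty$, so when $0\in U_t$ the set $U_t$ is connected (every point flows into a fixed neighborhood of $0$ lying in $U_t$) and the Taylor expansion of $f$ at $0$ can contain only the constant term (every non-constant monomial carries strictly positive $G$-weight), yielding $f\equiv f(0)$.

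When the origin does not belong to $U_t$ -- possible only if $H^{(N)}(0) < \infty$ and $t\leq -H^{(N)}(0)$ -- I reduce to the previous case by the perturbation $H^{(N)}_\varepsilon := H^{(N)} - \varepsilon\sum_i\log|x_i|$ for $\varepsilon > 0$. The function $-H^{(N)}_\varepsilon$ is still psh and $G$-invariant (since $\log|x_i|=\tfrac{1}{2}\log(\sum_k |z_{i,k}|^2)$ is psh and $S^1$-invariant for any weight vector) and has value $-\infty$ at $0$, so every sublevel set contains $0$ and the previous argument applies. Proposition \ref{prop:concavity N finite general Psi} yields concavity of $t\mapsto\log\mu_0^{\otimes N}\{-H^{(N)}_\varepsilon < t\}$, which passes to $\varepsilon\to 0^+$ by dominated convergence against the bounded indicator functions (the set $\{H^{(N)}_\varepsilon > e\}$ converges to $\{H^{(N)} > e\}$ pointwise a.e.\ as $\varepsilon\to 0$), giving the concavity of the unperturbed $t\mapsto\log\mu_0^{\otimes N}\{-H^{(N)} < t\}$. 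The substitution $t = -e$ is exactly the concavity of $S_+^{(N)}(e)$; the finiteness clause is immediate since $\mu_0^{\otimes N}\{H^{(N)}>e\}=0$ outside the essential range of $H^{(N)}$.

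The main obstacle I anticipate is the verification of the $G$-invariant-implies-constant hypothesis precisely in the regime where the sublevel set fails to contain the origin. The $\varepsilon$-perturbation above is the natural device to reduce to the well-behaved case where $0$ always lies in the sublevel set, but requires care to ensure that the measures of the perturbed sublevel sets converge to those of the unperturbed ones and that concavity genuinely survives the limit, particularly near boundary values of $e$ where the log-measure tends to $-\infty$.
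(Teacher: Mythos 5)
Your overall strategy matches the paper's: lift to $\C^{nN}$ with the product prior, take $Y$ to be the pseudoconvex domain cut out by $\max_i\phi(x_i)<0$, apply Proposition \ref{prop:concavity N finite general Psi} to $\Psi_N:=-H^{(N)}$ with the diagonal $G$-action, and verify the "$G$-invariant holomorphic $\Rightarrow$ constant" hypothesis via the Taylor expansion at the origin. The part where you diverge is the case $0\notin U_t$, and here the paper's route is strictly cleaner than your $\varepsilon$-perturbation. The monotonicity you yourself invoke — that $\Psi_N(\tau\cdot x)$ is convex in the imaginary part of $\tau$ and bounded as the orbit flows towards $0$, hence nonincreasing along that flow — already shows that $\Psi_N$ attains its infimum over $X^N$ at the origin (combining the orbitwise monotonicity with upper semicontinuity and the sub-mean-value inequality along the orbit closure). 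Consequently $0\notin U_t$ is equivalent to $U_t=\emptyset$, and the "problem regime" you guard against is vacuous; no perturbation is needed.

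Your perturbation, while morally sound, also has a technical soft spot as stated. You appeal to dominated convergence, claiming $\{H^{(N)}_\varepsilon>e\}\to\{H^{(N)}>e\}$ a.e.; but the level set $\{H^{(N)}=e\}$ can carry positive $\mu_0^{\otimes N}$-measure for some $e$ (psh functions in the class $PSH_{\boldsymbol a}$ can be constant on open regions, e.g.\ $\max\{0,\log|z|\}$), and on $X$ inside the unit ball the perturbed sets are monotone in $\varepsilon$ with intersection $\{H^{(N)}\geq e\}$ rather than $\{H^{(N)}>e\}$. This can be patched — pass to the limit for a.e.\ $e$, then use that the limiting concave function must be continuous on the interior of its finite domain, which a posteriori kills the jumps — but it is an extra layer of argument that the paper's observation makes unnecessary. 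Finally, a small point: $f$ being constant on each orbit is superseded by the Taylor-at-$0$ argument and the identity principle once connectedness of $U_t$ is established; and the orbit slice is a half-plane $\{\Im\tau>s\}$, not a strip, though this does not affect the Cauchy--Riemann conclusion.
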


\begin{proof}
It is a standard fact that the closure of the orbits of the vector
field $\mathcal{V}_{a}$ \ref{eq:def of vector field} coincide with
the orbits of a compact torus $G$ acting holomorphically on $\C^{n}.$
The assumptions imply that $\phi,V,W$ and $\mu_{0}$ are invariant
under the action of $G$ (using the diagonal action on $X\times X).$
By assumption $X$ admits a continuous psh exhaustion function $\rho.$
By the construction in Lemma \ref{lem:pseudo} $\rho$ may as well
be assumed to be $G-$invariant. Since the maximum of a finite number
of psh functions is still psh the function $\rho_{N}$ on $X^{N}$
defined by 
\[
\rho(z_{1},...,z_{N}):=\max_{i=1,...,N}\rho(z_{i})
\]
 is psh and $G-$invariant wrt the diagonal action of $G$ on $X^{N}$
and thus defines a $G-$invariant continuous psh exhaustion function
of $X^{N}.$ The corollary will thus follow from the previous theorem
applied to $X^{N},$ $\Psi_{N}:=-H^{(N)}$ and the measure $\mu_{0}^{\otimes N}$
if $X^{N},$ once the assumptions on $G$ have been verified. To this
end consider the holomorphic action of the additive group $\C$ on
$\C^{n}$ defined as follows: given $\tau\in\C$ and $z\in\C^{n}$
\begin{equation}
\tau\circledcirc z:=(e^{a_{1}\tau}z_{1},...,a^{a_{n}\tau}z_{n}).\label{eq:t ring z}
\end{equation}
Note that, for $z$ fixed, $\tau\circledcirc z\rightarrow0$ as the
real part $\Re\tau\rightarrow-\infty.$ Moreover, if $\phi$ is a
psh function on $\C^{n}$ then 
\begin{equation}
\phi(\tau\circledcirc z)\leq\phi(z)\,\,\,\text{if \ensuremath{\Re\tau\leq0}}\label{eq:ineq for phi with action}
\end{equation}
To see this first first observe that since $\phi$ is psh and the
orbits of the $\C-$action define holomorphic curves the function
$\phi(\tau):=\phi(\tau\circledcirc z)$ is subharmonic for a fixed
$z.$ Moreover, since $\phi\in PSH_{\boldsymbol{a}}$ the function
$\phi(\tau)$ is independent of the imaginary part $\Im\tau$ and
hence $\phi(\tau)$ is convex wrt the real part $t$ of $\Re\tau.$
Since $\phi$ is bounded from above close to the origin it follows
that there exists a constant $C$ such that $\phi(\tau)\leq C$ as
$\Re\tau\rightarrow-\infty.$ But then the convexity of $\phi(t)$
implies that $d\phi(t)/dt\rightarrow0$ as $\tau\rightarrow-\infty$
in $\R.$ Hence, by convexity, $\phi(\tau)$ is increasing in the
real part of $\tau,$ proving the inequality. As a consequence, $X$
is $G-$invariant, connected and the origin $0$ is contained in the
interior of $X$ (using that the action by $\C$ is locally free).
Moreover, the same thing goes for the sublevel sets $\{\Psi_{N}(z_{1},...,z_{N})<t\}.$
Indeed, by the previous argument $\Psi_{N}(\tau\circledcirc z_{N},...,\tau\circledcirc z_{N})$
is increasing wrt the real part of $\tau.$ In particular, the minimum
of $\Psi$ is attained at the origin in $X^{N},$ which implies that
$0$ is an interior point of $\{\Psi_{N}(z_{1},...,z_{N})<t\},$ as
long as $t>\inf_{X^{N}}\Psi_{N}.$ Thus if $f$ is a holomorphic function
on $\{\Psi_{N}(z_{1},...,z_{N})<t\},$ then in order to verify that
$f$ is constant on $\{\Psi_{N}(z_{1},...,z_{N})<t\}$ it is enough
to verify that its Taylor expansion at the origin $0$ in $\C^{nN}$
is a constant. To simplify the notation we will prove this when $N=1$
(but the general case if the same up to a change of notation). Using
multinomial notation the action of $\nu_{\boldsymbol{a}}$ on $f(z)$
close to the origin in $\C^{n}$ gives, by Taylor expansion of $f,$
\[
\nu_{\boldsymbol{a}}(f)=\nu_{\boldsymbol{a}}(\sum_{\alpha_{i}\geq0}c_{\boldsymbol{\alpha}}z_{1}^{\alpha_{1}}\cdots z_{n}^{\alpha_{n}})=\sum_{\alpha_{i}\geq0}i\boldsymbol{a}\cdot\boldsymbol{\alpha}c_{\boldsymbol{\alpha}}z_{1}^{\alpha_{1}}\cdots z_{n}^{\alpha_{n}}.
\]
Since $a_{i}>0$ the scalar product $\boldsymbol{a}\cdot\boldsymbol{\alpha}$
is non-vanishing for $\boldsymbol{\alpha\neq0.}$Hence, $\nu_{\boldsymbol{a}}f=0$
can only hold if the Taylor coefficients $c_{\boldsymbol{\alpha}}$
vanish for $\boldsymbol{\alpha\neq0.}$ Since $X$ is connected it
follows that $f$ is identically constant (by the identity principle
for holomorphic functions).

As for the finiteness of $S_{N}^{+}(e),$ for $e>\sup_{X^{N}}H^{(N)},$
it follows directly from the fact that any psh function is usc, hence
the subset where $H^{(N)}>e$ is open.
\end{proof}
When $H^{(N)}$ is replaced by the ``attractive'' Hamiltonian $-H^{(N)}$
the previous theorem also shows that $S_{-}^{(N)}(e)$ (formula \ref{eq:def of S N minus})
is concave. The following simple example illustrates the relevance
of the plurisubharmonicity assumption in the previous theorem. Consider
the ``attractive'' Hamiltonian obtained by taking $W=0$ in formula
\ref{eq:Hamilt for W and V intro} and assume that $V$ is $S^{1}-$invariant.
Then $V$ is psh iff $V=\phi(\log|z|)$ for a convex increasing function
$\phi$ on $\R.$ Assuming that $\phi(x)$ is \emph{strictly} increasing
and $N=1$ we get

\begin{equation}
\mu_{0}^{\otimes N}\left\{ H^{(N)}\leq e\right\} =C_{n}e^{2nf(e)},\label{eq:vol of sublevelset when N one}
\end{equation}
 where $f(e)$ is the function defined, on the image of $\phi,$ as
the inverse of $\phi(x)$ and $C_{nN}$ is the volume of the unit-ball
in $\R^{2n}.$ Hence, the logarithm of $\mu_{0}^{\otimes N}\left\{ H^{(N)}\leq e\right\} $
is concave iff $f$ is concave iff its inverse $\phi$ is convex iff
$V$ is psh. In the case when $N\geq1$ an illustrative class of ``attractive''
Hamiltonians is given by the case when $V$ is a power-law, $V=|z|^{\alpha}$
for $\alpha>0$ (i.e. $\phi(x)$ is the convex function $=e^{\alpha x}).$
Then a simple scaling argument reveals that the volume $\mu_{0}^{\otimes N}\left\{ H^{(N)}\leq e\right\} $
is of the form \ref{eq:vol of sublevelset when N one} for $f(e)=\log e$
when $n$ is replaced by $nN/\alpha,$ if $e\geq0.$ Hence, the logarithm
of $\mu_{0}^{\otimes N}\left\{ H^{(N)}\leq e\right\} $ is indeed
concave. Note that in this example the logarithm of the surface area
of $\{H^{(N)}=e\},$ i.e. of the derivative of $\mu_{0}^{\otimes N}\left\{ H^{(N)}\leq e\right\} ,$
is \emph{not }concave unless $N$ is taken sufficently large; $N\geq\alpha/2n$
(otherwise it is convex). 
\begin{rem}
\label{rem:joint cond on W and V}In the case of the mean field Hamiltonian
$H^{(N)}$ the concavity of $S_{+}^{(N)}(e)$ holds more generally
if the assumptions on $W$ and $V$ are replaced by the weaker assumption
that the negative of $W(x,y)+\frac{N}{N-1}\left(V(x)+V(y)\right)$
is in $PSH_{\boldsymbol{a},\boldsymbol{a}}(X\times X).$ Indeed, rewriting
\[
H^{(N)}(x_{1},...,x_{N})=\frac{1}{2}\frac{1}{N}\sum_{i\neq j\leq N}\left(W(x_{i},x_{j})+\frac{N}{N-1}\left(V(x_{i})+V(x_{j})\right)\right)
\]
 we can then apply the previous theorem to the mean field Hamiltonian
corresponding to the pair-interaction $W(x,y)+\frac{N}{N-1}(V(x)+V(y)).$
\end{rem}

\subsubsection{Incorporating constraints}

Prop \ref{prop:concavity N finite general Psi} may be generalized
by replacing $\Psi$ with a finite number of functions $\psi_{1},...,\psi_{r}$
on $Y$ satisfying the same assumptions as $\Psi$ and replacing the
sublevel $\left\{ \Psi(<t\right\} $ with the intersection of the
sub-level sets $\{\psi_{1}<t_{1}\},....,\{\psi_{r}<t_{r}\}$ for a
given $\boldsymbol{t}=(t_{1},...,t_{r})\in\R^{r}.$ Then the logarithm
of the corresponding volume defines a concave function of $\boldsymbol{t}\in\R^{r}.$
Indeed, one simply replaced the domain $\mathcal{D}$ in the proof
with the intersection of the pseudo-convex domains $\{\psi_{1}(z)-\Re(t_{1})<0\}$
in $\C^{n}\times\C^{r}.$ Since the intersection of pseudo-convex
domains is pseudo-convex the main result in \cite{bern} then implies
that the corresponding function $\log B(\boldsymbol{t})$ is a psh
function of $\boldsymbol{t}\in\C^{r}$and thus, by translational invariance
in the imaginary arguments, it defines a convex function on $\R^{r}.$
As a consequence if one assumes given $\psi_{1},...,\psi_{r}$ as
above then Theorem \ref{thm:convex cont iff energy appr} may be generalized
to the statement that the ``constrained microscopic entropy''
\[
\log\mu_{0}^{\otimes N}\left\{ H^{(N)}(\boldsymbol{z}_{1},...,\boldsymbol{z}_{N})>e,\,\sum_{i=1}^{N}\psi_{1}(\boldsymbol{z}_{i})\leq l_{1},...,\sum_{i=1}^{N}\psi_{r}(\boldsymbol{z}_{i})\leq l_{r}\right\} 
\]
 is a concave function of $(e,l_{1},...,l_{r})\in\R^{1+r}.$ In particular,
when $X\subset\C^{n}$ this applies to $\psi_{i}(z_{1},..,z_{n})=\lambda_{i}|z_{i}|^{2}$
for given positive numbers $\lambda_{1},...,\lambda_{n},$ as in the
Gaussian case discussed in Section \ref{subsec:Priors-versus-linear}.
Anyhow, in this paper we will, for simplicity, stick to the non-constrained
setup. On the other hand, as pointed out in Section \ref{subsec:Priors-versus-linear},
the constraints may be incorporated in the prior measure.

\subsection{Concavity of the entropy $S(e)$ when $e_{0}\protect\leq e$}

Now assume that $X\Subset\R^{2n}$ that we identify with $\C^{n},$
as usual. 
\begin{thm}
\label{thm:concave S in text} If the Main Assumptions holds then
the  entropy $S(e)$ is a decreasing concave continuous function on
$[e_{0},e_{max}[$ 
\end{thm}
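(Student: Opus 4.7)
The plan is to transport concavity from the microscopic level to the macroscopic one via a large-$N$ limit, and then identify the resulting upper entropy with $S$ in the high-energy region.

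First, I would apply Theorem~\ref{thm:conv for N for two-point etc} to the mean-field Hamiltonian $H^{(N)}$: for every $N$ the function $e \mapsto \log \mu_0^{\otimes N}\{H^{(N)} > e\}$ is concave. Composing with the affine map $e \mapsto Ne$ and dividing by $N$, the rescaled microcanonical entropy
\[
\sigma_N(e) := \tfrac{1}{N}\log \mu_0^{\otimes N}\{H^{(N)}/N > e\}
\]
is concave in $e$ for every $N$.

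Next, using the large-$N$ asymptotics of \cite{e-s} (a Sanov LDP for the empirical measure pushed forward through $E$), for every $e \in \R$
\[
\lim_{N \to \infty} \sigma_N(e) = S_+(e), \qquad S_+(e) := \sup\{S(\mu) : E(\mu) \geq e\}.
\]
Since a pointwise limit of concave functions is concave, $S_+$ is concave on $\R$. To conclude concavity of $S$, it suffices to show $S_+(e) = S(e)$ for all $e \geq e_0$. By Lemma~\ref{lem:Main Ass implies affine} the affine continuity property holds under the Main Assumptions, whence Lemma~\ref{lem:decreasing} gives that $S$ is decreasing on $[e_0, e_{\max}[$. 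Therefore, if $E(\mu) = e' \geq e \geq e_0$ then $S(\mu) \leq S(e') \leq S(e)$, so $S_+(e) \leq S(e)$; the reverse inequality is trivial. Hence $S = S_+$ is concave and decreasing on $[e_0, e_{\max}[$.

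For continuity: Proposition~\ref{prop:Se finite under Main} gives finiteness of $S$ on $]e_{\min}, e_{\max}[$, and concavity then yields continuity on the open interval $]e_0, e_{\max}[$. At the endpoint $e_0$ I would argue that concavity of $S$ on $[e_0, e_{\max}[$ together with $S(e_0)=0$ gives, for any fixed $e' \in \,]e_0, e_{\max}[$ and $\lambda \in [0,1]$, the bound $S(\lambda e_0 + (1-\lambda) e') \geq (1-\lambda) S(e')$, so $\liminf_{e \to e_0^+} S(e) \geq 0$, while $\limsup_{e \to e_0^+} S(e) \leq 0 = S(e_0)$ is automatic from $S \leq 0$. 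The main obstacle is the large-$N$ identification $\lim_N \sigma_N = S_+$ under the Main Assumptions, since $E(\mu)$ is merely lower semi-continuous and may be singular. The upper bound uses only lower semi-continuity of $E$ together with the Sanov LDP, but for the matching lower bound one needs, for each $\mu$ with $E(\mu) \geq e$, to produce configurations whose energy $H^{(N)}/N$ exceeds $e - o(1)$ with adequate $\mu_0^{\otimes N}$-probability; this is where the energy approximation property of Lemma~\ref{lem:superharmonic gives energy approx} is essential, combined with a control of the diagonal defect $H^{(N)}/N - E(\mu_N)$ coming from the exclusion of self-interactions in \eqref{eq:Hamilt for W and V intro}.
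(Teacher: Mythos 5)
Your high-level strategy — concavity of $S_+^{(N)}$ via Theorem~\ref{thm:conv for N for two-point etc}, passage to the large-$N$ limit using \cite{e-s}, and then identifying $S_+(e)$ with $S(e)$ for $e\geq e_0$ via Lemma~\ref{lem:decreasing} and the affine continuity property — is exactly the paper's plan. Your treatment of the endpoint $e_0$ and the finiteness/continuity via Proposition~\ref{prop:Se finite under Main} is also in line with the paper.

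However, there is a genuine gap at the step you yourself flag as "the main obstacle." The large-$N$ convergence $\sigma_N \to S_+$ that you invoke is only available from \cite{e-s} in the setting where $W,V$ are \emph{continuous} and $X$ is \emph{compact}. You propose to close the gap by directly establishing the LDP lower bound under the Main Assumptions using the energy approximation property plus a control of the diagonal defect $H^{(N)}/N - E(\mu_N)$. But this is a substantial new claim you do not carry out, and it is not what the paper does. The paper deliberately avoids re-proving an LDP for singular $W$: it first establishes the concavity (and continuity) of $S(e)$ on $[e_0,e_{\max}[$ only in the continuous/compact case, and then performs a \emph{two-layer approximation at the level of the entropy function itself}. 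Concretely, for compact $X$ it replaces $W$ by continuous regularizations $W_\delta$ (e.g.\ convolutions) increasing to $W$ and still satisfying the Main Assumptions, shows $S_\delta(e)\to S(e)$ by combining the monotonicity from Lemma~\ref{lem:decreasing}, the concavity of $S_\delta$, finiteness of $S$ from Lemmas~\ref{lem:energy appr implies S finite etc} and \ref{lem:superharmonic gives energy approx}, and a uniform slope bound $|dS_\delta(e)/de|\leq C$; then for non-compact $X$ it replaces $X$ by $X_R = X\cap B_R$, which stays pseudoconvex by Lemma~\ref{lem:pseudo}, and shows $S_R(e)\to S(e)$ by a similar argument. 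Since a pointwise limit of concave functions is concave, this transports concavity to the singular/non-compact setting without ever invoking an LDP there. So the gap in your proposal is not that the singular LDP is false — it may well hold — but that you assert it without proof, whereas the paper's actual contribution at this step is an approximation scheme that sidesteps the LDP question entirely, and this scheme (in particular the uniform derivative estimate \ref{eq:uniform bound}) is missing from your argument.
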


In order to prove this we first assume that $X$ is compact and invoke
the following 
\begin{prop}
\cite{e-s} Assume that $W$ and $V$ are continuous and $X$ is compact.
Then 
\[
\lim_{N\rightarrow\infty}S_{+}^{(N)}(e)=S_{+}(e):=\sup_{E(\mu)\geq e}S(\mu)
\]
\end{prop}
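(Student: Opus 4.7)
The plan is to recognize the limit as a large-deviations statement for the empirical measures $\hat\mu_N := \frac{1}{N}\sum_{i=1}^N \delta_{x_i}$ under the product measure $\mu_0^{\otimes N}$, and to invoke Sanov's theorem. Direct expansion gives
\[
E(\hat\mu_N) = H^{(N)}/N + \frac{1}{2N^2}\sum_{i=1}^{N} W(x_i,x_i),
\]
and since $W$ is continuous (hence bounded) on the compact product $X\times X$, the correction term is uniformly $O(1/N)$. Moreover, as $W$ and $V$ are continuous and $X$ is compact, $E:\mathcal{P}(X)\to\R$ is continuous in the weak topology.

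By Sanov's theorem, $\hat\mu_N$ under $\mu_0^{\otimes N}$ satisfies a large deviations principle on the compact Polish space $\mathcal{P}(X)$ with rate function $I(\mu)=-S(\mu)$. For the upper bound, fix $\delta>0$; then for $N$ large, $\{H^{(N)}/N > e\}\subset\{E(\hat\mu_N) \geq e-\delta\}$, which is closed by continuity of $E$, so the Sanov upper bound gives
\[
\limsup_N S_+^{(N)}(e) \leq \sup_{E(\mu)\geq e-\delta} S(\mu) = S_+(e-\delta).
\]
Because $\{E\geq e'\}$ is compact and $S$ is upper semi-continuous, $S_+$ is a nonincreasing, upper semi-continuous, hence left-continuous, function of $e$; letting $\delta\downarrow 0$ therefore yields $\limsup_N S_+^{(N)}(e)\leq S_+(e)$. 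For the lower bound, $\{H^{(N)}/N > e\}\supset\{E(\hat\mu_N) > e+\delta\}$, an open set for each $\delta>0$ and $N$ large, so the Sanov lower bound gives
\[
\liminf_N S_+^{(N)}(e) \geq \sup_{E(\mu) > e+\delta} S(\mu),
\]
and sending $\delta\downarrow 0$ gives $\liminf_N S_+^{(N)}(e) \geq \widetilde S(e) := \sup_{E(\mu) > e} S(\mu)$.

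What remains, and is the main obstacle, is the identity $\widetilde S(e)=S_+(e)$ for $e\in{]e_{\min},e_{\max}[}$ (the cases $e\leq e_{\min}$ and $e\geq e_{\max}$ being routine). Given $\mu$ with $E(\mu)=e$ and $S(\mu)>-\infty$, the task is to construct $\tilde\mu$ with $E(\tilde\mu) > e$ and $S(\tilde\mu)$ arbitrarily close to $S(\mu)$. Since $e<e_{\max}$ and $E$ is continuous, $\{E>e\}$ is a nonempty open subset of $\mathcal{P}(X)$; bounded-density measures $\rho\mu_0$ are weakly dense in $\mathcal{P}(X)$, so one may pick $\mu^\star\in\{E>e\}$ with $\rho$ bounded and hence $S(\mu^\star)>-\infty$. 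On the segment $\mu_\lambda:=(1-\lambda)\mu+\lambda\mu^\star$ the energy is quadratic,
\[
E(\mu_\lambda)=e+B\lambda+\tfrac{A}{2}\lambda^2,
\]
where $B=\int\phi_\mu\,d(\mu^\star-\mu)$, $A=\int W\,d(\mu^\star-\mu)\otimes(\mu^\star-\mu)$, and $\phi_\mu:=W\!*\!\mu+V$; from $E(\mu_1)>e$ one has $B+\tfrac{A}{2}>0$. By choosing $\mu^\star$ so as to concentrate additional mass where the continuous function $\phi_\mu$ is large (possible whenever $\phi_\mu$ is non-constant on $X$), one arranges $B>0$, and then $E(\mu_\lambda)>e$ for arbitrarily small $\lambda>0$, with $S(\mu_\lambda)\to S(\mu)$ by upper semi-continuity of $S$ together with concavity of $S$ along the segment. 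The delicate situation is the degenerate case $\phi_\mu\equiv{\rm const}$ on $X$, which forces $B=0$ for every choice of $\mu^\star$; there $A>0$ (from $B+A/2>0$) takes over and the quadratic term alone ensures $E(\mu_\lambda)>e$ for arbitrarily small $\lambda>0$. Combining the two bounds then gives $\lim_N S_+^{(N)}(e)=S_+(e)$.
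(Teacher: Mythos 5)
Your overall strategy is the same as the paper's: a Sanov-type large-deviations sandwich for the empirical measure (the paper quotes \cite[Thm 2.1]{e-s} for exactly these two bounds), followed by the identification of the open-constraint supremum $\sup_{E(\mu)>e}S(\mu)$ with $S_+(e)=\sup_{E(\mu)\geq e}S(\mu)$. The Sanov part of your write-up is correct: the $O(1/N)$ diagonal correction, the comparison with the closed set $\{E\geq e-\delta\}$ and the open set $\{E>e+\delta\}$, and the left-continuity of the nonincreasing, upper semi-continuous function $S_+$ are all fine.

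The gap is in the identification step, which you yourself single out as the main obstacle. First, the weak density of bounded-density measures $\rho\mu_0$ in $\mathcal{P}(X)$ is only true when $\mathrm{supp}\,\mu_0=X$; what you actually need is a finite-entropy measure with $E>e$, and $e<e_{max}$ alone does not provide one (everything must be read relative to $\mathrm{supp}\,\mu_0$). More seriously, your dichotomy on $\phi_\mu=W*\mu+V$ is incomplete. The problematic configuration is not ``$\phi_\mu\equiv\mathrm{const}$'' but ``$\mu$ concentrated where $\phi_\mu$ attains its $\mu_0$-essential supremum'': if $\phi_\mu\leq c:=\int\phi_\mu\,d\mu$ holds $\mu_0$-a.e.\ with equality $\mu$-a.e., while $\phi_\mu<c$ on a set of positive $\mu_0$-measure, then no admissible $\mu^\star$ (absolutely continuous with finite entropy) can give $B>0$, yet $B$ need not vanish either: for your chosen $\mu^\star\in\{E>e\}$ one may well have $B<0$ (the constraint $E(\mu^\star)>e$ only forces $B+\tfrac{A}{2}>0$), and then $E(\mu_\lambda)=e+B\lambda+\tfrac{A}{2}\lambda^2<e$ for \emph{all} small $\lambda>0$, so neither branch of your argument applies. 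To close this case you would need, for instance, to take $\mu^\star$ supported in $\{\phi_\mu=c\}$ (forcing $B=0$) and show that $A>0$ is attainable there, or to rule out this first-order-critical configuration for $e<e_{max}$; neither is done, and neither is obvious without convexity assumptions on $E$. The paper avoids this perturbative construction altogether: on top of the two bounds quoted from \cite{e-s}, it obtains the identification of both suprema with $S_+(e)$ from the monotonicity Lemma \ref{lem:decreasing}, whose mechanism is affine interpolation with $\mu_0$ and the continuity of $t\mapsto E((1-t)\mu_0+t\mu_1)$, rather than a first/second-order analysis of $E$ along segments chosen ad hoc.
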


\begin{proof}
Consider the open interval $\Delta:=\{t>e\}$ in $\R.$ By \cite[Thm 2.1]{e-s}
the limsup and liminf of $S_{+}^{(N)}(e)$ is equal to the sup of
$S(\mu)$ over all $\mu\in\mathcal{P}(X)$ such that $E(\mu)\in\overline{\Delta}$
and $E(\mu)\in\Delta,$ respectively. But by Lemma \ref{lem:decreasing}
below both these quantities are equal to $S_{+}(e).$ 
\end{proof}
When $W$ and $V$ are continuous and $X$ is compact Theorem\ref{thm:conv for N for two-point etc}
thus shows that $S_{+}(e)$ is a limit of concave functions on $\R$
and thus concave on $\R.$ Next, we invoke the monotonicity properties
shown in Lemma \ref{lem:decreasing} below which show that 
\[
S_{+}(e)=\max\{e_{0},S(e)\}
\]
 and hence $\max\{e_{0},S(e)\}$is is concave. But by Lemmas \ref{lem:energy appr implies S finite etc},
\ref{lem:superharmonic gives energy approx} $S(e)$ is finite for
any $e\in]e_{min},e_{max}[.$ Hence, $\max\{e_{0},S(e)\}$ is concave
and finite on $]e_{min},e_{max}[$ and thus concave and continuous
on $]e_{min},e_{max}[$. This proves the theorem in the case when
$W$ and $V$ are continuous and $X$ is compact. 

\subsubsection{Conclusion of the proof of Theorem \ref{thm:concave S in text}}

Still assuming that $X$ is compact we will next show that $S(e)$
is decreasing, concave and continuous when $e\in[e_{0},e_{max}[,$
i.e. when 
\begin{equation}
E(\mu_{0})\leq e<\sup_{\mathcal{P}(X)}E(\mu)\label{eq:e in interval}
\end{equation}
We will proceed by an approximation argument and exploit that $S(e)>-\infty$
(Prop \ref{prop:Se finite under Main }). Take a sequence $W_{\delta}(x,y)$
of continuous pair-interactions increasing to $W$ satisfying the
Main Assumptions. For example, $W_{\delta}$ may be defined as a convolution
of $W$ with a compactly supported smooth density $\rho_{\delta}.$
First observe that since, by assumption, $E(\mu_{0})<e,$ we get $E_{\delta}(\mu_{0})<e$
for $\delta$ sufficiently small (by the monotone convergence theorem).
Thus, by the concavity and continuity of $S_{+,\delta}(e)$ on $\R$
established in the previous section, we just have to verify that 
\begin{equation}
\lim_{\delta\rightarrow0}S_{\delta}(e)=S(e)\label{eq:conv delta zero}
\end{equation}
for any fixed $e$ satisfying the inequalities in formula \ref{eq:e in interval}.
To this end first note that 
\begin{equation}
S_{\delta}(e)\leq S(e).\label{eq:upper bound on S delta}
\end{equation}
 Indeed, by Lemma \ref{lem:decreasing}, it is enough to prove the
corresponding inequality for the \emph{upper}  entropies, where it
follows directly from the assumption that $E_{\delta}(\mu)\leq E_{0}(\mu).$
 Now fix a candidate $\mu$ for the sup defining $S_{0}(e)$ and set
$e_{\delta}:=E_{\delta}(\mu).$ Then 
\[
S_{0}(\mu)\leq S_{\delta}(e_{\delta})=S(\mu_{\delta}),
\]
 where $\mu_{\delta}$ realizes the sup defining $S_{\delta}(e_{\delta}).$
Moreover, fixing a positive number $\epsilon$ we have 
\[
e_{\delta}\geq e-\epsilon
\]
 for $\delta$ sufficiently small ($\delta<\delta_{\epsilon}).$ Hence,
since $S_{\delta}$ is decreasing when $e>E_{\delta}(\mu_{0})$ (by
Lemma \ref{lem:decreasing}), we get 
\[
S_{0}(\mu)\leq S_{\delta}(e-\epsilon)
\]
 for $\delta<\delta_{\epsilon}.$ Using that $S_{\delta}(e)$ is concave
we thus deduce that 
\[
S_{\delta}(e_{\delta})\leq S_{\delta}(e)+\epsilon|\frac{d}{de}S_{\delta}(e)|.
\]
Combining the latter inequality with the inequality \ref{eq:upper bound on S delta}
reveals that all that remains, in order to prove the convergence \ref{eq:conv delta zero},
is to verify that 
\begin{equation}
|\frac{d}{de}S_{\delta}(e)|\leq C\label{eq:uniform bound}
\end{equation}
 as $\delta\rightarrow0.$ To this end first observe that, using again
that $S_{\delta}(e)$ is decreasing and concave yields for any fixed
$e'>e$ 
\[
|\frac{d}{de}S_{\delta}(e)|=-\frac{d}{de}S_{\delta}(e)\leq\frac{S_{\delta}(e)-S_{\delta}(e')}{e'-e}\leq\frac{-S_{\delta}(e')}{e'-e}.
\]
In particular, if $e'$ is a fixed number satisfying $e<e'<\sup_{\mathcal{P}(X)}E(\mu)$
we get (by Lemma \ref{lem:decreasing}) that 
\[
S_{\delta}(e')\geq S(\mu')
\]
 for any $\mu'\in\mathcal{P}(X)$ such that $E_{\delta}(\mu')\geq e'.$
Now, by Lemmas \ref{lem:energy appr implies S finite etc}, \ref{lem:superharmonic gives energy approx}
below $\mu'$ can be chosen, independently of $\delta,$ so that $E(\mu')\geq e'+\epsilon$
and $S(\mu')>-\infty.$ We then get, for any $\delta$ sufficiently
small, that $E_{\delta}(\mu')\geq e'$ (by the monotone convergence
theorem) and thus the uniform bound \ref{eq:uniform bound} follows.

This concludes the proof of Theorem \ref{thm:concave S in text} in
the case when $X$ is compact. In the general case we fix $R>0$ and
denote by $X_{R}$ the intersection of $X$ with a ball $B_{R}$ of
radius $R$ centered at the origin. Then $X_{R}$ is also pseudoconvex
(as follows from Lemma\ref{lem:pseudo} applied to $\phi(z)=|z|^{2}-R).$
Thus, as shown in the previous section, the  entropy $S_{R}(e)$ associated
to the restrictions to $B_{R}$ of $W,V$ and $\mu_{0}$ is concave
in $e$ for $e>E(\mu_{0}).$ Hence all that remains is to verify that
\begin{equation}
\lim_{R\rightarrow\infty}S_{R}(e)=S(e)\label{eq:conv R infty}
\end{equation}
for any fixed $e$ satisfying the inequalities \ref{eq:e in interval}.
To this end first note that, since $X_{R}\subset X,$ it follows immediately
that $S_{R}(e)\leq S(e).$ Now assume that $e>E(\mu_{0})$ and fix
a candidate $\mu$ for the sup defining $S(e).$ Set 
\[
\mu_{R}:=1_{B_{R}}\mu/\mu(B_{R}).
\]
Then 
\[
S(\mu)\leq S_{R}(e_{R}),\,\,\,e_{R}:=E(\mu_{R}).
\]
By the monotone convergence theorem $E(\mu_{R})\rightarrow E(\mu).$
Hence, using that $S_{R}(e)$ is decreasing and concave for $R$ sufficiently
large (by the previous step) we can proceed essentially as when approximating
$W$ with $W_{\delta}$ above, to get 
\[
S(\mu)\leq\limsup_{R\rightarrow\infty}S_{R}(e),
\]
 which concludes the proof of the convergence \ref{eq:conv R infty}
and thus the concavity in Theorem \ref{thm:concave S in text}. 

\section{\label{sec:global con}Global concavity of $S(e)$ and examples}

Recall that, in classical terminology, a symmetric function $W(x,y)$
is a \emph{weakly positive definite kernel,} i.e. that for any positive
integer $N$
\[
\sum_{i,j\leq N}W(x_{i},x_{j})a_{i}a_{j}\geq0\,\,\,\forall(a_{i})\in\R^{N}:\,\sum_{i=1}^{N}a_{i}=0
\]
If the first inequality holds for \emph{any} sequence $(a_{i})_{i=1}^{N},$
then $W(x,y)$ is called a \emph{positive definite kernel.} \footnote{$W(x,y)$ is a weakly positive kernel iff $-W(x,y)$ is a\emph{ negative
definite kernel} in the terminology of \cite{b-c-r}.}

Now assume that $W(x,y)$ is weakly positive definite and satisfies
the Main Assumptions in a neighborhood of $X\times X.$ Then $W$
can be expressed as increasing limit of \emph{continuous }(and even
smooth)\emph{ }such functions $W_{\delta}(x,y).$ Indeed, if $\rho$
is a smooth compactly supported probability density on $\R^{2n}\times\R^{2n}$
we can take
\begin{equation}
W_{\delta}:=(W*\rho_{\delta}):=\int W(\cdot+a,\cdot+b)\rho_{\delta}(a,b)d\lambda(a)d\lambda(b),\,\,\,\rho_{\delta}(x,y)=\rho(\delta^{-1}x,\delta^{-1}x)\delta^{4n}.\label{eq:def of W delta conv}
\end{equation}
 Since $-W$ is psh $W_{\delta}$ indeed increases to $W.$ Moreover,
since $W(\cdot+a,\cdot+b)$ is weakly positive definite and satisfies
the Main assumptions for any $(a,b)$ so does $W_{\delta}.$ 
\begin{thm}
\label{thm:Main-assumptions+weakly-positiv}If the Main Assumptions
hold and moreover $W(x,y)$ is assumed weakly positive definite, then
the  entropy $S(e)$ is globally concave and hence thermodynamic equivalence
of ensembles holds for any $e\in]e_{min},e_{max}[.$
\end{thm}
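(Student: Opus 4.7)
The plan is to split the concavity claim for $S(e)$ on $]e_{min}, e_{max}[$ at the prior energy $e_0 := E(\mu_0)$. The upper half $[e_0, e_{max}[$ is already handled by Theorem \ref{thm:concave S in text}, which delivers concavity and continuity using only the Main Assumptions. The new input---weak positive definiteness of $W$---would be used exclusively to control the lower half $]e_{min}, e_0]$.

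Specifically, weak positive definiteness forces $E(\mu)$ to be convex on $\mathcal{P}(X)$: along any affine segment $\mu_t := (1-t)\mu' + t\mu''$ in $\mathcal{P}(X)$ one computes $\frac{d^2}{dt^2}E(\mu_t) = \int W \, d(\mu'' - \mu')^{\otimes 2} \geq 0$, since $\mu'' - \mu'$ has total mass zero and the $V$-contribution is affine in $t$. Together with the energy approximation property provided by Lemma \ref{lem:superharmonic gives energy approx} and the lower semicontinuity of $E$, Proposition \ref{prop:E convex plus energy approx implies S concave} then yields concavity, monotonicity, and finiteness of $S(e)$ on $]e_{min}, e_0[$. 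In the compact case, Proposition \ref{prop:E convex plus energy approx implies S str conc} further supplies continuity up to $e_0$; in the non-compact case I would invoke the exhaustion $X_R := X \cap B_R$ used at the end of the proof of Theorem \ref{thm:concave S in text}---each $X_R$ is pseudoconvex and inherits both the Main Assumptions and weak positive definiteness, so the corresponding entropy $S_R(e)$ is concave and continuous up to $e_0$, and the uniform slope bound already appearing in that proof ensures $S_R(e) \uparrow S(e)$ with concavity and continuity preserved in the limit.

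To glue the two halves at $e_0$, I would use that $\mu_0$ is the unique global maximizer of $S(\mu)$ (with $S(\mu_0) = 0$), so $S(e)$ attains its global maximum at $e_0$. Hence the left derivative $S'(e_0^-)$ is $\geq 0$ while the right derivative $S'(e_0^+)$ is $\leq 0$, and the resulting inequality $S'(e_0^-) \geq S'(e_0^+)$, together with the one-sided concavities and continuity at $e_0$, yields concavity on all of $]e_{min}, e_{max}[$---i.e.\ global concavity (with the convention $S \equiv -\infty$ outside). Thermodynamic equivalence at every $e \in ]e_{min}, e_{max}[$ then follows immediately from the first bullet of Proposition \ref{prop:thermo equiv under energy appr}, whose three hypotheses---concavity of $S$, energy approximation, and affine continuity---are now all available, the last from Lemma \ref{lem:Main Ass implies affine}. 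The one genuine technical hurdle I anticipate is propagating both concavity and continuity of $S_R$ through $R \to \infty$ at the boundary point $e_0$ in the non-compact case, but the uniform slope estimate embedded in the proof of Theorem \ref{thm:concave S in text} transfers verbatim, so no essentially new estimate is needed.
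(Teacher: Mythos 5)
Your proposal is correct and follows essentially the same skeleton as the paper's own proof: decompose at $e_{0}$, use Theorem~\ref{thm:concave S in text} on $[e_{0},e_{max}[$, observe that weak positive definiteness makes $E_{W}$ convex so the low-energy half is handled by Propositions~\ref{prop:E convex plus energy approx implies S concave} and \ref{prop:E convex plus energy approx implies S str conc}, and then pass to non-compact $X$ by exhaustion. Two small differences in route are worth noting. First, where you explicitly glue the two halves at $e_{0}$ via the observation that $S$ attains its maximum there (so one-sided derivatives are ordered correctly), the paper instead applies the second and third bullets of Proposition~\ref{prop:thermo equiv under energy appr} to get $S=F^{*}$ separately on $[e_{0},e_{max}[$ and on $]e_{min},e_{0}]$, and concavity of $S$ on the union then comes for free since $F^{*}$ is concave; your gluing argument is a legitimate and perhaps more transparent alternative, and it correctly yields the first bullet's hypotheses, which you then use for equivalence. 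Second, and more substantively: the paper deliberately invokes Proposition~\ref{prop:E convex plus energy approx implies S str conc} only in the case where $X$ is compact \emph{and} $W$ is continuous, and then handles general lsc $W$ by a convolution regularization $W_{\delta}\uparrow W$ (formula~\ref{eq:def of W delta conv}, which preserves both the Main Assumptions and weak positive definiteness), showing $S_{\delta}\to S$ pointwise on $]e_{min},e_{0}[$ via the same uniform slope estimates you quote. You instead apply Proposition~\ref{prop:E convex plus energy approx implies S str conc} directly to the general lsc convex $E$; this is consistent with that proposition's stated hypotheses, but its proof leans on differentiability of $F(\beta)$ and limiting behavior of $e(\beta)$ drawn from external references, and the paper's authors evidently preferred the regularization route as a safer path for discontinuous $W$. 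If you retain your shortcut you should at least check that the relevant part of Proposition~\ref{prop:E convex plus energy approx implies S str conc} really does go through without continuity of $W$; otherwise, inserting the $W_{\delta}$-step restores full rigor at negligible cost.
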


\begin{proof}
By Theorem \ref{thm:concave S in text} $S(e)$ is concave and continuous
on $[e_{0},e_{max}[.$ Next, recall the classical fact that a weakly
positive definite kernel defines a convex functional $E_{W}(\mu)$
on $\mathcal{P}(X)$ (and vice versa). Hence, by Prop \ref{prop:E convex plus energy approx implies S str conc}
$S(e)$ is concave and continuous on $]e_{min},e_{0}]$ when $X$
is compact and $W$ is continuous. The theorem thus follows, in the
compact and continuous case, from the second and third point in Prop
\ref{prop:thermo equiv under energy appr}. Next assume that $X$
is still compact and define $W_{\delta}$ to be a regularization as
in formula \ref{eq:def of W delta conv}. Then the corresponding  entropy
$S_{\delta}(e)$ is concave on $]e_{min,\delta},e_{max,\delta}].$
Moreover, by the approximation argument used in the proof of Theorem
\ref{thm:concave S in text} and the finiteness of $S(e)$ the function
$S_{\delta}(e)$ converge point-wise to $S(e)$ on $]e_{min},e_{0}[.$Thus
$S(e)$ is also concave on $]e_{min},e_{0}[.$ Finally, the general
non-compact case is deduced from the compact case using again the
approximation arguments in  the proof of Theorem \ref{thm:concave S in text}
and the finiteness of $S(e).$

\end{proof}
Recall that, by Bochner's classical theorem, a translationally invariant
kernel $W(x,y)=\mathcal{W}(x-y)$ is positive definite iff the function
$\mathcal{W}$ on $\R^{d}$ is the Fourier transform of a (positive)
measure on $\R^{d}.$ In the case of translationally and rotationally
invariant kernels the following classical result holds \cite{b-c-r}: 
\begin{lem}
(Bernstein+Schoenberg). Let $w(r)$ be a continuous function on $[0,\infty[$
which is smooth on $]0,\infty[.$ Then $W(x,y):=w(|x-y|)$ is a positive
definite kernel iff $f(r):=w(r^{1/2})$ is \emph{completely monotone},
i.e. $(-1)^{m}\partial^{m}f(r)/dr^{m}\geq0$ for all non-negative
integers $m.$ 
\end{lem}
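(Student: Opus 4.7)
My plan is to reduce the statement to the classical Bernstein representation of completely monotone functions and then use the fact that Gaussian kernels are positive definite. More precisely, the strategy is: (i) recall that by Bernstein's theorem, $f$ is completely monotone on $[0,\infty[$ if and only if there exists a positive Borel measure $\mu$ on $[0,\infty[$ such that
\[
f(s) = \int_0^\infty e^{-ts}\, d\mu(t);
\]
and (ii) exploit the fact that for each fixed $t\ge 0$ the Gaussian kernel $K_t(x,y) := e^{-t|x-y|^2}$ is positive definite on $\R^d$, since it is the Fourier transform of a positive Gaussian measure (equivalently, $\sum_{i,j} a_i a_j K_t(x_i,x_j) = \int |\sum_i a_i e^{i\xi\cdot x_i}|^2 \hat{K}_t(\xi)\,d\xi \ge 0$).

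For the sufficiency direction, assume $f(r):=w(r^{1/2})$ is completely monotone. Substituting $s=r^2=|x-y|^2$ in the Bernstein representation yields
\[
w(|x-y|) = f(|x-y|^2) = \int_0^\infty e^{-t|x-y|^2}\, d\mu(t).
\]
For any finite set of points $x_1,\dots,x_N$ and scalars $a_1,\dots,a_N$, Fubini then gives
\[
\sum_{i,j} a_i a_j\, w(|x_i-x_j|) = \int_0^\infty \Bigl(\sum_{i,j} a_i a_j\, e^{-t|x_i-x_j|^2}\Bigr) d\mu(t) \ge 0,
\]
since the integrand is non-negative for every $t\ge 0$ by the positive definiteness of $K_t$. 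Hence $W$ is a positive definite kernel.

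For the necessity direction, assume $W(x,y)=w(|x-y|)$ is positive definite on $\R^d$ for every $d$ (this is implicit in the stated equivalence, as complete monotonicity controls all dimensions simultaneously). One argues in the spirit of Schoenberg: restrict $W$ to $d+1$ equally spaced points on a line (or to large spheres in $\R^d$) and expand in Gegenbauer ultraspherical polynomials $C_k^{(\alpha_d)}$, whose non-negativity of coefficients follows from positive definiteness on $\R^d$. Letting $d\to\infty$, the normalized Gegenbauer polynomials converge to $r\mapsto e^{-tr^2}$ after an appropriate rescaling of the angular parameter, and the sequence of non-negative coefficient measures is tight. A weak compactness and limit argument then produces a positive measure $\mu$ on $[0,\infty[$ with $w(r)=\int_0^\infty e^{-tr^2}\,d\mu(t)$, i.e. $f(s)=\int_0^\infty e^{-ts}\,d\mu(t)$, which is manifestly completely monotone after differentiating under the integral sign.

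The main obstacle is the converse: the Gaussian-mixture representation from the forward argument is elementary, but extracting such a representation from the mere positive definiteness of $w(|\cdot|)$ requires the dimensional stabilisation mechanism of Schoenberg (equivalently, a Hausdorff-type moment argument via the tightness of ultraspherical expansions). Since the statement is a classical result \cite{b-c-r}, I would simply invoke it rather than reproduce the full Schoenberg limit argument, emphasising only the Bernstein-plus-Gaussian half, which is what is actually needed in the applications that follow.
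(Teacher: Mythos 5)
The paper itself offers no proof of this lemma; it is stated as a classical fact with the citation \cite{b-c-r} (Berg--Christensen--Ressel), so there is no internal argument to compare against. Your reconstruction of the forward (sufficiency) direction is correct and is the standard one: Bernstein's theorem gives the Laplace representation $f(s)=\int_0^\infty e^{-ts}\,d\mu(t)$ for a finite positive measure (finite because $f$ is continuous at $0$, which follows from $w$ being continuous on $[0,\infty[$), the substitution $s=|x-y|^2$ exhibits $W$ as a mixture of Gaussian kernels, and Fubini transfers positive definiteness. This is exactly the half of the equivalence that the paper actually uses (in the corollary immediately following), so your choice to present it in full and cite \cite{b-c-r} for the converse matches the paper's own economy.

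Two small remarks on the converse sketch. First, you rightly flag that the "iff" only makes sense if positive definiteness is required on $\R^d$ for \emph{every} $d$; as written the paper's statement is silently using this convention, and it is worth saying so explicitly. Second, your Gegenbauer-polynomial route is really the proof of Schoenberg's theorem for positive definite functions on \emph{spheres} (in terms of geodesic distance); the Euclidean $\R^d$ version that is needed here is usually proved via Bochner's theorem and the normalized Bessel kernels $\Omega_d(r)=\Gamma(d/2)(2/r)^{(d-2)/2}J_{(d-2)/2}(r)$, with the Gaussian $e^{-r^2/2}$ arising as the limit of the rescaled $\Omega_d(r\sqrt{d})$ as $d\to\infty$, followed by a tightness/Helly argument on the representing measures. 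Since you are invoking \cite{b-c-r} for this direction anyway, the conflation is harmless, but the mechanism you should have in mind is Bessel-kernel stabilisation rather than ultraspherical expansion on spheres.
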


The previous lemma implies that if $w$ is non-negative on $[0,\infty[$
(but possibly equal to $\infty$ at $r=0)$ and $w(r^{1/2})$ is completely
monotone for $r>0$, then $w(|x-y|)$ is still positive definite.
Indeed, one can apply the previous lemma to 
\[
w_{\epsilon}(r):=w\left((r^{2}+\epsilon)^{1/2}\right)
\]
 and then let $\epsilon\rightarrow0.$ 
\begin{cor}
Under the Homogeneous Assumptions together with the assumption that
$w(r^{1/2})$ is completely monotone for $r>0$ the  entropy $S(e)$
is globally concave and thermodynamic equivalence of ensembles holds
at all energies.
\end{cor}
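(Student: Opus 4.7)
The plan is to reduce the corollary to Theorem \ref{thm:Main-assumptions+weakly-positiv} by verifying its two hypotheses. Lemma \ref{lem:special} already tells us that the Homogeneous Assumptions imply the Main Assumptions, so the only remaining ingredient is to show that $W(x,y) = w(|x-y|)$ is a weakly positive definite kernel on $X\times X$.

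In the case where $w$ is continuous on $[0,\infty[$, the Bernstein--Schoenberg lemma stated just above the corollary applies directly: by hypothesis $w(r^{1/2})$ is completely monotone, so $W$ is in fact (strongly) positive definite and hence weakly positive definite. In the possibly singular case $w(0)=+\infty$ (as in the logarithmic interaction or repulsive power laws), I would follow the approximation strategy indicated in the remark preceding the corollary: set $w_\epsilon(r):=w\bigl((r^2+\epsilon)^{1/2}\bigr)$ for $\epsilon>0$. Each $w_\epsilon$ is continuous on $[0,\infty[$, and since $w_\epsilon(r^{1/2})=f(r+\epsilon)$ with $f(r)=w(r^{1/2})$, the function $w_\epsilon(r^{1/2})$ is the translate by $\epsilon$ of a completely monotone function on $]0,\infty[$ and is therefore itself completely monotone on $[0,\infty[$. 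Bernstein--Schoenberg then yields that each $W_\epsilon(x,y):=w_\epsilon(|x-y|)$ is positive definite.

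Letting $\epsilon\downarrow 0$, we have $W_\epsilon\uparrow W$ pointwise (using that $w$ is decreasing in $r$, which was noted under the Homogeneous Assumptions, so $w\bigl((r^2+\epsilon)^{1/2}\bigr)$ increases to $w(r)$ as $\epsilon\downarrow 0$). Positive definiteness of the associated quadratic forms passes to pointwise monotone limits via the monotone convergence theorem applied to $\sum_{i,j}W_\epsilon(x_i,x_j)a_i a_j$ for any finitely supported signed measure with zero mass, so $W$ is weakly positive definite.

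Having thus verified both hypotheses of Theorem \ref{thm:Main-assumptions+weakly-positiv}, we conclude that $S(e)$ is globally concave and thermodynamic equivalence of ensembles holds for every $e\in ]e_{min},e_{max}[$. The only mildly subtle point is the regularization step handling a possible singularity of $w$ at $r=0$, but since we only need positive definiteness of the limiting quadratic form (and not any regularity of the limit), this is essentially free; the Main Assumptions for $W$ itself are supplied directly by Lemma \ref{lem:special} and need not be inherited from the approximants.
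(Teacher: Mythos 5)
Your argument is correct and follows exactly the route the paper intends: the Homogeneous Assumptions give the Main Assumptions by Lemma \ref{lem:special}, and the Bernstein--Schoenberg lemma combined with the $w_{\epsilon}(r):=w((r^{2}+\epsilon)^{1/2})$ regularization described in the remark immediately preceding the corollary gives (weak) positive definiteness of $W$, so Theorem \ref{thm:Main-assumptions+weakly-positiv} applies. One small technical note: passing positive definiteness to the limit $\epsilon\downarrow 0$ is not really an application of the monotone convergence theorem, since the coefficients $a_{i}a_{j}$ have mixed signs; for a finite configuration one simply uses pointwise convergence of each entry $W_{\epsilon}(x_{i},x_{j})\to W(x_{i},x_{j})$ (the inequality holding trivially, in the form $+\infty\geq 0$, if two points coincide and $W$ is singular on the diagonal), which is what actually carries the positivity through.
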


It should be pointed out that assumptions in the previous corollary
are preserved if $w$ is replaced by $w_{\epsilon}(r)$ above (using
that $\log(|z|^{2}+\epsilon)$ is psh) and similarly for $v$ and
$\psi_{0}.$ This gives a convenient explicit regularization procedure
preserving the property that $S(e)$ is globally concave. 

\subsection{\label{subsec:Examples}Examples where $S(e)$ is globally concave}

We next provide some examples where Theorem \ref{thm:Main-assumptions+weakly-positiv}
applies and thus $S(e)$ is globally concave. More examples may, for
example, be obtained by taking convolutions (as in formula \ref{eq:def of W delta conv}).
Note also that if the entropy $S_{W,V}(e)$ corresponding to the interactions
$W$ and $V$ is globally concave, then so is $S_{-W,-V}(e),$ since
$S_{-W,-V}(e)=S_{W,V}(-e).$ In this way one may thus go from a situation
of repulsive interactions to attractive ones.

Theorem \ref{thm:Main-assumptions+weakly-positiv} applies to the
case when $W(x,y)=-\log|z-w|$ when, for example, $\mu_{0}$ is Lebesgue
measure on for example a ball in $\R^{2n}$ or a centered (possibly
non-standard) Gaussian measure in $\R^{2n}$ (as in formula \ref{eq:gaussian measures}).
Indeed, then $W$ satisfies the Homogeneous Assumptions and the positive
definiteness follows, for example, from the fact that $W$ is the
Green kernel on $\R^{2n}$ of the $n$ th power of the Laplacian,
which is positive definite as a formally self-adjoint operator. More
generally, the ball may be replaced with any domain $X$ satisfying
the Main Assumptions, for example 
\[
X=\{z\in\R^{2n}:\,\sum_{i=1}^{r}|P_{i}(z)|^{\alpha_{i}}\leq1\},
\]
 for a quasi-homogeneous polynomials $P_{1},...,P_{r}$ and $\alpha_{i}>0$
(see Example \ref{exa:quasi-homo}). 

Theorem \ref{thm:Main-assumptions+weakly-positiv} also applies to
the continuous repulsive \emph{power-laws} with exponent in $]0,2]$
\[
W(x,y)=-|x-y|^{a},\,\,a\in]0,2],
\]
 as well as to 
\[
W(x,y)=e^{-\alpha|x-y|^{a}},\,\,a\in]0,2].
\]
 when $X$ is taken to be a disc centered at the origin with radius
at most $(1/2\alpha)^{1/a}.$ Indeed, a direct computation reveals
that $w(r)$ satisfies the Homogeneous Assumptions for any $a,\alpha>0$
(by a scaling it is enough to verify the case when $a=\alpha=1$)
Moreover, by \cite[Cor 3.3]{b-c-r} (and its proof) the kernels in
question are weakly positive definite when $a\in]0,2].$ Note that
in the case of the repulsive logarithmic interaction, as well as for
repulsive power-laws with $a\in]0,2[,$ Prop \ref{prop:existence non-compact}
ensures the existence of maximum entropy measures $\mu^{e},$ when
$\mu_{0}$ is a centered Gaussian measure (by taking $\psi_{0}=|x|^{2}).$

\subsubsection{The point vortex model}

Consider the point vortex model (for vortices with identical circulations)
on a domain $X$ in $\R^{2}.$ In the case when $X=\R^{2}$ 
\[
W(x,y)=-\log|x-y|,\,\,\,V(x)=0
\]
(with our normalizations). As discussed in the previous section, $S_{+}^{(N)}(e)$
and $S(e)$ are both globally concave (and thermodynamic equivalence
of ensemble holds) if $\mu_{0}$ is taken to be a centered Gaussian
measure. As indicated in \cite[Section 5]{clmp2}, the concavity of
$S(e)$ also follows from the results in \cite{clmp2}, using completely
different techniques (see also\cite{k2} where the concavity of the
corresponding multi-variable entropy $S(e,l),$discussed in Section
\ref{subsec:Priors-versus-linear}, is shown). But, as discussed in
the introduction of the paper, the main point of the present technique
is that it also applies to regularizations of $W.$

In the case of when $X$ is a compact domain with smooth boundary
$W(x,y)$ is defined as the negative of Green function $G_{X}(x,y)$
for the Laplacian on $X$ with Dirichlet boundary conditions and $V(x)=\gamma(x)/N$
where $\gamma$ is the restriction to the diagonal of $G_{X}(x,y)+\log|x-y|$
\cite{clmp1,clmp2,m-p}. In particular, when $X$ is the unit-disc
\begin{equation}
W(z,w)=-\log\frac{|z-w|}{|1-z\bar{w}|},\,\,\,V(x)=\frac{1}{N}\log|1-|z|^{2}|\label{eq:W and V in disc for vortex}
\end{equation}
In this case Theorem \ref{thm:conv for N for two-point etc} implies
that $S_{+}^{(N)}(e)$ is globally concave when $N\leq3,$ as follows
from combining Remark \ref{rem:joint cond on W and V}with the following
lemma, proved in the appendix.
\begin{lem}
\label{lem:psh in disc}Denote by $D$ the interior of the unit-disc
in $\C$ and set 
\[
\psi(z,w):=\log\left(|z-w|^{2}/|1-z\bar{w}|^{2}\right),\,\,\,\phi(z)=-\log\left(\left|1-|z|^{2}\right|^{2}\right)
\]
The function $\psi(z,w)+\lambda\left(\phi(z)+\phi(w)\right)$ is psh
in $D\times D$ iff $\lambda\geq1/2.$ 
\end{lem}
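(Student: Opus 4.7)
The plan is a direct complex Hessian computation. Writing $F := \psi + \lambda(\phi(z)+\phi(w)) = \log|z-w|^{2} + G$ where $G := -\log|1-z\bar w|^{2} + \lambda(\phi(z)+\phi(w))$ is smooth on all of $D\times D$, and noting that $\log|z-w|^{2}$ is psh on $\C^{2}$ with vanishing smooth Hessian off the diagonal $\Delta := \{z=w\}$, one sees that $F$ is psh on $D\times D$ if and only if the pointwise complex Hessian of $G$ is positive semi-definite everywhere on $D\times D$. (The "only if" direction uses that $\partial\bar\partial G$ is a continuous tensor, so its positivity off $\Delta$ extends by continuity to all of $D\times D$.)

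I would then compute $\partial\bar\partial G$ in coordinates $(z,w)$, piece by piece. The term $\lambda(\phi(z)+\phi(w))$ is diagonal, contributing $2\lambda/(1-|z|^{2})^{2}$ and $2\lambda/(1-|w|^{2})^{2}$ by the standard calculation for $\phi$. For $u(z,w) := \log|1-z\bar w|^{2}$, the key observation is the splitting $u = \log(1-z\bar w) + \log(1-\bar z w)$: since $1-z\bar w$ is holomorphic in $z$ and anti-holomorphic in $w$, and vice versa for $1-\bar z w$, direct differentiation yields $\partial_{z\bar z}u = \partial_{w\bar w}u = 0$ and $\partial_{z\bar w}u = -1/(1-z\bar w)^{2}$. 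Assembling the pieces gives
\[
\partial\bar\partial G(z,w) \;=\; \begin{pmatrix} 2\lambda/(1-|z|^{2})^{2} & 1/(1-z\bar w)^{2} \\ 1/(1-\bar z w)^{2} & 2\lambda/(1-|w|^{2})^{2} \end{pmatrix}.
\]

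Positivity of this $2\times 2$ Hermitian matrix is equivalent to $\lambda \geq 0$ together with the determinant condition $2\lambda \geq (1-|z|^{2})(1-|w|^{2})/|1-z\bar w|^{2}$ at every $(z,w)\in D\times D$. The one clever ingredient is the classical identity $|1-z\bar w|^{2} - |z-w|^{2} = (1-|z|^{2})(1-|w|^{2})$ (a one-line expansion), which rewrites the right-hand side as $1 - \rho(z,w)^{2}$, where $\rho(z,w) := |z-w|/|1-z\bar w|$ is the pseudohyperbolic distance. Since $\rho$ ranges over $[0,1)$ on $D\times D$ with supremum $1$ attained as $(z,w)$ tends to any diagonal point, the determinant condition holds throughout $D\times D$ if and only if $2\lambda \geq 1$; for the necessity one simply evaluates at $(z,w)=(0,0)$, where the determinant equals $4\lambda^{2}-1$. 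The main potential pitfall is bookkeeping signs in the Hessian of $\log|1-z\bar w|^{2}$ (and remembering that its diagonal blocks vanish even though the function itself is neither psh nor pluri-superharmonic); once the pseudohyperbolic identity is invoked, both directions of the equivalence drop out immediately.
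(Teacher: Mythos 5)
Your proof is correct and follows essentially the same route as the paper's: both reduce to the off-diagonal region (you via continuity of $\partial\bar\partial G$, the paper via pluripolarity of the diagonal), compute the same $2\times2$ complex Hessian, observe that positive semi-definiteness reduces to the determinant condition, and close with the identity $|1-z\bar w|^{2}-|z-w|^{2}=(1-|z|^{2})(1-|w|^{2})$. The only cosmetic difference is that the paper fixes $\lambda=1/2$ and verifies the determinant is $\geq 0$ there, then separately notes failure at $(0,0)$ for $\lambda<1/2$, whereas you carry the general $\lambda$ through and read off the sharp threshold from the pseudohyperbolic bound; this is a cleaner way to present the "iff."
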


We leave open the question whether $S_{+}^{(N)}(e)$ is concave also
when $N>3.$ As for $S(e)$ it was shown to be concave in \cite{clmp2},
using a completely different method. In the case when a rotationally
invariant exterior potential $V_{e}$ is added to $V(x)$ in formula
\ref{eq:W and V in disc for vortex}, the previous lemma shows that
$S_{+}^{(N)}(e)$ is concave for any $N$ (and hence also $S(e)$)
if $-\partial\bar{\partial}V_{e}\geq\partial\bar{\partial}\phi/2$
in $D$ i.e. if the Laplacian of $V_{e}$ is sufficiently negative:
\[
\frac{1}{4}\Delta V_{e}(z)\leq-\frac{1}{(1-|z|^{2})^{2}}.
\]
 This should be contrasted with the fact that the global concavity
of $S(e)$ may fail if the Laplacian is positive, e.g. in the case
when $V_{e}(z)=\omega|z|^{2},$ for $\omega>0,$ studied in \cite{s-o}
and \cite[Lemma 8.2]{clmp2}.

\subsubsection{Insulated plasmas and self-gravitating matter in 2D }

The point vortex model on a compact domain $X$ is physically equivalent
to a one-component Coulomb plasma if inertial effects are ignored
(i.e. the limit of infinite damping is considered) and the boundary
of $X$ is assumed to be conductive \cite{s-o}. On the other hand,
the case when the boundary of $X$ is non-conducting, i.e. $X$ is
insulated, corresponds to the mean field Hamiltonian on $X$ with
Coulomb pair-interaction $-\log|x-y|$ (and $V\equiv0$) \cite{ga}.
In this case the Main Assumptions apply when $\mu_{0}$ is the uniform
measure on the $X$ unit-disc $X,$ as discussed in the beginning
of Section \ref{subsec:Examples}. More generally, the Main assumptions
apply when the exterior potential $V$ is radial and $\Delta V\leq0,$
i.e. $V$ is the potential induced by a distribution of fixed particles
with the same charge as the plasma. Switching the sign of the Coulomb
interaction yields a system of self-gravitating matter, studied in
\cite{al} with inertial effects included. Generalizations to regularized
self-gravitating matters are given in \cite{be2} (briefly outlined
in Section \ref{subsec:Outlook}).

\section{\label{sec:Critical-inverse-temperatures}Critical inverse temperatures
and existence of maximum entropy measures}

In the Very General Setup the\emph{ macroscopic inverse temperatures}
is defined by 
\begin{equation}
\beta_{c}:=\inf\left\{ \beta\in\R:\,\inf_{\mu}F_{\beta}(\mu)>-\infty\right\} .\label{eq:beta c intro}
\end{equation}
The \emph{microscopic inverse temperature} $\beta_{c,N}$ is, in the
General Setup, defined by 

\[
\beta_{c,N}:=\left\{ \beta\in\R:\,Z_{N,\beta}:=\int_{X^{N}}e^{-\beta H^{(N)}}(e^{-\Psi_{0}}dx)^{\otimes N}<\infty\right\} 
\]
 and respectively, where $H^{(N)}$ denotes the mean field Hamiltonian
\ref{eq:Hamilt for W and V intro} corresponding to $W$ and $V.$

\subsection{Dual expressions for the critical inverse temperatures}

We start with the following dual ``slope formula'' for $\beta_{N,c},$
under the Main Assumptions, which also shows that $\beta_{N,c}<0.$ 
\begin{cor}
\label{cor:integrability threshold as slope}Under the same assumptions
as in Prop \ref{prop:concavity N finite general Psi} the following
holds if $\mu_{0}$ has finite mass on $Y$ and $\Psi$ is not identically
constant: 
\[
c_{(Y,\mu_{0})}(\Psi):=-\inf\left\{ \beta\in]-\infty,0]:\int_{Y}e^{\beta\Psi}\mu_{0}<\infty\right\} =\lim_{e\rightarrow-\inf_{Y}\Psi}\frac{d}{de}\log\left(\mu_{0}\left\{ \Psi<-e\right\} \right),
\]
 using either right or left derivatives in the rhs. As a consequence,
the set of all negative $\beta$ such that $\int_{Y}e^{\beta\Psi}\mu_{0}<\infty$
is open. In particular, under the Main Assumptions
\[
\beta_{N,c}=\lim_{e\rightarrow\sup_{X^{N}}E_{N}}\frac{dS^{(N)}(e)}{de}<0,\,\,\,Z_{N,\beta_{N_{c}}}=\infty
\]
\end{cor}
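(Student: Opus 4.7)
Define $g(t) := \log \mu_{0}\{\Psi<t\}$. By Prop \ref{prop:concavity N finite general Psi} applied to the given $\Psi$, either $g\equiv+\infty$ or $g$ is concave on $\R$; since $\mu_{0}(Y)<\infty$, $g\leq\log\mu_{0}(Y)<\infty$, ruling out the first alternative and giving $g$ concave, bounded above, non-decreasing, and (as $\Psi$ is not constant) tending to $-\infty$ at $\inf_{Y}\Psi$. The plan is to reduce integrability of the Laplace-type integral to the asymptotic slope of $g$ at $\inf_{Y}\Psi$. First, for $\beta<0$, the layer-cake identity combined with the substitution $s=e^{\beta t}$ in $\int_{0}^{\infty}\mu_{0}\{e^{\beta\Psi}>s\}\,ds$ yields
\[
\int_{Y}e^{\beta\Psi}\,d\mu_{0}\;=\;|\beta|\int_{\inf_{Y}\Psi}^{\infty}e^{g(t)+\beta t}\,dt.
\]
Since $\mu_{0}(Y)<\infty$ the integrand decays exponentially as $t\to+\infty$, so the integrability question is entirely about the behavior at $\inf_{Y}\Psi$.

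By concavity the one-sided derivatives $g'(t\pm)$ are monotone in $t$ and share a common limit
\[
L\;:=\;\lim_{t\to(\inf_{Y}\Psi)^{+}}g'(t)\in[0,+\infty].
\]
If $\inf_{Y}\Psi>-\infty$, concavity plus $g\to-\infty$ at the boundary force $L=+\infty$ and the integral converges for every $\beta\leq0$, matching $c_{(Y,\mu_{0})}(\Psi)=+\infty$. Otherwise $\inf_{Y}\Psi=-\infty$, and concavity gives $g(t)=Lt+o(t)$ as $t\to-\infty$, so $g(t)+\beta t=(L-|\beta|)t+o(t)$ and the integral converges iff $|\beta|<L$. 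At the critical $\beta=-L$ the function $g(t)-Lt$ has non-positive derivative $g'(t)-L\leq 0$, hence is non-increasing in $t$, so it is bounded below as $t\to-\infty$ and the integrand fails to decay, forcing divergence. This pins down the integrability threshold as $-L$, giving $c_{(Y,\mu_{0})}(\Psi)=L$ together with openness of $\{\beta<0:\int e^{\beta\Psi}\mu_{0}<\infty\}=(-L,0)$.

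The stated slope formula then follows from the change of variable $t=-e$: $\frac{d}{de}\log\mu_{0}\{\Psi<-e\}=-g'(-e)$, whose limit as $e\to-\inf_{Y}\Psi$ equals $-L$, approached monotonically by concavity of $g$ (so the limit is ``decreasing'' in the sense that the one-sided derivatives vary monotonically as $e$ moves toward the endpoint, giving the same value from both sides). For the ``in particular'' statement, I would apply the general result with $Y=X^{N}$, prior $\mu_{0}^{\otimes N}$, and $\Psi=-H^{(N)}$; the hypotheses of Prop \ref{prop:concavity N finite general Psi} are verified inside the proof of Theorem \ref{thm:conv for N for two-point etc} under the Main Assumptions. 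Using $\log\mu_{0}^{\otimes N}\{\Psi<-e\}=N\,S^{(N)}(e)$ and $-\inf_{X^{N}}\Psi=\sup_{X^{N}}H^{(N)}$, the formula for $c$ translates into the claimed slope formula for $\beta_{N,c}$, and $Z_{N,\beta_{N,c}}=\infty$ is exactly the threshold divergence established above.

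The main obstacle is twofold. First, the sharp asymptotic analysis near $\inf_{Y}\Psi$ must handle possibly non-differentiable concave $g$ throughout (by systematically using one-sided derivatives) and must rule out any edge-case convergence exactly at the critical $\beta=-L$; the use of the monotonicity of $g(t)-Lt$ provided by concavity is the decisive point here. Second, upgrading to the strict inequality $\beta_{N,c}<0$ (equivalently $L>0$) under the Main Assumptions requires exhibiting \emph{some} $\beta<0$ with $\int e^{-\beta H^{(N)}}\mu_{0}^{\otimes N}<\infty$, i.e.\ controlling the singularities of the psh function $-H^{(N)}$; this is the content of the earlier subsection on bounds for non-isotropic singularities and rests on Skoda-type integrability estimates for psh functions.
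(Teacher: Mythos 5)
The core of your argument — reduce $\int_Y e^{\beta\Psi}\mu_0$ (for $\beta<0$) to a one--dimensional Laplace--type integral $\int e^{g(t)+\beta t}\,dt$ and decide convergence by the asymptotic slope $L$ of the concave function $g(t)=\log\mu_0\{\Psi<t\}$ — is exactly the paper's route. The paper obtains the same one--dimensional integral by pushing forward $\mu_0$ under $\Psi$ and integrating by parts, and then outsources the threshold dichotomy to Lemma \ref{lem:exponential integral over half-line}; your layer--cake substitution and the inline analysis near the critical exponent (using that $g(t)-Lt$ is monotone, hence bounded below, to force divergence at $\beta=-L$) are a correct, slightly more self--contained rendering of the same thing.

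Where you go astray is in the very last step. To get $\beta_{N,c}<0$ you propose to exhibit some $\beta<0$ with $Z_{N,\beta}<\infty$ by appealing to Section \ref{subsec:Bound-on-the non-isotr} and to Skoda--type integrability estimates. This is circular: Prop.\ \ref{prop:transl invariant} in that subsection is itself proved using ``the openness statement in Cor.\ \ref{cor:integrability threshold as slope}'', i.e.\ the very statement you are trying to prove. It is also unnecessary. The paper's argument — and one your own computation already supplies — is soft: when $\inf_Y\Psi=-\infty$, you have $g'\geq 0$ (since $g$ is non--decreasing), $g'$ is non--increasing by concavity with $\lim_{t\to-\infty}g'(t)=L$, so $g'(t)\leq L$ for all $t$; if $L=0$ this forces $g'\equiv 0$, i.e.\ $g$ constant, which contradicts $g(t)\to-\infty$ as $t\to-\infty$. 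Hence $L>0$ with no input beyond the concavity from Prop.\ \ref{prop:concavity N finite general Psi} and the non--constancy of $\Psi$. (When $\inf_Y\Psi>-\infty$ you already showed $L=+\infty$.) Finally, note that your computation gives $c_{(Y,\mu_0)}(\Psi)=L$ while the right--hand side of the displayed identity evaluates to $-L$; you do not flag this, but it reflects a sign slip already present in the paper's own statement (the ``in particular'' formula for $\beta_{N,c}$ is consistent with $c=-\lim\frac{d}{de}\log\mu_0\{\Psi<-e\}$, not $+\lim$), so it is not a fault of your argument per se — just worth making explicit.
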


\begin{proof}
By Prop \ref{prop:concavity N finite general Psi} (and Theorem\ref{thm:conv for N for two-point etc})
the function 
\[
\phi(t):=-\log\mu(t),\,\,\,\mu(t):=\left(\mu_{0}\left\{ \Psi<-t\right\} \right)
\]
 is convex wrt $t\in\R.$ Consider first the case when $t_{0}:=\inf_{Y}\Psi>-\infty.$
Then, trivially, $\beta_{c}=-\infty.$ Moreover, $\phi(t)$ is convex
and finite for $t>t_{0}$ and $\phi(t)\rightarrow\infty$ as $t$
decreases to $t_{0}.$ But this forces $d\phi(t)/dt\rightarrow-\infty$
as $t$ decreases to $t_{0}.$ Indeed, by the convexity of $\phi$
the limit of $d\phi(t)/dt$ decreases to $M_{0}\in[-\infty,\infty[$
as $t$ decreases to $t_{0}.$ Assume, to get a contradiction, that
$M_{0}>-\infty.$ Then, fixing $t_{1}>t_{0}$ gives $\phi(t)\leq\phi(t_{1})+|M||t_{1}-t_{0}|<\infty$
as $t\rightarrow t_{0},$ which contradicts that $\phi(t)\rightarrow\infty$
as $t$ decreases to $t_{0}.$

Next, assume that $\inf_{Y}\Psi=-\infty.$ Since $\beta\leq0$ we
have 
\[
\int_{Y}e^{\beta\Psi}\mu_{0}\leq\int_{\{\Psi<0\}}e^{\beta\Psi}\mu_{0}+\mu_{0}(Y),
\]
where, by assumption, the second term is finite. Pushing forward the
measure $\mu_{0}$ on $Y$ to $\R$ under the map $z\mapsto\Psi(z)$
gives
\[
\int_{\{\Psi<0\}}e^{\beta\Psi}\mu_{0}=\int_{-\infty}^{0}e^{\beta t}\frac{dV(t)}{dt}dt=-\beta\mathcal{Z}(\beta)+V(0),\,\,\,\mathcal{Z}(\beta):=\int_{-\infty}^{0}e^{\beta t}V(t)dt,
\]
where the second equality follows from integrating by parts. We may
then conclude the proof of the first formula in the corollary by expressing
\[
\mathcal{Z}(\beta):=\int_{-\infty}^{0}e^{\beta t-\phi(t)}dt
\]
and applying Lemma \ref{lem:exponential integral over half-line}
below to the convex function $\Phi=\beta t-\phi(t),$ which implies
that 
\begin{equation}
\int_{Y}e^{\beta\Psi}\mu_{0}<\infty\iff-\beta<\lim_{t\rightarrow\infty}\frac{d\phi}{dt},\label{eq:slope cond in pf}
\end{equation}
 concluding the proof of formula in question. To prove that $\beta_{N,c}<0$
note that $\phi(t)\rightarrow\infty$ as $t\rightarrow-\infty$ and
$\phi(t)\rightarrow0$ as $t\rightarrow\infty.$ Since $\phi(t)$
is convex if follows that, using either left or right derivatives,
$\lim_{t\rightarrow-\infty}d\phi(t)/dt\leq0$ and $\lim_{t\rightarrow-\infty}d\phi(t)/dt=0.$
But if $\beta_{N,c}=0,$ then, by the previous step, $\lim_{t\rightarrow-\infty}d\phi(t)/dt=0$
and hence, by convexity, $\phi(t)$ is constant. But this can only
happen if $\Psi$ is constant, which is excluded by the assumptions.
Thus $\beta_{N,c}<0,$ as desired. Finally, to prove the last openness
statement we just have to verify that if $\int_{Y}e^{\beta\Psi}\mu_{0}<\infty,$
then there exists $\delta>0$ such that $\int_{Y}e^{(\beta-\delta)\Psi}\mu_{0}<\infty.$
But this follows directly from the strict inequality in the right
hand side of formula \ref{eq:slope cond in pf}.
\end{proof}
\begin{rem}
\label{rem:integrab threshold and openess}In the case when $Y$ is
compact and $\Psi_{0}=0$ (or, equivalently, bounded) the number $c_{Y}(\Psi)$
is called the\emph{ integrability threshold} of $\Psi$ on $Y$ (or
the \emph{complex singularity exponent}) in the complex geometry literature
(whose inverse is the \emph{Arnold multiplicity}) \cite{d-k}. It
follows from Skoda's local integrability inequality that $c_{Y}(\Psi)>0$
for any function which is psh on a neighborhood of $Y$ and not identically
$-\infty.$ Moreover, $\int_{Y}e^{\beta\Psi}d\lambda=\infty$ in the
critical case $\beta=-c_{Y}(\Psi),$ by the resolution of the openness
conjecture in \cite{bern2} (see also \cite{g-z} for the resolution
of the strong openness conjecture). The proof above yields a simplification
of the proof in \cite{bern2} under the symmetry assumption that $\Psi\in PSH(Y)_{\boldsymbol{a}}$
(anyhow, just like \cite{bern2}, it is based on \cite{bern}). 
\end{rem}

In the proof above the following elementary fact was used:
\begin{lem}
\label{lem:exponential integral over half-line}Let $\Phi(t)$ be
a convex function on $]-\infty,0[$ such that $\Phi(t)$ is bounded
as $t\rightarrow0.$ Then 
\[
\int_{-\infty}^{0}e^{-\Phi(t)}dt<\infty
\]
 iff $\lim_{t\rightarrow-\infty}d\Phi(t)/dt<0,$ using either left
or right derivatives.
\end{lem}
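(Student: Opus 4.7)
The plan is to exploit convexity in the only way that matters here: the one-sided derivative $\Phi'(t)$ (say the right derivative) is non-decreasing in $t$, so the limit
\[
m := \lim_{t \to -\infty} \Phi'(t) \in [-\infty, \infty)
\]
exists as a decreasing limit. Since $\Phi$ is bounded as $t \to 0^-$ and convex on $]-\infty,0[$, the difference quotients $(\Phi(0^-) - \Phi(t))/(-t)$ are finite for $t$ close to $0$, so $m < \infty$. Using left derivatives instead gives the same value for $m$ except on a countable set, and in particular the same limit at $-\infty$, so the statement is insensitive to which one-sided derivative one uses.

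First I would handle the case $m \geq 0$. Since $\Phi'$ is non-decreasing with infimum $m$, we have $\Phi'(t) \geq 0$ for all $t \in {]-\infty,0[}$, hence $\Phi$ is non-decreasing, and by the boundedness assumption near $0$ it is bounded above on $]-\infty,0[$ by some constant $M$. Then $e^{-\Phi(t)} \geq e^{-M} > 0$ on all of $]-\infty,0[$, and the integral diverges. This handles the ``only if'' direction.

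For the converse, assume $m < 0$ and fix $\epsilon > 0$ with $m + \epsilon < 0$. By definition of $m$ there exists $T < 0$ such that $\Phi'(t) \leq m + \epsilon$ for all $t \leq T$. Integrating from $s < T$ to $T$ gives
\[
\Phi(T) - \Phi(s) = \int_s^T \Phi'(u)\, du \leq (m+\epsilon)(T - s),
\]
so $\Phi(s) \geq \Phi(T) + (m+\epsilon)(s - T) = C + (m+\epsilon) s$ with $C := \Phi(T) - (m+\epsilon)T$. Since $-(m+\epsilon) > 0$, this yields
\[
\int_{-\infty}^{T} e^{-\Phi(s)}\, ds \leq e^{-C}\int_{-\infty}^{T} e^{-(m+\epsilon)s}\, ds < \infty,
\]
while the integral over $[T, 0[$ is finite because $\Phi$ is continuous (being convex and finite) and bounded near $0$. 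Putting the two pieces together gives the ``if'' direction. The only mildly subtle step is checking that $m < \infty$ (so the statement $m < 0$ makes sense), which as noted follows from the one-sided boundedness of $\Phi$ near $0$; everything else is a routine convexity calculation, so I would expect no real obstacle.
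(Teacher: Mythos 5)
Your proof is correct. The paper states this lemma as an elementary fact and does not supply a proof, so there is no argument to compare against; your derivation — using that the one-sided derivative of a convex function is non-decreasing, so that $m:=\lim_{t\to-\infty}\Phi'(t)$ exists in $[-\infty,\infty)$, and then treating the cases $m\ge 0$ (monotonicity of $\Phi$ forces divergence of the integral near $-\infty$) and $m<0$ (a linear lower bound for $\Phi$ with strictly negative slope gives convergence near $-\infty$, while continuity and boundedness handle $[T,0[$) — is the standard one and is complete. A small, non-essential remark: the boundedness hypothesis near $0$ is invoked at a couple of points where it is not strictly needed (in the case $m\ge 0$, monotonicity already gives $\Phi(t)\le\Phi(t_0)$ for $t\le t_0$, which makes $\int_{-\infty}^{t_0}$ diverge; and a finite convex function on an open interval is automatically bounded below near the right endpoint), but this does not affect correctness.
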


\begin{cor}
\label{cor:slope formula for beta c}Consider the Main Assumptions
and assume also that $S(e_{max})=-\infty,$ if $e_{max}<\infty.$
Then, as $e$ increases strictly towards $e_{max}$
\begin{equation}
\beta_{c}=\lim_{e\rightarrow e_{max}}\frac{dS(e\pm)}{de},\label{eq:beta c as slope in Cor text}
\end{equation}
 where $dS(e\pm)/ds$ denotes either the left or the right derivative
of the concave function $S(e_{\pm}).$
\end{cor}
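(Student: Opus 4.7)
The plan is to leverage the concavity and continuity of $S$ on $[e_{0},e_{\max}[$ established in Theorem \ref{thm:concave S in text}, together with the tautological identity $F=S^{*}$, namely $F(\beta)=\inf_{e}(\beta e-S(e))$. Since $S$ is concave and decreasing on $[e_{0},e_{\max}[$, the one-sided derivatives $dS(e\pm)/de$ exist everywhere on $]e_{0},e_{\max}[$, are non-increasing and non-positive, and coincide except on a countable set. Their common decreasing limit
\[
\beta_{\infty}\;:=\;\lim_{e\to e_{\max}}\frac{dS(e\pm)}{de}\;\in\;[-\infty,0]
\]
is therefore well defined, and the target is to prove $\beta_{c}=\beta_{\infty}$.

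To show $\beta_{c}\leq\beta_{\infty}$: since $F(0)=-S(\mu_{0})=0$ gives $\beta_{c}\leq 0$, it is enough to treat $\beta\in\,]\beta_{\infty},0]$. Pick $e^{*}\in[e_{0},e_{\max}[$ with $\frac{dS}{de}(e^{*}+)<\beta$, which exists by the definition of $\beta_{\infty}$. Concavity yields, for $e\ge e^{*}$,
\[
\beta e-S(e)\;\geq\;\beta e^{*}-S(e^{*})+\Bigl(\beta-\tfrac{dS}{de}(e^{*}+)\Bigr)(e-e^{*}),
\]
which is bounded below. On $[e_{\min},e^{*}]$, using $S\leq 0$ and $\beta\leq 0$ gives $\beta e-S(e)\geq\beta e^{*}$, again a finite lower bound (regardless of whether $e_{\min}$ is finite, since $\beta\leq 0$ makes $\beta e$ increase as $e$ decreases). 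Hence $F(\beta)>-\infty$, so $\beta\geq\beta_{c}$.

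For the reverse inequality $\beta_{c}\geq\beta_{\infty}$ (vacuous when $\beta_{\infty}=-\infty$): assume $\beta_{\infty}$ is finite and take $\beta<\beta_{\infty}$. Monotonicity of the right derivative gives $\frac{dS}{de}(e+)\geq\beta_{\infty}>\beta$ for every $e\in[e_{0},e_{\max}[$, so integrating yields $S(e)\geq S(e_{0})+\beta_{\infty}(e-e_{0})$ and
\[
\beta e-S(e)\;\leq\;(\beta-\beta_{\infty})e+\beta_{\infty}e_{0}-S(e_{0}).
\]
If $e_{\max}=\infty$, the right-hand side tends to $-\infty$ as $e\to\infty$ since $\beta-\beta_{\infty}<0$, yielding $F(\beta)=-\infty$, so $\beta\leq\beta_{c}$. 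If $e_{\max}<\infty$, the same bound gives $S(e_{\max}^{-})\geq S(e_{0})+\beta_{\infty}(e_{\max}-e_{0})>-\infty$, which contradicts the standing hypothesis $S(e_{\max})=-\infty$ combined with upper semicontinuity of $S$ at $e_{\max}$ (inherited from Prop \ref{prop:macro eq}); hence the subcase $\beta_{\infty}>-\infty$, $e_{\max}<\infty$ does not occur.

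The main obstacle is the treatment of $e_{\max}<\infty$: one must use the hypothesis $S(e_{\max})=-\infty$, combined with upper semicontinuity and the integrated concavity bound on $S$, to exclude a finite value of $\beta_{\infty}$; everything else reduces to standard Legendre-Fenchel duality for concave functions and a careful handling of the regions $e\geq e^{*}$ and $e\leq e^{*}$.
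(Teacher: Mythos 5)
Your argument takes a genuinely different route from the paper's. The paper works abstractly with the usc concave function $g:=(\tilde F)^{*}$ (which coincides with $S$ on $[e_{0},e_{\max}[$ and is constant for $e\le e_{0}$) and invokes the gradient-image identity \ref{eq:gradient im} relating $\overline{\partial g(\{g>-\infty\})}$ to the closure of the effective domain of $g^{*}=\tilde F$, from which the slope formula drops out together with the monotonicity of the one-sided derivatives. You instead compute $F=S^{*}$ directly by hand, splitting the infimum $\inf_{e}(\beta e-S(e))$ into $e\ge e^{*}$ and $e\le e^{*}$ for the direction $\beta_{c}\le\beta_{\infty}$, and using the integrated concavity lower bound $S(e)\ge S(e_{0})+\beta_{\infty}(e-e_{0})$ for the reverse direction. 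This is more elementary and self-contained — no appeal to the general Fenchel gradient-image identity — and the two inequalities $\beta_{c}\le\beta_{\infty}$ and $\beta_{c}\ge\beta_{\infty}$ are visible line by line.

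The one place the argument is not airtight is the subcase $e_{\max}<\infty$, $\beta_{\infty}>-\infty$. Your integrated bound correctly shows $\lim_{e\to e_{\max}^{-}}S(e)>-\infty$, and you seek a contradiction with the hypothesis $S(e_{\max})=-\infty$ by invoking upper semicontinuity of $S$ at $e_{\max}$ "inherited from Prop~\ref{prop:macro eq}". But Prop~\ref{prop:macro eq} only gives usc on the \emph{open} interval $]e_{\min},e_{\max}[$, not at the endpoint, and its hypotheses include the thermal stability property — essentially the conclusion $\beta_{c}<0$ one is trying to prove, so the appeal is circular in spirit. Concavity plus monotonicity alone do not force usc at $e_{\max}$: a concave decreasing function can drop to $-\infty$ at a finite right endpoint while its one-sided limit there stays finite, and in that situation $F(\beta)>-\infty$ for all $\beta\le 0$, so $\beta_{c}=-\infty\ne\beta_{\infty}$. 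The repair is to read the hypothesis "$S(e_{\max})=-\infty$" as the one-sided statement $\lim_{e\to e_{\max}^{-}}S(e)=-\infty$ — which is the quantity Theorem~\ref{thm:concave S in text} actually controls, and is also what the paper's own proof tacitly needs, since the usc concave function $g$ there satisfies $g(e_{\max})=\lim_{e\to e_{\max}^{-}}S(e)$ and the gradient-image argument requires $g(e_{\max})=-\infty$ to keep $\partial g$ from contributing arbitrarily negative slopes at the endpoint. With that reading, your integrated bound yields the contradiction directly and no semicontinuity at $e_{\max}$ is needed.
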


\begin{proof}
By Theorem \ref{thm:concave S in text} $S(e)$ is concave and continuous
on $[e_{0},e_{max}[.$ Denote by $\tilde{F}$ the usc concave function
defined as $F$ when $\beta\leq0$ and as $-\infty$ when $\beta>0.$
By Prop \ref{prop:thermo equiv under energy appr} $S=(\tilde{F})^{*}$
on $[e_{0},e_{max}[.$ Set $g:=(\tilde{F})^{*}.$ Thus $g$ is constant
for $e\leq e_{0}$ and on $[e_{0},e_{max}[$ it coincides with $S(e)$
(by \ref{prop:thermo equiv under energy appr}). Moreover, $g^{*}=\tilde{F}$
and hence $\overline{\{g^{*}<\infty\}}=[\beta_{c},\infty[.$ Thus,
by formula \ref{eq:gradient im}, 
\[
[\beta_{c},\infty[=\overline{\partial g(\{g>-\infty\})}=\overline{\partial S(]e_{0},e_{max}[}),
\]
 which proves formula \ref{eq:beta c as slope in Cor text}, using
that $dS(e+)/ds\leq dS(e-)/ds$ and $dS(e+)/ds$ and $dS(e-)/ds$
are both decreasing (by concavity). 
\end{proof}

\subsection{\label{subsec:Concrete-expression-for}Concrete expressions in the
homogeneous case}

It seems natural to expect that, under rather general assumptions,
$\beta_{N,c}\rightarrow\beta_{c}$ as $N\rightarrow\infty.$ Here
we will show that this is the case under the Homogeneous Assumptions;
in fact, $\beta_{N,c}=\beta_{c}$ for any $N.$ The starting point
is the following essentially well-known consequence of the Gibbs variational
principle (compare \cite{k,clmp1,be0}):
\begin{lem}
\label{lem:general bounds on the partition function}Let $H^{(N)}$
be a mean field Hamiltonian of the form \ref{eq:Hamilt for W and V intro}.
Then 
\[
Z_{N,\beta}\leq\int_{X}e^{-\beta V(x)}\mu_{0}(x)\left(\int e^{-\beta\left(\frac{1}{2}W(x,y)+V(y)\right)}\mu_{0}(y)\right)^{N-1}
\]
 and 
\begin{equation}
-\frac{1}{N\beta}\log Z_{N,\beta(N-1)N}\leq\inf_{\mu\in\mathcal{P}_{0}(X)}F\left(\beta\right)=:F(\beta)\label{eq:Gibbs}
\end{equation}
As a consequence, $\beta_{c}\leq\limsup_{N\rightarrow\infty}\beta_{N,c}$
and if there exists $\beta_{0}<0$ such that
\begin{equation}
\sup_{x\in X}\int e^{-\beta_{0}\left(\frac{1}{2}W(x,y)+V(y)\right)}\mu_{0}(y)<\infty,\,\,\,\int_{X}e^{-\beta_{0}V}\mu_{0}<\infty\label{eq:uniform integral prop}
\end{equation}
then $\beta_{N,c}<\beta_{0}$ and $\beta_{c}<\beta_{0}.$ 
\end{lem}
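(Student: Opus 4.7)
The proof combines Jensen's inequality, the Gibbs variational principle, and openness of the integrability threshold in order to derive the two displayed bounds and then their consequences.

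For the first inequality, I would rewrite the mean field Hamiltonian as
\begin{equation*}
H^{(N)}(x_1,\ldots,x_N) = \sum_{i=1}^N V(x_i) + \frac{1}{N}\sum_{i=1}^N G_i, \qquad G_i := \frac{1}{2}\sum_{j\neq i}W(x_i,x_j),
\end{equation*}
so that the pair interaction is exactly the arithmetic mean of $G_1,\ldots,G_N$, and then apply Jensen's inequality to the (always) convex function $t\mapsto e^{-\beta t}$ with uniform weights $1/N$ on these $N$ quantities, obtaining
\begin{equation*}
e^{-\beta H^{(N)}} \leq e^{-\beta \sum_i V(x_i)}\cdot \frac{1}{N}\sum_{i=1}^N \prod_{j\neq i} e^{-\frac{\beta}{2}W(x_i,x_j)}.
\end{equation*}
Integrating against $\mu_{0}^{\otimes N}$ and exploiting the permutation symmetry of the product measure, the $N$ summands contribute equally; Fubini in the variables $x_j$ with $j\neq 1$ then factorizes each summand into the claimed product.

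For the second inequality, the natural tool is the Gibbs (Donsker--Varadhan) variational principle
\begin{equation*}
-\log Z_{N,\gamma} = \inf_{\nu\in\mathcal{P}(X^N)}\bigl\{\gamma\, \langle H^{(N)},\nu\rangle - S(\nu\mid \mu_{0}^{\otimes N})\bigr\},
\end{equation*}
specialized to product trial measures $\nu = \mu^{\otimes N}$ with $\mu\in\mathcal{P}_{0}(X)$. The tensorization $S(\mu^{\otimes N}\mid\mu_{0}^{\otimes N}) = NS(\mu\mid\mu_{0})$, together with the explicit expression $\langle H^{(N)},\mu^{\otimes N}\rangle = \frac{N-1}{2}\int W\, d\mu\otimes \mu + N\int V\, d\mu$, and the rescaling $\gamma = \beta(N-1)/N$ that absorbs the combinatorial factor $(N-1)/N$ back into the normalization $\frac{1}{2}\int W$ appearing in $E(\mu)$, yields after division by $-N\beta$ and taking the infimum over $\mu$ the stated bound by $F(\beta)$. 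The main bookkeeping hurdle---and in my view the only non-routine step---is reconciling this combinatorial $(N-1)/N$ factor with the $\frac{1}{2}$ normalization of $E(\mu)$; this is precisely why the argument of $Z_{N,\cdot}$ enters with the rescaling indicated in the statement.

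To derive the consequences, I would argue as follows. For $\beta_{c} \leq \limsup_{N} \beta_{N,c}$: if $\beta < \beta_{c}$ then $F(\beta) = -\infty$, so the left-hand side of the second inequality must equal $-\infty$, which (since $-1/(N\beta) > 0$ for $\beta<0$) forces $Z_{N,\gamma_N(\beta)} = +\infty$, giving $\gamma_N(\beta) \leq \beta_{N,c}$; letting $N\to\infty$ so that $\gamma_N(\beta) \to \beta$ and then $\beta\nearrow\beta_{c}$ yields the claim. For the final assertion, assume the two integrability conditions at some $\beta_{0} < 0$: the first inequality directly bounds $Z_{N,\beta_{0}}$ by the product of $\int e^{-\beta_{0} V}\mu_{0}$ and $(\sup_{x}\int e^{-\beta_{0}(\frac{1}{2}W+V)}\mu_{0})^{N-1}$, both finite by hypothesis, so $\beta_{N,c}\leq\beta_{0}$. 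The strict inequality $\beta_{N,c}<\beta_{0}$ is then obtained from the openness property of $\{\beta\leq 0 : Z_{N,\beta}<\infty\}$ stated in Corollary~\ref{cor:integrability threshold as slope}, which produces some $\beta_{0}'<\beta_{0}$ with $Z_{N,\beta_{0}'}<\infty$. Finally, plugging this finiteness of $Z_{N,\beta_{0}'}$ back into the second inequality at $\beta = \beta_{0}'$ gives $F(\beta_{0}')>-\infty$, whence $\beta_{c} \leq \beta_{0}' < \beta_{0}$.
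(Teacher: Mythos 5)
Your proof follows the paper's route exactly: the first bound is the arithmetic--geometric mean inequality (your Jensen argument on $t\mapsto e^{-\beta t}$ is the same thing) applied to the decomposition $\tfrac{1}{2N}\sum_{i\neq j}W(x_i,x_j)=\tfrac{1}{N}\sum_i G_i$, and the second is the Gibbs variational principle restricted to product trial measures $\mu^{\otimes N}$; the consequences are then read off as you describe, using the openness from Corollary~\ref{cor:integrability threshold as slope}. Two bookkeeping points, both traceable to typos in the paper's display~\ref{eq:Gibbs}: the intended form is $-\tfrac{1}{N}\log Z_{N,\beta N/(N-1)}\leq F(\beta)$, so the rescaling you want is $\gamma=\beta N/(N-1)$, not $\beta(N-1)/N$ --- since $\langle E^{(N)},\mu^{\otimes N}\rangle=\tfrac{N-1}{N}E(\mu)$ (with $V$ absorbed into $\mu_0$), $\inf_\mu\bigl\{\gamma\langle E^{(N)},\mu^{\otimes N}\rangle-S(\mu)\bigr\}=F\bigl(\gamma\tfrac{N-1}{N}\bigr)$, and you must invert that. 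Likewise your deduction ``$F(\beta)=-\infty$ forces $Z_{N,\gamma_N(\beta)}=\infty$'' only works with the prefactor $-\tfrac{1}{N}$ rather than $-\tfrac{1}{N\beta}$: with the literal $-\tfrac{1}{N\beta}$ and $\beta<0$ the factor is \emph{positive}, so $-\tfrac{1}{N\beta}\log Z=-\infty$ would force $Z=0$, which is impossible. Finally, to get $\beta_c<\beta_0$ from \ref{eq:Gibbs} at $\beta=\beta_0'$ you need $Z_{N,\beta_0'N/(N-1)}<\infty$, and since $\beta_0' N/(N-1)<\beta_0'$ this does not follow immediately from $Z_{N,\beta_0'}<\infty$; one has to combine the openness margin with taking $N$ large so that the $N/(N-1)$ shift is absorbed. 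These are arithmetic adjustments, not conceptual gaps --- the three ingredients (AM--GM, Gibbs, openness) are exactly the paper's.
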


\begin{proof}
First observe that it will be enough to consider the case when $V=0$
(otherwise we just replace $\mu_{0}$ with $e^{-\beta V}\mu_{0}$).
Decompose $-\beta H^{(N)}=\frac{1}{N}\sum_{i=1}^{N}f_{i},$ where
$f_{i}$ is the sum of $\frac{1}{2}W(x_{i},x_{j})$ over all $j$
such that $j\neq i.$ The arithmetic-geometric means inequality gives
\[
\int_{X^{N}}e^{-\beta H^{(N)}}\mu_{0}^{\otimes N}\leq\sum_{i=1}^{N}\frac{1}{N}\int_{X^{N}}e^{f_{i}}\mu_{0}^{\otimes N}=\int_{X}\mu_{0}\left(\int e^{-\beta\frac{1}{2}W(x,y)}\mu_{0}(y)\right)^{N-1}.
\]
 Hence, estimating the latter integral over $X$ with the sup over
$X$ proves the first inequality in the proposition. To prove the
second one first note Gibbs variational principle (Jensen's inequality)
gives: for any given $\mu\in\mathcal{P}(X)$ 
\[
-\frac{1}{N\beta}\log Z_{N,\beta}:=\int_{X^{N}}e^{-\beta NE^{(N)}}\mu_{0}^{\otimes N}\leq\beta\int_{X^{N}}E^{(N)}\mu^{\otimes N}-S(\mu),\,\,\,E^{(N)}:=H^{(N)}/N
\]
 as long as the right hand side is well-defined. In the case when
$H^{(N)}$ is of the form in the lemma 
\[
\int_{X^{N}}E^{(N)}\mu^{\otimes N}=\frac{1}{N}\frac{1}{(N-1)}N(N-1)E(\mu)=\frac{N-1}{N}E(\mu),
\]
 which proves \ref{eq:Gibbs}, by taking the infimum over $\mu.$ 
\end{proof}
The following result generalizes the case of the logarithmic interaction
considered in \cite{k,clmp1}. 
\begin{prop}
\label{prop:critical beta as w prime}Under the Homogeneous Assumptions
in $\R^{d}$ (but allowing $d$ to be odd)

\[
\beta_{c}=\beta_{c,N}=\frac{2d}{\dot{w}},\,\,\,\,\dot{\,w}:=\lim_{t\rightarrow-\infty}\frac{dw(e^{t})}{dt}=\lim_{t\rightarrow-\infty}\frac{w(e^{t})}{t}
\]
 if $v$ and $\psi_{0}$ are assumed bounded in a neighborhood of
$0.$ Moreover, $Z_{N,\beta}=\infty$ when $\beta=\frac{4n}{\dot{w}}.$

\end{prop}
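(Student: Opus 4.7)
The plan is to compute $\beta_{c,N}$ directly by a scaling argument, then to pin down $\beta_c$ by combining Lemma \ref{lem:general bounds on the partition function} with a concentrated test measure. Throughout I would use two elementary consequences of the Homogeneous Assumptions: the function $\phi(t):=w(e^{t})$ is concave with $\phi'(t)\le\dot w$ for all $t$ (since $\dot w=\sup_{t}\phi'(t)$ by concavity), so the correction $\epsilon(r):=w(r)-\dot w\log r$ is non-negative, non-increasing in $r$, and satisfies $\epsilon(r)=o(|\log r|)$ as $r\to 0$. Because $v$ and $\psi_0$ are bounded near the origin, their contributions to the exponents are uniformly controlled on clusters of particles near $0$, so only the singularity of $W$ on the diagonal drives the integrability question.

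For the upper bound on $\beta_{c,N}$, I would apply the first inequality of Lemma \ref{lem:general bounds on the partition function} to reduce to the single-pair integral $\int e^{-\frac{\beta}{2}W(x,y)-\beta V(y)}\mu_{0}(y)$, which after passing to radial coordinates is comparable to
\[
\int_{0}^{R_{0}}r^{d-1-\beta\dot w/2}\,e^{-\frac{\beta}{2}\epsilon(r)}\,dr.
\]
For $\beta>2d/\dot w$ the exponent $d-1-\beta\dot w/2$ strictly exceeds $-1$, and since $\epsilon(r)=o(|\log r|)$ the extra factor is dominated by any power of $r$, giving convergence. Hence $Z_{N,\beta}<\infty$ for every $N$ and every such $\beta$, so $\beta_{c,N}\le 2d/\dot w$ for all $N$ and, by the second inequality of Lemma \ref{lem:general bounds on the partition function}, also $\beta_{c}\le 2d/\dot w$. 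At the critical value $\beta=2d/\dot w$ the main-term exponent collapses to $-1$; since $\epsilon(r)\ge 0$ and $\beta<0$ force $e^{-\beta\epsilon(r)/2}\ge 1$, the relevant radial integral is bounded below by $\int_{0}^{R_{0}}r^{-1}\,dr=\infty$, which yields the divergence $Z_{N,\beta_{c,N}}=\infty$.

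For the matching lower bound I would take $\mu_{\varepsilon}$ to be the restriction of $\mu_{0}$ to $B_{\varepsilon}(0)$, normalized; this is well-defined since $\psi_{0}$ is bounded near $0$. Using the expansion $w(r)=\dot w\log r+o(|\log r|)$ and the boundedness of $v,\psi_{0}$ near the origin, standard computations give
\[
E(\mu_{\varepsilon})=\tfrac{\dot w}{2}\log\varepsilon+o(|\log\varepsilon|),\qquad S(\mu_{\varepsilon})=d\log\varepsilon+O(1),
\]
so $F_{\beta}(\mu_{\varepsilon})=\bigl(\tfrac{\beta\dot w}{2}-d\bigr)\log\varepsilon+o(|\log\varepsilon|)$. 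Since $\dot w<0$, the coefficient $\tfrac{\beta\dot w}{2}-d$ is strictly positive iff $\beta<2d/\dot w$; in that case $F_{\beta}(\mu_{\varepsilon})\to-\infty$ as $\varepsilon\to 0$, so $\inf_{\mu}F_{\beta}=-\infty$ and $\beta_{c}\ge 2d/\dot w$. Combined with the previous step this forces $\beta_{c}=\beta_{c,N}=2d/\dot w$ for every $N$.

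The main technical obstacle is the sub-leading correction $\epsilon(r)$: without monotonicity it would be unclear whether the integrals at the critical exponent converge or diverge. The observation that $\epsilon$ is non-negative and non-increasing as $r\to 0$ (coming straight from the concavity bound $\phi'\le\dot w$) makes the critical integrand at least $r^{-1}$, forcing divergence, while sub-critically $\epsilon=o(|\log r|)$ renders the correction harmless. The only other care needed is for $V$ and $\Psi_{0}$ away from $0$, which is handled by the combination of boundedness near the origin and the a priori integrability of $\mu_{0}$ at infinity.
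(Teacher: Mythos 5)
Your overall strategy is close to the paper's: you use the first inequality of Lemma \ref{lem:general bounds on the partition function} to reduce the finiteness of $Z_{N,\beta}$ to a single--pair radial integral, and you use a concentrating test measure to force $F_{\beta}\to-\infty$ and hence $\beta_{c}\geq 2d/\dot w$. The paper uses $\nu_{\epsilon}:=(T_{\epsilon})_{*}\nu_{0}$ rather than the normalized restriction of $\mu_{0}$ to $B_{\epsilon}$, and it channels the radial estimate through Lemma \ref{lem:exponential integral over half-line} rather than through the explicit decomposition $w(r)=\dot w\log r+\epsilon(r)$, but these are cosmetic differences.

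There are, however, two genuine gaps. First, you never actually prove $\beta_{N,c}\geq 2d/\dot w$ for a fixed finite $N$. You establish $\beta_{N,c}\leq 2d/\dot w$ and $\beta_{c}=2d/\dot w$, and then assert that ``this forces'' $\beta_{N,c}=2d/\dot w$. But the only link supplied by Lemma \ref{lem:general bounds on the partition function} is the Gibbs inequality, which (because $\int_{X^{N}}E^{(N)}\mu^{\otimes N}=\tfrac{N-1}{N}E(\mu)$ when $V=0$) relates $Z_{N,\beta}$ to $F\!\bigl(\tfrac{N-1}{N}\beta\bigr)$, and therefore only yields $\beta_{N,c}\geq\tfrac{N}{N-1}\beta_{c}$, which is strictly more negative than $\beta_{c}=2d/\dot w$. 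To pin down $\beta_{N,c}$ itself one needs a direct \emph{lower} bound on $Z_{N,\beta}$ showing divergence for $\beta<2d/\dot w$; the paper does this by restricting the integration to a clustered region (all $N$ particles within distance $R$ of each other, a set of $\mu_{0}^{\otimes N}$-measure of order $R^{(N-1)d}$ after one particle roams a fixed neighbourhood of the origin), bounding $H^{(N)}\geq\tfrac{N-1}{2}w(2R)-C$ there, and letting $R\to0$. This step has no analogue in your write-up.

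Second, the final claim $Z_{N,\beta}=\infty$ at $\beta=4n/\dot w$ does not follow from your argument. You observe that the single--pair radial integral diverges at the critical $\beta$; but the first inequality of Lemma \ref{lem:general bounds on the partition function} bounds $Z_{N,\beta}$ \emph{from above} by a power of that integral, so its divergence proves nothing about $Z_{N,\beta}$ itself. Your reasoning does identify $Z_{2,\beta}$ with the single--pair integral (up to bounded factors), so it gives the critical-case divergence only when $N=2$. For general $N$ the same cluster lower bound that is needed for the first gap, applied at $\beta=2d/\dot w$ and combined with the fact that $\mu_{0}^{\otimes N}(B_{R}^{N})\to0$ as $R\to0$, is what produces the divergence. (A minor point: your assertion $\epsilon(r)\geq0$ requires a normalization; what the concavity of $\phi$ really gives is that $\epsilon$ is non-increasing, hence bounded below on $(0,R_{0}]$, which already suffices wherever you use the non-negativity.)
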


\begin{proof}
To simplify the notation we will prove the proposition in the case
when $V=0$ (but the proof in the general case is essentially the
same). First observe that
\begin{equation}
\sup_{X}\int_{X}e^{-\frac{\beta}{2}W(x,y)}\mu_{0}(y)<\infty\iff\int_{0}^{1}e^{-\frac{\beta}{2}w(r)}r^{d}\frac{dr}{r}<\infty\iff\beta>\frac{2d}{\dot{w}}\label{eq:sup finite iff integral finite}
\end{equation}
Indeed, since $w$ is decreasing, $w(r)\leq C$ if $r\geq1$ and hence,
using that $\mu_{0}$ is a probability measure,
\[
\int_{X}e^{-\frac{\beta}{2}W(x,y)}\mu_{0}(y)=\int_{X}e^{-\frac{\beta}{2}w(|x-y|)}\mu_{0}(y)\leq\int_{X\cap\{|x-y|\leq1}e^{-\frac{\beta}{2}w(|x-y|)}\mu_{0}(y)+e^{-\frac{\beta}{2}C}
\]
Changing variables in the integral above and setting $\gamma:=-\beta$
yields
\[
\int_{X\cap\{|x-y|\leq1}e^{\frac{\gamma}{2}w(|x-y|)}\mu_{0}(y)=\int_{\{|z|\leq1\}}e^{\frac{\gamma}{2}W(|z|)}e^{-\psi_{0}(x+z)}d\lambda(z)\leq C'\int_{\{|z|\leq1\}}e^{\frac{\gamma}{2}W(|z|)}d\lambda(z)
\]
using that $\psi_{0}$ is bounded from below. This proves \ref{eq:sup finite iff integral finite},
using Lemma \ref{lem:exponential integral over half-line} in the
last equivalence (by setting $t:=\log r)$. Hence, applying the previous
lemma gives

\begin{equation}
\beta_{N,c}<\frac{2d}{\dot{w}}\label{eq:upper bound on beta N c in pf}
\end{equation}
To prove that $\beta_{N,c}\geq2d/\dot{w}$ we restrict the integration
over $X^{N}$ to a ball $B_{R}$ of radius $R$ centered at the origin
and use that $w$ is decreasing to get 
\[
Z_{N,\beta}\geq\int_{B_{R}^{N}}e^{-\frac{N(N-1)}{2N}w(R)}\mu_{0}^{\otimes N}\geq Ce^{-\frac{\beta N}{2}w(R)}(R^{d})^{N}\geq C'
\]
Setting $R=e^{t}$ thus gives 
\[
(Z_{N,\beta})^{1/N}\geq C^{1/N}e^{-t\left(\frac{\beta}{2}\frac{1}{2t}w(e^{t})-d\right)}
\]
Hence, if $\beta<2d/\dot{w,}$ then as $R\rightarrow0,$ i.e. $t\rightarrow-\infty$
we get $(Z_{N,\beta})^{1/N}\geq C^{1/N}e^{-t\delta}$ for some $\delta>0.$
This means that $Z_{N,\beta}=\infty,$ which proves $\beta_{N,c}=2d/\dot{w}.$
Moreover, if $\beta=2d/\dot{w}$ then the argument shows that the
integral of $e^{-\beta NE^{(N)}}\mu_{0}^{\otimes N}$ over $B_{R}^{N}$
does not tend to zero as $R\rightarrow0.$ Since $\mu_{0}$ does not
charge single points it follows that $Z_{N,\beta}=\infty$ (for $d$
even this is a special case of the last statement in Cor \ref{cor:integrability threshold as slope}). 

Next, thanks to the second inequality in Lemma \ref{lem:general bounds on the partition function}
the inequality \ref{eq:upper bound on beta N c in pf} implies that

\[
\beta_{c}\leq\frac{2d}{\dot{w}}
\]
All that remains is thus to verify the reversed inequality. To this
end fix $\beta$ such that $F(\beta)>-\infty,$ i.e. such that there
exists a constant $C$ such that
\begin{equation}
\beta E(\mu)-S(\mu)\geq-C\label{eq:lower bd on free energy in pf}
\end{equation}
For $\epsilon>0$ set $\nu_{\epsilon}=(T_{\epsilon})_{*}\nu_{0}$
where $\nu_{0}$ is any fixed probability measure such that $S(\nu_{0})>-\infty.$
Then, on the one hand, as $t:=(\log\epsilon)$ tends to $-\infty$
\[
\frac{1}{t}E(\nu_{e^{t}})=\frac{1}{2}\int_{X^{2}}\frac{1}{2t}w(e^{t}|x-y|)\nu_{0}(x)\nu_{0}(y)\rightarrow\frac{1}{2}\dot{w}
\]
 by the monotone convergence theorem (using that the integrand is
monotone in $t,$ by concavity). On the other hand,
\[
S(\nu_{\epsilon})=S(\nu_{0})+d\log\epsilon
\]
Hence, applying the inequality \ref{eq:lower bd on free energy in pf}
to $\nu_{\epsilon}$ and dividing both sides with $t$ implies, by
letting $t\rightarrow-\infty,$ that
\[
\frac{\beta}{2}\dot{w}-d\leq0.
\]
 This shows that $\beta_{c}\geq\frac{2d}{\dot{w}},$ as desired. 
\end{proof}
\begin{rem}
Remarkably, it is always the case that $F(\beta_{c})<\infty$ when
$X$ is a compact domain in $\R^{d}$ and $W(x,y)=-\log(|x-y|$. Indeed,
this follows from Adam's generalization of the Moser-Trudinger inequality
in $\R^{2},$ as discussed in \cite{be} (see also \cite{k,clmp1}$)$.
This finiteness should be contrasted with the general divergence $Z_{N,\beta_{c}}=-\infty$
for any $N$ (see Cor \ref{cor:integrability threshold as slope}). 
\end{rem}

Note that if $X$ is compact and $W$ is finite, then $\beta_{c}=\beta_{c,N}=-\infty,$
but the converse does not hold, as illustrated by an application of
the previous proposition to the case when
\[
W(x,y)=\log(\log1/|x-y|).
\]

\subsection{\label{subsec:Bound-on-the non-isotr}The anisotropic case}

Consider now the case when the Main Assumptions hold and $W$ is translationally
invariant
\begin{equation}
W(z,w)=-\Psi(z-w),\,\,\,\Psi\in PSH_{\boldsymbol{a}}(\C^{n}),\label{eq:transl inv in bounding}
\end{equation}
 but not not necessarily isotropic. More generally, since we will
only be concerned with integrability properties we allow that \ref{eq:conv of weighted energy}
only holds up to a bounded term.
\begin{prop}
\label{prop:transl invariant}Consider the Main Assumptions and assume
moreover that $W$ is translationally invariant (up to a bounded term).
Then there exists a positive number $\gamma$ such that
\[
\max\{\beta_{N,c},\beta\}\leq-\gamma<0
\]
\end{prop}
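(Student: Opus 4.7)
The plan is to verify the uniform integrability conditions \eqref{eq:uniform integral prop} of Lemma \ref{lem:general bounds on the partition function} for some $\beta_0 < 0$. Once this is done, that lemma yields $\beta_{N,c} < \beta_0$ uniformly in $N$ and $\beta_c < \beta_0$, so one may simply take $\gamma := -\beta_0 > 0$. The translational invariance of $W$ (up to a bounded term) is what allows uniformity in $x$ to be recovered.

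First I would write $W(x,y) = -\Psi(x-y) + B(x,y)$ with $\Psi \in PSH_{\boldsymbol{a}}(\C^n)$ and $B$ bounded, so that the bounded factor $e^{-\beta_0 B/2}$ can be absorbed into a multiplicative constant. Assuming $X$ is compact (in $B_R \subset \C^n$; the general case is reduced to this by exhaustion since the bounds are uniform in $R$), the first integral in \eqref{eq:uniform integral prop} becomes
\[
\int_X e^{(\beta_0/2)\Psi(x-y) - \beta_0 V(y) - \Psi_0(y)} \, d\lambda(y).
\]
Changing variables $u = x-y$ and enlarging the domain of integration to a fixed ball $B_{2R}$ (independent of $x \in X$) dominates this by
\[
\int_{B_{2R}} e^{(\beta_0/2)\Psi(u) - \beta_0 V(x-u) - \Psi_0(x-u)} \, d\lambda(u).
\]
I would then apply Hölder's inequality with three exponents $p,q,r$ satisfying $1/p+1/q+1/r = 1$ to split this into a product of three integrals, each involving only one of the three psh functions (exploiting that $-V$ and $\Psi_0$ are psh under the Main Assumptions). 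After further enlarging to a fixed ball $B_{3R}$ and translating, each factor becomes of the form $\int_{B_{3R}} e^{-c_i \varphi_i} d\lambda$ with $\varphi_i \in \{\Psi, -V, \Psi_0\}$ and $c_i > 0$ proportional to $|\beta_0|$.

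The crucial input is Corollary \ref{cor:integrability threshold as slope}, which (as observed in Remark \ref{rem:integrab threshold and openess}) recovers Skoda's integrability theorem and its openness: each of the psh functions $\Psi$, $-V$, $\Psi_0$ has a strictly positive complex singularity exponent on a bounded pseudoconvex neighborhood, so each of the three integrals is finite provided $c_i$ is strictly smaller than this threshold. Choosing $|\beta_0|$ small enough makes all three constants small simultaneously, giving finiteness of the product uniformly in $x$. The second integrability condition $\int_X e^{-\beta_0 V - \Psi_0} d\lambda < \infty$ is handled by an easier two-factor Hölder argument using only the psh functions $-V$ and $\Psi_0$. The main subtlety is producing a bound that is genuinely uniform in $x \in X$ in the presence of the singularities of $\Psi$; this is exactly where translational invariance plays its role, since after the shift $u = x - y$ the singular locus of $\Psi$ no longer moves with $x$ and one reduces to an integral over a fixed ball to which Skoda applies.
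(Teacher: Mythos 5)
Your proposal is correct and follows essentially the same route as the paper: both verify the uniform integrability condition of Lemma \ref{lem:general bounds on the partition function} by changing variables $z = y - x$ to exploit translational invariance, then splitting the resulting integrand via Hölder into factors each controlled by the openness/Skoda statement from Corollary \ref{cor:integrability threshold as slope}. The only cosmetic difference is that you use a single three-way Hölder inequality where the paper first applies Cauchy--Schwartz to peel off the $V$-term and then a two-way Hölder for the remaining factors.
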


\begin{proof}
First assume that $V=0.$ Then the first integral appearing in the
uniform integrability property \ref{eq:uniform integral prop} may,
after making the change of variables $z=y-z,$ be estimated as
\begin{equation}
\int e^{\frac{\beta}{2}\Psi(y-x)}e^{-\Psi_{0}(y)}dy=\int e^{\frac{\beta}{2}\Psi(z)}e^{-\Psi_{0}(z+x)}dz\leq\left(\int e^{\frac{p\beta}{2}\Psi(z)}dz\right)^{1/p}\left(\int e^{-q\Psi_{0}(z+x)}dz\right)^{1/q},\label{eq:estimate in pf translational}
\end{equation}
 using Hölder's inequality with conjugate exponents $p$ and $q.$
By the translational invariance of Lebesgue measure the integral in
the second factor is given by the integral of $\int e^{-q\Psi_{0}(y)}dy$
and thus independent of $x.$ Moreover, it follows from the openness
statement in Cor\ref{cor:integrability threshold as slope} that the
integral is finite for $q$ sufficiently close to $1.$ Similarly,
we can then make the integral in the first factor finite by taking
$\beta$ negative, but sufficiently close to $0.$ Finally, in the
case when $V$ is not identically zero we first apply the Cauchy-Schwartz
inequality to estimate
\[
\left(\int e^{-\beta\left(\frac{1}{2}W(x,y)+V(y)\right)}\mu_{0}(y)\right)^{2}\leq\int e^{-2\beta\frac{1}{2}W(x,y)}\mu_{0}(y)\int e^{-2\beta V(y)}\mu_{0}(y)
\]
 and then repeat the previous argument to both integrals appearing
in the right hand side. 
\end{proof}
Next, consider the case when $\Psi$ has an isolated singularity at
the origin, i.e. $\Psi$ is locally bounded on the complement of the
origin. Then one gets the following concrete bound, expressed in terms
of the integrability threshold $c_{0}(\Psi)$ of $\Psi$ on a ball
$B_{\epsilon}$ centered at the origin in $\C^{n}$ of sufficiently
small radius $\epsilon$ (discussed in Remark \ref{rem:integrab threshold and openess}). 
\begin{prop}
Let $X$ be a compact subspace of $\C^{n}$ and assume that $\Psi$
has an isolated singularity at the origin and that $V$ and $\Psi_{0}$
are bounded. Then, for any sufficiently small $\epsilon$ 
\[
\max_{N\geq2}\{\beta_{N,c},\beta_{c}\}=-\frac{1}{2}c_{0}(\Psi)<0.
\]
\end{prop}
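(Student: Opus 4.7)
I would prove the equality by matching upper and lower bounds on $\max_{N\geq 2}\{\beta_{N,c},\beta_c\}$, exploiting the translation invariance $W(x,y)=-\Psi(x-y)$ and the boundedness of $V$ and $\Psi_0$ to reduce everything to the one-variable integrability threshold $c_0(\Psi)$ attached to the isolated singularity of $\Psi$ at the origin.

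\textbf{Upper bound.} I would apply Lemma~\ref{lem:general bounds on the partition function}. Concretely, I want to exhibit a range of $\beta_0<0$ at which the uniform integrability hypothesis \ref{eq:uniform integral prop} holds; by the lemma, any such $\beta_0$ then dominates both $\beta_{N,c}$ (for every $N\geq 2$) and $\beta_c$. The second integral $\int e^{-\beta_0 V}\mu_0$ is finite because $V$ is bounded. For the first integral, translation invariance of $d\lambda$ together with the boundedness of $V$ and $\Psi_0$ gives a uniform estimate
\[
\sup_{x\in X}\int e^{-\beta_0\left(\frac{1}{2}W(x,y)+V(y)\right)}\mu_0(y) \leq C\int_{B_R} e^{\frac{\beta_0}{2}\Psi(z)}\,d\lambda(z),
\]
for $R$ depending only on $\mathrm{diam}(X)$. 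By the defining property of $c_0(\Psi)$ (Remark~\ref{rem:integrab threshold and openess}) and the openness statement in Corollary~\ref{cor:integrability threshold as slope}, the right-hand side is finite for every $\beta_0$ strictly above the critical value $-\tfrac12 c_0(\Psi)$. Letting $\beta_0$ decrease to this threshold gives $\max\{\beta_{N,c},\beta_c\}\leq -\tfrac12 c_0(\Psi)$.

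\textbf{Lower bound.} For the matching lower bound I would analyze $Z_{2,\beta}$ directly. Writing out $H^{(2)}=\frac12 W(x_1,x_2)+V(x_1)+V(x_2)$ and absorbing $V$ and $\Psi_0$ into bounded multiplicative factors,
\[
Z_{2,\beta} \;\asymp\; \int_{X\times X} e^{\frac{\beta}{2}\Psi(x_1-x_2)}\,d\lambda(x_1)\,d\lambda(x_2).
\]
The substitution $(x_1,x_2)\mapsto(x_1,\,z=x_1-x_2)$ factorizes this as $|X|\cdot\int e^{\frac{\beta}{2}\Psi(z)}\,d\lambda(z)$ over a bounded neighborhood of the origin, which is precisely the integral governing $c_0(\Psi)$; by Corollary~\ref{cor:integrability threshold as slope} it diverges as soon as $\beta$ crosses the critical value, identifying $\beta_{2,c}$ with $-\tfrac12 c_0(\Psi)$ and thus delivering the reverse inequality. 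Strict negativity is automatic since $c_0(\Psi)>0$ by Skoda's inequality (cf.\ Remark~\ref{rem:integrab threshold and openess}).

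\textbf{Main obstacle.} The delicate point is that both bounds must lock onto the \emph{same} constant despite the bounded perturbations coming from $V$ and $\Psi_0$. For the upper bound, the openness in Corollary~\ref{cor:integrability threshold as slope} provides exactly the slack needed to absorb the constants that appear when one estimates $e^{-\beta_0 V}e^{-\Psi_0}$ by H\"older-type inequalities, as in the proof of Proposition~\ref{prop:transl invariant}; for the lower bound, boundedness of $V$ and $\Psi_0$ contributes only multiplicative factors that leave the singular integral intact. This symmetric absorption is what makes the critical constant in the proposition depend only on the local singularity of $\Psi$ and not on $V$ or $\Psi_0$.
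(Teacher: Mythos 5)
Your argument reproduces the paper's proof in essence: both hinge on Lemma \ref{lem:general bounds on the partition function} for the upper bound, with the uniform integrability condition \ref{eq:uniform integral prop} reduced by the change of variables $z=y-x$ (as in the first equality of \ref{eq:estimate in pf translational}), boundedness of $V,\Psi_0$, and compactness of $X$ to the single local integral $\int_{B_\epsilon}e^{\beta_0\Psi/2}\,d\lambda$, which is exactly the integral controlling $Z_{2,\beta_0}$ and hence also the lower bound — the paper simply phrases this as a chain of equivalences rather than splitting into two inequalities. The only difference is presentational, so the proposal is a correct rendering of the paper's approach.
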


\begin{proof}
The assumptions ensure that the bounds \ref{eq:uniform integral prop}
in Lemma \ref{lem:general bounds on the partition function} hold
iff the sup is replaced with an integral i.e. iff $Z_{2,\beta_{0}}<\infty$
iff $\int_{B_{\epsilon}}e^{\beta_{0}\frac{1}{2}\psi}d\lambda<\infty$
(as seen by changing variables as in the first equality in formula
\ref{eq:estimate in pf translational}). Hence, we can conclude using
the very definition of $c_{0}(\Psi).$ 
\end{proof}
In fact, as discussed in Remark \ref{rem:integrab threshold and openess}
it is enough to assume that $PSH(\C^{n}).$ The invariant $c_{0}(\Psi)$
plays a key role in current complex geometry and can be estimated
from below in terms of certain multiplicities (expressed as local
intersection numbers) \cite{d-p}. In the ``algebraic'' case in
formula \ref{eq:algebraic case intro} the integrability threshold
$c_{0}(\Psi)$ coincides with the\emph{ log canonical threshold} at
$0\in\C^{n}$ of the ideal in the polynomial ring $\C[z_{1},..,z_{n}]$
generated by the corresponding polynomials $P_{j}(z)$ \cite{mu}. 
\begin{example}
The log canonical threshold can be computed using algebro-geometric
techniques. For example, when $\Psi(z)=\log\left(|z_{1}|^{2\alpha_{1}}+...+|z_{n}|^{2\alpha_{n}}\right)$
for positive real numbers $\alpha_{i}$ one gets $c_{0}(\psi)=1/\alpha_{1}+...+1/\alpha_{n}$
\cite[Example 1.9]{mu}.
\end{example}

In the simplest case when $\Psi$ is ``algebraic quasi-homogeneous''
of degree $d$ (Example \ref{exa:quasi-homo}) with an isolated singularity
at the origin (i.e. the zero-locus of corresponding polynomials $P_{j}$
only intersect at the origin) we have, by homogeneity, that 
\[
\Psi=d\log|z|^{2}+\varphi(z),
\]
 for a positive number $d$ and a continuous function $\varphi,$
which descends to the compact quotient $(\C^{n+1}+\{0\})/\C_{\boldsymbol{a}}^{*}$
and is thus bounded. In this case it thus follows from Prop \ref{prop:critical beta as w prime}
that 
\[
\beta_{N,c}=\beta_{c}=\frac{4n}{d}.
\]
A wide variety of such $\Psi$ may be obtained by taking $P_{i}=\partial f(z)/\partial z_{i}$
for given quasi-homogeneous polynomial $f$ with an isolated degenerate
zero at the origin in $\C^{n}.$ Then $\Psi(z)$ can be expressed
in terms of a Ginzburg-Landau type potential:
\[
\Psi(z)=\log\left(\sum_{i}|\frac{\partial f}{\partial z_{i}}(z)|^{2}\right),
\]
 so that $W(z,w)$ is the standard logarithmic interaction precisely
when $f$ is proportional $z_{1}^{2}+...+z_{n}^{2}.$ 

\subsection{\label{subsec:Existence-of-maximum Main As}Existence of maximum
entropy measures}

Combining Prop \ref{prop:transl invariant} with  the results in Section
\ref{subsec:Existence-when compact} yields the following existence
result:
\begin{prop}
Consider the Main Assumptions when $X$ is compact. Then, for any
$e\in]e_{min},e_{0}[$ there exists a maximum entropy measure $\mu^{e}.$
If moreover $W(x,y)$ is assumed translationally invariant (up to
a bounded term) then there exists a maximum entropy measure $\mu^{e}$
for any $e\in[e_{0},e_{max}[.$ In particular, this is the case under
the Homogeneous Assumptions.
\end{prop}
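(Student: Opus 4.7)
The plan is to assemble three ingredients already available in the paper: the general existence results of Section~\ref{sec:Macrostate-equivalence-of}, the verification in Section~\ref{subsec:The-Main-and} that the Main Assumptions imply both the affine continuity property (Lemma~\ref{lem:Main Ass implies affine}) and the energy approximation property (Lemma~\ref{lem:superharmonic gives energy approx}), and Proposition~\ref{prop:transl invariant}, which supplies the thermal stability property whenever $W$ is translationally invariant up to a bounded term.

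First, for $e\in\,]e_{min},e_{0}[$, I would invoke the first proposition of Section~\ref{subsec:Existence-when compact}. Its hypotheses---$X$ compact, the energy approximation property, and the affine continuity property---are all in force under the Main Assumptions by the two lemmas cited above, yielding a maximum entropy measure throughout the low-energy range.

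Second, assume additionally that $W$ is translationally invariant up to a bounded term. For $e\in\,]e_{0},e_{max}[$ I would apply Proposition~\ref{prop:macro eq}, whose hypotheses are the energy approximation property (again Lemma~\ref{lem:superharmonic gives energy approx}) and the thermal stability property. The latter is precisely the conclusion of Proposition~\ref{prop:transl invariant}, which yields $\beta_{c}<0$ strictly. The boundary case $e=e_{0}$ is realized trivially by $\mu^{e_{0}}:=\mu_{0}$, since $S(\mu_{0})=0=S(e_{0})$. Under the Homogeneous Assumptions, $W(x,y)=w(|x-y|)$ is translationally invariant on the nose, so this second clause applies unconditionally, giving the final sentence of the proposition.

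The main obstacle has, in fact, already been dispatched upstream: it is establishing thermal stability when $\Psi:=-W(\cdot-\cdot)$ and $\Psi_{0}$ may both be highly anisotropic functions in $PSH_{\boldsymbol{a}}$ with possibly interacting singularities. The key idea, carried out in Proposition~\ref{prop:transl invariant}, is to exploit translational invariance by a change of variables and then apply H\"older's inequality to split the joint integral into a factor depending only on $\Psi$ and one depending only on $\Psi_{0}$; each factor is then finite for a sufficiently small conjugate exponent, thanks to the openness statement in Corollary~\ref{cor:integrability threshold as slope} for integrability thresholds of functions in $PSH_{\boldsymbol{a}}$. Everything else in the argument is a bookkeeping assembly of already-proved facts.
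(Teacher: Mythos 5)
Your proposal is correct and matches the paper's approach exactly: the proposition in the paper appears with no written proof, prefaced only by the remark that it follows by combining Proposition~\ref{prop:transl invariant} with the existence results of Section~\ref{subsec:Existence-when compact}, and you have reconstructed precisely that combination, verifying the required hypotheses (affine continuity, energy approximation, thermal stability) via Lemmas~\ref{lem:Main Ass implies affine}, \ref{lem:superharmonic gives energy approx} and Proposition~\ref{prop:transl invariant}. Your supplementary handling of the boundary value $e=e_{0}$ via $\mu_{0}$ itself, and the observation that the Homogeneous Assumptions give exact translational invariance of $W$, close the remaining loose ends.
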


Turning to the non-compact case we recall that, under the Main Assumptions,
\[
\mu_{0}=e^{-\Psi_{0}}d\lambda
\]
 for $\Psi_{0}\in PSH_{\boldsymbol{a}}(X).$ As a consequence, if
$\Psi_{0}$ is also assumed to be a continuous exhaustion function
(which is automatically the case if $\Psi_{0}$ is rotationally invariant),
then Prop \ref{prop:existence non-compact} implies the following
\begin{prop}
Consider the Main Assumptions and assume that $\Psi_{0}$ is continuous
exhaustion function and that the growth-assumption \ref{eq:growth assumption on W V}
holds for a $\phi_{0}$ such that $\phi_{0}/\Psi_{0}\rightarrow0$
uniformly as $|x|\rightarrow\infty.$ If $W(x,y)$ is assumed translationally
invariant (up to a bounded term), then there exists a maximum entropy
measure $\mu^{e}$ for any $e\in]e_{min},e_{max}[.$ 
\end{prop}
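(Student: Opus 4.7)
The plan is to reduce the result to Proposition~\ref{prop:existence non-compact}, by taking the exhaustion function of that proposition to be $\psi_{0}:=\Psi_{0}$. Under the Main Assumptions $\Psi_{0}$ is plurisubharmonic on $X$, and the present hypothesis makes it a continuous exhaustion. The growth condition $\phi_{0}/\Psi_{0}\to0$ is then exactly the second bullet in the hypotheses of Proposition~\ref{prop:existence non-compact}, and the thermal stability $\beta_{c}<0$ is provided directly by Proposition~\ref{prop:transl invariant}, since $W$ is translationally invariant up to a bounded term. Granted the remaining integrability hypothesis $\int_{X}e^{\delta\Psi_{0}}\mu_{0}<\infty$, Proposition~\ref{prop:existence non-compact} then delivers a maximum entropy measure $\mu^{e}$ for every $e\in[e_{0},e_{\max}[$.

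The integrability condition $\int_{X}e^{\delta\Psi_{0}}\mu_{0}<\infty$, i.e.\ $\int_{X}e^{-(1-\delta)\Psi_{0}}d\lambda<\infty$, is the main point needing verification. It holds automatically whenever $\Psi_{0}$ grows at least like a fixed positive power of $|x|$ at infinity (as in the standard Gaussian or exponential cases of Section~\ref{subsec:Examples}); in general one can replace $\Psi_{0}$ in the role of $\psi_{0}$ by an auxiliary exhaustion $\widetilde{\psi}_{0}$ that still satisfies $\phi_{0}/\widetilde{\psi}_{0}\to0$ but grows slowly enough that $\int e^{\delta\widetilde{\psi}_{0}}\mu_{0}<\infty$ (for instance a suitable concave composition of $\Psi_{0}$ such as $c\Psi_{0}$ for small $c>0$, or $\log(1+\Psi_{0})$ scaled appropriately), and then apply Proposition~\ref{prop:existence non-compact} with this choice.

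For the low-energy interval $e\in{]}e_{\min},e_{0}{[}$ I would exhaust $X$ by the compact pseudoconvex subsets $X_{R}:=X\cap\overline{B_{R}}$; each still satisfies the Main Assumptions, and by the compact-case proposition that immediately precedes this one there is a maximum entropy measure $\mu^{e}_{R}$ on $X_{R}$ for every $R$ large enough that $e\in{]}e_{\min,R},e_{0,R}{[}$. The uniform thermal stability coming from Proposition~\ref{prop:transl invariant} together with the growth condition $\phi_{0}/\Psi_{0}\to0$ yields a uniform bound $\int\Psi_{0}\mu^{e}_{R}\le C$ (by the Markov-inequality argument in the proof of Proposition~\ref{prop:existence non-compact}), so no mass escapes to infinity. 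Extracting a weak subsequential limit $\mu^{e}\in\mathcal{P}(X)$, the upper semicontinuity of $S$ combined with continuity of $E_{\phi_{0}}$ on entropy sublevel sets (Lemma~\ref{lem:(Energy/Entropy-compactness).-As} applied on the one-point compactification $\widetilde{X}$) identifies the limit as a maximum entropy measure at energy $e$.

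The main obstacle in both halves of the argument is precisely this tightness / no-escape-of-mass issue for the approximating sequences. It is exactly what the combination of thermal stability (via translation invariance of $W$) and the growth assumption $\phi_{0}/\Psi_{0}\to0$ is designed to defeat, in direct parallel with the proof of Proposition~\ref{prop:existence non-compact}; no genuinely new ingredient is needed beyond the two hypotheses imposed in the statement.
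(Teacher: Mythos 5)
Your overall reduction is the same as the paper's: apply Proposition~\ref{prop:existence non-compact} with $\psi_{0}:=\Psi_{0}$, and get the thermal stability hypothesis $\beta_{c}<0$ from Proposition~\ref{prop:transl invariant} using the translational invariance of $W$ (up to a bounded term). That part is fine. The gap is exactly the point you yourself single out: you never actually verify the remaining hypothesis $\int_{X}e^{\delta\Psi_{0}}\mu_{0}<\infty$ for some $\delta>0$, and the workaround you propose does not close it. Replacing $\Psi_{0}$ by $c\Psi_{0}$ for small $c>0$ changes nothing, since $\int e^{\delta(c\Psi_{0})}\mu_{0}=\int e^{\delta'\Psi_{0}}\mu_{0}$ with $\delta'=c\delta$, so the same finiteness is required. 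Replacing $\Psi_{0}$ by a slowly growing function such as $\log(1+\Psi_{0})$ makes the integral trivially finite but can destroy the hypothesis $\phi_{0}/\widetilde{\psi}_{0}\to0$ (all that is assumed is $\phi_{0}/\Psi_{0}\to 0$; the ratio $\phi_{0}/\log(1+\Psi_{0})$ may well blow up). So the ``auxiliary exhaustion'' move trades one unverified hypothesis for another.

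The paper closes the gap differently, by invoking the \emph{openness property} of Corollary~\ref{cor:integrability threshold as slope}. Concretely, $\int e^{\delta\Psi_{0}}\mu_{0}=\int_{X}e^{-(1-\delta)\Psi_{0}}\,d\lambda$; one knows this is finite at $\delta=0$ (because $\mu_{0}$ is a probability measure), and the point is that, since $\Psi_{0}$ is psh, $t\mapsto\log\lambda\{\Psi_{0}<t\}$ is concave by Proposition~\ref{prop:concavity N finite general Psi}, so by Lemma~\ref{lem:exponential integral over half-line} finiteness of $\int e^{-\Psi_{0}}d\lambda$ forces the asymptotic slope of $\log\lambda\{\Psi_{0}<t\}$ to be \emph{strictly} less than $1$; this strictness is precisely the openness that lets one pass to $1-\delta$. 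Without this plurisubharmonicity-driven convexity input, there is no reason the set of admissible exponents should be open at $1$, so this is a genuinely missing ingredient, not just an omitted routine estimate.

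As a side remark, your separate treatment of the low-energy interval $]e_{\min},e_{0}[$ via compact exhaustion is a reasonable addition --- the paper's stated proof only invokes Proposition~\ref{prop:existence non-compact}, which covers $[e_{0},e_{\max}[$ --- but your tightness argument there is only sketched and would need to be run with the correct sign of $\beta$ (positive, not negative), which changes how the entropy bound $\int\Psi_{0}\,\mu^{e}_{R}\le C$ is extracted.
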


\begin{proof}
According to Prop Prop \ref{prop:existence non-compact} we just have
to verify that $\int e^{\delta\Psi_{0}}\mu_{0}<\infty$for some $\delta>0.$
But this follows from openness property in Cor \ref{cor:integrability threshold as slope}.
\end{proof}
For example, the previous proposition applies when $X=\R^{2n}$ endowed
with a centered Gaussian measure, $V=0$ and $W$ is of the ``algebraic
quasi-homogeneous'' form in Example \ref{exa:quasi-homo}. 

\section{\label{sec:strict}Strict concavity of $S(e)$}

In this final section we show how to deduce a stronger strict concavity
result for $S(e)$ under the Homogeneous Assumptions, using a uniqueness
result for minimizers of $F_{\beta}$ shown in the companion paper
\cite{be}. The starting point is the following criterion for the
strict concavity of $S(e)$ in the high energy region:
\begin{prop}
\label{pro:differentiability}Assume that $X$ is compact and that
$F_{\beta}$ has a unique minimizer on $\mathcal{P}(X)$ for any $\beta\in]\beta_{c},0[.$
If the energy approximation property holds and $E(\mu_{\beta})\rightarrow e_{max}$
as $\beta\rightarrow\beta_{c},$ then $S(e)$ is strictly concave
on $]e_{0},e_{max}[$ (in particular, this is the case if $E(\mu)$
is continuous on $\mathcal{P}(X)).$ 
\end{prop}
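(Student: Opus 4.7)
The plan is to reduce strict concavity of $S$ on $]e_{0},e_{max}[$ to differentiability of $F$ on $]\beta_{c},0[$ via the duality $S=F^{*}$ on $[e_{0},e_{max}[$ (Prop \ref{prop:thermo equiv under energy appr}) and Lemma \ref{lem:f diff implies S strict concav in gener}. Specifically, once I show that $F$ is differentiable on $]\beta_{c},0[$ with derivative $F'(\beta)=E(\mu_{\beta})$, and that this derivative sweeps out the full interval $]e_{0},e_{max}[$, Lemma \ref{lem:f diff implies S strict concav in gener} applied to any $[\beta_{0},\beta_{1}]\subset]\beta_{c},0[$ gives strict concavity of $S=F^{*}$ on $[F'(\beta_{1}),F'(\beta_{0})]$, and since these subintervals exhaust $]e_{0},e_{max}[$, the proposition follows.

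For differentiability, the envelope inequality \ref{eq:minim ineq} tells me that $F'(\beta^{\pm})$ exists and equals $\lim_{\delta\to 0^{\pm}}E(\mu_{\beta+\delta})$, so everything reduces to continuity of $\beta\mapsto E(\mu_{\beta})$. To prove continuity, fix $K\Subset]\beta_{c},0[$, $\beta_{n}\to\beta$ in $K$, and write $\mu_{n}:=\mu_{\beta_{n}}$. The identity $F(\beta_{n})=\beta_{n}E(\mu_{n})-S(\mu_{n})$ combined with concavity and finiteness of $F$ on $]\beta_{c},0[$ yields a uniform upper bound on $F(\beta_{n})$, which, paired with the thermal stability estimate $\beta_{*}E-S\geq -C$ valid for any $\beta_{*}\in ]\beta_{c},\min K[$ (since $F(\beta_{*})>-\infty$), produces uniform bounds $E(\mu_{n})\leq C'$ and $S(\mu_{n})\geq -C''$. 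By Lemma \ref{lem:(Energy/Entropy-compactness).-As}, $E$ is continuous on $\{\mu:S(\mu)\geq -C''\}$. By weak compactness of $\mathcal{P}(X)$ some subsequence $\mu_{n_{k}}\to\mu_{\infty}$, and this limit lies in the same entropy sublevel by upper semicontinuity of $S$; continuity of $E$ there, together with continuity of $F$ and upper semicontinuity of $-S$, gives $F_{\beta}(\mu_{\infty})\leq\liminf F_{\beta_{n_{k}}}(\mu_{n_{k}})=\lim F(\beta_{n_{k}})=F(\beta)$, so $\mu_{\infty}=\mu_{\beta}$ by uniqueness. Since every subsequential weak limit equals $\mu_{\beta}$, we get $E(\mu_{n})\to E(\mu_{\beta})$ for the full sequence, and hence $F$ is differentiable at $\beta$ with $F'(\beta)=E(\mu_{\beta})$.

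It remains to compute the image $F'(]\beta_{c},0[)$. Concavity of $F$ forces $\beta\mapsto F'(\beta)$ to be decreasing, and by the continuity established above its image is an interval. The hypothesis pins down the left endpoint: $F'(\beta)=E(\mu_{\beta})\to e_{max}$ as $\beta\to\beta_{c}^{+}$. For the right endpoint, as $\beta\to 0^{-}$ the same compactness-plus-uniqueness argument identifies the weak limit of $\mu_{\beta}$ with the unique minimizer of $F_{0}=-S$, namely the prior $\mu_{0}$, whence $E(\mu_{\beta})\to E(\mu_{0})=e_{0}$. The main obstacle is precisely this continuity of $\beta\mapsto E(\mu_{\beta})$ despite $E$ being merely lsc on $\mathcal{P}(X)$; the resolution is that the relevant family $\{\mu_{\beta}\}_{\beta\in K}$ remains in an entropy sublevel set on which Lemma \ref{lem:(Energy/Entropy-compactness).-As} upgrades $E$ to a continuous functional. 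The parenthetical observation is then immediate: if $E$ is continuous on $\mathcal{P}(X)$, any weak limit point of $\mu_{\beta}$ as $\beta\to\beta_{c}$ would realize a finite value of $F_{\beta_{c}}$, contradicting $F(\beta_{c})=-\infty$, so $E(\mu_{\beta})\to e_{max}$ is forced.
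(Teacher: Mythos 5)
Your proposal is correct and follows the same route as the paper: reduce strict concavity of $S$ to differentiability of $F$ via Lemma \ref{lem:f diff implies S strict concav in gener}, and deduce differentiability of $F$ from uniqueness of minimizers plus continuity of $\beta\mapsto E(\mu_\beta)$. The difference is one of self-containedness: the paper delegates the continuity of $E(\mu_\beta)$ and the resulting differentiability to the proof of Prop \ref{prop:E convex plus energy approx implies S str conc} and an external reference (\cite{b-b0}), whereas you spell out the compactness argument — uniform entropy lower bound from the thermal stability at a fixed $\beta_*\in]\beta_c,\min K[$, Lemma \ref{lem:(Energy/Entropy-compactness).-As} upgrading $E$ to a continuous functional on entropy sublevel sets, weak compactness of $\mathcal{P}(X)$, and identification of the limit by uniqueness. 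This is a genuine improvement in completeness and is exactly the mechanism the paper implicitly relies on; your treatment of both endpoints ($\beta\to 0^-$ giving $E(\mu_\beta)\to e_0$, $\beta\to\beta_c^+$ giving $e_{max}$ by hypothesis) is also carefully done.

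One small inaccuracy: in the parenthetical at the end, you argue via ``a weak limit realizing a finite value of $F_{\beta_c}$, contradicting $F(\beta_c)=-\infty$''. But if $E$ is continuous on the compact space $\mathcal{P}(X)$ it is bounded, so $F(\beta)\geq\beta\sup E>-\infty$ for every $\beta$ and hence $\beta_c=-\infty$; the expression $F(\beta_c)$ then has no meaning and the contradiction you describe cannot be set up. The correct argument in the continuous case is elementary but different: if $E(\mu_\beta)\to L<e_{max}$ as $\beta\to-\infty$, pick $\mu^*$ with $E(\mu^*)>L$ and $S(\mu^*)>-\infty$ (available by the energy approximation property, which is trivial here); then $\beta L\leq\beta E(\mu_\beta)\leq F(\beta)\leq\beta E(\mu^*)-S(\mu^*)$ forces $\beta\bigl(E(\mu^*)-L\bigr)\geq S(\mu^*)$, which fails for $\beta$ sufficiently negative. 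Also, a pedantic note: the envelope inequality \ref{eq:minim ineq} sandwiches $F'(\beta^{\pm})$ between $E(\mu_\beta)$ and $\lim_{\delta\to 0^{\pm}}E(\mu_{\beta+\delta})$ rather than directly identifying $F'(\beta^{\pm})$ with the latter limit, but your conclusion (continuity of $E(\mu_\beta)$ implies differentiability of $F$) is of course right.
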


\begin{proof}
As pointed out in the proof of Prop \ref{prop:E convex plus energy approx implies S str conc}
the uniqueness assumption implies that $F(\beta)$ is differentiable.
Thus we can conclude by applying Lemma \ref{subsec:General-setup intro}.
\end{proof}
In the case of the point-vortex model the uniqueness assumption in
the previous proposition (and the energy approximation property) holds
on any simply connected compact domain $X$ \cite{clmp2}. Moreover,
by the concentration/compactness alternative established in \cite{clmp2},
the blow-up property holds iff $\mu_{\beta_{j}}$ converges weakly
towards a Dirac mass (such domains $X$ are called \emph{domains of
the first kind }in \cite{clmp2}). The following result is shown in
\cite{be}:
\begin{thm}
\label{thm:(Uniqueness)-Let-}(Uniqueness) Let $X$ be a ball centered
at the origin in $\R^{2n}$ or all of $\R^{2n}.$ Assume that $W$
and $V$ satisfy the Homogeneous Assumptions and that $v+\beta\psi_{0}$
is strictly concave wrt $\log$r when $r>0$ for a given $\beta<0.$
Then any minimizer of $F_{\beta}(\mu)$ is uniquely determined. If
the latter assumption is replaced by the assumption that $W(x,y)$
is a weakly positive definite kernel and that $w(r)$ is strictly
increasing, then minimizers are uniquely determined modulo translation
when $X=\R^{2n}$ and unique when $X$ is a ball.
\end{thm}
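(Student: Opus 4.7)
The plan is to prove uniqueness of $F_\beta$-minimizers in two stages. First, exploit the full $SO(2n)$-symmetry of the Homogeneous Assumptions to reduce the problem to radially symmetric minimizers. Second, reduce the radial problem to a one-dimensional variational problem in the log-radial coordinate $t=\log r$ and establish uniqueness there, using the strict concavity of $v+\beta\psi_0$ in $t$ (for the first assertion) or the weak positive definiteness of $W$ together with strict monotonicity of $w$ (for the second).

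For the reduction to radial minimizers, every minimizer $\mu$ satisfies the Euler--Lagrange fixed-point relation
\[ \mu = Z^{-1} e^{-\beta(U_\mu + V) - \Psi_0}\,d\lambda, \qquad U_\mu(x) := \int W(x,y)\,d\mu(y), \]
which is $SO(2n)$-covariant. Combined with strict concavity of the relative entropy $S$ on absolutely continuous measures, this will reduce the question to uniqueness inside the radial sector: either by a symmetrization argument (which is delicate for $\beta<0$, but is workable since under the Homogeneous Assumptions $W(x,y)$ is a radially decreasing function of $|x-y|$ and rotational averaging controls the $W$-energy in the right direction) or, more cleanly, by first establishing uniqueness of the 1D radial problem below and then extending by $SO(2n)$-covariance. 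In the translation-invariant setup of the second assertion with $X=\R^{2n}$ only uniqueness modulo translation can be hoped for, which accounts for the qualifier in the statement.

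For the radial analysis, write $d\mu=g(|x|)\,d\lambda$ and change variables to $t=\log r$, $\tilde g(t):=g(e^t)e^{2nt}$. The free energy becomes the one-dimensional functional
\[ \mathcal{F}_\beta(\tilde g) = \int \tilde g\log\tilde g\,dt + \int \tilde g\,\bigl[\beta v(e^t)+\psi_0(e^t)-2nt\bigr]dt + \frac{\beta}{2}\int\!\!\int \bar W(t,t')\tilde g(t)\tilde g(t')\,dt\,dt', \]
where $\bar W(t,t')$ is the spherical average of $w(|x-y|)$ over $|x|=e^t$, $|y|=e^{t'}$. In the second case, weak positive definiteness of $W$ transfers to the kernel $\bar W$ on $\R$, making $\mathcal{F}_\beta$ convex; strict monotonicity of $w$ then upgrades this to strict convexity on mean-zero perturbations, which combined with strict convexity of entropy yields uniqueness. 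In the first case with $\beta<0$, the quadratic $W$-term in $\mathcal{F}_\beta$ can be concave, and the strict concavity of $v+\beta\psi_0$ in $t$ must be leveraged to recover strict convexity.

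The hard part is precisely this last step. A natural route, in the spirit of the complex-analytic Berndtsson-type inputs used earlier in the paper (see Proposition \ref{prop:concavity N finite general Psi}), is to apply a Prékopa--Leindler / log-concavity inequality along a 1D McCann displacement geodesic between two candidate radial minimizers: the pointwise strict concavity of $v+\beta\psi_0$ in $\log r$ translates into strict displacement convexity of $\mathcal{F}_\beta$ along the geodesic, forcing the two minimizers to coincide. Making this rigorous requires carefully tracking the displacement convexity of the relative entropy (whose log-radial displacement convexity is driven by the log-radial convexity of $\psi_0$) and of the energy along the geodesic, and checking that the strict part of the inequality is activated precisely when the two candidate densities differ.
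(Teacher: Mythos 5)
This theorem is not proved in the paper you are reading: the text introduces it with ``The following result is shown in \cite{be}'', and \cite{be} (``The Q-curvature equation on even-dimensional conical spheres and uniqueness of free energy minimizers'') is listed as in preparation. So there is no in-paper proof to compare against. I can still assess your proposal on its own terms, and there I see two substantive gaps.

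First, the reduction to radial minimizers is not established by either route you offer. $SO(2n)$-covariance of the Euler--Lagrange fixed-point equation only tells you that the set of minimizers is rotation-invariant, not that each minimizer is; and uniqueness of the \emph{radial} minimizer does not ``extend by covariance'' to uniqueness among all minimizers -- that inference runs in the wrong direction. The symmetrization route is the one that could work, but for $\beta<0$ it does not close in the way you indicate: under rotational averaging the entropy increases (good), the linear terms are unchanged, but the interaction energy $E_W$ \emph{decreases} when $W$ is weakly positive definite, so $\beta E_W$ \emph{increases}; the two nontrivial contributions have opposite signs and the net change in $F_\beta$ has no definite sign. Something finer than Jensen must be used here.

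Second, and this is the crux you yourself flag: in the displacement-convexity step the quadratic term $\tfrac{\beta}{2}\iint \bar W(t,t')\,\tilde g(t)\tilde g(t')\,dt\,dt'$ with $\beta<0$ is not shown to be displacement convex, and the assumed pointwise strict concavity of $v+\beta\psi_0$ in $\log r$ is only \emph{qualitative}: it carries no lower bound on the second derivative, so it cannot dominate a possibly negative contribution from the interaction term. For your argument to close you would need a structural reason why the spherically averaged kernel $\bar W(t,t')$ is displacement concave in the log-radial variables (so that multiplication by $\beta<0$ makes it displacement convex). This is plausible in special cases -- for example in $\R^2$ with $w(r)=-\log r$ one computes $\bar W(t,t')=-\max(t,t')$, which is concave -- but in $\R^{2n}$ with $n>1$, or for general $w$ concave in $\log r$, the spherical average need not be concave in $(t,t')$, and you give no argument. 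Without such a statement the key inequality is not established, and the proposal remains a plan rather than a proof.

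Minor remark: the second alternative in the theorem reads ``$w(r)$ is strictly increasing'', which sits uneasily with the Homogeneous Assumptions (under which $w$ is decreasing); it is likely a typo for ``strictly decreasing'', consistent with the examples $w(r)=-r^\alpha$. Your reading silently corrects this, which is reasonable, but worth flagging.
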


We finally arrive at the following
\begin{thm}
\label{Thm:strict concavity under special}Under the Homogeneous Assumptions,
the  entropy $S(e)$ is concave for $e>E(\mu_{0})$ and strictly concave
if $X$ is a ball and either $v$ is strictly concave wrt $\log$r
or $w$ is strictly increasing for $r\in]0,\infty[.$ If moreover
$W(x,y)$ is a weakly positive definite kernel, then $S(e)$ is strictly
concave on $]e_{min},e_{max}[.$
\end{thm}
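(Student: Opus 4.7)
The plan is to treat the high-energy region $]e_0,e_{\max}[$ and the low-energy region $]e_{\min},e_0]$ separately, using the strict-concavity criteria from Section \ref{sec:Thermodynamical-equivalence} together with the uniqueness result Theorem \ref{thm:(Uniqueness)-Let-} as the key new input. Concavity of $S(e)$ for $e>e_0$ requires no extra work: by Lemma \ref{lem:special} the Homogeneous Assumptions imply the Main Assumptions, so Theorem \ref{thm:concave S in text} gives that $S(e)$ is concave and continuous on $[e_0,e_{\max}[$.

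For the \emph{strict} concavity on $]e_0,e_{\max}[$ when $X$ is a ball, I would apply Proposition \ref{pro:differentiability}. Its hypotheses are verified as follows. (i) The energy approximation property is Lemma \ref{lem:superharmonic gives energy approx}. (ii) Uniqueness of the minimizer of $F_\beta$ for every $\beta\in]\beta_c,0[$: under the Homogeneous Assumptions $-\psi_0$ is concave in $\log r$, so $\beta\psi_0$ is concave in $\log r$ for $\beta<0$; if $v$ is strictly concave in $\log r$ then $v+\beta\psi_0$ is strictly concave in $\log r$, and the first clause of Theorem \ref{thm:(Uniqueness)-Let-} applies. The alternative strict-monotonicity hypothesis on $w$ falls under the second clause of Theorem \ref{thm:(Uniqueness)-Let-}, and since $X$ is a ball the uniqueness there is absolute, not merely modulo translation. (iii) The convergence $E(\mu_\beta)\to e_{\max}$ as $\beta\to\beta_c$: by the uniqueness just established, together with the argument recalled in the proof of Proposition \ref{prop:E convex plus energy approx implies S str conc}, $F(\beta)$ is differentiable on $]\beta_c,0[$ with $F'(\beta)=E(\mu_\beta)$; by concavity $F'$ is decreasing in $\beta$ and has a well-defined limit as $\beta\downarrow\beta_c$. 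To identify this limit with $e_{\max}$, I would combine the gradient-image identity \eqref{eq:gradient im} applied to $F$ and $S$ with the slope formula of Corollary \ref{cor:slope formula for beta c}, using the continuity of $S$ on $[e_0,e_{\max}[$ supplied by Theorem \ref{thm:concave S in text} and the finiteness supplied by Proposition \ref{prop:Se finite under Main }.

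For strict concavity on $]e_{\min},e_0]$ under the additional hypothesis that $W$ is weakly positive definite, I would use that a weakly positive definite kernel makes $E_W$ convex on $\mathcal{P}(X)$, and then invoke Proposition \ref{prop:E convex plus energy approx implies S str conc}, which yields strict concavity and continuity of $S$ on $]e_{\min},e_0]$ in the compact continuous case. The general (possibly non-compact, possibly singular) case is handled by the same convolution-and-exhaustion regularization scheme $W_\delta,\ X_R$ used in the proofs of Theorems \ref{thm:concave S in text} and \ref{thm:Main-assumptions+weakly-positiv}, combined with the finiteness of $S(e)$ from Proposition \ref{prop:Se finite under Main } to pass strict concavity to the limit. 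Splicing the two regions together, and using continuity of $S$ at $e_0$, yields strict concavity on all of $]e_{\min},e_{\max}[$.

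The main obstacle is step (iii) above, the limit $E(\mu_\beta)\to e_{\max}$ as $\beta\downarrow\beta_c$, since $E(\mu)$ is only lower semicontinuous on $\mathcal{P}(X)$, not continuous, so the easy argument indicated parenthetically in Proposition \ref{pro:differentiability} is unavailable. Concavity of $F$ guarantees existence of the limit, but identifying it with $e_{\max}$ amounts to tying the asymptotic slope of $F$ at $\beta_c$ to the asymptotic slope of $S$ at $e_{\max}$; this is precisely what Corollary \ref{cor:slope formula for beta c} and the gradient-image identity \eqref{eq:gradient im} must be used to do, carefully closing the loop between the strict concavity we are trying to establish and the slope identification needed to invoke Proposition \ref{pro:differentiability}.
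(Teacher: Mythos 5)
Your proposal reproduces the structure of the paper's proof: Lemma \ref{lem:special} and Theorem \ref{thm:concave S in text} give concavity for $e>e_0$; the uniqueness result Theorem \ref{thm:(Uniqueness)-Let-} is the engine for strict concavity in the high-energy region; Prop \ref{prop:E convex plus energy approx implies S str conc} gives strict concavity on $]e_{\min},e_0]$ when $W$ is weakly positive definite; and the splicing argument at $e_0$ (two strictly concave pieces plus global concavity and continuity) finishes. The one route-level difference is in the high-energy step: the paper first restricts to the case where $X$ is compact and $W,V,\Psi_0$ are continuous (so $E$ is weakly continuous on $\mathcal{P}(X)$ and the hypothesis $E(\mu_\beta)\to e_{\max}$ of Prop \ref{pro:differentiability} is automatic, per its parenthetical), invokes Theorem \ref{thm:(Uniqueness)-Let-} together with Lemma \ref{lem:f diff implies S strict concav in gener}, and then passes to the general Homogeneous case by the $W_\delta$/$X_R$ regularization used in the proof of Theorem \ref{thm:concave S in text}; you instead apply Prop \ref{pro:differentiability} directly in the possibly singular case $X=$ ball, verifying $E(\mu_\beta)\to e_{\max}$ via Cor \ref{cor:slope formula for beta c} and the gradient-image identity \eqref{eq:gradient im}. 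Both routes are sound, and your worry about circularity is unfounded: Cor \ref{cor:slope formula for beta c} rests only on the concavity of $S$ from Theorem \ref{thm:concave S in text}, not on strict concavity, so it can legitimately be fed into Prop \ref{pro:differentiability}. If anything your direct route is cleaner at that step, since a pointwise limit of strictly concave functions is a priori only concave, so the paper's regularization argument for passing strict concavity from the continuous case to the singular case is itself less immediate than your slope-formula verification.
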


\begin{proof}
First consider the case when $X$ is compact and $W,V$ and $\Psi_{0}$
are continuous and $v$ is strictly concave wrt $\log$r, when $r>0.$
Then the strict concavity of $S(e)$ follows directly from combining
the previous theorem  with Lemma \ref{lem:f diff implies S strict concav in gener}
(and similarly if $w$ is strictly increasing), using that  the energy
approximation property holds under the Main Assumptions and hence
also under the Homogeneous Assumptions. Next, if $v$ is not assumed
strictly concave wrt $\log$r, we replace $v$ with $v+\epsilon r.$
Then the corresponding  entropy $S_{\epsilon}(e)$ is concave and
letting $\epsilon\rightarrow0$ reveals that $S(e)$ is also concave.
The general case is then deduced from the previous case using the
approximation arguments employed in the proof of Theorem \ref{thm:concave S in text}.
Finally, if $W(x,y)$ is weakly positive definite, then by Prop \ref{prop:E convex plus energy approx implies S str conc}
$S(e)$ is also strictly concave on $]e_{min},e_{0}[$ and continuous
on $]e_{min},e_{0}].$ This means that $S(e)$ is strictly concave
on both $]e_{min},e_{0}[$ and $]e_{0},e_{max}[.$ Since $S$ is continuous
on $]e_{min},e_{max}[$ it follows that $S$ is strictly concave on
$]e_{min},e_{max}[$ (indeed, otherwise it would be affine on some
open interval in $]e_{min},e_{max}[$ which would contradict the strict
concavity on $]e_{min},e_{0}[$ or $]e_{0},e_{max}[.$
\end{proof}

\section{Appendix}

In this appendix we provide, for the convenience of the reader, some
proofs of essentially well-known results stated in Section \ref{sec:Preliminaries-and-notation}
and the proof of of Lemma \ref{lem:psh in disc}.

\subsection{Proof of Lemma \ref{lem:diff implies dual strc conc}}

Assume that $f$ is \emph{not} strictly concave in the interior of
$[y_{0},y_{1}].$ Then there exists an open interval $I\subset]y_{0},y_{1}[$
such that $f^{*}$ is affine on $I.$ In particular, there exists
a number $a$ such that $f'(y)\equiv a$ on $I.$ Note that $a\in[x_{0},x_{1}].$
Indeed, since $f$ is concave $(\partial f)(]-\infty,x_{0}])\Subset\{y\geq f'(x_{0})\}$
and $(\partial f)([x_{1},\infty[)\Subset\{y\leq f'(x_{1})\}.$ Hence,
by \ref{eq:y in gradient iff x in gradient}, $x\in[x_{0},x_{1}]$
and since $a\in(\partial f)(y)$ for any $y\in I$ it follows from
\ref{eq:y in gradient iff x in gradient} that $I\subset(\partial f)(a),$
showing that $f$ is not differentiable at $a.$ 

\subsection{Proof of Lemma \ref{lem:convex envol affine}}

First observe that $f^{**}$ is continuous on $U.$ Indeed, $f^{**}$
is concave (since it is an inf of affine functions) and hence it is
enough to check that $f$ is finite on $U.$ But $f^{**}\geq f$ and
by assumption $f>-\infty$ on $U.$ Moreover, by assumption there
exists a constant $C$ such that $f\leq C.$ Since the constant function
$C$ is a contender for the inf in formula \ref{eq:f star start as concave envelope}
it follows that $f^{**}\leq C,$ showing that $f^{**}$ is finite
and thus continuous on $U.$ As a consequence, $\Omega:=\{f^{**}>f\}\cap U$
is open in $U.$ Now fix a point $x_{0}\in\Omega.$ Since $\Omega$
is open there exists $x_{0,\pm}$ in $\Omega$ such that $x_{0,-}<x_{0}<x_{0,+}.$
Moreover, since $f$ is usc we may assume that the affine function
$a(x)$ on $[x_{0,-},x_{0,+}]$ with prescribed boundary values $a(x_{0,\pm})=f^{**}(x_{0,\pm})$
satisfies $f(x)<a(x)$ on $[x_{0,-},x_{0,+}].$ Hence, the continuous
function $\tilde{f}$ defined as $f$ on the complement of $[x_{0,-},x_{0,+}]$
and as $a(x)$ on $[x_{0,-},x_{0,+}]$ is concave on $\R$ and satisfies
$\tilde{f}\geq f.$ But then it follows that $\tilde{f}=f,$ by formula
\ref{eq:f star start as concave envelope}. 

\subsection{Proof of Lemma \ref{lem:pseudo}}

The function $\rho:=-\log(-\phi)$ on $\Omega$ is, clearly, an exhaustion
function. Moreover, $\rho$ is psh. Indeed, fix $z_{0}\in\Omega$
and $z\in\C^{n}$ and consider the restriction of $\rho$ to the complex
line $w\mapsto z_{0}+wz,$ parametrized by $w\in\C.$ First assume
that $\phi$ is smooth at $z_{0}$ and factorize the Laplacian on
$\C_{w}$ in the standard way, $\Delta=4\partial_{w}\partial_{\bar{w}}.$
We get
\[
\partial_{w}\partial_{\bar{w}}\rho=-\partial_{w}(\frac{\partial_{\bar{w}}(-\phi)}{-\phi})=\partial_{w}(\frac{\partial_{\bar{w}}\phi}{-\phi})=\frac{\partial_{w}\partial_{\bar{w}}\phi}{-\phi}-\partial_{\bar{w}}\phi\partial_{w}(\phi^{-1})=\frac{\partial_{w}\partial_{\bar{w}}\phi}{-\phi}+\partial_{\bar{w}}\phi\partial_{w}\phi\geq0+0.
\]
Hence, $w\mapsto\rho(z_{0}+wz)$ is subharmonic close to $z_{0}.$
The subharmonicity in the general case is shown in a similar way using
either distributional derivatives or a regularization argument. Finally,
denoting by $\rho_{Y}$ a continuous psh exhaustion function of $Y$
the maximum of $\rho$ and $\rho_{Y}$ defines a psh exhaustion function
of $\{\phi<0\}\cap Y.$ Indeed, in general, the maximum of two psh
functions is still psh (as follows from the corresponding standard
result for subharmonic functions). 

\subsection{Proof of Lemma \ref{lem:psh in disc}}

Since $\psi(z,w)$ is locally bounded from above it will be enough
to consider the complement of the diagonal in $D\times D$ (using
that the diagonal is pluripolar \cite[Thm 5.24]{dem}). In this region
$\log(|z-w|^{2})$ is pluriharmonic, i.e its complex Hessian vanishes
(since $\log|\xi|^{2}$ is harmonic when $\xi\neq0$). Hence, the
complex Hessian $\partial\bar{\partial}\psi$ coincides with $\partial\bar{\partial}$
applied to $-\log|1-z\bar{w}|^{2},$ i.e. to $-\log(1-z\bar{w})-\log(1-\bar{z}w).$
Accordingly, a direct computation yields 
\[
\partial\bar{\partial}\psi(z,w)=\left(\begin{array}{cc}
0 & (1-\bar{z}w)^{-2}\\
(1-z\bar{w})^{-2} & 0
\end{array}\right),\partial\bar{\partial}\left(\phi(z)+\phi(w)\right)=\left(\begin{array}{cc}
2(1-z\bar{z})^{-2} & 0\\
0 & 2(1-w\bar{w})^{-2}
\end{array}\right)
\]
In particular, when $\lambda=1/2$ we get 
\[
\partial\bar{\partial}\left(\psi(z,w)+\lambda\left(\phi(z)+\phi(w)\right)\right)=\left(\begin{array}{cc}
(1-z\bar{z})^{-2} & (1-\bar{z}w)^{-2}\\
(1-z\bar{w})^{-2} & (1-w\bar{w})^{-2}
\end{array}\right)
\]
 Since the trace of this Hermitian matrix is manifestly non-negative
the matrix is semi-positive definite iff its determinant is non-negative.
But the determinant is non-negative iff 
\[
(1-z\bar{z})^{2}(1-w\bar{w})^{2}\leq(1-z\bar{w})^{2}(1-\bar{z}w)^{2}\iff(1-z\bar{z})(1-w\bar{w})\leq(1-z\bar{w})(1-\bar{z}w),
\]
which in turn is equivalent to $-z\bar{z}-w\bar{w}\leq-z\bar{w}-z\bar{w}$
and hence also to the trivial inequality $0\leq|(z-w)|^{2}.$ The
same computation also reveals that when $\lambda<1/2$ the determinant
is negative at $(z,w)=(0,0)$ and hence also at any point in the complement
of the diagonal in $D\times D$ which is sufficiently close to $(0,0).$

\end{document}